\newcommand{\offdiag}{\mathsf{offdiag}}
\newtheorem{algorithm}{Algorithm}
\numberwithin{algorithm}{subsection}
\begin{document}

\title{Polynomial-Time Power-Sum Decomposition of Polynomials}

\author{Mitali Bafna\thanks{Harvard University, \texttt{mitalibafna@g.harvard.edu}. Supported in part by a Siebel Scholar award, a Simons Investigator Award and NSF Award CCF 2152413.}\and 
Jun-Ting Hsieh\thanks{Carnegie Mellon University, \texttt{juntingh@cs.cmu.edu}. Supported in part by NSF CAREER Award \#2047933.}\and
Pravesh K. Kothari\thanks{Carnegie Mellon University, \texttt{praveshk@cs.cmu.edu}. Supported by  NSF CAREER Award \#2047933, Alfred P. Sloan Fellowship and a Google Research Scholar Award.}\and
Jeff Xu\thanks{Carnegie Mellon University, \texttt{jeffxusichao@cmu.edu}. Supported in part by NSF CAREER Award \#2047933.}}

\date{\today}
\maketitle

\begin{abstract}
We give efficient algorithms for finding power-sum decomposition of an input polynomial $P(x)= \sum_{i\leq m} p_i(x)^d$ with component $p_i$s. The case of linear $p_i$s is equivalent to the well-studied tensor decomposition problem while the quadratic case occurs naturally in studying identifiability of non-spherical Gaussian mixtures from low-order moments. 

Unlike tensor decomposition, both the unique identifiability and algorithms for this problem are not well-understood. For the simplest setting of quadratic $p_i$s and $d=3$, prior work of \cite{GeHK15} yields an algorithm only when $m \leq \widetilde{O}(\sqrt{n})$. On the other hand, the more general recent result of~\cite{GKS20} builds an algebraic approach to handle any $m=n^{O(1)}$ components but only when $d$ is large enough (while yielding no bounds for $d=3$ or even $d=100$) and only handles an inverse exponential noise.

Our results obtain a substantial quantitative improvement on both the prior works above even in the base case of $d=3$ and quadratic $p_i$s. Specifically, our algorithm succeeds in decomposing a sum of $m \sim \widetilde{O}(n)$ generic quadratic $p_i$s for $d=3$ and more generally the $d$th power-sum of $m \sim n^{2d/15}$ generic degree-$K$ polynomials for any $K \geq 2$. Our algorithm relies only on basic numerical linear algebraic primitives, is \emph{exact} (i.e., obtain arbitrarily tiny error up to numerical precision), and handles an inverse polynomial noise when the $p_i$s have random Gaussian coefficients.

Our main tool is a new method for extracting the linear span of $p_i$s by studying the linear subspace of low-order partial derivatives of the input $P$. For establishing polynomial stability of our algorithm in average-case, we prove inverse polynomial bounds on the \emph{smallest} singular value of certain correlated random matrices with low-degree polynomial entries that arise in our analyses. Since previous techniques only yield significantly weaker bounds, we analyze the \emph{smallest} singular value of matrices by studying the \emph{largest} singular value of certain deviation matrices via \emph{graph matrix decomposition} and the trace moment method.

\end{abstract}

\thispagestyle{empty}
\setcounter{page}{0}
\newpage
\thispagestyle{empty}
\setcounter{page}{0}
\begin{spacing}{1.05}
    \tableofcontents
\end{spacing}

\thispagestyle{empty}
\setcounter{page}{0}
\newpage


\section{Introduction}

An $n$-variate polynomial $P(x)$ admits a power-sum decomposition if it can be written as $P(x) = \sum_{i \leq m} p_i(x)^d$ for some low-degree polynomials $p_i$s. This work is about the algorithmic problem of computing such a decomposition when it exists and the related structural question of when such a decomposition, if it exists, is unique.

When $p_i$s are \emph{linear} forms $\iprod{v_i,x}$ for $v_i \in \R^n$, the task of decomposing $P$ is equivalent to decomposing the corresponding coefficient tensor $\sum_i v_i^{\otimes d}$ into rank $1$ components. For $d=2$, this corresponds to rank decomposition of matrices, which is unique only in degenerate settings. For $d=3$, while the problem is already NP-hard~\cite{Has90}, there is a long line of work on identifying natural sufficient conditions (e.g., Kruskal's condition~\cite{MR444690}) that imply uniqueness of decomposition in all but degenerate settings. There are known efficient algorithms for decomposing tensors satisfying such non-degeneracy conditions and such algorithms form basic primitives in \emph{tensor methods}~\cite{Harshman1970Foundations, MR907286, MR1238921, LCC07, BhaskaraCMV14, GeHK15, AnandkumarGHKT15,GeM15, HopkinsSS15, HSSS15, MaSS16, MW19, KivvaP20}. 
An influential line of work has developed efficient learning algorithms for a long list of interesting statistical models (under appropriate assumptions) including Mixtures of Spherical Gaussians~\cite{HK,GeHK15}, Independent Component Analysis~\cite{LCC07}, Hidden Markov Models~\cite{MosselRoch:05}, Latent Dirichlet Allocations~\cite{AnandkumarFHKL12}, and Dictionary Learning~\cite{MR3388192-Barak15} via reductions to tensor decomposition. Higher-degree power-sum decomposition is a natural generalization of the tensor decomposition problem and is equivalent to the well-studied problem of reconstructing certain classes of arithmetic circuits~\cite{kss14,10.1145/3313276.3316360, GKS20} with connections (see surveys \cite{SY10,Sap15,CKW11}) to algebraic circuit lower bounds and derandomization.

\paragraph{Tensor Decomposition with Symmetries} Higher-degree power-sum decomposition  is equivalent to a strict generalization of tensor decomposition where the components are symmetrized under a natural group action. For example, when $p_i(x) = x^\top A_i x$ are homogeneous quadratic polynomials for $n \times n$ matrices $A_i$, the coefficient tensor of $P$ has the form $\E_{\sigma \sim \bbS_{6}} \sum_{i \leq m} \sigma(A_i^{\otimes 3})$ where $\bbS_{6}$ is the symmetric group on $6$ elements and acts\footnote{For e.g., for a symmetric matrix $A$, $\E_{\sigma \sim \bbS_{6}}[A^{\otimes 3}((i_1, i_2, i_3), (j_1, j_2, j_3))] = \E [A(e_1) A(e_2) A(e_3)]$ where the expectation is over the choice of a uniformly random perfect matching $(e_1, e_2, e_3)$ of $\{i_1, i_2, i_3, j_1, j_2, j_3\}$.} by permuting the $6$ indices involved in any entry of $A_i^{\otimes 3}$. If not for the action of $\sigma$, the coefficient tensor would simply be a sum of tensor powers of vectorized $A_i$s. The group action, however, has a drastic effect on the identifiability and algorithms for the problem.
Specifically, the symmetrization causes the resulting tensor to have a large rank and thus any decomposition algorithm must strongly exploit the symmetries to succeed. In fact, in Section~\ref{sec:identifiability-cubics-of-quadratics}, we exhibit a simple example of a sum of cubics of quadratics on $2$ variables whose components are \emph{not} uniquely identifiable even though the corresponding coefficient matrices of the quadratics are linearly independent. This is in contrast to the well-known result~\cite{Harshman1970Foundations,MR1238921} that 3rd order tensors with linearly independent components are uniquely identifiable and efficiently decomposable. 
This is similar to other orbit recovery problems that also reduce to tensor decomposition with symmetries such as multi-reference alignment and the cryo-EM~\cite{PerryWBRS17,MW19} problem where even establishing information-theoretic identifiability for generic parameters is significantly more challenging.

\paragraph{The Quadratic Case} Despite being the natural next step after linear $p_i$s, power-sum decomposition of quadratic $p_i$s is not well understood. In a seminal work, Ge, Huang and Kakade~\cite{GeHK15} (GHK from now) proved that the first $6$ moments of a mixture of $m\sim \sqrt{n}$ non-spherical Gaussians with smoothed parameters \emph{exactly} identify and (noise-resiliently) recover the $m$ sets of means and covariances. Their analysis involves giving an algorithm (and uniqueness proof) for decomposing sums of cubics of \emph{smoothed} quadratic \emph{positive definite} polynomials but naturally generalizes to arbitrary smoothed quadratics. This is a striking result that exhibits a large gap between smoothed/generic parameters and arbitrary ones for mixtures of Gaussians as it is known that we need $\Omega(m)$ moments (an $n^{\Omega(m)}$-size object) to uniquely identify the parameters of arbitrary mixtures of $m$ Gaussians~\cite{MoitraValiant:10}. Their approach uses a conceptually elegant  ``desymmetrize+tensor-decompose" strategy by first undoing the effect of the group action and then applying tensor decomposition. While their approach can potentially be extended to $m \geq \sqrt{n}$, it seems to encounter an inherent barrier at $m \geq n^{2/3}$ as we explain in Section~\ref{sec:overview}. Nevertheless, GHK conjectured that it should be possible to handle $m \approx n^{1-\delta}$ generic components for any $\delta>0$ given $O(1)$-degree mixture moments which, in our context, corresponds to decomposing a sum of higher constant degree powers of quadratics.

\paragraph{The Garg-Kayal-Saha Algorithm} 
In a beautiful work of Garg, Kayal and Saha~\cite{GKS20} (GKS from now), they suggest that there is an inherent barrier to extending the ``desymmetrize+tensor-decompose'' based approach of~\cite{GeHK15}. Instead, they work by exploiting an intriguing connection to algebraic circuit lower bounds and develop algorithms to recover any \emph{polynomial} number of generic components from their power-sums of large enough degree.  This algorithm however has two important deficiencies. 

First, their strategy yields a decomposition algorithm for degree-$d$ power-sums only when $d$ is very large compared to the degree of the component $p_i$s. In particular, they do not obtain any result for the simplest interesting setting of $d=3$rd (or even $100$th) power of quadratics\footnote{Indeed, while their bounds can likely be somewhat optimized, the smallest power of quadratics that their algorithm (as currently analyzed) succeeds in decomposing must be larger than $2^{335}$.}. As a result, their techniques seem unsuitable to answer natural questions such as whether 6th moments of mixtures of non-spherical Gaussians (with generic parameters) can uniquely identify $m \geq n^{0.51}$ components of Gaussian mixture in $n$ dimensions, or, whether a sum of $m \approx n$ cubics of generic quadratics can be uniquely decomposed. Second, their algorithm relies on algebraic methods for finding simultaneous vector-space decomposition. The resulting algorithm is not error-resilient and does not appear to handle even a small (\eg $\exp(-n)$ in each entry) amount of noise in the input polynomial. In fact, GKS suggest finding a stable algorithm for power-sum decomposition as an open question.

\paragraph{This Work} In this paper, we give a conceptually simple algorithm that substantially improves the quantitative results in~\cite{GKS20} for decomposing power-sums of low-degree polynomials. Somewhat surprisingly, our algorithm follows the ``desymmetrize+decompose'' approach similar to \cite{GeHK15} while circumventing the barriers suggested by~\cite{GKS20}. A key component is an efficient algorithm to extract the linear span of the coefficient tensors of (powers of) $p_i$s from the subspace of ``coordinate restrictions" of \emph{partial derivatives} of $P=\sum_{i\leq m} p_i^d$ for $d \geq 3$.  As a consequence of our algorithm, we obtain substantially improved guarantees even for the simplest non-trivial setting of sum of cubics of quadratics and handle $m \sim n$ components.

We give an error-tolerant implementation of our algorithm and prove that when each $p_i$ has independent random Gaussian coefficients, the resulting algorithm tolerates an inverse polynomial amount of adversarial noise in the coefficients of the input polynomial. A key technical step in such an analysis requires establishing inverse polynomial lower bounds on the singular values of certain correlated random matrices whose entries are low-degree polynomials in the coefficients of $p_i$s. Standard results (e.g., from~\cite{BhaskaraCMV14}) for analyzing smallest singular values yield significantly weaker bounds in our setting. Instead, we rely on a new elementary but nimble method that lower bounds the \emph{smallest} singular value of correlated random matrices by reducing the task to upper-bounding the much better understood \emph{largest} eigenvalue of certain \emph{deviation} matrices. Our analyses of the spectral norm of such matrices use the trace moment method combined with \emph{graphical matrix decompositions} of random matrices that appear naturally in the analyses of sum-of-squares lower bound witnesses~\cite{HKP15,BHK19} for average-case refutation problems. In particular, these sharper bounds are crucial in allowing us to handle $m \sim \tO(n)$ components for decomposing sums of cubics of quadratics.

\subsection{Our results} 

Our main result gives a polynomial time algorithm (in the standard bit complexity model with exact rational arithmetic) for decomposing a sum of $d$-th powers of generic (e.g., smoothed) polynomials. We note that just as in standard tensor decomposition, sums of squares of low-degree polynomials are uniquely decomposable only in degenerate settings (see Section~\ref{app:quadratic-failure}), so cubics of quadratics (i.e., $d=3$) is the simplest non-trivial setting in this context.

\begin{theorem}[Decomposing Power-Sums of Smoothed Polynomials]
\label{thm:main-theorem-hd-intro-smoothed}
  There is an algorithm that takes input an $n$-variate degree-$Kd$ (for $d$ a multiple of $3$) polynomial of the form $\wh{P}(x) = \sum_{i \leq m} \wh{A}_i(x)^{d}$ where $\wh{A}_i = A_i + G_i$ for an arbitrary degree-$K$ polynomial $A_i$ and a degree-$K$ polynomial $G_i$ with independent $\cN(0,\rho^2)$ coefficients, runs in time polynomial in the size of its input and $1/\rho$, and has the following guarantee:
  with probability at least $0.99$ over the draw of $G_i$s and internal randomness, it outputs the set $\{\wh{A}_i \mid i \leq m\}$ up to permutation (and signs, if $d$ is even) whenever
  \begin{itemize}
      \item $m \leq \tO(n)$ for $d=3$, $K=2$,
      \item $m \leq \tO(n^2)$ for $d=6$, $K=2$,
      \item $m \leq \tO(n^{2d/9})$ for any $d \geq 9$ and $K=2$,
      \item $m \leq \tO(n^{2Kd/3(5K-4)})$ for all $d \geq 9$ and $K \geq 2$.
  \end{itemize}
\end{theorem}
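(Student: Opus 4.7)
The plan is to execute a \textbf{desymmetrize + tensor-decompose} strategy in two stages. Let $d' := d/3$, using that $3 \mid d$. First, I would construct, working only from $\wh{P}$, a linear subspace $V$ equal to the span $W := \mathrm{span}\{\mathrm{coeff}(\wh{A}_i^{d'}) : i \leq m\}$. Given $V = W$, the second stage sets up a symmetric tensor decomposition: each $\wh{A}_i^{d'}$ is a rank-$1$ symmetric element in $W$, and a Jennrich-style simultaneous diagonalization on two random combinations of basis elements of $W$ (re-indexed as $d'$-th order tensors in the degree-$K$ coefficient space) recovers each $\wh{A}_i^{d'}$; one then reads off $\wh{A}_i$ itself by a symmetric rank-$1$ decomposition of $\wh{A}_i^{d'}$ viewed as a tensor power. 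Genericity of the smoothed $\wh{A}_i$'s ensures the $\wh{A}_i^{d'}$ are linearly independent and the Jennrich step succeeds.

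For the span-extraction stage I would use the identity that any order-$k$ partial derivative $\partial_S \wh{P}$ lies in the span of $\{\wh{A}_i^{d-k}\,q_{i,S}\}_{i,S}$ where $q_{i,S}$ is a polynomial in the partial derivatives of $\wh{A}_i$. Choosing $k = 2d' = 2d/3$, so that $d-k = d'$, and then applying a carefully designed linear projection---restricting to monomials supported on a chosen subset of ``clean'' variables, or taking suitable coordinate restrictions---reduces the image to a linear combination of $\wh{A}_i^{d'}$ whose combining coefficients are low-degree polynomials in the entries of $\wh{A}_i$. The span of such coordinate-restricted $(2d/3)$-th derivatives then equals $W$ provided a certain design matrix $M$, with rows indexed by derivative-restriction choices and columns indexed by the components $i$, has full column rank. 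Dimension counting of the independent derivative-restriction directions against $m$ is what produces each threshold in the theorem: the count is $\tO(n^{2Kd/3(5K-4)})$ in general, and exploiting tighter cancellations for small $(d,K)$ yields $\tO(n)$ for $(3,2)$ and $\tO(n^2)$ for $(6,2)$.

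To obtain polynomial stability in the smoothed model I need an inverse-polynomial lower bound on $\sigma_{\min}(M)$ for the random matrix $M$ whose entries are correlated, low-degree polynomials in the Gaussian coefficients of the $\wh{A}_i$'s. Standard leave-one-out tools (\eg Bhaskara--Charikar--Moitra--Vijayaraghavan) degrade as $m$ approaches the dimension threshold and are not sharp enough here. Instead, I would write $M M^\top = \E[MM^\top] + \Delta$, where $\E[MM^\top]$ is a block-structured matrix built out of Gaussian moment tensors and hence admits a clean $\sigma_{\min}$ lower bound by reducing to a Kronecker product of identity operators, and then control the deviation $\Delta$ via its largest singular value. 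To bound $\E\|\Delta\|^{2t}$ I would expand $\Delta$ as a sum of \emph{graph matrices} indexed by bipartite shapes and apply the trace moment method together with the graph-matrix norm machinery developed in the SoS lower-bound literature (HKP, BHK, and successors).

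\textbf{Main obstacle.} The hard part is matching the upper bound on $\|\Delta\|$ against the lower bound on $\sigma_{\min}(\E[MM^\top])$ tightly enough to push $m$ all the way to the claimed thresholds. Each graph shape contributes a factor depending on its vertex, edge, and boundary structure; choosing the right parametric trade-off between the derivative order $k$, the coordinate-restriction scheme, and the degree of entries of $M$ so that every non-trivial shape's contribution is subleading relative to $\sigma_{\min}(\E[MM^\top])$ is delicate, and it is precisely this optimization---rather than the desymmetrization idea itself---that drives the exponents $2Kd/3(5K-4)$, $2d/9$, and the improved bounds for small $(d,K)$ in the theorem statement.
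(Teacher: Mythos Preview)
Your high-level strategy---take order-$2d/3$ partial derivatives, restrict coordinates, then desymmetrize and tensor-decompose---matches the paper's, but there is a genuine gap in the span-extraction step. After restricting the $2d/3$-th partials of $\wh{P}$ to $\ell=o(n)$ variables, each partial has the form $\sum_i B_i(y)^{d'} q_i(y)$ where $B_i$ is the restricted $\wh{A}_i$ and $q_i$ is a degree-$2(K{-}1)d'$ \emph{polynomial in $y$}, not a scalar. Hence the span of the restricted partials is $\calV_D = \spn\{B_i(y)^{d'} y_T : i\in[m],\ T\in[\ell]^{2(K-1)d'}\}$, which is much larger than $W = \spn\{B_i^{d'}\}$; your design-matrix argument (rows indexed by derivative/restriction choices, columns indexed by $i$) cannot collapse $\calV_D$ down to $W$. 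The paper's key new ingredient is a separate \emph{span-finding} step: sample a random polynomial $p$, form $\calV_p = \spn\{p(y)^{d'} y_S\}$, and prove that $\calV_D \cap \calV_p = \spn\{p^{d'} B_i^{d'}\}$, so that dividing by $p^{d'}$ recovers $W$. Establishing this intersection identity requires characterizing the null space of the $V$ matrix whose columns are the coefficient vectors of $B_i^{d'} y_T$: one must show that the only solutions to $\sum_i B_i(y)^{d'} q_i(y)=0$ are the ``obvious'' ones $q_{i}=B_{j}^{d'}$, $q_{j}=-B_{i}^{d'}$. That null-space lemma (and its singular-value version) is precisely what overcomes the GKS barrier at derivative order $r\ge 2d/3$, is where the graph-matrix machinery is actually deployed, and is absent from your plan.

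Second, your route to condition-number bounds in the smoothed model---write $MM^\top=\E[MM^\top]+\Delta$ and bound $\|\Delta\|$ by graph matrices---is not what the paper does and is in fact flagged there as open: the graph-matrix norm bounds are proved for \emph{zero-mean} Gaussians, and extending them to $\wh{A}_i=A_i+G_i$ with arbitrary $A_i$ (non-zero mean entries) is not known. Instead, the paper runs the entire graph-matrix analysis only in the purely random case $A_i=0$ to certify that each relevant matrix is full rank at \emph{some} real point, and then transfers full-rankness (with an inverse-exponential singular-value lower bound) to the smoothed case via a Schwartz--Zippel plus bit-complexity argument. This is why Theorem~\ref{thm:main-theorem-hd-intro-smoothed} gives exact recovery but not inverse-polynomial noise tolerance. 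You also omit the aggregation step: since restricting to $\ell=o(n)$ variables only reveals the $B_i$'s, the paper desymmetrizes each restriction separately and then averages over an $n^{O(d)}$-size pseudorandom family of restrictions to reconstruct $\sum_i (\Sym(A_i^{\otimes d'}))^{\otimes 3}$ on all of $[n]$ before running tensor decomposition.
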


The theorem above works more generally for any model of smoothing that independently perturbs the coefficients of each $A_t$ with a distribution that allots a probability of at most $1/n^{O(d)}$ to any single point. In particular, a fine-enough discretization of any continuous smoothing suffices. As observed in~\cite{GKS20,GeHK15}, identifying components of non-spherical mixtures of Gaussians from low-degree moments is equivalent\footnote{This follows from the fact that for $x \in \R^n$, the $2d$-th moment of $\cN(0,\Sigma)$ in direction $x$ equals $\E_{y \sim \cN(0,\Sigma)}[ \iprod{y,x}^{2d}] = \frac{(2d)!}{2^d d!} \E[\iprod{y,x}^2]^d = \frac{(2d)!}{2^d d!} (x^{\top}\Sigma x)^d$. Consequently, $\E_{ y \sim \sum_i w_i \cN(0, \Sigma_i)} [ \iprod{x,y}^{2d}] = \frac{(2d)!}{2^d d!} \sum_{i} w_i \paren{x^{\top} \Sigma_i x}^d$.} to decomposing the power-sum of quadratic polynomials. Thus, as an immediate corollary of the theorem above, we obtain:

\begin{corollary}[Moment Identifiability of Smoothed Mixtures of Gaussians] \label{cor:mog}
The parameters of a zero-mean mixture of Gaussians $\sum_{i\leq m} w_i \cN(0,\Sigma_i)$, with arbitrary mixture weights $w_i$ and smoothed\footnote{Any continuous smoothing suffices for this result. For e.g., for an arbitrary $\wh{\Sigma}_i \succeq 0$, for $\rho = n^{-O(1)}$, add an independent and uniformly random entry from $[-\rho, \rho]$ to every off-diagonal entry of $\wh{\Sigma}_i$ and a uniformly random entry from $[n \rho,2 n \rho]$ to every diagonal entry of $\wh{\Sigma}_i$ to produce $\Sigma_i$. Note that the resulting matrix $\Sigma_i$ is positive semidefinite.} covariances $\Sigma_i$, are uniquely identifiable from the first $2d$ moments for any $m \leq \tO(n^{2d/9})$. For $d=3$ and $6$, the bound improves to $m \leq \tO(n)$ and $m\leq \tO(n^2)$ respectively.
\end{corollary}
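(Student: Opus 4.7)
The plan is to reduce Corollary~\ref{cor:mog} to Theorem~\ref{thm:main-theorem-hd-intro-smoothed} in three steps: (i) rewrite the $2d$-th mixture moment as a power-sum of quadratics, (ii) invoke the theorem to recover the matrices $M_i := w_i^{1/d}\Sigma_i$, and (iii) use a lower-order moment to separate $w_i$ from $\Sigma_i$.

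For step (i), I would use the footnote computation: since $\E_{y\sim \cN(0,\Sigma)}[\iprod{y,x}^{2d}] = \frac{(2d)!}{2^d d!}(x^\top\Sigma x)^d$, the $2d$-th moment polynomial of the mixture is, up to an explicit positive constant,
\[
P(x) \;=\; \sum_{i\le m} w_i (x^\top\Sigma_i x)^d \;=\; \sum_{i\le m}\bigl(w_i^{1/d}\, x^\top \Sigma_i x\bigr)^d,
\]
using $w_i\ge 0$. This is exactly the input form of Theorem~\ref{thm:main-theorem-hd-intro-smoothed} with $K=2$ and quadratic components $A_i(x) = w_i^{1/d} x^\top\Sigma_i x$.

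For step (ii), I would check that the uniform smoothing on $\Sigma_i$ described in the corollary induces a smoothing of the coefficients of each $A_i$ inside the class permitted by the theorem. The coefficients of $A_i$ are entries of $\Sigma_i$ multiplied by the fixed scalar $w_i^{1/d}$; each entry is an independent continuous random variable on an interval, so the induced single-coefficient distribution has bounded density and allots probability at most $1/n^{\Omega(d)}$ to any point once $\rho$ is a small enough inverse polynomial, which is exactly the property flagged in the remark following the theorem. (The large deterministic diagonal shift used to guarantee $\Sigma_i\succeq 0$ is absorbed into the unsmoothed base matrix and does not affect the smoothing hypothesis.) The theorem then returns the quadratics $A_i$ up to permutation and, when $d$ is even, sign; the sign is resolved by selecting the representative for which $M_i := w_i^{1/d}\Sigma_i$ is positive semidefinite, which is forced by $w_i\ge 0$ and $\Sigma_i\succeq 0$.

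For step (iii), given the multiset $\{M_i\}_{i\le m}$ and writing $\Sigma_i = w_i^{-1/d} M_i$, the degree-$2k$ moment polynomial of the mixture equals, up to scalars, $\sum_{i\le m} w_i^{(d-k)/d} (x^\top M_i x)^k$ for every $1\le k\le d-1$. In each parameter regime of the corollary I can pick $k$ so that $m\le \binom{n+2k-1}{2k}$ (take $k=1$ for $d=3$, and $k\ge 2$ thereafter); a standard genericity argument against the smoothing shows that the polynomials $(x^\top M_i x)^k$ are then linearly independent outside a measure-zero set, which makes this linear system solvable and determines each $w_i^{(d-k)/d}$, hence $w_i$ and $\Sigma_i = w_i^{-1/d} M_i$. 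The only substantive check in the argument is the compatibility of the two smoothing models in step (ii), which is addressed by the general-smoothing remark following Theorem~\ref{thm:main-theorem-hd-intro-smoothed}.
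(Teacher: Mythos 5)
Your proof is correct and follows the same reduction the paper has in mind: the Gaussian moment identity in the footnote rewrites the $2d$-th mixture moment as $\sum_i (w_i^{1/d}\,x^\top\Sigma_i x)^d$, to which Theorem~\ref{thm:main-theorem-hd-intro-smoothed} (under the general continuous-smoothing remark, with signs resolved by positive semidefiniteness) applies directly. You also make explicit a step the paper elides under ``immediate corollary'': the $2d$-th moment determines only $M_i = w_i^{1/d}\Sigma_i$, and your use of a lower-order moment together with the generic linear independence of $\{(x^\top M_i x)^k\}$ (the same fact underlying Lemma~\ref{lem:condition-number-hd} for the tensor-decomposition step, extended to smoothed matrices via Lemma~\ref{lem:sing-value-generic}) is the right way to recover $w_i$ and $\Sigma_i$ separately.
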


\paragraph{Error-Resilience for Random Components}
When $P(x) = \sum_i A_i(x)^{d}+E(x)$ where each $A_i$ has independent, standard Gaussian coefficients, we prove that the our algorithm above in fact is error-resilient and tolerates an inverse polynomial error in every coefficient of the input $\wh{P}$. Indeed, Theorem~\ref{thm:main-theorem-hd-intro-smoothed} above is obtained essentially as a corollary (combined with simple algebraic tools) of this stronger analysis for random components; see Section~\ref{sec:generic-A-appendix}.

\begin{theorem}[Power-sum Decomposition of Random Polynomials, See Theorems~\ref{thm:main-theorem-hd}, \ref{thm:main-theorem-high-degree}] \label{thm:main-theorem-hd-intro}
There is a polynomial time algorithm that takes input an $n$-variate degree-$Kd$ (for $d$ a multiple of $3$) polynomial of the form $\wh{P}(x) = \sum_{i \leq m} A_i(x)^{d}+E(x)$ where $A_i$ is a degree-$K$ polynomial with independent $\cN(0,1)$ coefficients, and $E(x)$ is an arbitrary polynomial of degree $Kd$, and has the following guarantees: 
with probability at least $0.99$ over the draw of $A_i$s and internal randomness, it outputs the set $\{\wt{A}_i \mid i \leq m\}$ that contains an estimate of each $A_i$ up to permutation (and signs, if $d$ is even) with an error of at most $n^{O(1)}\Norm{E}_F^{1/d}$ whenever
\begin{itemize}
    \item $m \leq \tO(n)$ for $d=3$ and $K=2$,
    \item $m \leq \tO(n^2)$ for $d=6$ and $K=2$,
    \item $m \leq \tO(n^{2d/9})$ for any $d \geq 9$ and $K=2$,
    \item $m \leq \tO(n^{2Kd/3(5K-4)})$ for all $d \geq 9$ and $K \geq 2$.
\end{itemize}
\end{theorem}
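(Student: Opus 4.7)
The plan follows the two-stage \emph{desymmetrize+decompose} template. In the first stage I extract the linear subspace $V = \mathrm{span}\{\mathrm{vec}(A_i^{k})\}_{i \leq m}$ for an integer $k$ depending on $d$ and $K$. In the second stage I run a symmetrized, Jennrich-style tensor decomposition on $V$ to recover the individual $A_i$ up to permutation and, when $d$ is even, sign. Throughout, the error analysis is driven by inverse polynomial lower bounds on the \emph{smallest} singular values of a small family of structured random matrices whose entries are low-degree polynomials in the Gaussian coefficients of the $A_i$.

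For span recovery, I differentiate $\wh P$ to a suitable order $\ell$. After a Leibniz expansion of $\partial^\alpha A_i^d$, each derivative is a linear combination of tensors of the form $A_i^{d-j}$ times products of partial derivatives of $A_i$, plus the noise contribution $\partial^\alpha E$ whose Frobenius norm is $O(\Norm{E}_F)$. I then restrict a carefully chosen subset of the $x$-variables to concrete points, which converts the ``Jacobian factors'' multiplying each $A_i^{d-j}$ into known scalars and decouples the cross-terms $A_i A_{i'}$. Stacking the resulting restricted derivatives as rows produces a matrix of the form $M \Lambda F + E'$, where the rows of $F$ are the vectorized target tensors $A_i^{k}$, $\Lambda$ is a known coefficient matrix, $M$ is the structured random matrix alluded to above, and $E'$ has Frobenius norm $\poly(n) \Norm{E}_F$. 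Then $V$ is recovered as the row span of this matrix up to error $O(\Norm{E}_F / \sigma_{\min}(M))$.

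Given $V$, the decomposition stage takes two generic elements of $V$, reshapes them into matrices of the appropriate dimensions, and simultaneously diagonalizes in the style of Jennrich's algorithm. Because $V$ is spanned by symmetric powers, the eigenvectors of this diagonalization are in bijection with (symmetric powers of) the $A_i$; the $A_i$ themselves are read off, and scales are fit by least squares against $\wh P$. Stability of Jennrich turns an inverse-polynomial subspace-recovery error into an inverse-polynomial coefficient-recovery error, and a final $d$-th root extraction needed to pass from recovered $A_i^d$-information to $A_i$-information is what produces the $\Norm{E}_F^{1/d}$ exponent in the error bound.

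The principal obstacle is proving $\sigma_{\min}(M) \geq 1/\poly(n)$ with probability $0.99$ in each of the four parameter regimes. The entries of $M^\top M$ are heavily correlated degree-$\Theta(K)$ polynomials in the Gaussian coefficients, and the smoothed anti-concentration approach of Bhaskara-Charikar-Moitra-Vijayaraghavan loses roughly a $\sqrt{n}$ factor in the achievable rank, well short of $m \sim n$. To reach the tight thresholds I would instead write $M^\top M = \E[M^\top M] + \Delta$, compute $\E[M^\top M]$ exactly by Wick expansion and lower bound its smallest eigenvalue directly, and then control the \emph{largest} eigenvalue of the \emph{deviation} matrix $\Delta$ via the trace moment method. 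The key technical step is to decompose each entry of $\Delta$ into a sum of \emph{graph matrices} in the sense developed for sum-of-squares lower bound analyses, and to bound each graph-matrix summand by its dominant shape norm. The four thresholds $m \leq \tO(n), \tO(n^2), \tO(n^{2d/9}), \tO(n^{2Kd/3(5K-4)})$ are precisely the regimes in which this dominant-shape norm stays smaller than $\sigma_{\min}(\E[M^\top M])$.
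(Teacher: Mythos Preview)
Your high-level template and toolbox (partial derivatives, restrictions, graph-matrix norm bounds for deviation matrices) are aligned with the paper, but the span-recovery stage has a genuine gap that the sketch glosses over.

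After taking $2D$-th order partials and restricting to $\ell$ variables, each restricted derivative of $P=\sum_t A_t^{3D}$ lies in
\[
\calV_D \;=\; \mathrm{span}\bigl\{\,B_t(y)^D\,y_T : t\in[m],\ T\in[\ell]^{2D}\,\bigr\},
\]
the span of \emph{all} degree-$2D$ polynomial multiples of the $B_t^D$, a space of dimension $m\ell_{2D}-\binom{m}{2}$, not $m$. The ``Jacobian factors'' you refer to are themselves genuine polynomials in the surviving variables $y$, not scalars, so no choice of restriction points produces a factorization $M\Lambda F$ with $F$ having only $m$ rows $A_i^k$; what you recover is all of $\calV_D$. (Also, there are no ``cross-terms $A_iA_{i'}$'' to decouple: each $\partial_I\sum_t A_t^d$ is already a sum over $t$ of terms in $A_t$ alone.) Running Jennrich directly on $\calV_D$ does not isolate the $B_t^D$.

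The paper's key new step is precisely how to pass from $\calV_D$ to $\mathrm{span}\{B_t^D\}$: intersect $\calV_D$ with $\calV_p=\mathrm{span}\{p(y)^D y_T\}$ for a random quadratic $p$, prove $\calV_D\cap\calV_p=\mathrm{span}\{p^D B_t^D\}$, and divide out $p^D$. The correctness of this intersection rests on characterizing the null space of the matrix $V$ with columns $B_t^D y_T$: the only solutions to $\sum_t B_t(y)^D q_t(y)=0$ are the $\binom{m}{2}$ obvious swaps $q_{t_1}=B_{t_2}^D,\ q_{t_2}=-B_{t_1}^D$. Correspondingly, the singular-value analysis cannot be done by your simple $\E[M^\top M]+\Delta$ split. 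The relevant matrix $V^\top V$ has a genuine $\binom{m}{2}$-dimensional kernel, and some off-diagonal graph matrices (the ``$W$'' term) have norm $\Theta(\ell^{KD})$, too large to charge to the diagonal. The paper instead proves $\lambda_{\min}\bigl(V^\top V + N^\top N\bigr)\ge\Omega(\ell^{KD})$, where $N$ encodes the swap relations and its $N^\top N$ contributes a $-W$ that exactly cancels the troublesome block. Without the span-finding intersection and this null-space/charging mechanism, your approach is essentially GHK's and stalls near $m\sim n^{2/3}$, well below the stated thresholds.
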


\subsection{Discussion and comparison to prior works}
Theorem~\ref{thm:main-theorem-hd-intro} shows that our algorithm tolerates an inverse polynomial amount of noise in each entry when the component $A_i$s are random. Theorem~\ref{thm:main-theorem-hd-intro-smoothed} is in fact an immediate corollary of our analysis for the random case combined with standard tools. Our result for generic (as opposed to random) $p_i$s only handles an inverse exponential amount of noise. We believe that the same algorithm should handle inverse polynomial noise (i.e., is well-conditioned) in any reasonable smoothed analysis model. However, establishing such a result likely requires new  techniques for analyzing condition numbers of matrices with dependent, low-degree polynomial entries in independent random variables. 

For the simplest setting of sums of cubics of quadratics (i.e., $K=2$ and $d = 3$), our theorem yields a polynomial time algorithm that succeeds whenever $m\leq \tO(n)$. This improves on the algorithm implicit in~\cite{GeHK15} that succeeds\footnote{Their algorithm succeeds more generally for smoothed $A_i$s but in addition, needs access to $\sum_i A_i(x)^{2}$.} for $m \leq \tO(\sqrt{n})$. As we discuss in Section~\ref{sec:overview}, natural extensions of their techniques to higher degree power-sums also appear to break down for $m \geq n$.

The work of~\cite{GKS20} recently found a more sophisticated algorithm (that works in general on all large enough fields) that relies on simultaneous decomposition of vector spaces that escapes this barrier. In particular, they showed that for any $K$, $m=n^{O(1)}$, there is an algorithm that succeeds in decomposing a sum of $m$ $d$th powers of generic degree-$K$ polynomials for \emph{large enough $d$}. Their algorithm however requires that $d$ be very large as a function of $K$ and $\log_n m$ and in particular, does not work for $d=3$ (or even $100$) for example. Their algorithm relies on exact algorithms for certain algebraic operations and does not appear to tolerate any more than an inverse exponential (in $n$) amount of noise in the input.

The corollary above immediately improves the moment identifiability of mixtures of smoothed centered Gaussians shown in both the works above. Extending our algorithm to the ``asymmetric" case of sums of products of quadratics (instead of powers) will allow the above corollary to succeed for Gaussians with arbitrary mean, but we do not pursue this goal in this paper. We also note that unlike~\cite{GeHK15}, our theorem above does not immediately yield a polynomial time algorithm for learning mixtures of smoothed Gaussians from samples (similar to~\cite{GKS20}). This is because samples from the mixture only give us access to the corresponding sum of powers of quadratics with inverse polynomial additive error in each entry while our current analysis for the case of smoothed components only handles an inverse exponential error. 

\paragraph{Open Questions} 
Despite the progress in this work, we are far from understanding identifiability and algorithms for power-sum decomposition. Our result shows unique identifiability for sums of $\sim n$ cubics of quadratics. Could this be improved to $n^2$? Conversely, could we produce evidence of hardness of decomposing sums of $\omega(n)$ cubics of quadratics? Analogous questions arise for higher-degree polynomials and we mention one that eludes the current approach in both our work and~\cite{GKS20}: is it possible to obtain efficient algorithms that succeed in decomposing sums of $m$ $d$-th powers of degree-$K$ polynomials where $m$ grows as $n^{f(K)d}$ for some $f(K) \rightarrow \infty$ as $K \rightarrow \infty$?

In a different direction, a natural question is to generalize our result to obtain a polynomial time algorithm that decomposes power-sums of smoothed polynomials while tolerating an inverse polynomial entrywise error. Our current analysis obtains such a guarantee for power-sums of random polynomials but can only handle an inverse exponential error in the smoothed setting. We suspect that this goal requires new tools to analyze the smallest singular values of matrices whose entries are low-degree polynomials in independent Gaussians with \textit{non-zero means}.


\subsection{Brief overview of our techniques} 

Given (the special case of) sum of cubics of quadratics $P(x) = \sum_{i\leq m} (x^{\top} A_i x)^3$ for $n \times n$ symmetric matrices $A_i$ with coefficient tensor $\sum_{i\leq m} \Sym_{6}(A_i^{\otimes 3})$, the main idea of the algorithm in~\cite{GeHK15} is a conceptually simple ``desymmetrize + tensor-decompose'' approach. Here, desymmetrization reverses the effect of the polynomial symmetry and yields $\sum_i A_i^{\otimes 3}$, and one can then apply standard tensor decomposition. While $\Sym_6$ is a linear operator on 6th order tensors with an $\Omega(n^6)$-dimensional kernel, it turns out that it is invertible when restricted to tensors where the component $A_i$s are restricted to a \emph{known} generic subspace. The work of~\cite{GeHK15} shows how to estimate the span of $A_i$s -- i.e.\ this subspace -- for $m \leq \tO(\sqrt{n})$. But their techniques do not seem to extend to any $m \gg n^{2/3}$. Indeed, Garg, Kayal and Saha~\cite{GKS20} comment that reduction to tensor decomposition of the sort above cannot yield algorithms that work for $m \gg n$. As a result, they build a considerably more sophisticated approach that relies on an algebraic algorithm for simultaneous decomposition of vector spaces.

Our main idea comes as a surprise in the light of this discussion: we in fact give a conceptually simple ``desymmetrize+tensor-decompose'' based algorithm that substantially improves the bounds obtained in~\cite{GKS20}. Our key idea is a ``Span Finding algorithm'' that recovers the linear span of $A_i$s restricted to any $O(\sqrt{n})$ variables by computing the linear span of \emph{restrictions} of partial derivatives of $P$ and intersecting it with an appropriately constructed random subspace (see Section~\ref{sec:overview} for a more detailed overview).

Our algorithm is implemented using error-resilient numerical linear algebraic operations. In particular, to establish polynomial stability (Theorem~\ref{thm:main-theorem-hd-intro}) for random $A_i$, we need to understand the \emph{smallest} singular values (to obtain well-conditionedness) of certain correlated random matrices arising in our analyses. These random matrices are rather complicated with entries computed as low-degree polynomials (much smaller than the ambient dimension) of independent random variables. Standard techniques for analyzing such bounds (such as the ``leave-one-out'' method~\cite{TV09, TaoV10, RudVer:08} employed in prior works on tensor decomposition~\cite{BhaskaraCMV14, MaSS16}) are inadequate for our purposes and yield weaker bounds (which, in particular, do not allow us to handle $m \sim \sqrt{n}$ for sum of cubics of quadratics, for example).

Instead, we rely on a new elementary method that establishes singular value lower bounds by studying \emph{spectral norm} upper bounds of certain associated deviation matrices. We analyze and prove strong bounds on the spectral norm of such matrices using the graphical matrix decomposition technique that was introduced in~\cite{HKP15, BHK19} and recently used and refined in several works~\cite{graphmatrixbounds, MohantyRX20, GJJPR20, PotechinR20,HK22,sparseindset} on establishing sum-of-squares lower bounds for average-case problems and reducing the bounds to understanding certain combinatorial problems on graphs associated with the matrix. As far as we know, our work is the first use of this technique to prove singular value \emph{lower bounds} and condition numbers in algorithms. We believe that the graphical matrix decomposition toolbox will find further applications in the analyses of numerical algorithms.


\section{Technical Overview} \label{sec:overview}
In this section, we give a high-level overview of our algorithm and the key ideas that go into its design and analysis. Let's fix $P(x) = \sum_{t \leq m} A_t(x)^d+E(x)$ where $A_t(x)$ are homogeneous polynomials of degree $K$ in $n$ indeterminates $x_1, x_2, \ldots, x_n$. Throughout this paper, we will abuse notation slightly and use $A_t$ to also denote the $K$-th order coefficient tensor of the associated polynomial. We will also use $\wt{O}$ to suppress $\polylog(n)$ factors. To begin with, we will focus on the case of \emph{generic} $A_t$s -- this simply means that $A_t$s do not satisfy any of some appropriate finite collection of polynomial equations. Eventually, as we explain in Section~\ref{sec:overview-our-approach}, these equations will simply correspond to full-rankness of certain matrices that arise in our analyses. We will discuss a new method to prove strong polynomial condition number bounds for random $A_t$s in the following section. The results for smoothed/generic $A_t$s then follow via standard, simple tools.

Just like the special case of tensor decomposition (i.e., when $A_t$ are linear forms), the decomposition is not uniquely identifiable from a sum of their quadratics (i.e., $d=2$) except in degenerate cases (see Section~\ref{app:quadratic-failure}). Thus, the simplest non-trivial setting turns out to be $d=3$. 

In this section, we will focus on the simplest setting of $K=2$ (and thus, $A_t$ are simply $n \times n$ matrices) and $d=3$. This, by itself, is an important special case and captures the question of identifiability of parameters from the $6$th moments of a mixture of $m$ $n$-dimensional Gaussians with zero-mean and smoothed covariance matrices, and our main results (Theorems~\ref{thm:main-theorem-hd-intro-smoothed} and~\ref{thm:main-theorem-hd-intro}) improve the current best identifiability results (Corollary~\ref{cor:mog}).

\paragraph{Structure of the Coefficient Tensor} Up to a constant scaling, the coefficient tensor of $P$ equals $\sum_{t\leq m} \Sym(A_t^{\otimes 3})$. Here, $\Sym=\Sym_6$ acts on $A_t^{\otimes 3}$ by averaging over entries obtained by permuting the $6$ elements involved. That is, for a uniformly random permutation $\pi:[6]\rightarrow [6]$,
\[
\Sym(A_t^{\otimes 3})(a_1, a_2, \ldots, a_6)= \E_{\pi \sim \bbS_6}\Brac{A_t(a_{\pi(1)},a_{\pi(2)})A_t(a_{\pi(3)}, a_{\pi(4)})A_t(a_{\pi(5)},a_{\pi(6)})}\mper
\] 

\paragraph{Relationship to Tensor Decomposition} 
It is natural to compare our input to the related, \emph{desymmetrized} tensor $\sum_t A_t^{\otimes 3}$, given which, we can immediately obtain the $A_t$s by applying standard tensor decomposition algorithms~\cite{Harshman1970Foundations, MR1238921} (see Fact~\ref{fact:tensor-decomposition-algo}) whenever $A_t$s are linearly independent as vectors in $\binom{n+1}{2}$ dimensions. Our input, however, is not even close to a low-rank tensor because of the action of $\Sym_6$ that generates essentially maximal rank terms even starting from a single generic $A_t$. Indeed, this effect is visible for just \emph{bivariate} polynomials. In Section~\ref{sec:identifiability-cubics-of-quadratics}, we construct two different (and in fact, $\Omega(1)$-far in Frobenius norm) collections of robustly linearly independent bivariate quadratic polynomials such that the sums of their cubics have the same coefficient tensors. Thus, even though such $A_t$s can be uniquely and efficiently recovered from $\sum_t A_t^{\otimes 3}$ via standard tensor decomposition, it is information theoretically impossible to do so given $\sum_t \Sym (A_t^{\otimes 3})$.

\paragraph{The Ge-Huang-Kakade~\cite{GeHK15} Approach} 
The discussion above presents a conceptually simple way forward: if we could somehow compute the desymmetrized tensor (i.e., undo the effect of the group action) from the input, then we have reduced the problem to standard tensor decomposition. This is a bit tricky as the linear operation $\Sym_6$ on $6$th order tensors is a contraction that maps a $\binom{n+1}{2}^3 \sim n^6/8$-dimensional space into a $\binom{n+5}{6} \sim n^6/720$ dimensional subspace and is clearly not invertible (in fact, has a $\Omega(n^6)$-dimensional kernel) on arbitrary $6$th order tensors. The main idea in GHK is to observe that $\Sym_6$ can be invertible \emph{when restricted} to $6$th order tensors in some smaller subspace. In particular, let $B_1, B_2, \ldots, B_m$ be a basis for the span of the matrices $A_t$. Then, the desymmetrized coefficient tensor of $P$ is a linear combination of $B_i \otimes B_j \otimes B_k$ -- a subspace of $m^3$ dimension which is $\ll n^6/720$ if $m \ll n^2$. Proving such a claim requires analysis of the rank (and singular values, for polynomial error-stability) of the matrix representing $\Sym_6$ on the linear span of $A_t$s and GHK managed to prove it for any $m \ll \sqrt{n}$.

To obtain the span of $A_t$s, GHK rely on access to $P_4 = \sum_{t\leq m} \Sym_4(A_t^{\otimes 2})$ in addition to the input tensor above. Plugging in $e_a, e_b$ in the first two modes of this tensor yields an $n \times n$ matrix (i.e., a 2-D slice) of the form: $\sum_t A_t[a,b] A_t + \sum_t A_t[a] \ot A_t[b]$ where $A_t[i]$ is the $i$-th column of $A_t$. As $a,b$ vary, the first term generates the subspace of the span of $A_t$s. However, each such 2-D slice has an additive ``error'' that lies in the span of the rank 1 forms in the 2nd term above. The GHK idea is to zero out the rank 1 terms by projecting the 2-D slices to a subspace $\calS^\perp$, where $\calS$ contains the span of the rank 1 terms. To compute $\cal{S}$, they choose a subset $H \subseteq [n]$ and plug in $a,b,c \in H$ into three modes of $P_4$. The resulting $1$-D slices are linear combinations of the columns $A_t[a]$ for $a\in H$ and $t \in [m]$. If $m |H| \ll n$, then all $A_t[a]$ are linearly independent generically, while if $|H|^3 \gg m|H|$, then there are enough slices to generate the span of $A_t[a]$ for all $a \in H$ and $t \in [m]$. This trade-off is optimized at $m \sim n^{2/3}$ and $|H| \sim n^{1/3}$. Given a good estimate of $\calS$, we can now plug in $a,b \in H$ in two modes of $P_4$ and recover the span of $A_t$ (restricted to columns in $H$) by projecting the resulting 2-D slices off $\cal{S}$. Repeating for disjoint choices of $H$ completes the argument. In order to analyze the linear independence (and condition numbers) of the vectors arising in this analysis, GHK need to work with a somewhat smaller $m \sim \sqrt{n}$ in their argument.


\paragraph{Key Bottleneck in the GHK Approach} In our situation, we only have the sum of cubics $P$ as input (but not $P_4$). But even given $P_4$, the crucial bottleneck is the need for recovering the span of a subset of columns of the $A_t$s. With more sophisticated analyses, given the above trade-offs, it's plausible that a sum of $d$-th powers of $A_t$ allows handling $m$ as large as $n^{1-O(1/d)}$, but there appears to be an inherent barrier at $m \sim n$. The GHK approach also seems to get unwieldy as it involves plugging in standard basis vectors in several modes of the tensor. This leads to more ``spurious'' terms that one must zero-out (instead of just the rank-1 terms for $P_4$).

Thus, even given higher powers, the GHK approach appears to have a natural break-point at $m \sim n$, and even handling $m \gg \sqrt{n}$ seems to require somewhat unwieldy analysis. In fact, in their recent work, Garg, Kayal and Saha~\cite{GKS20} commented (see Page 17) \emph{"However, we believe such an approach cannot be made to handle larger number of summands (say poly(n)) even in the quadratic case as the lower bounds for
sums of powers of quadratics need substantially newer ideas than the linear case..."}.
%
%
%
\paragraph{The Garg-Kayal-Saha~\cite{GKS20} Approach} In their beautiful recent work, GKS managed to find a different approach that escapes the above obstacles and showed an algorithm  (that works on both finite fields and $\bbQ$) that for any $K$ and $m = n^{O(1)}$, manages to decompose $P(x) = \sum_{t \leq m} A_t(x)^d$ for large enough $d$ (and generic degree-$K$ polynomials $A_t$). As discussed before, their approach requires $d$ to be a large enough constant as a function of $K$ and $\log_n m$ (though they remarked that the bounds could likely be improved, already for $K=2$, they need $d \geq 2^{335}$ and $m \leq n^{d/1100}$). Their main idea, however, is relevant to our approach so we briefly describe it here. 

We restrict our attention to the quadratic case ($K=2$) from here on.
The GKS approach relies on the linear span of \emph{partial derivatives} of the input polynomial $P$. In fact, taking $r$th partial derivatives of $P$ is essentially the same (though, more principled and easier to analyze) as ``plugging in'' all possible standard basis vectors in $r$ modes of the input coefficient tensor as in GHK. GKS observed that for $r < d$, the $n_{r}= \binom{n+r-1}{r}$ many $r$-th partial derivatives of $P$ are all of the form $\sum_{t \leq m} A_t(x)^{d-r} Q_t(x)$ for some degree-$r$ polynomials $Q_t$. This linear subspace is \emph{strictly contained within} the space of all polynomial multiples of $A_t(x)^{d-r}$ -- the containment is strict because the latter space is of dimension $\sim m n_{r} \gg n_{r}$ for generic $A_t$s. However, if we were to project each of the $n_{r}$ partial derivatives down to be a function of some small enough $\ell=o(n)$ variables $y$, then, the dimension counting above is no longer an obstruction to the span being all multiples of the projected $A_t(x)^{d-r}$. Indeed, for generic $A_t$, the subspace $\calU$ of the \emph{projected} partial derivatives does in fact equal the subspace $\calV$ of all multiples of $B_t(y)^{d-r}$, where $B_t = M^{\top}A_tM$ (the projection of $A_t$) and $B_t(y) = A_t(My) = y^\top M^\top A_t My$ for an $n \times \ell$ projection matrix $M$.

\paragraph{Key Bottleneck in the GKS Approach} If we take $r=d-1$, then, it appears that the partial derivatives give us access to the subspace of span of \emph{multiples of} $B_t$s (of degree $2d-r = d+1$ for $K=2$). If we could extract the span of quadratics $B_t$ from this subspace, we could implement desymmetrization and tensor decomposition to obtain at least the $B_t$s (i.e., the projected $A_t$s). 

Unfortunately, this hope did not materialize for GKS who managed only to recover the span of $B_t(y)^{d-r}$ for $r<2d/3$. This is because their analysis of a certain ``multi-GCD'' requires that the subspaces $B_t(y)^{d-r}y_T$ for $|T| =r$ for each $t\in [m]$ only have trivial (i.e., $0$) pairwise intersection. This condition is impossible if $r\geq 2(d-r)$ or $r \geq 2d/3$; for example, if $d=3$ and $r=2$, then the degree-4 polynomial $B_t(y)B_{t'}(y)$ is clearly in the subspaces corresponding to both $t$ and $t'$, which is a non-trivial intersection! Thus, the GKS analysis is restricted to work with $r < 2d/3$ and in particular, only manages to recover the span of $B_t(y)^{d-r}$ (for $d-r>d/3$). This route rules out the desymmetrization + tensor decomposition approach.

As a result, GKS used a more complicated sequence of operations that involves taking projections of the partial derivatives and algorithms for simultaneous decomposition of vector spaces into irreducibles which they analyze by studying the associated ``adjoint algebra''. The two-step projection step requires that $d$ be very large as a function of $\log_n m$ (and degree $K$ of the $A_t$s).

\paragraph{Summary} The ``desymmetrize + tensor decompose'' approach of GHK is elegant and simple but suffers from an inherent bottleneck for going beyond $m \sim n$ (or even $n^{2/3}$) for sums of cubics (or higher powers) of quadratics and gets unwieldy as $d$ gets large. The GKS approach manages to handle any $m = n^{O(1)}$ components but only for very large $d$ and relies on a somewhat complicated algebraic algorithm. While GKS do not do this, finding a polynomially conditioned variant of their algorithm will likely require significant effort.  

\subsection{Our approach and outline of our algorithm} \label{sec:overview-our-approach}
Somewhat surprisingly, we manage to find an algorithm that achieves the best of both worlds. Our algorithm relies on the conceptually simple approach of desymmetrizing the input tensor (as in GHK) while at the same time managing to not only hit $m \sim n$ when $K=2$ and $d=3$ but also get a substantially improved trade-off compared to GKS for all $m,d,K$. Further, we find a polynomially stable implementation of our algorithm when $A_t$s are random by establishing condition number upper bounds on the structured random matrices that arise in our analysis. 

In the following, we explain the main components in our algorithm and analysis: insights that rescue the simple ``desymmetrize + tensor decompose'' approach, the resulting algorithm, and a new method to prove strong condition number upper bounds on structured random matrices. We will focus on the case when the $A_t$s have independent $\cN(0,1)$ entries in the following section. For this setting, we obtain an algorithm with polynomial error-stability guarantees. Our result for generic (or smoothed) $A_t$ is a simple corollary of this result using standard tools.

\paragraph{Recovering the Span of $B_t$s} Recall that the GKS observation shows that given a polynomial $P(x) = \sum_{t \leq m} A_t(x)^{d}$ for quadratic $A_t$s, the subspace $\calU$ spanned by $r$-th partial derivatives of $P$, when \emph{projected} to a sufficiently small dimension $\ell=o(n)$ variables $y$, equals the span $\calV = \spn(B_t(y)^{d-r} y_T \mid t\in[m], T\in [\ell]^r)$ for $B_t(y) = A_t(My)$ where $M$ is an $n \times \ell$ projection matrix.

GKS then perform a multi-GCD step that recovers the span of $B_t(y)^{d-r}$ from $\calV$ and their analysis requires the subspaces $\{B_t(y)^{d-r} y_T \mid T \in [\ell]^r\}$ for each $t\in[m]$ to have only trivial pairwise intersection (i.e.\ $=\{0\}$) . Our key idea is to observe that this assumption is not crucial! We can extract the span of powers of $B_t$ as long as these subspaces do not have a large intersection. As discussed before, when $r \geq 2d/3$, their analysis fails because of some obvious intersections between the above subspaces. We substantially improve their analysis by observing that for random polynomials these obvious intersections between the subspaces \emph{are the only ones} possible! 

More precisely, let's restrict to $d=3$ and consider the subspace of projected (we in fact show that simply \emph{restricting} the variables suffices) partial derivatives of order $r=2$ of $P$. Then, the subspace of restricted 2nd order partial derivatives of $P$ contains homogeneous polynomials of degree $4$.
For random $A_t$s, we fully characterize the set of quadratic polynomials $\{q_t \mid t \leq m\}$ that satisfy the polynomial equality $\sum_{t \leq m} B_t(y) q_t(y) = 0$.
Observe that for any $s\neq t\in[m]$, $q_s = B_{t}$ and $q_{t} = -B_s$ is clearly in the solution space. Such solutions span a subspace of dimension $\binom{m}{2}$. In Lemma~\ref{lem:V-singular-value-hd}, we prove that these solutions are in fact the \emph{only} solutions whenever $m \leq \wt{O}(n)$.

This understanding immediately allows us to use a simple subroutine to recover the span of $\{B_t(y) \mid t \leq m\}$.  Specifically, we take a random homogeneous quadratic polynomial $p(y)$ and let $\calV_p$ be the subspace of quartic multiples of $p$, that is, $\calV_p = \spn(p(y)y_S \mid |S| = 2)$. Then, any non-zero $f(y)=p(y)q_0(y) \in \calV \cap \calV_p$ must be a solution to $\sum_{t \leq m} B_t(y) q_t(y) = p(y)q_0(y)$. The above characterization of the solution subspace allows us to conclude that whenever $q_0$ is non-zero, 
it lies in the span of $B_t(y)$. Thus, we have confirmed that $\calV \cap \calV_p = \spn(p(y)B_t(y) \mid t \leq m)$, and dividing this subspace by $p$ immediately yields $\spn(B_t(y) \mid t\leq m)$!

Thus, to summarize, our algorithm for finding the span of $B_t$s is simple: 
\begin{enumerate}
\item Restrict all 2nd order partial derivatives of $P$ to some $\ell$ variables ($\ell=O(\sqrt{n})$ suffices),
\item Find intersection of this subspace with $\calV_p$ for a random homogeneous quadratic polynomial $p$ and divide the resulting subspace by $p$.
\end{enumerate}

The analog of this result for $d=3D$ powers of quadratics relies on a similar lemma that characterizes the solution space of $\sum_{t \leq m} B_t(y)^D q_t(y) = 0$. For sums of powers of degree $K>2$ polynomials however, the characterization gets a little more involved as unlike in the case of quadratic $B_t$s, $q_t$ will have a larger degree than $B_t^D$, which makes the solution space larger. We present this characterization in more detail in Section~\ref{sec:analysis-of-V-high-deg}.

\paragraph{Noise Resilient Implementation} For obtaining a noise-resilient version of the above method, we first need a noise-robust version of the GKS observation that the subspace $\calU$ of restricted partial derivatives equals the subspace $\calV$ spanned by multiples of $B_t(y)$, and also a robust way of obtaining a basis for $\calV$. This amounts to understanding the smallest nonzero singular value of a certain matrix that we analyze in Lemma~\ref{lem:singular-value-of-U-hd}. Similarly, we bound the nonzero singular values of the matrix of linear equations $\sum_{t \leq m} B_t(y) q_t(y) = 0$ (described above) in Lemma~\ref{lem:V-singular-value-hd}. Finally, we use a simple method in Lemma~\ref{lem:robust-int-hd} to robustly compute the intersection of two subspaces given a basis for each by looking at the largest singular values of the sum of the corresponding projection matrices, allowing us to obtain a subspace close to the span of $B_t$.

\paragraph{Desymmetrization} The above discussions show how we can estimate the span of $B_t(y)$ for a restriction of the quadratic $A_t$ to some $\ell = O(\sqrt{n})$ variables. Given this subspace, we apply desymmetrization \emph{directly to the restricted polynomial} $P(My)$. To analyze this step, we need to understand the invertibility (and condition numbers) of the matrix representing the $\Sym_6$ linear transform on the subspace of the linear span of $B_t$. We establish the condition number upper bound in Lemma~\ref{lem:desymm-mainsv-hd}, thus obtaining the desymmetrized tensor $\sum_{t \leq m} B_t^{\otimes 3}$. 

\paragraph{Aggregating Restrictions} For a given restriction (via an $n \times \ell$ matrix $M$), the above steps give us access to the tensor $\sum_{t \leq m} B_t^{\otimes 3}$ where $B_t = M^\top A_t M$ is the $\ell \times \ell$ matrix of the restricted $A_t$. We would like to piece together such restrictions to obtain $\sum_{t \leq m} A_t^{\otimes 3}$. We show how to do this by working with a simple $n^6$-size \emph{pseudorandom} set of restriction matrices $M$ such that the average over the corresponding restricted 3rd order tensor gives us the unrestricted 3rd order tensor up to a known scaling. Our construction is a simple modification of the standard construction of $6$-wise independent hash families.

\paragraph{Tensor decomposition and taking s$D$-th roots} Given an estimate of $\sum_{t \leq m} A_t^{\otimes 3}$, we can apply the standard polynomially-stable tensor decomposition algorithms (Fact~\ref{fact:tensor-decomposition-algo}) to recover the $A_t$s. When we work with higher ($d=3D$) powers of quadratics (or degree-$K$ polynomials, more generally), this step only gives us $\Sym_{KD}(A_t^{\otimes D})$. The task of recovering $A_t$ given $\Sym_{KD}(A_t^{\otimes D})$ is a certain simple ``deconvolution'' problem. We give a noise-robust algorithm for this task that relies on a simple semidefinite program analyzed in Lemma~\ref{lem:analysis-D-root-hd}.

\subsection{Overview of singular value lower bounds}
For establishing polynomial stability of our algorithm for random $A_t$s and proving Theorem~\ref{thm:main-theorem-hd-intro}, we need to understand the condition number and in particular, the smallest singular value of certain random matrices that arise in our analyses. Analyzing the smallest singular value of random matrices turns out to be more challenging than the much better understood largest singular value. For matrices with independent and \emph{identically} distributed random subgaussian entries, a sharp bound was only achieved in the breakthrough work of \cite{RudVer:08} via a sophisticated analysis via the ``leave-one-out'' distance method. The matrices that arise in our analyses are significantly more involved. The entries are not independent but are instead computed as low-degree polynomials of independent random variables that are of polynomially smaller number than the dimension of the matrix. As a result, the entries exhibit large correlations, and the leave-one-out method appears hard to implement for such matrices. 

Instead, we adopt a different, more elementary but nimble method that obtains estimates of the smallest singular values via upper bounds on the \emph{largest} singular values of certain \emph{deviation} matrices. To see this method on a simple toy example, consider an $n \times m$ matrix (for $m \ll n$) $G$ of independent $\cN(0,1)$ entries. Then, we can write $G^{\top}G = n (1 \pm O(\frac{1}{\sqrt{n}})) \cdot \Id + \offdiag(G^{\top}G)$ where $\offdiag(G^{\top}G)$ zeros out the diagonal entries of $G^{\top}G$. To establish a lower bound on the $m$th singular value of $G$, it is thus enough to observe that $\spec{\offdiag(G^{\top}G)} \leq \wt{O}(\sqrt{mn})$ with high probability.

This argument works as long as $m \leq n/\polylog(n)$ and gives a sharp (up to the leading constant) estimate on the smallest singular value. Note that in this argument, we effectively ``charge'' the spectral norm of the off-diagonal ``deviation'' matrix to the smallest entry of the diagonal part. Such a strategy works so long as all columns of $G$ are of roughly similar length.

It turns out that despite its simplicity, this technique is surprisingly resilient for our purposes and unlike methods from prior works, it easily applies to the involved matrices that arise in our analysis, yielding bounds that are essentially sharp so long as we can keep the dimensions of the matrix somewhat ``lopsided'' (i.e.\ $m \ll n$ in the example above). This turns out to not be a handicap in our setting. 

In our analysis, the problem now reduces to bounding the spectral norm of certain correlated, low-degree polynomial-entry random matrices arising from the off-diagonal part of the matrices we analyze. While this can be quite complicated, we rely on the recent advances in understanding the spectral norm of such matrices~\cite{BHK19,graphmatrixbounds,sparseindset} in the context of proving Sum-of-Squares lower bounds for average-case optimization problems. This technique relies on decomposing random matrices into a linear combination of certain structured random matrices called \emph{graph matrices}. We rely on the tools from prior works that reduce the task of analyzing the spectral norm of such matrices to analyzing combinatorial properties of the underlying ``graph''. 

This technique gets us started but hits a snag as it turns out that some of the deviation matrices simply \emph{do not have small spectral norms}. We handle such terms by proving that the large spectral norm can be ``blamed'' on having large \emph{positive} eigenvalues that cannot affect the bounds on the smallest singular value. Formally, we provide a charging argument, reminiscent of the positivity analyses in the construction of sum-of-squares lower bounds~\cite{BHK19,GJJPR20,HK22,sparseindset}, to handle such terms and establish the required bounds on the spectral norm.

While somewhat technical, the proofs of singular value lower bounds for all the matrices in our analyses follow the same blueprint. We give a more detailed exposition of these tools (by means of an example) in Section~\ref{sec:overview-singular-val} before applying them to the matrices relevant to us.

\section{Preliminaries and Notation}
\label{sec:prelims}


\paragraph{Notations and definitions}
\begin{enumerate}
    \item \textbf{Multisets and monomials:} 
    We denote $[n] = \{1,2,\dots, n\}$.
    We say that $I \in [n]^k$ is a \emph{multiset} of size $k$ when the order of elements in $I$ does not matter.
    For a multiset $I$, we let $x_I$ denote the product of variables in $I$: $\prod_{i \in I}x_i$.
    For $n,k\in \N$, we define $n_k \coloneqq \binom{n+k-1}{k}$ as the number of multisets of $n$ of size $k$ which is also equal to the number of degree-$k$ monomials in $n$ variables.

    \item \textbf{Vectors, matrices, and tensors:}
    Given a vector $v \in \R^n$, we denote its $\ell_p$ norm by $\|v\|_p = (\sum_i |v_i|^p)^{1/p}$. Given a matrix $M\in \R^{n\times n}$, we let $\|M\|_F$ denote its Frobenius norm, $\spec{M}$ denote its spectral norm, and $M[i,j]$ denote the $(i,j)$ entry of $M$.
    Given a tensor $T \in (\R^n)^{\ot d}$ of order $d$, $\|T\|_F$ denotes its Frobenius norm, and $\fl(T) \in \R^{n^d}$ denotes the flattened vector of $T$ so that $\|\fl(T)\|_2 = \|T\|_F$.
    For index $I\in [n]^d$, we denote $T[I]$ to be the entry at index $I$.
    For two tensors $T, T'$, we denote $\Iprod{T, T'} = \Iprod{\fl(T), \fl(T')}$.

    \item \textbf{Polynomial coefficients:}
    Given a degree-$d$ polynomial $p(x_1,\ldots,x_n)$, we use $p$ to denote the coefficient vector of this polynomial, with entries of the vector indexed by monomials in lexicographic ordering.
    For a multiset $I$, we let $p[I]$ denote the coefficient of $p$ corresponding to the monomial $x_I$.
    With slight abuse of notation, when $p$ is homogeneous we also use $p$ to denote the \emph{symmetric} coefficient tensor of order $d$, i.e., $p(x) = \Iprod{p, x^{\ot d}}$.
    We interchangeably use the vector and tensor views when clear from context.
    For two polynomials $p, q$, we denote $\Iprod{p,q}$ as the inner product of their coefficient vectors, and $\|p\|^2 = \Iprod{p,p}.$\footnote{The Frobenius norm of the coefficient \emph{tensor} is different from the norm of the coefficient \emph{vector} due to necessary rescalings of tensor entries, but they only differ by constant factors depending on $d$.}

    \item \textbf{Symmetrization:}
    Given any tensor $T \in (\R^n)^{\ot d}$, we let $\Sym(T)$ denote the symmetric tensor such that for any index $I\in [n]^d$,
    \[\Sym(T)[I] = \E_{\pi \in \bbS_d} T[\pi(I)]\]
    where $\bbS_d$ denotes the set of permutations.
    For example, for a degree-$d$ homogeneous polynomial $p(x) = \Iprod{p, x^{\ot d}}$ and $k\in \N$, $\Sym(p^{\ot k})$ is the coefficient tensor of $p(x)^k$, i.e., $p(x)^k = \Iprod{\Sym(p^{\ot k}), x^{\ot dk}}$.

    \item \textbf{Random polynomial:}
    A random degree-$d$ homogeneous polynomial is a degree-$d$ homogeneous polynomial whose coefficients are picked independently and randomly from $\calN(0,1)$.

    \item \textbf{Partial derivatives:}
    Given a polynomial $p(x_1,\ldots,x_n)$ we let $\partial_{i}p(x_1,\ldots,x_n)$ denote the polynomial which is the partial derivative of $p$ with respect to variable $x_i$, and $\partial_ip$ denotes the coefficient vector for the same. For a multiset $I \in [n]^k$, $\partial_{I}p(x)$ denotes the polynomial obtained by taking partial derivatives with respect to $I$.

    \item \textbf{Linear span:} For a set of vectors $v_i \in \R^{n}$ we use $\spn(v_i)$ to denote their linear span. For a set of degree-$d$ homogeneous polynomials $p_i(x)$, we use $\spn(p_i(x))$ (or $\spn(p_i)$ for simplicity) to denote the linear span of their coefficient vectors lying in $\R^{n_d}$.

    \item \textbf{Linear subspace and projection:} For a $k$-dimensional subspace $\calU$ of $\R^n$, let $\proj(\calU)$ denote the $n \times n$ matrix that projects a vector in $\R^n$ to the subspace $\calU$. More specifically, let $U \in \R^{n\times k}$ be a matrix with columns consisting of orthonormal basis vectors $\{u_i\}_{i=1}^k$ of $\calU$, then $\proj(\calU) = UU^T$.
    Further, for two $k$-dimensional subspaces $\calU, \calV$ of $\R^n$, we define the difference between them as
    $\Norm{\calU - \calV}_F \coloneqq \Norm{\proj(\calU) - \proj(\calV)}_F$.
\end{enumerate}


\paragraph{Eigenspace Perturbation Bounds}

We now state some theorems that will be used to analyze the error resilience of our algorithms. The following theorem  gives us a stability result for the singular values of matrices: 
\begin{theorem}[Weyl's theorem]\label{thm:weyl}
Given matrices $A,E \in \R^{m \times n}$ with $m \geq n$, for all $k \leq n$ we have that:
\[\sigma_k(A) - \spec{E} \leq \sigma_k(A+E) \leq \sigma_k(A) + \spec{E}.\]
\end{theorem}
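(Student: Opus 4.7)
The plan is to prove Weyl's theorem for singular values via the classical Courant--Fischer min--max characterization. Specifically, for any matrix $M \in \R^{m \times n}$ with $m \geq n$ and any $k \leq n$, the $k$-th singular value admits the variational form
\[
\sigma_k(M) \;=\; \max_{\substack{S \subseteq \R^n \\ \dim S = k}} \; \min_{\substack{x \in S \\ \|x\|_2 = 1}} \|Mx\|_2,
\]
where the maximum is over all $k$-dimensional linear subspaces $S$ of $\R^n$. I would take this as the starting point (citing it as a standard fact, since a proof via the SVD is elementary).

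Next, I would establish the upper bound $\sigma_k(A+E) \leq \sigma_k(A) + \spec{E}$ by a triangle-inequality argument. For any unit vector $x \in \R^n$, the triangle inequality gives $\|(A+E)x\|_2 \leq \|Ax\|_2 + \|Ex\|_2 \leq \|Ax\|_2 + \spec{E}$. Hence, for every $k$-dimensional subspace $S$, $\min_{x \in S, \|x\|_2=1} \|(A+E)x\|_2 \leq \min_{x \in S, \|x\|_2=1} \|Ax\|_2 + \spec{E}$. Taking the maximum over $k$-dimensional subspaces $S$ on both sides (which is valid since the additive constant $\spec{E}$ does not depend on $S$), the min--max formula yields $\sigma_k(A+E) \leq \sigma_k(A) + \spec{E}$.

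Finally, the lower bound $\sigma_k(A+E) \geq \sigma_k(A) - \spec{E}$ follows by applying the upper bound to the pair $(A+E, -E)$ in place of $(A, E)$: we get $\sigma_k\bigl((A+E) + (-E)\bigr) \leq \sigma_k(A+E) + \spec{-E}$, and since $\spec{-E} = \spec{E}$ and the left-hand side equals $\sigma_k(A)$, rearranging gives the desired inequality. Combining the two bounds completes the proof.

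There is essentially no obstacle here; this is a textbook argument and the only ``choice'' is whether to invoke the min--max characterization or instead to prove it inline via the SVD. Since Weyl's theorem is used only as a black-box stability tool in the remainder of the paper, I would keep the proof short and invoke min--max directly.
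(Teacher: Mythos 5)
Your proof is correct and standard. Note, however, that the paper does not actually prove this statement: it cites Weyl's theorem as a known black-box fact in the preliminaries and moves on. Your argument via the Courant--Fischer min--max characterization, the triangle inequality for the upper bound, and the $(A+E,-E)$ substitution for the lower bound is exactly the textbook proof, and all three steps check out (in particular, the step passing the inner minimum through the additive constant $\spec{E}$ is handled correctly by evaluating $(A+E)$ at the minimizer for $A$).
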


The following theorem that can be found in \cite{Stewart90} analyses the singular vector spaces of matrices under perturbation:
\begin{theorem}[Wedin] \label{thm:wedin}
Given matrices $A,E \in \R^{m \times n}$ with $m \geq n$, let $A$ have singular value decomposition:
\[[U_1 ~~ U_2 ~~ U_3] \begin{bmatrix}
\Sigma_1 & 0 \\
0 & \Sigma_2 \\
0 & 0 
\end{bmatrix} [V_1 ~~ V_2]^T.\]
Let $\wt{A} = A + E$ with analogous singular value decomposition. Suppose that there exists a $\delta$ such that:
\[\min_{i,j}|[\Sigma_1]_{i,i} - [\Sigma_2]_{j,j}| > \delta, ~~\text{ and } ~~ \min_{i} |[\Sigma_1]_{i,i}| > \delta,\]
then
\[\Norm{\proj(\calU_1) - \proj(\wt{\calU_1})}_F^2 + \Norm{\proj(\calV_1) - \proj(\wt{\calV_1})}_F^2 \leq \frac{2\|E\|_F^2}{\delta^2} \mper\]
\end{theorem}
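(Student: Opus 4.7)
The plan is to prove Wedin's theorem by the classical Sylvester-equation method. First I would translate the projection-distance inequality into Frobenius-norm bounds on the ``off-diagonal'' inner-product matrices $U_2^\top \wt{U}_1$, $U_3^\top \wt{U}_1$, and $V_2^\top \wt{V}_1$, using the standard identity
\[
\|U_1 U_1^\top - \wt{U}_1 \wt{U}_1^\top\|_F^2 \;=\; 2\bigl(\|U_2^\top \wt{U}_1\|_F^2 + \|U_3^\top \wt{U}_1\|_F^2\bigr)
\]
(and its $V$-analogue), which follows from expanding the trace and using orthonormality of the columns. So it suffices to bound each off-diagonal block in terms of $\|E\|_F/\delta$.

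Next I would derive the governing equations. The block form $A = U_1 \Sigma_1 V_1^\top + U_2 \Sigma_2 V_2^\top$ gives $U_2^\top A = \Sigma_2 V_2^\top$, $V_2^\top A^\top = \Sigma_2 U_2^\top$, and $U_3^\top A = 0$. Combining these with the perturbed SVD relations $\wt{A}\wt{V}_1 = \wt{U}_1 \wt{\Sigma}_1$, $\wt{A}^\top \wt{U}_1 = \wt{V}_1 \wt{\Sigma}_1$, and $\wt{A} = A + E$ yields the coupled pair
\begin{align*}
\Sigma_2 (V_2^\top \wt{V}_1) - (U_2^\top \wt{U}_1)\, \wt{\Sigma}_1 &= -U_2^\top E\, \wt{V}_1, \\
\Sigma_2 (U_2^\top \wt{U}_1) - (V_2^\top \wt{V}_1)\, \wt{\Sigma}_1 &= -V_2^\top E^\top \wt{U}_1.
\end{align*}
Adding and subtracting these decouples the system into two Sylvester equations $\Sigma_2 Z_\pm \mp Z_\pm \wt{\Sigma}_1 = R_\pm$ in the variables $Z_\pm := V_2^\top \wt{V}_1 \pm U_2^\top \wt{U}_1$. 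Because $\Sigma_2$ and $\wt{\Sigma}_1$ are diagonal, each equation decouples entry-wise and can be solved in closed form, giving $\|Z_\pm\|_F \leq \|R_\pm\|_F/\delta_\pm$ where $\delta_\pm$ is the relevant diagonal gap; the hypothesis on $\Sigma_1$ together with Weyl's inequality (Theorem~\ref{thm:weyl}) guarantees $\delta_\pm \geq \delta$ (up to mild slack of order $\|E\|$). Applying the parallelogram identity to the right-hand sides bounds $\|R_+\|_F^2 + \|R_-\|_F^2$ by $2(\|U_2^\top E \wt{V}_1\|_F^2 + \|V_2^\top E^\top \wt{U}_1\|_F^2) \leq 4\|E\|_F^2$, so $\|U_2^\top \wt{U}_1\|_F^2 + \|V_2^\top \wt{V}_1\|_F^2$ is controlled by $\|E\|_F^2/\delta^2$.

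The remaining $U_3$ contribution requires a separate argument: since $U_3^\top A = 0$, projecting $\wt{A}\wt{V}_1 = \wt{U}_1 \wt{\Sigma}_1$ onto $U_3$ gives $U_3^\top \wt{U}_1 = U_3^\top E \wt{V}_1\, \wt{\Sigma}_1^{-1}$, and the second gap hypothesis together with Weyl bounds $\sigma_{\min}(\wt{\Sigma}_1) \geq \delta$, yielding $\|U_3^\top \wt{U}_1\|_F \leq \|E\|_F/\delta$. Combining the three contributions via the reduction in the first step then yields the claimed inequality. The main technical subtlety will be keeping track of constants so that the final bound emerges with the stated factor rather than a looser one: this requires applying the parallelogram identity to the two Sylvester right-hand sides \emph{jointly} rather than bounding $\|R_+\|_F$ and $\|R_-\|_F$ separately. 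A secondary (but softer) issue is the gap transfer from $\Sigma_1$ to $\wt{\Sigma}_1$ via Weyl, which introduces $O(\|E\|)$ slack that must be absorbed into $\delta$ or handled by rephrasing the hypothesis in terms of the perturbed singular values.
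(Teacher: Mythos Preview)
The paper does not prove this statement at all: it is quoted from \cite{Stewart90} as a known perturbation bound and used as a black box in the error-resilience analyses (Lemmas~\ref{lem:partial-der-hd}, \ref{lem:robust-int}, \ref{lem:robust-div-hd}, etc.). There is therefore no ``paper's proof'' to compare against.

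Your proposal follows the classical Sylvester-equation derivation and is essentially correct in outline. One genuine technical point you already flag deserves emphasis: the gap hypothesis in the statement is on the \emph{unperturbed} singular values $\Sigma_1,\Sigma_2$, whereas the Sylvester equations you derive involve $\wt{\Sigma}_1$. The clean form of Wedin's bound (as in Stewart--Sun) takes the gap between $\wt{\Sigma}_1$ and $\Sigma_2$; transferring to the form stated here via Weyl costs an additive $\|E\|_2$ in $\delta$, so as written the constant $2$ on the right-hand side is not quite attainable without either assuming $\|E\|_2$ is small relative to $\delta$ or rephrasing the hypothesis. This is a known looseness in how the theorem is sometimes quoted rather than an error in your argument, and for the paper's purposes (where $\|E\|_F \ll \delta$ always holds) it is immaterial.
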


We will now state some facts about least-squares minimization and also analyze its error resilience.

\begin{definition}[Moore-Penrose Pseudo-inverse]
The pseudo-inverse of a rank $n$ matrix $M \in \R^{m \times n}$ for $m \geq n$ is denoted by $M^{\dagger}$ and equals $(M^T M)^{-1}\cdot M^T$. We have that the singular values of $M^\dagger$ are $1/\sigma_1(M) \leq  \ldots \leq 1/\sigma_n(M)$.
\end{definition}

\begin{lemma}\label{lem:ls-facts}
Suppose $M \in \R^{m \times n}$ with $m \geq n$ is a rank $n$ matrix and $b \in \R^m$. The solution to the least-squares problem: $\min_{x \in \R^n} \|Mx - b\|_2$ is $x = M^{\dagger}b.$ 
\end{lemma}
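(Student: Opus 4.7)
The plan is to derive the normal equations by taking the gradient of the objective, verify that those equations are both necessary and sufficient for the minimum by exploiting convexity, and then invert the Gram matrix using the full-rank hypothesis on $M$. I would proceed as follows.

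First, I would expand the squared objective as $\|Mx-b\|_2^2 = x^\top M^\top M x - 2\,b^\top M x + b^\top b$ and observe that this is a convex quadratic in $x$ (since $M^\top M$ is positive semidefinite). Setting the gradient to zero yields the normal equations
\[
M^\top M x \;=\; M^\top b.
\]
Since any critical point of a convex differentiable function is a global minimizer, any $x$ satisfying the normal equations solves the least-squares problem.

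Next, I would use the assumption that $M \in \R^{m\times n}$ has rank $n$ to conclude that $M^\top M \in \R^{n\times n}$ has rank $n$ as well (for any $v \in \R^n$, $v^\top M^\top M v = \|Mv\|_2^2 = 0$ forces $Mv = 0$, and hence $v = 0$ by the rank assumption). Therefore $M^\top M$ is invertible, and the unique solution to the normal equations is
\[
x \;=\; (M^\top M)^{-1} M^\top b \;=\; M^\dagger b,
\]
matching the definition of the Moore--Penrose pseudo-inverse given immediately above the lemma statement.

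The argument is completely standard and I do not anticipate any genuine obstacle; the only thing worth being careful about is to explicitly invoke convexity (rather than just the first-order condition) so that the normal equations characterize a minimum rather than a mere stationary point, and to cite the rank assumption when asserting invertibility of $M^\top M$.
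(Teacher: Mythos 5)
Your proof is correct and complete. The paper itself states this lemma without proof (it is a standard fact about least-squares), so there is no competing argument in the paper to compare against; the normal-equations derivation you give—expanding the quadratic, invoking convexity to upgrade the first-order condition to a global minimum, and using the rank-$n$ hypothesis to show $M^\top M$ is invertible via $v^\top M^\top M v = \|Mv\|_2^2$—is exactly the standard proof and matches the paper's definition of $M^\dagger = (M^\top M)^{-1}M^\top$.
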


We will use the following lemma about stability of the pseudo-inverse:
\begin{lemma}\label{lem:p-inverse-stab}
Given matrices $A,E \in \R^{m \times n}$ with $m \geq n$, $\rk(A) = n$ and $\spec{E} < \sigma_n(A)$, we have that:
\[\spec{(A+E)^{\dagger} - A^\dagger} \leq \frac{\sqrt{2}\spec{E}}{\sigma_n(A)(\sigma_n(A) - \spec{E})}.\]
\end{lemma}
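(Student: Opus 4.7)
The plan is to derive two complementary identities for the pseudo-inverse difference, bound each additive piece, and then exploit an orthogonality between the two pieces to extract the desired $\sqrt{2}$ factor. Let $B = A+E$, and note by Theorem~\ref{thm:weyl} that $\sigma_n(B) \geq \sigma_n(A) - \spec{E} > 0$, so both $A$ and $B$ have full column rank and $A^\dagger A = B^\dagger B = I_n$.

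First I would derive the identity $B^\dagger - A^\dagger = -B^\dagger E A^\dagger + B^\dagger(I - AA^\dagger)$. The cleanest route is to write $B^\dagger = B^\dagger(AA^\dagger + (I - AA^\dagger))$ and use $B^\dagger B A^\dagger = A^\dagger$ together with $B = A + E$ to rewrite $B^\dagger A A^\dagger = A^\dagger - B^\dagger E A^\dagger$. A completely symmetric manipulation (starting from $A^\dagger = A^\dagger(BB^\dagger + (I - BB^\dagger))$) yields the dual identity $B^\dagger - A^\dagger = -A^\dagger E B^\dagger - A^\dagger(I - BB^\dagger)$.

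Next, I would bound the ``residual'' term in each identity. For $B^\dagger(I - AA^\dagger)$, observe that $A^T(I - AA^\dagger) = 0$, so $B^T(I - AA^\dagger) = E^T(I - AA^\dagger)$, and hence $B^\dagger(I - AA^\dagger) = (B^TB)^{-1}E^T(I - AA^\dagger)$, which has spectral norm at most $\spec{B^\dagger}^2 \spec{E}$. By symmetry, $\spec{A^\dagger(I - BB^\dagger)} \leq \spec{A^\dagger}^2 \spec{E}$. The crucial orthogonality observation is that in each identity the two summands have orthogonal row spaces: for the first identity, $B^\dagger E A^\dagger$ has rows in $\range(A)$ (through $(A^\dagger)^T = A(A^TA)^{-1}$), while $B^\dagger(I - AA^\dagger)$ has rows in $\range(A)^\perp$; one checks directly that their product in the appropriate order vanishes. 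Consequently $(B^\dagger - A^\dagger)(B^\dagger - A^\dagger)^T$ splits as a sum of two PSD terms with no cross term, giving
\[
\spec{B^\dagger - A^\dagger}^2 \leq \spec{B^\dagger}^2\spec{E}^2\spec{A^\dagger}^2 + \spec{B^\dagger}^4\spec{E}^2 = \spec{B^\dagger}^2\spec{E}^2\bigl(\spec{A^\dagger}^2 + \spec{B^\dagger}^2\bigr).
\]
Applying the same reasoning to the dual identity yields the analogous bound with $\spec{B^\dagger}^2$ replaced by $\spec{A^\dagger}^2$ in front.

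Finally, taking the minimum of the two bounds and using $\min(a,b)^2\cdot(a^2 + b^2) \leq 2(ab)^2$ gives $\spec{B^\dagger - A^\dagger} \leq \sqrt{2}\,\spec{A^\dagger}\spec{B^\dagger}\spec{E}$, after which Weyl's inequality supplies $\spec{B^\dagger} \leq 1/(\sigma_n(A) - \spec{E})$ and $\spec{A^\dagger} = 1/\sigma_n(A)$, yielding the claim. The main obstacle is the orthogonality-based $\sqrt{2}$: naively bounding each term and adding would produce a factor of $2$ and a weaker $(\sigma_n(A)-\spec{E})^{-2}$ denominator, so one must be careful to (i) verify the row-space orthogonality in both identities and (ii) apply the $\min$-$\max$ trick to pair $\spec{A^\dagger}$ with $\spec{B^\dagger}$ rather than squaring either one.
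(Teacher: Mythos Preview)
Your proposal is correct, but it takes a genuinely more elementary and self-contained route than the paper. The paper's proof is essentially a two-line citation: it invokes the Wedin-type perturbation bound $\spec{(A+E)^{\dagger} - A^\dagger} \leq \sqrt{2}\,\spec{E}\,\spec{A^{\dagger}}\,\spec{(A+E)^\dagger}$ from Stewart's textbook as a black box, and then substitutes $\spec{A^\dagger}=1/\sigma_n(A)$ and $\spec{(A+E)^\dagger}\leq 1/(\sigma_n(A)-\spec{E})$ via Weyl's theorem. Your argument, by contrast, re-derives the Wedin bound from first principles: the pair of identities $B^\dagger - A^\dagger = -B^\dagger E A^\dagger + B^\dagger(I-AA^\dagger)$ and its dual, the observation that the two summands in each identity have orthogonal row spaces (so the squared norm splits without cross terms), and the $\min(a,b)^2(a^2+b^2)\leq 2a^2b^2$ trick to convert the two asymmetric bounds into the symmetric $\sqrt{2}\,\spec{A^\dagger}\spec{B^\dagger}\spec{E}$. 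This is precisely the classical proof of the cited inequality, so what your approach buys is a fully self-contained argument with no external reference, at the cost of several extra steps; what the paper's approach buys is brevity by leaning on a standard result.
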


\begin{proof}
The following lemma can be found in \cite{Stewart90}:
\[\spec{(A+E)^{\dagger} - A^\dagger} \leq \sqrt{2}\spec{E}\spec{A^{\dagger}}\spec{(A+E)^\dagger}.\]
We know that $\spec{A^\dagger} = 1/\sigma_n(A)$ and similarly $\spec{(A+E)^\dagger} = \frac{1}{\sigma_n(A+E)} \leq \frac{1}{\sigma_n(A)-\spec{E}}$ by Weyl's theorem.
Plugging these into the above the lemma follows.

\end{proof}

Using the above we can prove the stability of the solution of least-squares minimization:

\begin{theorem}[Error-Resilience of Least-squares]\label{lem:least-sq-robust}
For all $m\times n$ matrices $A,E$ with $m \geq n$, $\rk(A) = n$ and $\spec{E} < \sigma_n(A)$ and vectors $b, e \in \R^m$, if $y = \argmin_{x \in \R^n} \|Ax - b\|_2$ and $\wt{y} = \argmin_{x \in \R^n} \|(A+E)x - (b+e)\|_2$, then we have that
\[\|y - \wt{y}\|_2 \leq \frac{\sqrt{2}\spec{E}\|b\|_2 + \sigma_n(A)\|e\|_2}{\sigma_n(A)(\sigma_n(A) - \spec{E})}.\]
\end{theorem}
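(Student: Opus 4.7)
The plan is to reduce the stability of the least-squares minimizer to the stability of the pseudo-inverse, which is already quantified in Lemma~\ref{lem:p-inverse-stab}, combined with Weyl's theorem (Theorem~\ref{thm:weyl}) applied to $\sigma_n(A+E)$. Since the required ingredients are all stated in the preliminaries, this is essentially a bookkeeping argument rather than a theorem that needs a genuinely new idea.

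Concretely, I would first invoke Lemma~\ref{lem:ls-facts} to rewrite both minimizers as $y = A^\dagger b$ and $\wt{y} = (A+E)^\dagger (b+e)$; this uses that $\rk(A) = n$ and (via Weyl) that $\rk(A+E) = n$ as well, since $\spec{E} < \sigma_n(A)$ forces $\sigma_n(A+E) > 0$. Then I add and subtract $(A+E)^\dagger b$ to split
\[
\wt{y} - y \;=\; \bigl((A+E)^\dagger - A^\dagger\bigr)\,b \;+\; (A+E)^\dagger e,
\]
so that by the triangle inequality
\[
\|\wt{y} - y\|_2 \;\leq\; \spec{(A+E)^\dagger - A^\dagger}\,\|b\|_2 \;+\; \spec{(A+E)^\dagger}\,\|e\|_2.
\]

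For the first term I apply Lemma~\ref{lem:p-inverse-stab} directly, which gives $\spec{(A+E)^\dagger - A^\dagger} \leq \sqrt{2}\spec{E}/(\sigma_n(A)(\sigma_n(A)-\spec{E}))$. For the second term I note that $\spec{(A+E)^\dagger} = 1/\sigma_n(A+E)$, and Theorem~\ref{thm:weyl} gives $\sigma_n(A+E) \geq \sigma_n(A) - \spec{E} > 0$, so $\spec{(A+E)^\dagger} \leq 1/(\sigma_n(A)-\spec{E})$. Substituting and placing the two terms over the common denominator $\sigma_n(A)(\sigma_n(A)-\spec{E})$ yields exactly
\[
\|y - \wt{y}\|_2 \;\leq\; \frac{\sqrt{2}\spec{E}\|b\|_2 + \sigma_n(A)\|e\|_2}{\sigma_n(A)(\sigma_n(A) - \spec{E})},
\]
as claimed.

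There is no real obstacle here: the only thing one has to be careful about is the split of $\wt{y}-y$ — one must add/subtract $(A+E)^\dagger b$ (rather than $A^\dagger(b+e)$) so that the resulting two terms match the two stability bounds that are available, namely perturbation of the pseudo-inverse applied to the \emph{unperturbed} right-hand side $b$, and the norm of the new pseudo-inverse applied to the right-hand-side perturbation $e$. Any other splitting would require a bound on $\spec{A^\dagger}$ multiplied by $\|e\|_2$, which would give the strictly weaker denominator $(\sigma_n(A)-\spec{E})^2$ on the $\|e\|_2$ term rather than the tight $\sigma_n(A)(\sigma_n(A)-\spec{E})$.
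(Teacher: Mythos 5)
Your proof is correct and follows exactly the same route as the paper's: rewrite both minimizers via the pseudo-inverse, split $\wt{y}-y = ((A+E)^\dagger - A^\dagger)b + (A+E)^\dagger e$, bound the first term by Lemma~\ref{lem:p-inverse-stab} and the second by Weyl's theorem applied to $\sigma_n(A+E)$. The only addition is your concluding remark on why this particular splitting is the right one, which the paper leaves implicit.
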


\begin{proof}
We have that $y = A^\dagger b$ and $\wt{y} = (A+E)^\dagger (b+e)$. We have that:
\begin{align*}
\|\wt{y} - y\|_2 &= \Norm{(A+E)^\dagger(b+e) - A^\dagger b}_2 \\
&= \Norm{((A+E)^\dagger - A^\dagger)b + (A+E)^\dagger e}_2 \\
&\leq  \Norm{((A+E)^\dagger - A^\dagger)b }_2 + \Norm{(A+E)^\dagger e}_2 \\
&\leq \spec{((A+E)^\dagger - A^\dagger)}\|b\|_2 + \spec{(A+E)^\dagger} \|e\|_2 \mper
\end{align*}
We can use Lemma~\ref{lem:p-inverse-stab} to bound the first term by:
\[\spec{((A+E)^\dagger - A^\dagger)}\|b\|_2 \leq \frac{\sqrt{2}\spec{E}\|b\|_2}{\sigma_n(A)(\sigma_n(A) - \spec{E})} \mper \]
Using Weyl's theorem we can bound the second term by:
\[\spec{(A+E)^\dagger} \|e\|_2 \leq \frac{\|e\|_2}{\sigma_n(A) - \spec{E}} \mper \]
Adding the two bounds completes the proof.
\end{proof}

\paragraph{Preliminaries for singular value lower bounds }

\begin{fact}[Deleting rows won't decrease singular values]
\label{fact:deleting-rows-sval}
    Given a matrix $A \in \R^{n\times m}$, let $B \in \R^{n'\times m}$ be a submatrix obtained by deleting rows of $A$.
    Then,
    \begin{equation*}
        A^\top A \succeq B^\top B \mper
    \end{equation*}
\end{fact}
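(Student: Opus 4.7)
The plan is to argue by direct decomposition of the Gram matrix. Up to a permutation of rows of $A$ (which does not change $A^\top A$), I can assume that the first $n'$ rows of $A$ are precisely the rows kept in $B$, and the remaining $n - n'$ rows form some matrix $C \in \R^{(n-n') \times m}$, so that
\begin{equation*}
A = \begin{bmatrix} B \\ C \end{bmatrix}.
\end{equation*}
Then a direct block computation gives $A^\top A = B^\top B + C^\top C$. Since $C^\top C$ is positive semidefinite (being a Gram matrix), this yields $A^\top A - B^\top B = C^\top C \succeq 0$, which is exactly the claim.

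The only step that requires any care is justifying that row permutations do not affect the conclusion, but that is immediate: if $P$ is any permutation matrix, then $(PA)^\top (PA) = A^\top P^\top P A = A^\top A$. There is no substantive obstacle here; the statement is essentially a one-line consequence of writing $A$ as a vertical stack of $B$ and the deleted rows and noting that a Gram matrix is PSD.
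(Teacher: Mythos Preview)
Your proof is correct and is the standard one-line argument; the paper states this as a Fact without proof, so there is nothing to compare against beyond noting that your decomposition $A^\top A = B^\top B + C^\top C$ is exactly the intended justification.
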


The following lemma will be used throughout the paper.
It is a simple result that follows directly from standard concentration results on Gaussian variables.

\begin{claim}[Norm of powers of random polynomial]
\label{claim:norm-of-polynomial-powers}
    Fix $D, K\in \N$.
    Let $p(x)$ be a degree-$K$ homogeneous polynomial in $n$ variables such that each coefficient of $p$ is sampled i.i.d.\ from $\calN(0,1)$.
    Then, with probability at least $1 - n^{-\Omega(DK)}$,
    \begin{equation*}
        \Norm{ p(x)^D }^2 = \Theta(n^{DK}) \mper
    \end{equation*}
\end{claim}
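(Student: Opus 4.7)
I write $p(x) = \sum_{|I|=K} g_I\, x^I$ where $I$ ranges over size-$K$ multisets of $[n]$ and $g_I \sim \cN(0,1)$ are i.i.d., giving $n_K = \Theta(n^K)$ independent coefficients. Expanding the $D$-fold product, the coefficient of $x^J$ (for $|J|=DK$) in $p(x)^D$ equals $p^D[J] = \sum_{(I_1,\ldots,I_D):\, \sum_i I_i = J,\, |I_i|=K} \prod_{i} g_{I_i}$, where the sum is over ordered $D$-tuples. Hence
\begin{equation*}
\Norm{p^D}^2 \;=\; \sum_{\substack{(I_1,\ldots,I_D),\,(I'_1,\ldots,I'_D) \\ \sum_i I_i \,=\, \sum_i I'_i}} \prod_{i=1}^D g_{I_i}\,g_{I'_i},
\end{equation*}
which realizes $\Norm{p^D}^2$ as a polynomial of degree $2D$ in the $n_K$ Gaussians $\{g_I\}$. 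I will bound its expectation by a Wick/Isserlis enumeration and then concentrate it around its mean by Gaussian hypercontractivity.

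\textbf{Expectation via Wick.} A monomial in the $g_I$'s has nonzero expectation only when each $g_I$ appears to an even power. The ``diagonal'' pairings, those in which $(I'_1,\ldots,I'_D)$ is a permutation of $(I_1,\ldots,I_D)$ and the $I_i$'s are pairwise distinct, each contribute $\prod_i \E g_{I_i}^2 = 1$ and are counted by $D!\cdot n_K(n_K-1)\cdots(n_K-D+1) = \Theta(n^{DK})$. Every remaining nonzero pairing forces a coincidence among the $2D$ multisets (either $I_i = I_j$ within one tuple, or a 4-fold coincidence across tuples), each of which eliminates one free multiset choice and thus costs a factor $n_K \asymp n^K$; so the off-diagonal contributions total $O(n^{(D-1)K})$. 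This yields $\E\Norm{p^D}^2 = \Theta(n^{DK})$, and the same style of enumeration applied to $\E[(\Norm{p^D}^2)^2]$ gives $\mathrm{Var}(\Norm{p^D}^2) = O(n^{2DK - K})$.

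\textbf{Concentration via hypercontractivity.} Because $f := \Norm{p^D}^2 - \E\Norm{p^D}^2$ is a mean-zero polynomial of degree $2D$ in i.i.d.\ standard Gaussians, Gaussian hypercontractivity gives $\Norm{f}_q \leq (q-1)^D \sqrt{\mathrm{Var}(\Norm{p^D}^2)}$ for every $q \geq 2$. Since $\sqrt{\mathrm{Var}(\Norm{p^D}^2)}/\E\Norm{p^D}^2 = O(n^{-K/2})$, choosing $q = \Theta(n^{K/(3D)})$ and applying Markov,
\begin{equation*}
\Pr\Brac{\,|f| \;\geq\; \tfrac12 \E\Norm{p^D}^2\,} \;\leq\; \bigl(O((q-1)^D n^{-K/2})\bigr)^q \;\leq\; \exp\!\bigl(-\Omega(n^{K/(3D)})\bigr),
\end{equation*}
which is comfortably $\leq n^{-\Omega(DK)}$ once $n$ exceeds a threshold depending on $D,K$. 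Combining with the expectation bound establishes $\Norm{p^D}^2 = \Theta(n^{DK})$ with the claimed probability.

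\textbf{Main obstacle.} The principal bookkeeping is the Wick counting: showing that every non-diagonal pairing of the $2D$ slots $(I_1,\ldots,I_D,I'_1,\ldots,I'_D)$ is subleading by a factor of at least $n^K$ for both the mean and the variance. This amounts to the graph-theoretic claim that any perfect matching on the $2D$ labeled half-edges outside the $S_D$-orbit of the identity must either collapse two multisets into one (killing a free choice of size $n_K$) or force a 4-fold coincidence of some $g_I$ (which, by $\E g_I^4 = 3$, still has constant contribution but again loses a multiset of freedom), while the linear constraint $\sum_i I_i = \sum_i I'_i$ can only further reduce degrees of freedom. As a cheap sanity check on the upper half of the $\Theta$ bound, the contraction property $\Norm{\Sym(T)}_F \leq \Norm{T}_F$ yields $\Norm{p^D}^2 \lesssim \Norm{p}^{2D}$, and $\Norm{p}^2 \sim \chi^2_{n_K}$ concentrates sharply at $\Theta(n^K)$, immediately giving $\Norm{p^D}^2 = O(n^{DK})$ with high probability.
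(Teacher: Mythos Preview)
Your proposal is correct and follows essentially the same approach as the paper: recognize that $\Norm{p^D}^2$ is a degree-$2D$ polynomial in the i.i.d.\ Gaussian coefficients, compute its expectation as $\Theta(n^{DK})$, and then invoke Gaussian concentration for low-degree polynomials (hypercontractivity). The paper's proof is much terser---it states the expectation and then cites ``standard Gaussian concentration on low-degree polynomials of Gaussians'' without writing out the variance bound or the tail inequality---whereas you have filled in those details, including the Wick-pairing bookkeeping and the explicit choice of $q$ in the moment bound.
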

\begin{proof}
    Recall that $\Norm{ p(x)^D }$ is the norm of the coefficient vector of the degree-$DK$ polynomial $p(x)^D$.
    Viewing $p$ as an order-$K$ tensor, $\Norm{p(x)^D}$ is within a constant factor away from $\Norm{\Sym(p^{\ot D})}_F$.
    Clearly, $\Norm{p(x)^D}^2$ is a degree-$2D$ polynomial over the coefficients of $p$, and the expectation is
    \begin{equation*}
        \E_{p} \Norm{ p(x)^D }^2 = \Theta(1) \sum_{I \in [n]^{DK}} \E_p \Brac{(\Sym(p^{\ot D})[I])^2} = \Theta(n^{DK}) \mper
    \end{equation*}
    The statement of the lemma follows by standard Gaussian concentration on low-degree polynomials of Gaussians (see e.g.\ \cite{SS12}).
\end{proof}



\section{Decomposing Power-Sums of Quadratics}
In this section, we describe our efficient algorithm to decompose powers of low-degree polynomials. To keep the exposition simpler, we will analyze the algorithm for the case of quadratic $p_i$s in this section and postpone the analysis for higher-degree $p_i$s to the next section. 

Specifically, we will prove that there is a polynomially stable and exact algorithm for decomposing power-sums of \emph{random} quadratics. The same algorithm's recovery guarantees hold more generally for power-sums of \emph{smoothed} quadratic polynomials though our current analysis only derives an inverse exponential error tolerance. Our algorithms work in the standard bit complexity model for exact rational arithmetic.

\begin{theorem} \label{thm:main-theorem-hd}
    There is an algorithm that takes input parameters $n,m,D\in \N$, an accuracy parameter $\tau>0$, and the coefficient tensor $\wh{P}$ of a degree-$6D$ polynomial $\wh{P}$ in $n$ variables with total bit complexity $\size(\wh{P})$, runs in time $(\size(\wh{P})n)^{O(D)} \polylog (1/\tau)$, and outputs a sequence of symmetric matrices $\wt{A}_1,\wt{A}_2,\ldots, \wt{A}_m \in \R^{n\times n}$ with the following guarantee.

    Suppose $\wh{P}(x) = \sum_{t=1}^m A_t(x)^{3D} + E(x)$ where each $A_t$ is an $n \times n$ symmetric matrix of independent $\calN(0,1)$ entries, $\|E\|_F \leq n^{-O(D)}$ and $m \leq (\frac{n}{\polylog(n)})^D$ if $D \leq 2$ and $m \leq (\frac{n}{\polylog(n)})^{2D/3}$ if $D > 2$.
    Then, with probability at least $0.99$ over the draw of $A_i$s and internal randomness of the algorithm, for odd $D$,
    \begin{equation*}
        \min_{\pi \in \bbS_m} \max_{t \in [m]} \Norm{\wt{A}_t - A_{\pi(t)}}_F \leq \poly(n) \Paren{ \|E\|_F^{1/D} + \tau^{1/D} } \mcom
    \end{equation*}
    and for even $D$,
    \begin{equation*}
        \min_{\pi \in \bbS_m} \max_{t \in [m]} \min_{\sigma\in\{\pm1\}} \Norm{\wt{A}_t - \sigma A_{\pi(t)}}_F \leq \poly(n) \Paren{ \|E\|_F^{1/3D} + \tau^{1/3D} } \mper
    \end{equation*}
\end{theorem}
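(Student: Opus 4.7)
My plan is to follow the \emph{desymmetrize + tensor-decompose} blueprint laid out in Section~\ref{sec:overview-our-approach}. First I fix a family of $n^{O(1)}$ restriction maps $M : \R^\ell \to \R^n$ with $\ell = \Theta(\sqrt{n})$, drawn from a modified $6$-wise-independent hash family so that averaging the corresponding restricted third-order tensors yields a known scalar multiple of the desired global object. For each such restriction, the input becomes $\wh P(My) = \sum_t B_t(y)^{3D}+E(My)$ with $B_t(y) = y^\top M^\top A_t M y$, and the heart of the algorithm is to accurately recover the linear span of $\{B_t^D : t \in [m]\}$ from this restricted polynomial.

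To find $\spn(B_t^D)$, I form the subspace $\calU$ spanned by all order-$2D$ partial derivatives of $\wh P(My)$, which by the Garg--Kayal--Saha observation equals, up to noise, the space $\calV$ of all polynomials of the form $\sum_t B_t(y)^D q_t(y)$ with $\deg(q_t) = 2D$; quantitative stability of this identification is exactly what Lemma~\ref{lem:singular-value-of-U-hd} provides. Next, I intersect $\calU$ with the subspace $\calV_p$ of multiples of a freshly sampled random homogeneous polynomial $p(y)$ of matching degree. By Lemma~\ref{lem:V-singular-value-hd}, the only polynomial solutions to $\sum_t B_t^D q_t = 0$ in the relevant degree regime are the trivial antisymmetric ones, and the smallest nonzero singular value of the associated coefficient map is at least $\poly(n)^{-1}$; this forces every element of $\calU \cap \calV_p$ to factor as $p(y) \cdot (\text{element of } \spn(B_t^D))$, and dividing out $p$ then yields the span. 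I would implement the robust subspace intersection via the projection-matrix singular value computation of Lemma~\ref{lem:robust-int-hd} with Wedin's theorem (Theorem~\ref{thm:wedin}) controlling stability, and the division by $p$ as a stable least-squares solve using Theorem~\ref{lem:least-sq-robust}. The two regimes $m \lesssim (n/\polylog n)^D$ for $D \le 2$ and $m \lesssim (n/\polylog n)^{2D/3}$ for $D > 2$ are precisely the ranges in which every matrix in this chain admits an inverse-polynomial singular value lower bound.

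With $\spn(B_t^D)$ in hand, I would desymmetrize $\wh P(My)$: solve the linear system $\Sym_{6D}(X) = \wh P(My)$ restricted to the subspace of tensors of the form $\sum c_{t_1,t_2,t_3} \, B_{t_1}^D \otimes B_{t_2}^D \otimes B_{t_3}^D$, where invertibility and condition number of the restricted $\Sym$ operator are supplied by Lemma~\ref{lem:desymm-mainsv-hd}. This yields, up to controlled error, the restricted version of $\sum_t \Sym_{2D}(A_t^{\otimes D})^{\otimes 3}$; averaging these restricted tensors across the pseudorandom family of $M$'s then produces, up to a computable scaling and accumulated noise, the unrestricted third-order tensor $\sum_t \Sym_{2D}(A_t^{\otimes D})^{\otimes 3}$. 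I would then run the polynomially-stable Jennrich decomposition of Fact~\ref{fact:tensor-decomposition-algo} on this aggregated tensor to extract, up to permutation and scaling, estimates of $\Sym_{2D}(A_t^{\otimes D})$ for each $t$. For $D=1$ this directly returns $A_t$; for $D > 1$, a final deconvolution step extracts $A_t$ from $\Sym_{2D}(A_t^{\otimes D})$ by means of the SDP analyzed in Lemma~\ref{lem:analysis-D-root-hd}, and this is where the extra factor of $1/D$ (respectively $1/3D$ for even $D$, where the sign ambiguity forces first working with $A_t^{3D}$ and then extracting a cube root) enters the error exponent and matches the claimed bounds $\poly(n)(\|E\|_F^{1/D}+\tau^{1/D})$ and $\poly(n)(\|E\|_F^{1/3D}+\tau^{1/3D})$.

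The main technical obstacle is the cascade of singular value lower bounds on which every stage depends: the matrices involved have dimensions polynomial in $n$ but entries that are low-degree polynomials in a much smaller number of i.i.d.\ Gaussian coefficients, so classical leave-one-out arguments of \cite{RudVer:08,BhaskaraCMV14} yield bounds that are too weak. Following the strategy sketched in Section~\ref{sec:overview-singular-val}, I would lower bound the smallest singular values by decomposing each Gram matrix as a diagonal term plus an off-diagonal \emph{deviation} matrix, upper bound the spectral norm of the deviation matrix via the graph-matrix decomposition and the trace moment method, and handle those summands whose spectral norm is genuinely too large via a positivity-based charging argument borrowed from the sum-of-squares lower bound toolkit. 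Propagating these bounds through Weyl's theorem (Theorem~\ref{thm:weyl}), Wedin's theorem, and the error-resilience of least-squares (Theorem~\ref{lem:least-sq-robust}) at every stage then yields the noise resilience claimed in the theorem.
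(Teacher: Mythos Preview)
Your outline follows the paper's approach closely, but there are two concrete parameter errors that would break the argument as stated.

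First, you fix $\ell = \Theta(\sqrt{n})$ uniformly, but this only works for $D \le 2$. The binding constraint from Lemma~\ref{lem:analysis-restricted-partials-hd} for $D > 2$ is $m \le (\ell/\polylog\ell)^{D}$, which with $\ell = \sqrt{n}$ gives only $m \lesssim n^{D/2}$, strictly weaker than the $n^{2D/3}$ claimed in the theorem. The paper instead takes $\ell = n^{2/3}$ when $D > 2$, which makes both constraints $m \le (\ell/\polylog\ell)^D$ and $m\ell^{2D} \le (n/\polylog n)^{2D}$ bind at $m \lesssim n^{2D/3}$ simultaneously. Relatedly, the hash family must be $6D$-wise (not $6$-wise) so that every entry of the order-$6D$ coefficient tensor is correctly reconstructed by the aggregation step; and since each $|S|$ can be as large as $6D\ell$, you need the $\ell$ above chosen so that the singular-value lemmas still apply at dimension $6D\ell$.

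Second, you describe $\calU$ as the span of the order-$2D$ partial derivatives of $\wh P(My)$. Read literally this means derivatives in the $\ell$ restricted variables, which gives only $\ell_{2D}$ polynomials and cannot span the $(m\ell_{2D} - \binom{m}{2})$-dimensional target $\calV_D$ once $m \ge 2$. The paper's step is the reverse order: compute all $n_{2D}$ partial derivatives $\partial_I \wh P$ in the original $n$ variables, \emph{then} restrict each one via $M$. It is precisely because $n_{2D} \gg m\ell_{2D}$ that Lemma~\ref{lem:U-equals-V-hd} can show these restricted partials span all of $\calV_D$; the singular-value bound in Lemma~\ref{lem:singular-value-of-U-hd} is for this $\ell_{4D} \times n_{2D}$ matrix. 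With these two fixes your plan coincides with the paper's proof.
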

Observe that for odd $D$, we are able to recover $A_t$s up to permutation while for even $D$, we recover $A_t$ up to permutation and signings. Such a guarantee is also the best possible given $P(x)$.

\paragraph{The Algorithm: } Our proof of Theorem~\ref{thm:main-theorem-hd} uses the following algorithm (that works as stated for decomposing powers of degree-$K$ $A_t$s more generally, but we will analyze for quadratic $A_t$s in this section). 
\begin{mdframed}
  \begin{algorithm}[Decomposing Power Sums]
    \label{algo:overall-algo-hd}\mbox{}
    \begin{description}
    \item[Input:] Coefficient Tensor of a $n$-variate degree-$3KD$ polynomial $\wh{P}(x) = P(x)+E(x)$ where $P(x) = \sum_{t \in [m]} A_t(x)^{3D}$ for degree-$K$ polynomials $A_t$.

    \item[Output:] Estimates $\wt{A}_1, \wt{A}_2, \ldots, \wt{A}_m$ of the coefficient tensors of $A_1, A_2, \ldots, A_m$.
     
    \item[Operation:] \mbox{}
    \begin{enumerate}
        \item \textbf{Construct Pseudorandom Restrictions:} Construct the collection $\mathcal{S}$ of $\leq 3KD \ell$-size subsets of $[n]$, $|\mathcal{S}|=n^{O(D)}$ using the algorithm from Lemma~\ref{lem:analysis-aggregation-hd}.

        \item \textbf{Desymmetrize Pseudorandom Restrictions of Coefficient Tensor:} For each $S \in \mathcal{S}$: \label{step:pseudorandom-projections}
        \begin{enumerate}
            \item \textbf{Find Subspace of Restricted Partials:} Apply Algorithm~\ref{algo:partial-der-hd} to compute the linear span $\wt{\mathcal{V}}_D$ of coefficient vectors of $M_S$-restrictions of $2D$-th order partial derivatives of $\wh{P}$. 
            \label{step:partial-der-hd}
            
            \item \textbf{Span-finding:} Apply Algorithm~\ref{algo:span-finding-hd} to find the span of restricted $A_t(x)^D$'s. \label{step:span-finding-hd}

            \item \textbf{Desymmetrize:} Apply Algorithm~\ref{algo:desymmetrization-hd} to compute the desymmetrized restricted coefficient tensor. 
            \label{step:desym-hd}
        \end{enumerate}

        \item \textbf{Aggregate Restricted Tensors:} Use restricted desymmetrized tensors from all restrictions in the pseudorandom set to construct the desymmetrized tensor. \label{step:aggregation-hd}

        \item \textbf{Decompose Tensor:} Apply tensor decomposition to the desymmetrized tensor.
        \label{step:tensor-decomp-hd}

        \item \textbf{Take $D$-th Root of a Single Polynomial:} using Lemma~\ref{lem:analysis-D-root-hd}. 
        \label{step:D-root}
    \end{enumerate}
    \end{description}
  \end{algorithm}
\end{mdframed}

\paragraph{Algorithm Overview} In this section, we henceforth restrict our attention to quadratic $A_t$'s. Like in the case of cubics of quadratics discussed in Section~\ref{sec:overview} our algorithm first desymmetrizes the input coefficient tensor and then applies tensor decomposition to recover estimates of the individual components. Specifically (when $E=0$), given the coefficient tensor of $P$ has the form $\Sym_{6D}(\sum_{t \leq m} \Sym_{2D}(A_t^{\otimes D})^{\otimes 3})$, our goal is to ``undo" the effect of the outer application of $\Sym$ and this is accomplished in the first three steps that are direct analogs of the ones discussed in the special case analyzed in Section~\ref{sec:overview}. After performing the desymmetrization step, for higher powers of quadratics we only recover estimates of $\Sym(A^{\otimes D})$ at the end of this procedure. The final (and extra, compared to the cubic case) step in the algorithm takes $D$-th root of noisy estimates of single polynomials, i.e.\ obtains an estimate of $A(x)$ from an estimate of $A(x)^D$.

Specifically, in Step~\ref{step:partial-der-hd}, we compute the $\sim n^{2D}$ different $2D$-th order partial derivatives $\partial_{I} \wh{P}(x)$ of $\wh{P}$ as $I = \{i_1,\dots, i_{2D}\} \in [n]^{2D}$ ranges over all multisets of size $2D$. We then restrict each of these degree-$4D$ polynomials to some fixed set of $\ell = o(n)$ variables which, in order to distinguish from the original set of indeterminates $x$, we will call $y$. The effect of this restriction is to transform $A_t$ into $B_t = M^{\top}A_tM$ for a $n \times \ell$ restriction matrix $M$ defined below. 

\begin{definition}[Restriction matrix]
\label{def:restriction-matrix}
    Given a set $S \subseteq [n]$ with $|S| = \ell$, we denote $M_S \in \R^{n\times \ell}$ to be the matrix whose columns consist of standard unit vectors $e_{j}$ for $j\in S$.
    We write $P\circ M_S$ for the polynomial (in indeterminates $y$) defined by $P\circ M_S(y) = P(M_S y)$.
    
    For each $M_S$, we let $\calR_S$ be the linear operator that takes an $n \times n$ matrix $A \in \R^{n \times n}$, into $\calR_S(A)= (M_S M_S^\top) A (M_S M_S^\top)$ -- i.e., zeros out the $(i,j)$ entry of $A$ if $i$ or $j$ is not in $S$.
\end{definition}

For any restriction matrix $M$, let $B_t=M^{\top}A_tM$. Let $\calV_D$ be the span of polynomials of the form $B_t(y)^D y_T$:
\begin{equation*}
    \calV_D \coloneqq \spn\Paren{B_t(y)^D y_T \mid t \in [m], T \in [\ell]^{2D}} \mper
\end{equation*}
Then, any $2D$th order partial derivative of $P$, when restricted via $M$, is in $\calV_D$. We prove that for small enough $m, \ell$, the linear span of the restricted partials of $P$ is in fact \emph{equal} to the linear span of the polynomials $B_t(y)^D y_T$ (we prove an error-tolerant version in Section~\ref{sec:partial-der-hd}).

\begin{lemma}[Analysis of the Subspace of Restricted Partials of $\wh{P}$]
\label{lem:analysis-restricted-partials-hd}
    Fix $D\in \N$. Let $m, \ell, n \in \N$ be parameters such that $m \leq (\frac{\ell}{\polylog(\ell)})^{2D}$ if $D\leq 2$, and $m\leq (\frac{\ell}{\polylog(\ell)})^{D}$ if $D>2$, and that $m\ell^{2D} \leq (\frac{n}{\polylog(n)})^{2D}$.
    Given $\wh{P} = \sum_{t \in [m]} A_t(x)^{3D} + E(x)$ where each $A_t$ is a degree-2 homogeneous polynomial with i.i.d.\ $\calN(0,1)$ entries, and a restriction matrix $M \in \R^{n \times \ell}$, we have that with probability $1- n^{-\Omega(D)}$ over the choice of $A_t$'s, Algorithm~\ref{algo:partial-der-hd} outputs a subspace $\wt{\calV}_D$ of $\R^{\ell_{4D}}$ that satisfies:
    \[ \Norm{\wt{\calV}_D - \calV_D}_F \leq O\left(\frac{\|E\|_F}{(n\ell)^D}\right) \mcom \]
    with $B_t(y) = A_t \circ M(y)$. 
\end{lemma}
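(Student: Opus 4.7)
The plan is to establish a structural identity placing every restricted $2D$-th partial of $P$ inside $\calV_D$, then use a singular value lower bound to obtain the reverse inclusion quantitatively, and finally apply Wedin's theorem to transfer the result to the noisy setting. For the structural containment, expand $\partial_I A_t(x)^{3D}$ by Leibniz's rule. Since $A_t$ is quadratic, any partial of order more than two of $A_t$ vanishes, so each surviving term distributes the $2D$ derivatives among the $3D$ factors of $A_t$ with each factor absorbing at most two derivatives. A direct counting argument shows that each such term has the form $A_t(x)^{a_0}\cdot R_{t,I,a_0}(x)$ with $a_0\in\{D,\ldots,2D\}$ and $\deg R_{t,I,a_0}=2(2D-a_0)$. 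Restricting via $M$ yields $B_t(y)^{a_0}\cdot R_{t,I,a_0}(My)$, and the identity $B_t^{a_0}R = B_t^D\cdot (B_t^{a_0-D} R)$ exhibits the bracketed factor as a polynomial of degree $2D$ in $y$, placing the whole term in $\calV_D$. Hence the row span of $U_0\in\R^{n_{2D}\times\ell_{4D}}$ lies in $\calV_D$, where $U_0$ stacks the coefficient vectors of $\{(\partial_I P)\circ M : |I|=2D\}$ as rows.

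For the quantitative reverse inclusion and the noise analysis, I would invoke Lemma~\ref{lem:singular-value-of-U-hd} to conclude that, in the stated parameter regime and with probability at least $1-n^{-\Omega(D)}$ over the Gaussian coefficients of the $A_t$'s, $U_0$ has rank equal to $\dim(\calV_D)$ and its smallest nonzero singular value $\sigma_{\min}$ is polynomially large, of the order required by the final error bound; this forces the row span of $U_0$ to equal $\calV_D$. Writing the matrix built from $\wh{P}$ as $\wh{U}=U_0+\Delta$, each row of $\Delta$ is the coefficient vector of $(\partial_I E)\circ M$, and a coefficient-by-coefficient bound on restricted partials of $E$ gives $\|\Delta\|_F \leq n^{O(D)}\|E\|_F$. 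Algorithm~\ref{algo:partial-der-hd} returns $\wt{\calV}_D$ as the span of the top $\dim(\calV_D)$ right singular directions of $\wh{U}$, so Wedin's theorem (Theorem~\ref{thm:wedin}) applied with singular value gap $\sigma_{\min}$ gives $\|\wt{\calV}_D-\calV_D\|_F \leq \sqrt{2}\,\|\Delta\|_F/\sigma_{\min} = O(\|E\|_F/(n\ell)^D)$ as required.

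The main obstacle is the singular value lower bound on $U_0$ supplied by Lemma~\ref{lem:singular-value-of-U-hd}. Since the entries of $U_0$ are low-degree polynomials in the Gaussian coefficients of the $A_t$'s, its rows are heavily correlated and standard techniques such as the leave-one-out method are inadequate for our target regime. Following the strategy previewed in Section~\ref{sec:overview-singular-val}, the plan is to decompose the Gram matrix $U_0^\top U_0$ into a dominant ``diagonal'' component plus an off-diagonal deviation matrix whose spectral norm is controlled using the graph-matrix decomposition toolbox of \cite{BHK19,graphmatrixbounds,sparseindset}. The parameter constraints $m\leq(\ell/\polylog\ell)^{2D}$ (or $(\ell/\polylog\ell)^{D}$ for $D>2$) together with $m\ell^{2D}\leq(n/\polylog n)^{2D}$ are precisely the thresholds at which the spectral norm of the deviation matrix is dominated by the diagonal part, yielding the required lower bound on $\sigma_{\min}$.
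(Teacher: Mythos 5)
Your overall skeleton matches the paper: you expand $\partial_I A_t^{3D}$ via Leibniz to get the containment $\calU_D\subseteq\calV_D$, invoke Lemma~\ref{lem:singular-value-of-U-hd} for a singular value lower bound on $U$, bound the noise perturbation, and close with Wedin. This is the right plan, but there are two concrete problems.

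First, a quantitative gap that breaks your final line: you bound the perturbation by $\|\Delta\|_F\leq n^{O(D)}\|E\|_F$ and then assert $\sqrt{2}\,\|\Delta\|_F/\sigma_{\min}=O(\|E\|_F/(n\ell)^D)$, but with $\sigma_{\min}=\Omega((n\ell)^D)$ this gives $O(n^{O(D)}\|E\|_F/(n\ell)^D)$, not the claimed bound. The paper needs and proves the sharper estimate $\|\Delta\|_F\leq O(D^{D/2})\|E\|_F$ — the point being that each coefficient of $\partial_I E$ is the corresponding coefficient of $E$ times a combinatorial factor depending only on $D$, not on $n$, and the restriction $\circ M$ only discards coordinates. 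Since $D$ is a fixed constant, this $O(D^{D/2})$ is $O(1)$ and the Wedin step then yields the stated $O(\|E\|_F/(n\ell)^D)$. Your $n^{O(D)}$ must be tightened to make the conclusion go through.

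Second, the last paragraph's sketch of how you would actually prove the singular value bound does not work. You propose decomposing the Gram matrix $U_0^\top U_0\in\R^{\ell_{4D}\times\ell_{4D}}$ into a dominant diagonal plus a small-norm deviation. But $\rank(U_0)=m\ell_{2D}-\binom{m}{2}$, which is far less than $\ell_{4D}$, so $U_0^\top U_0$ has a huge kernel and its diagonal cannot dominate the off-diagonal; the deviation matrix necessarily has spectral norm comparable to the diagonal. The paper handles this by a more indirect route: it proves that the column space of $U$ equals that of $V$ (Lemma~\ref{lem:U-equals-V-hd}), and moreover that any vector in $\colspan(V)$ can be written as $UR$ with $\|R\|_2\leq O(n^{-D})$ times the source norm, by showing the auxiliary linear-system matrix $L$ is well-conditioned (Lemma~\ref{lem:singular-value-of-L}). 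It then applies the diagonal-plus-deviation/graph-matrix machinery to $V^\top V+N^\top N$ (Lemma~\ref{lem:VV-NN}) and to $LL^\top$, each of which \emph{is} full rank, and transfers the bound from $V$ to $U$ via the bounded $R$. So the toy strategy is applied to the right full-rank objects, not to $U^\top U$. If you invoke Lemma~\ref{lem:singular-value-of-U-hd} as a black box the outer argument is fine; but the plan you outline for the inner lemma would fail as stated.
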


Consider $\calW_D = \spn(B_t(y)^D \mid t \in [m])$, for the next step, we show we can extract a subspace $\wt{\calW}_D \approx \calW_D$ given a basis for $\wt{\calV}_D$, by proving for a random degree-$2D$ polynomial $p(y)$ the intersection (computed in Step~\ref{step:span-finding-hd}) of $\calV_D$ with the linear span of polynomials of the form $p(y) y_T$ (for $|T|=2D$) equals that of $B_t(y)^D$ with high probability over $p$ and $B_t$'s: 

\begin{lemma}[Extracting Span of $B_t(y)^D$]\label{lem:span-finding-hd}
    Let $D,m,\ell$ be the same parameters as Lemma~\ref{lem:analysis-restricted-partials-hd}.
    Given degree-2 homogeneous polynomials $B_t$ for $t\in[m]$ in $\ell$ variables with coefficients drawn i.i.d from $\calN(0,1)$, with probability $1- \ell^{-\Omega(D)}$  Algorithm~\ref{algo:span-finding-hd} outputs $\wt{\calW}_D$ that satisfies:
    \[ \Norm{\wt{\calW}_D - \calW_D}_F \leq O\left(m\ell^{4D} \|\calV_D - \wt{\calV}_D\|_F\right).\]
\end{lemma}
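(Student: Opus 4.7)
My plan implements the span-finding approach sketched in Section~\ref{sec:overview-our-approach}: (i) sample a random degree-$2D$ polynomial $p(y)$ and form $\calV_p \coloneqq \spn(p(y)\,y_T \mid T \in [\ell]^{2D})$; (ii) compute a robust intersection $\wt{\calS} \approx \wt{\calV}_D \cap \calV_p$ via Lemma~\ref{lem:robust-int-hd}; (iii) recover $\wt{\calW}_D$ by ``dividing by $p$''.

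The correctness rests on the exact identity $\calV_D \cap \calV_p = \spn(p \cdot B_t^D \mid t \in [m])$ in the noiseless setting. Any element of this intersection admits two representations $\sum_t B_t^D q_t = p\cdot q_0$ for some degree-$2D$ polynomials $q_0, q_1, \ldots, q_m$, which I would rewrite as $\sum_t B_t^D q_t + (-p)\, q_0 = 0$. Lemma~\ref{lem:V-singular-value-hd} asserts that the null space of such polynomial identities contains only ``trivial'' antisymmetric relations with high probability, forcing $q_0 \in \spn(B_t^D)$; the reverse containment is immediate. For the robust version I would invoke Lemma~\ref{lem:robust-int-hd}, which controls $\|\wt{\calS} - \calV_D \cap \calV_p\|_F$ in terms of the smallest nonzero singular value of $\proj(\calV_D) + \proj(\calV_p)$ -- i.e., the minimum principal angle between $\calV_D$ and $\calV_p$ outside their intersection. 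This angle bound follows from a quantitative strengthening of Lemma~\ref{lem:V-singular-value-hd} treating $p$ as an additional generator in the null-space system, which still exhibits only the trivial relations with high probability, yielding an inverse-polynomial lower bound on its smallest nonzero singular value.

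Finally, the division-by-$p$ step recovers, for each basis vector $f \approx p\cdot g$ of $\wt{\calS}$, the degree-$2D$ polynomial $g$ by solving a least-squares system against the multiplication operator $M_p \colon q \mapsto p\cdot q$ from $\R^{\ell_{2D}}$ into $\R^{\ell_{4D}}$. By Theorem~\ref{lem:least-sq-robust}, the stability of this step is governed by $1/\sigma_{\min}(M_p)$, and a standard Gaussian anti-concentration argument for random $p$ yields $\sigma_{\min}(M_p) \geq 1/\poly(\ell)$ with high probability. The main obstacle is the combined singular-value analysis: both the intersection step and the division step reduce to smallest-singular-value lower bounds on structured random matrices with correlated polynomial entries, precisely the technical ingredient previewed in Section~\ref{sec:overview-singular-val}. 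Tracking the $\poly(m,\ell)$ stability factors contributed by the two steps and multiplying them out yields the stated amplification $O(m\ell^{4D})$.
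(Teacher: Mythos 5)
Your plan follows the same high-level architecture as the paper (intersect with a random polynomial subspace, then divide), but it departs at a load-bearing choice that prevents the cited lemmas from closing the argument. You propose to sample a random degree-$2D$ homogeneous polynomial $p(y)$ and form $\calV_p = \spn(p(y)\,y_T \mid T \in [\ell]^{2D})$, so that the intersection equation reads $\sum_t B_t(y)^D q_t(y) - p(y)\,q_0(y) = 0$. Algorithm~\ref{algo:span-finding-hd}, by contrast, samples a random degree-$2$ polynomial $p$ and sets $\calV_p = \spn(p(y)^D y_S \mid S\in[\ell]^{2D})$; the polynomial playing the role of your $p$ is $p^D$, the $D$-th power of a random quadratic, which for $D>1$ is a fundamentally different distribution on degree-$2D$ forms (the set of $D$-th powers is a low-dimensional subvariety). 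The overview description $\calV_p=\spn(p\,y_S)$ that you extrapolated from is the $D=1$ special case, where $p^D=p$.

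This difference is not cosmetic. The paper's choice makes the intersection identity $\sum_{t=0}^m B_t(y)^D q_t(y) = 0$ (with $B_0 \coloneqq p$) literally an $(m+1)$-polynomial instance of Lemma~\ref{lem:V-singular-value-hd}, and Lemma~\ref{lem:intersection-hd} then identifies the null space as $\calN_1\cup\calN_2$ off the shelf. It also reuses the same lemma with $m=1$ to lower-bound $\sigma_{\ell_{2D}}(V_p)$ in the division step (Lemma~\ref{lem:robust-div-hd}). Your equation mixes $D$-th powers of random quadratics with a generic degree-$2D$ form $p$; it is structurally the $K=2$ instance of the $\ol{V}$ system from Section~\ref{sec:analysis-of-V-high-deg}, whose singular-value characterization (Lemma~\ref{lem:V-singular-value-high-degree}) the paper proves only for $K\geq 3$. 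You flag this yourself by appealing to ``a quantitative strengthening of Lemma~\ref{lem:V-singular-value-hd} treating $p$ as an additional generator,'' but that strengthening is not a routine corollary: $p$ cannot be absorbed as an extra $B_0$ because it is not a power of a quadratic, and one would need a separate graph-matrix analysis of the $\ol{V}^\top\ol{V}+\ol{N}^\top\ol{N}$ type (including the $\ol{W}$-cancellation) specialized to $K=2$. The same issue infects the robust intersection step — Lemma~\ref{lem:robust-int-hd} is proved via Lemma~\ref{lem:sval-intersection-hd}, which again invokes Lemma~\ref{lem:V-singular-value-hd} with $p$ degree $2$, so it does not apply to your $\calV_p$ — and the division step, where your lower bound on $\sigma_{\min}(M_p)$ has no source in the paper. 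The final $O(m\ell^{4D})$ accounting therefore rests on unproved singular-value bounds and cannot be completed as written; replacing the degree-$2D$ random $p$ with $p^D$ for a random degree-$2$ polynomial $p$ fixes the problem and is what the paper actually does.
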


Finally, we show that on the subspace of linear span of $B_t(y)^D$, the outer $\Sym_{6D}$ operation is invertible in an error-tolerant way via the least squares algorithm. This gives us a desymmetrized, $M$-restricted 3rd order tensor.

\begin{lemma}[Desymmetrization of Restricted $\wh{P}$ via Least-Squares]
\label{lem:analysis-desym-hd}
    Let $D,m,\ell\in \N$ such that $m \leq (\frac{\ell}{\polylog(\ell)})^{2D}$.
    For each $t\in[m]$, let $B_t$ be a degree-2 homogeneous polynomial in $\ell$ variables with i.i.d.\ $\calN(0,1)$ entries.
    Suppose $\wt{\calW}_D$ is a subspace of $\R^{\ell_{2D}}$ such that $\Norm{\wt{\calW}_D -\calW_D}_F \leq 1/(m^{3.5}\ell^{O(D)})$, then with probability $1- n^{\Omega(D)}$ over the choice of $B_t$'s,
    Algorithm~\ref{algo:desymmetrization-hd} outputs a tensor $\wt{T}$ such that:
    \[ \Norm{\wt{T} - \sum_{t \in [m]} \Paren{\Sym(B_t^{\ot D})}^{\ot 3} }_F \leq \poly(m) \Paren{\ell^{O(D)}\Norm{\wt{\calW}_D -\calW_D}_F + \|E\|_F} \mper\]
\end{lemma}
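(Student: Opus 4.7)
The plan is to cast desymmetrization as a structured, over-determined least-squares problem. Let $\wt{U} \in \R^{\ell_{2D} \times \dim \wt{\calW}_D}$ be the orthonormal basis for $\wt{\calW}_D$ that Algorithm~\ref{algo:span-finding-hd} already produced. Parametrize candidate $3$-tensors in $\wt{\calW}_D^{\otimes 3}$ by their coordinate tensors $C \in (\R^{m})^{\otimes 3}$ via $T(C) = (\wt{U} \otimes \wt{U} \otimes \wt{U}) C$, and consider the linear map $\wt{\Phi} : C \mapsto \Sym_{6D}(T(C)) \in \R^{\ell_{6D}}$. The algorithm solves
\[
  \wt{C} \;=\; \argmin_{C \in (\R^m)^{\otimes 3}} \Norm{\wt{\Phi}(C) \;-\; \wh{P}\circ M}_F^2 \mcom
\]
and returns $\wt{T} = (\wt{U}^{\otimes 3}) \wt{C}$. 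The right-hand side is justified because $\wh{P}\circ M(y) = \sum_t B_t(y)^{3D} + E\circ M(y)$, whose coefficient tensor equals $\Sym_{6D}\!\left( \sum_t \Sym(B_t^{\otimes D})^{\otimes 3} \right) + E'$, with $\|E'\|_F \leq \|E\|_F$ since restriction is a contraction on coefficients.

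The analysis proceeds in three ingredients. First, let $U$ be an orthonormal basis for the true subspace $\calW_D$ chosen to minimize $\|U - \wt{U}\|_F$, and let $\Phi^*(C) = \Sym_{6D}((U^{\otimes 3})C)$ be the idealized map. The target decomposition is $T^* = \sum_t \Sym(B_t^{\otimes D})^{\otimes 3} = (U^{\otimes 3})C^*$ for an explicit $C^*$ with $\|C^*\|_F \leq \poly(m)$. Second, the singular value lower bound on $\Phi^*$, namely $\sigma_{\min}(\Phi^*) \geq 1/\ell^{O(D)}$, is exactly what Lemma~\ref{lem:desymm-mainsv-hd} establishes; this is the main technical input from outside this lemma. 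Third, a subspace-to-operator perturbation bound converts the hypothesis $\|\wt{\calW}_D - \calW_D\|_F \leq 1/(m^{3.5}\ell^{O(D)})$ into
\[
  \spec{\wt{\Phi} - \Phi^*} \;\leq\; \ell^{O(D)} \Norm{\wt{U} - U}_F \;\leq\; \ell^{O(D)} \Norm{\wt{\calW}_D - \calW_D}_F \mcom
\]
where the first inequality follows because $\Sym_{6D}$ is a contraction and the basis change multiplies at most $O(D)$ tensor modes of bounded-dimensional objects. Choosing the constants in the hypothesis so that $\spec{\wt{\Phi} - \Phi^*} \leq \tfrac{1}{2}\sigma_{\min}(\Phi^*)$ ensures $\wt{\Phi}$ inherits an inverse-polynomial smallest singular value via Weyl's theorem~(\ref{thm:weyl}).

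With these pieces in hand, apply the robust least-squares bound (Theorem~\ref{lem:least-sq-robust}) to the perturbed system with ideal matrix $\Phi^*$, perturbation $\wt{\Phi} - \Phi^*$, ideal target $\Phi^*(C^*)$, and observation noise $E'$. This gives
\[
  \Norm{\wt{C} - C^*}_F \;\leq\; \ell^{O(D)} \Paren{ \Norm{C^*}_F \cdot \Norm{\wt{\Phi} - \Phi^*}_\mathrm{op} + \|E\|_F } \mper
\]
Unflattening, $\|\wt{T} - T^*\|_F \leq \|\wt{U}^{\otimes 3}\wt{C} - U^{\otimes 3}C^*\|_F$, which by the triangle inequality is at most $\|\wt{C} - C^*\|_F + \|C^*\|_F \cdot O(\|\wt{U} - U\|_F)$. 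Plugging in $\|C^*\|_F \leq \poly(m)$ yields the stated bound. The main obstacle is entirely encapsulated in the inverse-polynomial smallest singular value of $\Phi^*$ on $\calW_D^{\otimes 3}$ (Lemma~\ref{lem:desymm-mainsv-hd}); once that is granted, the remaining argument is a careful bookkeeping of perturbation bounds and an invocation of robust least-squares. A minor subtlety is that one should restrict $C$ to the symmetric subspace of $(\R^m)^{\otimes 3}$ to match $T^*$ and avoid inflating the null space of $\wt{\Phi}$ artificially, which can be done without affecting the least-squares solution since $\Sym_{6D}$ is invariant under permutations of the three outer factors.
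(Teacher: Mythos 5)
Your proposal is correct and follows essentially the same route as the paper's proof: pick nearby orthonormal bases for $\calW_D$ and $\wt{\calW}_D$ (this is Lemma~\ref{lem:bases}), phrase desymmetrization as a least-squares problem whose matrix is $\Sym_{6D}$ composed with the symmetric tensor cube of the basis (your restriction of $C$ to the symmetric subspace corresponds exactly to the paper's use of $W^{\otimes 3}_{uniq}$ and the transform $L$), and feed the singular value lower bound from Lemma~\ref{lem:desymm-mainsv-hd} into the robust least-squares estimate of Theorem~\ref{lem:least-sq-robust}, followed by the triangle-inequality bookkeeping for reconstituting the tensor from the perturbed basis. Two small imprecisions do not affect the conclusion: Lemma~\ref{lem:desymm-mainsv-hd} actually gives $\Omega(1/m^{1.5})$ rather than $1/\ell^{O(D)}$, and $\Sym_{6D}$ is not a contraction (its spectral norm is $\ell^{O(D)}$), but your displayed operator-norm bound on the perturbation already carries that $\ell^{O(D)}$ factor.
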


We show how to aggregate the desymmetrized estimates above for $n^{O(D)}$ pseudorandom restriction matrices to obtain the estimate of the unrestricted tensor we need. 

\begin{lemma}[Aggregating Pseudorandom Restrictions]
\label{lem:analysis-aggregation-hd}
    Let $D,n,\ell,m \in \N$ such that $6D \leq \ell \leq n$.
    There is an $n^{O(D)}$-time computable collection $\calS$ of subsets of $[n]$ such that each $S\in \calS$ satisfies $\ell \leq |S| \leq 6D\ell$ and that
    \begin{equation*}
        \E_{S\sim \calS} \sum_{t=1}^m \Paren{\Sym \Paren{ \calR_{S}(A_t)^{\ot D} }}^{\ot 3}
        = C \circ \sum_{t=1}^m \Paren{\Sym(A_t^{\ot D}) }^{\ot 3}
    \end{equation*}
    where $C \in (\R^n)^{\ot 6D}$ is a fixed tensor whose entries depend only on the entry locations, and each entry of $C$ has value within $((\ell/2n)^{6D}, 1)$.
\end{lemma}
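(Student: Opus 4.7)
\textbf{Proof plan for Lemma~\ref{lem:analysis-aggregation-hd}.} The plan has two ingredients: a clean entrywise identity that commutes $\calR_S$ with $\Sym$, followed by an explicit construction of a small collection $\calS$ via a $6D$-wise independent hashing family so that $C[J]=\Pr_{S\sim\calS}[J\subseteq S]$ depends only on the multiset type of $J$.

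\emph{Step 1 (Pushing $\calR_S$ through $\Sym$).} Fix an index $J_k\in[n]^{2D}$. Since $\calR_S(A_t)[i,j]=A_t[i,j]\cdot \mathbf{1}[i,j\in S]$, for any permutation $\pi$ the product $(\calR_S(A_t)^{\otimes D})[\pi(J_k)]$ factors as $\mathbf{1}[\pi(J_k)\subseteq S]\cdot (A_t^{\otimes D})[\pi(J_k)]$, and the indicator equals $\mathbf{1}[J_k\subseteq S]$ because permuting the coordinates does not change the underlying multiset of indices. Pulling the indicator outside the average over $\pi$ yields $\Sym(\calR_S(A_t)^{\otimes D})[J_k]=\mathbf{1}[J_k\subseteq S]\cdot \Sym(A_t^{\otimes D})[J_k]$. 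Taking the outer tensor cube in the three blocks $J=(J_1,J_2,J_3)\in ([n]^{2D})^3$ and summing over $t$ gives
\[
\sum_{t=1}^{m}\bigl(\Sym(\calR_S(A_t)^{\otimes D})\bigr)^{\otimes 3}[J]\;=\;\mathbf{1}[J\subseteq S]\cdot \sum_{t=1}^{m}\bigl(\Sym(A_t^{\otimes D})\bigr)^{\otimes 3}[J],
\]
where $J\subseteq S$ means every one of the (at most $6D$) distinct indices appearing in $J$ lies in $S$. Taking $\E_{S\sim\calS}$ and defining $C[J]:=\Pr_{S\sim\calS}[J\subseteq S]$ gives the desired factorization; note $C[J]$ manifestly depends only on $J$, so $C$ is a fixed tensor.

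\emph{Step 2 (Constructing $\calS$ via $6D$-wise independence).} Set $p=\ell/n$ and fix a prime $q=\Theta(n)$ and distinct $\alpha_1,\dots,\alpha_n\in\F_q$. For a random degree-$(6D-1)$ polynomial $f\in\F_q[y]$ and a subset $A\subset\F_q$ of cardinality $\lfloor p q\rfloor$, define $X_i(f)=\mathbf{1}[f(\alpha_i)\in A]$ and $S(f)=\{i:X_i(f)=1\}$. The joint distribution of $(X_1,\dots,X_n)$ is exactly $6D$-wise independent with marginal $p'=\lfloor pq\rfloor/q=p\pm O(1/n)$, and the sample space $\F_q^{6D}$ has size $n^{O(D)}$. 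Consequently, for any $J$ with $k=|J^\ast|\leq 6D$ distinct indices, $\Pr_f[J\subseteq S(f)]=(p')^k$, which is always in $((\ell/2n)^{6D},1)$ once $n$ is sufficiently large.

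\emph{Step 3 (Enforcing the deterministic size bound).} Let $\calS$ be the sub-family of those $f$ with $\ell\leq|S(f)|\leq 6D\ell$; this filter is computable in time $n^{O(D)}$. Using the $6D$-th central moment bound for sums of $6D$-wise independent $\{0,1\}$ variables (Markov applied to $\E[(|S|-pn)^{6D}]\leq (O(pn))^{3D}$), the failure probability under $f\sim\F_q^{6D}$ is at most $O((D\ell)^{-3D})$, which is dwarfed by $(p')^{6D}\asymp (\ell/n)^{6D}$ in the regime $\ell\leq n$ of interest after replacing $p$ with $p\cdot (1+\epsilon)$ for a small constant $\epsilon$ so that the mean sits strictly inside $[\ell,6D\ell]$. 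Conditioning changes each $\Pr[J\subseteq S]$ by at most the filter's failure mass, keeping $C[J]$ inside the stated interval $((\ell/2n)^{6D},1)$.

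\emph{Main obstacle.} The delicate point is Step 3: with only $6D$-wise independence, the tails of $|S(f)|$ are not strong enough to give exponential concentration, so one must be careful that the filtering loss does not blow up $C[J]$ beyond the stated range. If the Paley--Zygmund route via the $6D$-th moment is insufficient for the smallest admissible $\ell$, the fallback is a fully deterministic construction: build a $(n,6D)$-perfect splitter that realises \emph{exactly} the marginals $(p')^k$ while producing subsets of a prescribed size via a balanced post-processing (for instance, by conditioning the hash family to a single preimage class of a balanced bucketing, then enlarging with an oblivious padding that respects $6D$-wise marginals). Either route produces $|\calS|=n^{O(D)}$ subsets with the required properties.
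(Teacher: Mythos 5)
Step~1 (commuting $\calR_S$ with $\Sym$ and getting the entrywise identity $C[J]=\Pr_{S\sim\calS}[J\subseteq S]$) is correct and matches the paper's reasoning. Step~2 is also essentially the paper's construction: your indicator $X_i(f)=\mathbf{1}[f(\alpha_i)\in A]$ is just the Reed--Solomon $k$-wise independent hash, so $S(f)=h^{-1}(A)$ where $h(i)=f(\alpha_i)$. Where your proposal goes wrong, and you correctly flag this as ``the delicate point,'' is Step~3.

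The filtering step has two genuine problems. First, the lemma is not merely asking that $C[J]\in\bigl((\ell/2n)^{6D},1\bigr)$; from the paper's proof, the algorithm needs to \emph{know $C$ exactly} to normalize the recovered tensor, and the crucial feature is that $C[J]$ depends only on the number of distinct indices in $J$, so that $C$ is explicitly computable from the construction. Conditioning on the event $\{\ell\le|S(f)|\le 6D\ell\}$ destroys this: the event is a global function of $(X_1,\dots,X_n)$, its correlation with $\mathbf{1}[J\subseteq S]$ generally varies with the particular set $J$, and the conditioned probabilities need not coincide across $J$'s of the same ``type.'' Saying they move ``by at most the filter's failure mass'' is not accurate, and more importantly does not preserve exactness. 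Second, even the crude containment $C[J]>(\ell/2n)^{6D}$ is in jeopardy for the regime actually used downstream. Your $6D$-th moment bound gives filter failure probability on the order of $\ell^{-3D}$, while the target probabilities are on the order of $(\ell/n)^{6D}$; requiring $\ell^{-3D}\ll(\ell/n)^{6D}$ forces $\ell\gg n^{2/3}$, but the algorithm sets $\ell\sim\sqrt{n}$ for $D\le 2$, where your bound does not suffice.

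The paper sidesteps both issues with one small but decisive change to your Step~2: take $\calH$ to be all \emph{non-constant} polynomials of degree $\le 6D-1$ over a field of size $n$ (Lemma~\ref{lem:hash-family}), and set $\calS=\{h^{-1}(T)\,:\,h\in\calH\}$ for a fixed $T$ with $|T|=\ell$. Because a non-constant polynomial of degree $\le k-1$ has between $1$ and $k-1$ preimages for each target value, $\ell\le|h^{-1}(T)|\le(k-1)\ell$ holds \emph{deterministically} for every $h\in\calH$ — there is nothing to filter. Moreover, dropping the constant polynomials from the full $k$-wise independent family changes the count only by the diagonal Vandermonde solutions, and one computes exactly $\Pr_{S\sim\calS}[i_1,\dots,i_r\in S]=\frac{n^{k-r}\ell^{r}-\ell}{n^{k}-n}$, which depends only on $r$ and lies in the required interval. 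This is precisely the ``fully deterministic construction'' your fallback gestures at, but the mechanism — removing constants rather than conditioning or padding — is what makes the argument both exact and simple.
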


Given such a partially desymmetrized tensor, an application of off-the-shelf algorithms for 3rd order tensor decomposition allows us obtain $\Sym(A_t^{\otimes D})$ for $t \leq m$ in Step~\ref{step:tensor-decomp-hd}. We will specifically use:

\begin{fact}[Stable Tensor Decomposition, symmetric case of Theorem 2.3 in \cite{BhaskaraCMV14}] \label{fact:tensor-decomposition-algo}
There exists an algorithm that takes input a $n \times n \times n$ tensor $\wt{T}$ and an accuracy parameter $\tau>0$, runs in time $(\size(T)n)^{O(1)} \polylog (1/\tau)$ and outputs a sequence of vectors $\wt{v}_1, \wt{v}_2, \ldots, \wt{v}_r$ with the following guarantee. If $\wt{T} = \sum_i v_i^{\otimes 3} + E$ for an arbitrary $n \times n \times n$ tensor $E$ and the matrix with $v_i$s as rows has a condition number (ratio of largest to $r$-th smallest singular value) at most $\kappa <\infty$. Then, 
\[
\min_{\pi \in \bbS_r} \max_{i \leq r} \Norm{\wt{v}_i - v_{\pi(i)}}_2 \leq \poly(\kappa, n) \Norm{E}_F + \tau\mper
\]
\end{fact}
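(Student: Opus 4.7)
The plan is to implement a robust version of Jennrich's algorithm following \cite{BhaskaraCMV14}. Let $V \in \R^{n\times r}$ denote the matrix whose columns are $v_1, \ldots, v_r$; by hypothesis, $\sigma_r(V) \geq \sigma_1(V)/\kappa$, so $V$ has full column rank.

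First, form two random slices of $\wt T$ along the third mode by contracting against independent Gaussians $a, b \sim \calN(0, I_n)$:
\begin{equation*}
\wt M_a \coloneqq \sum_k a_k \wt T[\cdot, \cdot, k] = V D_a V^\top + E_a, \qquad \wt M_b \coloneqq V D_b V^\top + E_b,
\end{equation*}
where $D_a, D_b$ are diagonal with entries $\iprod{a, v_i}$, $\iprod{b, v_i}$, and $\|E_a\|_F, \|E_b\|_F \leq \tO(\sqrt{n}) \|E\|_F$ with high probability. In the noiseless case, the columns of $V$ are exactly the left eigenvectors of $M_a M_b^{\dagger} = V D_a D_b^{-1} V^{\dagger}$, with eigenvalues $\lambda_i = \iprod{a, v_i}/\iprod{b, v_i}$.

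Second, I would argue that these eigenvalues are well separated with high probability. Gaussian anti-concentration gives $|\iprod{b, v_i}| \geq \|v_i\|_2/\poly(n)$, and conditioning on $b$, the differences $\lambda_i - \lambda_j$ are (after scaling) linear in $a$ along directions $v_i \iprod{b,v_i}^{-1} - v_j \iprod{b,v_j}^{-1}$, whose norms are bounded below using $\sigma_r(V) \geq \sigma_1(V)/\kappa$. A union bound yields $\min_{i\neq j} |\lambda_i - \lambda_j| \geq 1/\poly(n, \kappa)$. Combined with Lemma~\ref{lem:p-inverse-stab} applied to $\wt M_b$ (which is well-conditioned on $\col(V)$ since $\sigma_r(V D_b V^\top) \geq \sigma_r(V)^2 \min_i |\iprod{b, v_i}|$), one concludes $\spec{\wt M_a \wt M_b^{\dagger} - M_a M_b^{\dagger}} \leq \poly(n, \kappa) \|E\|_F$.

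Third, apply a Bauer--Fike/Wedin-style perturbation bound for eigenvectors of diagonalizable non-normal matrices; the eigenvalue gap from the previous step plus the conditioning of the eigenvector basis (again controlled by $\kappa$) imply that a stable numerical eigensolver run at accuracy $\tau$ returns unit vectors $\wt u_i$ satisfying $\min_{\pi} \max_i \|\wt u_i - v_{\pi(i)}/\|v_{\pi(i)}\|_2\|_2 \leq \poly(\kappa, n) \|E\|_F + \tau$. The scaling is recovered by solving $\min_{\alpha \in \R^r} \|\wt T - \sum_i \alpha_i \wt u_i^{\otimes 3}\|_F$ and setting $\wt v_i = \alpha_i^{1/3} \wt u_i$; stability follows from Theorem~\ref{lem:least-sq-robust}, as the design matrix inherits its condition number from $V$ up to $\poly(n)$ factors. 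The main obstacle is the non-normality of $M_a M_b^{\dagger}$: a naive Wedin bound applies only to symmetric perturbations, so one must simultaneously control the eigenvalue gap (via Gaussian anti-concentration along the $r$ different directions $v_i/\iprod{b, v_i}$) and the conditioning of the eigenvector change-of-basis matrix (via the hypothesis $\kappa < \infty$) in order to transfer an operator-norm perturbation on the matrix into individual vector-wise perturbations on each eigenvector.
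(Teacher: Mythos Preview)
The paper does not provide a proof of this statement: it is stated as a \emph{Fact} and attributed directly to the symmetric case of Theorem~2.3 in \cite{BhaskaraCMV14}, used as a black box in Step~\ref{step:tensor-decomp-hd} of the main algorithm. So there is no proof in the paper to compare against.

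That said, your sketch is a faithful outline of the argument in \cite{BhaskaraCMV14}: random contraction to two slices, simultaneous diagonalization via $M_a M_b^{\dagger}$, Gaussian anti-concentration for eigenvalue separation, and eigenvector perturbation for diagonalizable matrices. The one place to be careful is exactly the one you flag at the end --- the eigenvector perturbation step for the non-normal matrix $M_a M_b^{\dagger}$ requires controlling the condition number of the eigenvector matrix $V$ itself (not just the eigenvalue gap), and the bound you need is the Bauer--Fike-type statement for diagonalizable matrices rather than Wedin. Your sketch acknowledges this but does not spell out the dependence; in \cite{BhaskaraCMV14} this is where the $\poly(\kappa)$ factor enters. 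Otherwise the plan is sound and matches the cited source.
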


To apply this fact, we will need the following bound on the condition number $\kappa$ of the matrix with $\Sym(A_t^{\otimes D})$ as columns:

\begin{lemma}[Condition number]
\label{lem:condition-number-hd}
    Under the same assumptions as Lemma~\ref{lem:analysis-restricted-partials-hd},
    let $A_D$ be the $n_{2D} \times m$ matrix whose columns are the coefficient vectors of $A_t(x)^D$ for $t\in[m]$.
    Then, with probability $1 - n^{-\Omega(D)}$, the condition number $\kappa(A_D) \leq O(1)$.
\end{lemma}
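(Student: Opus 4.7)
The plan is to control the Gram matrix $G \coloneqq A_D^\top A_D \in \R^{m \times m}$ by separating its diagonal and off-diagonal parts: write $G = \Lambda + \Gamma$, where $\Lambda_{tt} = \Norm{A_t(x)^D}^2$ and $\Gamma_{st} = \langle A_s(x)^D, A_t(x)^D\rangle$ for $s \neq t$ (with $\Gamma_{tt}=0$). Showing that $\Lambda = \Theta(n^{2D}) \cdot \Id$ up to a $(1\pm o(1))$ factor and that $\Norm{\Gamma}_{\mathrm{op}} = o(n^{2D})$ with the claimed probability then suffices: Weyl's inequality (Theorem~\ref{thm:weyl}) gives $\sigma_{\min}(G), \sigma_{\max}(G) = \Theta(n^{2D})$, hence $\sigma_{\min}(A_D), \sigma_{\max}(A_D) = \Theta(n^{D})$ and $\kappa(A_D) = O(1)$.

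For the diagonal, I would apply Claim~\ref{claim:norm-of-polynomial-powers} (with $K=2$) to each $A_t(x)^D$ and take a union bound over $t \in [m]$; since $m \leq n^{O(D)}$, this yields $\Lambda_{tt} = \Theta(n^{2D})$ simultaneously for all $t$ with probability $1 - m\cdot n^{-\Omega(D)} \geq 1 - n^{-\Omega(D)}$. This is the direct analog of the ``$n(1 \pm O(1/\sqrt{n}))\cdot \Id$'' diagonal term in the toy example of Section~\ref{sec:overview-singular-val}, and the remainder of the argument exactly follows the ``smallest singular value via spectral norm of the deviation matrix'' strategy described there.

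The main task is then to bound $\spec{\Gamma}$, which is the off-diagonal ``deviation'' matrix in this setup. Each entry $\Gamma_{st}$ is a polynomial of degree $D$ in the entries of $A_s$ and degree $D$ in the entries of $A_t$; my plan is to decompose it via the graph matrix framework that the paper develops for bounding such structured random matrices. Specifically, I would first split $A_t(x)^D = \E[A_t(x)^D] + (A_t(x)^D - \E[A_t(x)^D])$: the mean contribution produces at most a rank-one ``planted'' matrix $\mathbf{1}\mathbf{1}^\top \cdot \Norm{\E[A_t^D]}^2$ whose single nonzero eigenvalue is $O(m \cdot n^{2D-2D}) = o(n^{2D})$ under our bound on $m$, while the zero-mean fluctuation yields a matrix whose entries $\tilde{\Gamma}_{st}$ (for $s\neq t$) are zero-mean polynomials of low degree in independent Gaussians. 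A standard trace moment / graph matrix calculation (as used in Section~\ref{sec:overview-singular-val} and throughout the paper) then gives $\spec{\tilde{\Gamma}} \leq \widetilde{O}(\sqrt{m}\cdot n^{D})$ with probability $1 - n^{-\Omega(D)}$, and since $m \leq (n/\polylog(n))^{D}$ forces $\sqrt{m}\cdot n^{D} = o(n^{2D})$, this is the required bound.

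The step I expect to be the main obstacle is the trace-moment bound on $\tilde{\Gamma}$, because the entries are degree-$2D$ polynomials in the Gaussian entries of $A_s, A_t$ that do not have independent coefficients across $(s,t)$ pairs sharing a column index. When $D$ is even, several shape graphs contribute positively through their mean and one must apply the charging/positivity argument noted in Section~\ref{sec:overview-singular-val} (as in the SoS lower-bound analyses) to show that these positive contributions only increase eigenvalues of $G$ and therefore cannot hurt the smallest-singular-value bound. The rest of the shape graphs can be handled by the standard $\E[\Tr(\tilde{\Gamma}^{2k})]^{1/2k}$ method with $k = \Theta(D\log n)$, matching the blueprint used for the other condition number bounds (Lemmas~\ref{lem:V-singular-value-hd}, \ref{lem:singular-value-of-U-hd}, \ref{lem:desymm-mainsv-hd}) in the paper.
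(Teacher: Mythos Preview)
The proposal is essentially correct and uses the same ``diagonal vs.\ off-diagonal deviation'' blueprint the paper employs for all its singular-value bounds; the paper does not write out a separate proof of this lemma, but the off-diagonal analysis is literally the one carried out in the proof of Lemma~\ref{lem:rank-of-N} (Section~\ref{sec:rank-of-N}) with $B_t,\ell$ replaced by $A_t,n$, yielding $\spec{\offdiag(G)}=\widetilde O(\sqrt{m}\,n^{D})$ on the multilinear part and $\widetilde O(m\,n^{D})$ once collapsed shapes are included.

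One numerical slip: the eigenvalue of the rank-one mean block $\|\mu\|^2\,\mathbf 1\mathbf 1^\top$ is $m\cdot\|\E[A_t^D]\|^2=\Theta(m\,n^{D})$ for even $D$ (not $O(m\cdot n^{2D-2D})$; by Isserlis $\E[A_t(x)^D]\asymp\|x\|^{2D}$, and the coefficient vector of $\|x\|^{2D}$ has squared norm $\Theta(n^D)$). Your conclusion survives because the hypotheses of Lemma~\ref{lem:analysis-restricted-partials-hd} actually force $m\le (n/\polylog n)^{D}$ --- combining $m\le(\ell/\polylog\ell)^{2D}$ with $m\ell^{2D}\le(n/\polylog n)^{2D}$ gives $m^2\le(n/\polylog n)^{2D}$ --- so $m\,n^{D}=o(n^{2D})$ after all. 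Note also that your mean-splitting is genuinely needed for the $\sigma_{\max}$ half of the condition-number claim: the paper's multilinear-column trick (Fact~\ref{fact:deleting-rows-sval}) only controls $\sigma_{\min}$, so for $\sigma_{\max}$ one must either subtract $\mu$ as you do or directly bound the collapsed/phantom-edge shapes in the full Gram matrix. No separate positivity/charging step is required here, since every collapsed shape already has norm $\widetilde O(m\,n^{D})=o(n^{2D})$.
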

Recall that for any natural number $k$, we write $n_k = \binom{n+k-1}{k}$ for the number of distinct degree $k$ monomials in $n$ variables. 

Finally, in Step~\ref{step:D-root}, we extract $A_t$ from $\Sym(A_t^{\otimes D})$ (i.e.\ desymmetrize a single noisy power). Note that in this step, we do not need randomness/genericity of the $A_t$.
\begin{lemma}[Stable Computation of $D$-th Roots]
\label{lem:analysis-D-root-hd}
    Let $D,n \in \N$ and $\delta \geq 0$.
    Let $P \in \R^{n\times n}$ be an unknown symmetric matrix.
    Suppose $\wt{P_D}(x)$ is a homogeneous degree-$D$ polynomial in $n$ variables such that its coefficient tensor satisfies $\Norm{\wt{P_D} - \Sym(P^{\otimes D})}_F \leq \delta$.
    There is an algorithm that runs in $n^{O(D)}$ time and outputs $\wt{Q} \in \R^{n\times n}$ such that if $D$ is odd, then
    \begin{equation*}
        \Norm{\wt{Q} - P}_F \leq O(\sqrt{n}\delta^{1/D}) \mcom
    \end{equation*}
    and if $D$ is even, then
    \begin{equation*}
        \min_{\sigma \in \{\pm1\}} \Norm{ \wt{Q} - \sigma P }_{F} \leq O(n \delta^{1/3D}) \cdot \|P\|_{\max} \mper
    \end{equation*}
\end{lemma}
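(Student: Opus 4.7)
The plan is to reduce recovery of $P$ from $\wt{P_D}$ to a well-conditioned least-squares problem by extracting pointwise $D$-th roots. By Cauchy--Schwarz applied to coefficient tensors, $|\wt{P_D}(x) - P(x)^D| \le \delta \|x\|^{2D}$ for every $x \in \R^n$, so at any probe point we have a quantitative approximation to $P(x)^D$.

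For odd $D$, the real $D$-th root $t \mapsto t^{1/D}\colon \R \to \R$ is continuous on all of $\R$ and satisfies a Hölder-type bound $|a^{1/D} - b^{1/D}| \le 2 |a - b|^{1/D}$ for all $a, b \in \R$ (the same-sign case is standard; the opposite-sign case follows from $|a^{1/D}| + |b^{1/D}| \le 2 \max(|a|,|b|)^{1/D} \le 2(|a|+|b|)^{1/D}$). Setting $r(x) := \mathrm{sign}(\wt{P_D}(x))\,|\wt{P_D}(x)|^{1/D}$, we get $|r(x) - P(x)| \le 2 \delta^{1/D}$ for any unit $x$. I would then pick $N = O(n^2)$ probe vectors $\{x_i\}$ forming a well-conditioned design for symmetric matrices---for example, the standard basis $\{e_i\}_{i \in [n]}$ together with the shifted vectors $\{(e_i + e_j)/\sqrt{2}\}_{i<j}$---and solve the least-squares problem $\wt{Q} = \mathop{\mathrm{argmin}}_{Q = Q^\top} \sum_i (x_i^\top Q x_i - r(x_i))^2$. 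The measurement operator has bounded condition number on the space of symmetric $n \times n$ matrices, so Theorem~\ref{lem:least-sq-robust} applied with per-measurement noise $O(\delta^{1/D})$ yields $\|\wt{Q} - P\|_F \le O(\sqrt{n}\,\delta^{1/D})$.

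For even $D$, the map $t \mapsto |t|^{1/D}$ recovers only $|P(x)|$ and loses the sign of $P(x)$. To break the sign ambiguity I would exploit directional-derivative information: $\partial_y \wt{P_D}(x) \approx 2D \cdot P(x)^{D-1} \cdot (y^\top P x)$, and because $D-1$ is odd, $P(x)^{D-1}$ preserves $\mathrm{sign}(P(x))$. Picking a reference direction $x_0$ at which $|\wt{P_D}(x_0)|^{1/D}$ is at least $\Omega(\|P\|_{\max}/\poly(n))$ (which holds for a generic $x_0$ since the variety $\{P=0\}$ has measure zero), the normalized quotient $\partial_y \wt{P_D}(x_0)/(D \cdot |\wt{P_D}(x_0)|^{(D-1)/D})$ approximates $2 \cdot \mathrm{sign}(P(x_0)) \cdot (y^\top P x_0)$. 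Ranging $y$ over a basis recovers the vector $P x_0$ up to the single global sign $\mathrm{sign}(P(x_0))$. Combining these signed linear measurements with the magnitude estimates $|P(x)|$ obtained from the root procedure, I would set up a least-squares recovery for $P$ with exactly one global sign ambiguity, yielding the $\min_{\sigma\in\{\pm1\}}$ in the statement. The $\|P\|_{\max}$ factor arises from lower-bounding the normalization $|P(x_0)|^{(D-1)/D}$ in terms of $\|P\|_{\max}^{(D-1)/D}$, and the weaker exponent $1/3D$ comes from three compounded Hölder losses: estimating $|\wt{P_D}(x_0)|^{1/D}$, estimating the derivatives $\partial_y \wt{P_D}(x_0)$ via finite differences, and propagating error through the final division.

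The main obstacle is the even-$D$ case. The odd-$D$ case reduces essentially mechanically to a well-conditioned least-squares problem once global Hölder continuity of the real root is invoked. For even $D$, the sign ambiguity forces one to extract first-order derivative information to break sign symmetry at a reference point and then consistently propagate the sign, and carefully quantifying the noise amplification from the normalize-and-divide step is what produces the weaker $\delta^{1/3D}$ exponent and the $\|P\|_{\max}$ prefactor. A possible alternative is a semidefinite-programming formulation that directly searches for a symmetric $Q$ whose symmetrized $D$-th tensor power matches $\wt{P_D}$, but analyzing its stability appears to require similar ingredients.
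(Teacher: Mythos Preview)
Your odd-$D$ plan is a genuinely different and more elementary route than the paper's, but the last step does not give the stated constant. With $N=\Theta(n^2)$ probe points and per-point error $O(\delta^{1/D})$, the noise vector already has $\ell_2$ norm $O(n\,\delta^{1/D})$; moreover the design $\{e_i\}\cup\{(e_i+e_j)/\sqrt2\}$ is \emph{not} $O(1)$-conditioned (take $Q[i,i]=1/n$, $Q[i,j]=-1/n$ for $i\neq j$: then $\|Q\|_F=1$ but every $(e_i+e_j)/\sqrt2$ measurement vanishes and $\|MQ\|_2=1/\sqrt n$). So your least-squares bound is $O(n\,\delta^{1/D})$, not $O(\sqrt n\,\delta^{1/D})$. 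The paper instead solves a degree-$2D$ sum-of-squares relaxation for a pseudo-distribution $\mu$ over symmetric $Q$ with $\pE_\mu\|\Sym(Q^{\otimes D})-\wt{P_D}\|_F^2\le\delta^2$, invokes the SoS inequality $(a-b)^{2D}\le 2^{2(D-1)}(a^D-b^D)^2$ for odd $D$, and outputs $\wt Q=\pE_\mu[Q]$. Because $\pE_\mu[Q]$ is itself a symmetric matrix and the SoS inequality applies in every direction $z$ simultaneously, this yields a \emph{spectral}-norm bound $\|\wt Q-P\|_2\le 2\delta^{1/D}$, hence Frobenius $O(\sqrt n\,\delta^{1/D})$. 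Your pointwise root $r(z)$ is not a quadratic form, so no analogous spectral bound is available from finitely many probes.

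For even $D$ your sketch has real gaps: you never specify how the single signed column $\sigma P x_0$ is propagated to the rest of $P$ (sign alignment across columns fails when the relevant off-diagonal entries vanish), and the account of the exponent $1/(3D)$ as ``three compounded H\"older losses'' is not how it arises---derivatives of $\wt{P_D}$ are computable exactly from the coefficient tensor, and compounded H\"older losses multiply exponents rather than add their denominators. The paper's argument is the SDP alternative you allude to: the same SoS relaxation, together with the SoS inequality $(a^2-b^2)^{D}\le 2^{D-2}(a^D-b^D)^2$, gives $|\pE_\mu[Q(z)^2]-P(z)^2|\le 2\delta^{2/D}$ for all unit $z$, i.e.\ an estimate $\wt{P_2}$ of $\Sym(P^{\otimes 2})$ with entrywise error $O(\delta^{2/D})$. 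A separate elementary square-root subroutine then recovers $\pm P$ entry by entry from $\wt{P_2}$ (case-splitting on whether some diagonal $\wt{P_2}[i,i,i,i]$ is large) with error $O(\eps^{1/6}\|P\|_{\max})$ where $\eps$ is the entrywise error; the product $\tfrac{2}{D}\cdot\tfrac{1}{6}=\tfrac{1}{3D}$ is where the exponent actually comes from.
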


\paragraph{Putting things together}
We will prove each of the above lemmas and provide details of each step in the following subsections. Here, we use them to finish the proof of Theorem~\ref{thm:main-theorem-hd}.

\begin{proof}[Proof of Theorem~\ref{thm:main-theorem-hd}]
    For $D \leq 2$, we set $\ell = \sqrt{n}$ and $m \leq (\frac{n}{\polylog(n)})^D$ such that $m \leq (\frac{\ell}{\polylog(\ell)})^{2D}$.
    For $D > 2$, we set $\ell = n^{2/3}$ and $m \leq (\frac{n}{\polylog(n)})^{2D/3}$ such that $m \leq (\frac{\ell}{\polylog(\ell)})^{D}$.
    In both cases, we have $m\ell^{2D} \leq (\frac{n}{\polylog(n)})^{2D}$.
    
    We consider the collection $\calS$ of subsets of $[n]$ from Lemma~\ref{lem:analysis-aggregation-hd} with parameter $\ell$ such that $|\calS| = n^{O(D)}$ and $\ell \leq |S| \leq 6D\ell$ for all $S\in \calS$.
    Thus for $m \leq (\frac{n}{\polylog(n)})^D$, the parameters $m,n,|S|$ satisfy $m \leq (\frac{|S|}{\polylog(n)})^{2D}$ and $m|S|^{2D} \leq (\frac{n}{\polylog(n)})^{2D}$.

    Consider a set $S\in \calS$ and the corresponding restriction matrix $M_S$, and let $B_t = M_S^\top A_t M_S \in \R^{|S| \times |S|}$.
    By Lemma~\ref{lem:analysis-restricted-partials-hd}, \ref{lem:span-finding-hd} and \ref{lem:analysis-desym-hd} (assuming $\|E\|_F \leq n^{-\Omega(D)}$), after Steps~\ref{step:partial-der-hd}, \ref{step:span-finding-hd} and \ref{step:desym-hd}, we obtain tensor $\wt{T}_S \in \R^{\ell^{6D}}$ such that
    \[ \Norm{\wt{T}_S - \sum_{t \in [m]} \Paren{\Sym(B_t^{\ot D})}^{\ot 3} }_F \leq n^{O(D)} \cdot \|E\|_F \mcom\]
    with probability $1 - n^{\Omega(D)}$ over the randomness of the input.
    By union bound over $\calS$, we get the same guarantees for all $S\in \calS$ with probability $1- \frac{1}{\poly(n)}$.

    Next, observe that $\sum_{t \in [m]} \Paren{\Sym(B_t^{\ot D})}^{\ot 3} \in (\R^{\ell})^{\ot 6D}$ is simply a sub-tensor obtained by removing zero entries from the tensor $\sum_{t=1}^m \Paren{\Sym(\calR_{S}(A_t)^{\ot D}}^{\ot 3} \in (\R^n)^{\ot 6D}$ according to $S \subseteq [n]$.
    Therefore, for each $S \in \calS$ we have an estimate of $\sum_{t=1}^m \Paren{\Sym(\calR_{S}(A_t)^{\ot D})}^{\ot 3}$, then if we average over all $S \in \calS$, by Lemma~\ref{lem:analysis-aggregation-hd}, we get a tensor $\wt{R}_D' \in \R^{n^{6D}}$ such that
    \begin{equation*}
        \Norm{\wt{R}_D' - C \circ \sum_{t \in [m]} \Paren{\Sym(A_t^{\ot D})}^{\ot 3}}_F \leq n^{O(D)} \cdot \|E\|_F
    \end{equation*}
    where the error bound is by triangle inequality, and $C$ is a known tensor with entries within $((\ell/2n)^{6D}, 1)$.
    Thus, by normalizing $\wt{R}_D'$ according to $C$, we get a tensor $\wt{R}_D$ such that
    \begin{equation*}
        \Norm{\wt{R}_D - \sum_{t \in [m]} \Paren{\Sym(A_t^{\ot D})}^{\ot 3}}_F \leq n^{O(D)} \cdot \|E\|_F \mper
    \end{equation*}
    
    Next, by the tensor decomposition algorithm (Fact~\ref{fact:tensor-decomposition-algo}) and the condition number upper bound from Lemma~\ref{lem:condition-number-hd}, Step~\ref{step:tensor-decomp-hd} runs in $n^{O(D)} \polylog(\tau)$ time and outputs tensors $\wt{A^D_1},\dots, \wt{A^D_m}$ such that
    \begin{equation*}
        \min_{\pi \in \bbS_m} \max_{t \in [m]} \Norm{\wt{A^D_t} - \Sym\Paren{A_{\pi(t)}^{\ot D}}}_F \leq n^{O(D)} \|E\|_F + \tau \mper
    \end{equation*}
    Finally, by Lemma~\ref{lem:analysis-D-root-hd} we can extract $\wt{A}_t \in \R^{n \times n}$ from $\wt{A_t^D}$.
    For odd $D$, using the fact that $x^{1/D}$ is a concave function when $D \geq 1$, we get that
    \begin{equation*}
        \min_{\pi \in \bbS_m} \max_{t \in [m]} \Norm{\wt{A}_t - A_{\pi(t)}}_F \leq O(\sqrt{n}) \Paren{n^{O(D)} \|E\|_F + \tau}^{1/D}
        \leq \poly(n) \Paren{\|E\|_F^{1/D} + \tau^{1/D} } \mper
    \end{equation*}
    For even $D$, since $\|A_t\|_{\max} \leq \polylog n$ with high probability by standard concentration results, we get that
    \begin{equation*}
    \begin{aligned}
        \min_{\pi \in \bbS_m} \max_{t \in [m]} \min_{\sigma\in \{\pm1\}} \Norm{\wt{A}_t - \sigma A_{\pi(t)}}_F 
        &\leq O(n) \Paren{n^{O(D)} \|E\|_F + \tau}^{1/3D} \|A_t\|_{\max} \\
        &\leq \poly(n) \Paren{ \|E\|_F^{1/3D} + \tau^{1/3D} } \mper
    \end{aligned}
    \end{equation*}
    This completes the proof.
\end{proof}

\subsection{Proof of Lemma~\ref{lem:analysis-restricted-partials-hd}: Estimating span of partial derivatives of \texorpdfstring{$\wh{P}$}{P}}
\label{sec:partial-der-hd}

\begin{mdframed}
  \begin{algorithm}[Estimate Span of $2D$th Order Partial Derivatives of $\wh{P}$]
    \label{algo:partial-der-hd}\mbox{}
    \begin{description}
    \item[Input:] A degree $6D$ homogeneous polynomial $\wh{P}(x) = \sum_{t \in [m]} A_t(x)^{3D} + E(x)$ and restriction matrix $M \in \R^{n \times \ell}$.
      
    \item[Output:] A basis for $\wt{\calV}_D$ such that: \[\|\wt{\calV_D} - \calV_D\|_F \leq O\left(\frac{\|E\|_F}{(n\ell)^D}\right)\] for
    $\calV_D = \spn\Paren{B_t(y)^D y_T \mid t \in [m], T \in [\ell]^{2D}}$ and $B_t(y) = A \circ M(y)$.
    \item[Operation:] \mbox{}
    \begin{itemize}
        \item For all multisets $I \in [n]^{2D}$ of size $2D$:
    \begin{enumerate}
    	\item Compute partial derivative of $\wh{P}(x)$ with respect to $I$: $\partial_I \wh{P}(x)$. 
	    \item Project each partial derivative with respect to $M$ to get the polynomials: $(\partial_I \wh{P}) \circ M(y)$.
    \end{enumerate}
        \item Let $\wt{U}$ be a matrix in $\R^{\ell_{4D} \times n_{2D}}$ whose columns are the vectors $(\partial_I \wh{P}) \circ M$.
        \item Output the top $(m\ell_{2D} - \binom{m}{2})$ singular vectors of $\wt{U}$.   
    \end{itemize}
    \end{description}
  \end{algorithm}
\end{mdframed}



Our goal in this step is to obtain an estimate of $\calV_D = \spn\Paren{B_t(y)^D y_T \mid t \in [m], T \in[\ell]^{2D}}$ given the input coefficient tensor of the polynomial $\wh{P}$. Let us first restrict to $E=0$.

The main observation that helps us here is the following structure in the partial derivatives of $P$. Specifically, for $2D$-size multiset $I$:
    $\partial_I P(x) = \sum_{t=1}^m A_t(x)^{D} \cdot p_t(x)$.
In particular, if we restrict $P(x)$ via $M$, then the resulting $2D$-th order partial derivatives are all in the span of multiples of $B_t(y)^{D} = (y^{\top} B_t y)^D$. Thus, let $\calU_D$ be the subspace of the restricted partial derivatives of $P$:
\begin{equation*}
    \calU_D \coloneqq \spn\Paren{(\partial_I P) \circ M(y) \mid I\in [n]^{2D}} \mper
\end{equation*}

Then, by our discussion above, $\calU_D \subseteq \calV_D$. We will in fact argue that these subspaces are \emph{exactly} the same whenever $m$ and $\ell$ are small enough. In fact, we'll prove that any polynomial of the form $\sum_{t=1}^m B_t(y)^D Q_t(y)$, where $\deg(Q_t) = 2D$, is a linear combination of the restricted partial derivatives with coefficients that are not too large (this will be essential for our error analysis).




\begin{lemma}[$\calU_D = \calV_D$]
\label{lem:U-equals-V-hd}
    Fix $D\in \N$. Let $\ell, m, n\in \N$ such that $m\ell^{2D} \leq (\frac{n}{\polylog(n)})^{2D}$.
    Let $A_1,\dots,A_m \in \R^{n\times n}$ be random symmetric matrices with independent Gaussian entries, let $M \in \R^{n\times \ell}$ be a fixed restriction matrix.
    Then, with probability $1 - n^{-\Omega(D)}$ over the draw of $A_t$s, we have that $\calU_D = \calV_D$ and further, for any degree-$2D$ homogeneous polynomials $Q_1,\dots, Q_m$, there exists coefficients $R = \{R_I\}_{I\in[n]^{2D}} \in \R^{n_{2D}}$ such that
    \begin{equation*}
        \sum_{I\in[n]^{2D}} R_I \cdot \sum_{t=1}^m (\partial_{I} A_t^{3D}) \circ M(y) = \sum_{t=1}^m B_t(y)^D Q_t(y) \mcom
    \end{equation*}
    and moreover, $\|R\|_2 \leq O(n^{-D}) \sum_{t=1}^m \|Q_t\|_F$.
\end{lemma}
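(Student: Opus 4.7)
The plan is to construct $R$ via least squares. As a first algebraic step, observe that for any multiset $I$ of size $2D$, the partial derivative $\partial_I A_t^{3D}$ is divisible by $A_t^D$: by the product rule, each of the $2D$ derivatives removes at most one copy of $A_t$ from the $3D$-fold product, so every resulting term retains at least $D$ factors of $A_t$. Write $\partial_I A_t^{3D} = A_t^D \cdot r_{t,I}(x)$ for the corresponding degree-$2D$ polynomial $r_{t,I}$, whose leading Leibniz term (coming from the all-singletons partition of $I$) is proportional to $\prod_{i \in I} \partial_i A_t$. Restricting to $M$ gives $(\partial_I A_t^{3D}) \circ M = B_t^D \cdot (r_{t,I} \circ M)$, so every restricted partial derivative already lies in $\calV_D$, establishing $\calU_D \subseteq \calV_D$.

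For the reverse direction and the quantitative control on $R$, I would cast the task as a linear system. Given target polynomials $Q_1,\dots, Q_m$ of degree $2D$ in $y$, if we can find $R \in \R^{n_{2D}}$ with $\sum_I R_I (r_{t,I} \circ M)(y) = Q_t(y)$ for every $t \in [m]$, then summing the identity above against $R$ immediately yields $\sum_I R_I \sum_t (\partial_I A_t^{3D}) \circ M = \sum_t B_t^D Q_t$, as required. Expressing this system coefficient-wise gives $WR = q$, where $W \in \R^{m\ell_{2D} \times n_{2D}}$ has $W[(t,T), I]$ equal to the coefficient of $y_T$ in $r_{t,I} \circ M$, and $q_{(t,T)}$ is the coefficient of $y_T$ in $Q_t$. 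The hypothesis $m\ell^{2D} \leq (n/\polylog n)^{2D}$ makes $W$ wide, and the plan is to show that with probability $1 - n^{-\Omega(D)}$ over the $A_t$'s, $W$ has full row rank with smallest nonzero singular value $\sigma_{\min}(W) \geq \Omega(n^D)$. The minimum-norm solution $R = W^\dagger q$ then satisfies $\|R\|_2 \leq \|q\|_2 / \sigma_{\min}(W) \leq O(n^{-D}) \sqrt{\sum_t \|Q_t\|^2} \leq O(n^{-D}) \sum_t \|Q_t\|_F$, which matches the bound stated. The inclusion $\calU_D \supseteq \calV_D$ follows at once, since every element of $\calV_D$ is now constructed as a combination of restricted partials.

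The main obstacle is the singular value lower bound $\sigma_{\min}(W) \geq \Omega(n^D)$. Following the strategy described in Section~\ref{sec:overview-singular-val}, I would split $WW^\top = D_0 + E$ into its diagonal and off-diagonal parts. Each diagonal entry $(WW^\top)_{(t,T),(t,T)}$ is $\sum_I ({\rm coef\ of\ } y_T \text{ in } r_{t,I} \circ M)^2$; the dominant contribution comes from the leading Leibniz term $c \prod_{i \in I} \partial_i A_t$ of $r_{t,I}$, whose restriction produces a sum over $I \in [n]^{2D}$ of squared coefficients that concentrates at $\Theta(n^{2D})$ by standard Gaussian concentration on low-degree polynomials in the entries of $A_t$ (Claim~\ref{claim:norm-of-polynomial-powers}) combined with a simple counting argument. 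For the off-diagonal part $E$, whose entries are low-degree polynomials in the Gaussian coefficients of the $A_t$'s, the plan is to invoke the graph-matrix decomposition technique from \cite{BHK19, graphmatrixbounds, sparseindset} together with the trace moment method to bound $\spec{E} = o(n^{2D})$. As flagged in the overview, some deviation blocks may carry large spectral norm only through large \emph{positive} eigenvalues that cannot reduce $\lambda_{\min}(WW^\top)$; these pieces must be handled by a positivity/charging argument in the spirit of \cite{BHK19, HK22, sparseindset}. Combining the diagonal concentration with the spectral bound on $E$ via Weyl's inequality (Theorem~\ref{thm:weyl}) gives $\lambda_{\min}(WW^\top) \geq \Omega(n^{2D})$, which is the required singular value bound. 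The trace moment estimates, together with the positivity charging, form the technically most demanding piece of the argument.
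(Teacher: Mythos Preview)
Your approach is correct and follows the same diagonal-vs-off-diagonal blueprint as the paper, but the paper makes one structural choice that you do not. Rather than asking $\sum_I R_I\,(r_{t,I}\circ M)=Q_t$ directly, the paper writes $r_{t,I}\circ M=\sum_{\gamma}B_t^{\gamma_2}\,q_{t,I,\gamma}$, where $\gamma=(\gamma_1,\gamma_2)$ records how the $2D$ derivatives distribute among the $3D$ copies of $A_t$ (their ``bucket profiles'', Definition~\ref{def:bucket-profile}), and then imposes the \emph{stronger} system $\sum_I R_I\,q_{t,I,\gamma^*}=Q_t$ and $\sum_I R_I\,q_{t,I,\gamma}=0$ for $\gamma\neq\gamma^*$ (Definition~\ref{def:linear-system-for-U-equals-V}). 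The resulting constraint matrix $L$ (Definition~\ref{def:L-matrix-hd}) has rows that are simply flattenings of $\Sym\bigl(\bigotimes_k (A_tM)[j_k]\otimes A_t^{\otimes\gamma_2}\bigr)$, with no $B_t$-dependence, which makes the graph-matrix analysis of $LL^\top$ (Lemma~\ref{lem:singular-value-of-L}, proved in Section~\ref{sec:sval-U-equals-V}) clean: a direct diagonal bound and vertex-separator norm bounds on the off-diagonal suffice. Your matrix $W$ is related to $L$ by $W=\Phi L$ where the $\gamma^*$-block of $\Phi$ is the identity and the remaining blocks encode multiplication by $B_t^{\gamma_2}$; since $\Phi\Phi^\top\succeq \Id$, your bound actually follows from theirs, but analyzing $W$ directly would force you to disentangle the $B_t$-weighted mixing of the $L_\gamma$ blocks.

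One small overcorrection: the positivity/charging argument you anticipate is \emph{not} needed here. The paper's proof of Lemma~\ref{lem:singular-value-of-L} shows $LL^\top=\Omega(n^{2D})\Id+E$ with $\spec{E}=o(n^{2D})$ by straight graph-matrix norm bounds; the PSD-charging machinery is reserved for the analysis of $V^\top V+N^\top N$ (Lemmas~\ref{lem:VV-diag-blocks}--\ref{lem:NN}), where genuinely large off-diagonal shapes must be absorbed.
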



Thus, in order to recover an (approximate) basis for $\calV_D$, it is enough to obtain a basis for (a noisy estimate of) the subspace $\calU_D$. Our algorithm for this will simply populate natural elements of $\calU_D$ and then take the top few singular vectors to get a basis for $\calU_D$ (and thus also for $\calV_D$). 

\paragraph{Dimension Counting} In order to analyze this algorithm, let us first calculate the dimension of $\calV_D$. Observe that the polynomials $B_t(y)^D y_T$ for $t\in[m]$ and $T \in [\ell]^{2D}$ are \emph{not} linearly independent: as $|T| = \deg(B_t^D) = 2D$, we can have $B_t(y)^D B_s(y)^D = B_s(y)^D B_t(y)^D$ for $s\neq t$.
We will show, in a sense, that these are the \emph{only} linear dependencies in the set of polynomials $B_t(y)^D y_T$ and thus $\dim(\calV_D) = m\ell_{2D} - \binom{m}{2}$. In fact, for our error-tolerance, we will show that the matrix that populates the coefficient vectors of polynomials of the form $B_t(y)^D y_T$ has $\dim(\calV_D)$ polynomially large singular values:


\begin{lemma}[Singular value lower bound for $V$]
\label{lem:V-singular-value-hd}
    Fix $D\in \N$. Let $m, \ell \in \N$ such that $m \leq (\frac{\ell}{\polylog(\ell)})^{2D}$ if $D\leq 2$, and $m\leq (\frac{\ell}{\polylog(\ell)})^{D}$ if $D>2$.
    Let $B_1, \dots, B_m$ be degree-2 homogeneous polynomials in $\ell$ variables with i.i.d.\ standard Gaussian coefficients.
    Then, with probability $1 - \ell^{-\Omega(D)}$, the set of homogeneous degree-$2D$ polynomials $\{p_t\}_{t\in[m]}$ satisfying $\sum_{t=1}^m B_t(y)^D p_t(y) = 0$ forms a subspace of dimension $\binom{m}{2}$ spanned by the following,
    \begin{equation*}
        \calN \coloneqq \Set{(p_1,\dots,p_m) \mid p_{t_1} = B_{t_2}^D,\ p_{t_2}=-B_{t_1}^D, \text{ and } p_{s} = 0 \textnormal{ for } s \neq t_1,t_2}_{t_1 < t_2 \leq m} \mper
    \end{equation*}
    Furthermore, let $V\in \R^{\ell_{4D} \times m\ell_{2D}}$ such that where each column is the coefficient vector of the degree-$4D$ polynomial $B_t(y)^D y_T$ for $t\in[m]$ and $T \in [\ell]^{2D}$.
    Then,
    \begin{equation*}
        \rank(V) = m\ell_{2D} - \binom{m}{2}, \quad \sigma_{m\ell_{2D} - \binom{m}{2}}(V) \geq \Omega(\ell^D) \mper
    \end{equation*}
\end{lemma}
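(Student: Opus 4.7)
The plan is to first verify that the Koszul subspace $\calN$ lies in $\ker V$ with $\dim \calN = \binom{m}{2}$, then show that $V$ restricted to $\calN^\perp$ is well-conditioned with smallest singular value $\Omega(\ell^D)$. Interpreting $Vc$ as the coefficient vector of the degree-$4D$ polynomial $\sum_t p_t(y) B_t(y)^D$, where $p_t(y) = \sum_T c_{t,T} y_T$, the containment $\calN \subseteq \ker V$ is immediate from the trivial commutativity $B_{t_1}^D B_{t_2}^D - B_{t_2}^D B_{t_1}^D = 0$. Linear independence of the Koszul generators reduces, block by block, to linear independence of $\{B_1^D,\ldots,B_m^D\}$, which holds for generic quadratics in our parameter regime since $m \leq \ell_{2D}$. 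Consequently, establishing $\sigma_{\min}(V|_{\calN^\perp}) \geq \Omega(\ell^D)$ will simultaneously prove the rank identity $\rk(V) = m\ell_{2D} - \binom{m}{2}$, the characterization $\ker V = \calN$, and the singular value lower bound.

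For the condition number bound I will follow the ``deviation matrix'' strategy from Section~\ref{sec:overview}. The Gram matrix $V^\top V$ has a block structure with $(t,s)$-block given by $(T,S) \mapsto \langle B_t^D y_T, B_s^D y_S \rangle$. A direct computation (multiplication by $y_T$ merely permutes the coefficients of $B_t^D$ into distinct monomial slots) shows that the diagonal of $V^\top V$ equals $\|B_t^D\|^2 = \Theta(\ell^{2D})$ at every coordinate, with high probability by Claim~\ref{claim:norm-of-polynomial-powers}. I therefore decompose
\begin{equation*}
    V^\top V \;=\; \ell^{2D}\, \bigl(I - P_{\calN}\bigr) \;+\; \Lambda_{\mathrm{dev}},
\end{equation*}
and, using that $V^\top V$ already annihilates $\calN$, reduce the task to proving $\|\Lambda_{\mathrm{dev}}|_{\calN^\perp}\|_{\mathrm{op}} = o(\ell^{2D})$. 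Combining such a bound with Weyl's theorem (Theorem~\ref{thm:weyl}) and $\sigma_k(V)^2 = \sigma_k(V^\top V)$ will yield the claimed $\Omega(\ell^D)$ singular value on a subspace of dimension $m\ell_{2D} - \binom{m}{2}$, matching the rank claim automatically.

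The main obstacle is bounding $\|\Lambda_{\mathrm{dev}}|_{\calN^\perp}\|_{\mathrm{op}}$. The entries of $\Lambda_{\mathrm{dev}}$ are degree-$2D$ polynomials in the independent Gaussians $\{[B_t]_{ij}\}$ with substantial cross-block correlations, so standard concentration tools (e.g.\ the leave-one-out method of~\cite{RudVer:08}) are insufficient in our parameter regime. I will expand $B_t^D$ monomially in the entries of $B_t$ and group the resulting terms by graph matrix shape in the framework of~\cite{BHK19,graphmatrixbounds}, then apply the general spectral norm bounds for graph matrices shape by shape. The different thresholds $m \leq (\ell/\polylog(\ell))^{2D}$ for $D \leq 2$ and $m \leq (\ell/\polylog(\ell))^{D}$ for $D > 2$ arise because, at higher $D$, additional shapes enter the expansion that only admit the desired bound in the tighter regime. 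A subtle point flagged in Section~\ref{sec:overview} is that some graph matrix summands individually have spectral norm far larger than $\ell^{2D}$; however, these large contributions act in PSD directions aligned with the $\ell^{2D}(I - P_{\calN})$ main part, and a positivity charging argument analogous to those in~\cite{BHK19,GJJPR20,HK22,sparseindset} will absorb them into the main term, leaving a genuinely small remainder.
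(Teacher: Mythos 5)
Your high-level structure is right and matches the paper: verify $\calN\subseteq\ker V$ and $\dim\calN=\binom m2$, then bound $\sigma_{\min}(V|_{\calN^\perp})$ via a diagonal-plus-deviation decomposition of $V^\top V$ analyzed by graph matrices, with positivity charging for the shapes that are not small in spectral norm. However, your specific decomposition has a gap that is not cosmetic.

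You write $V^\top V = \ell^{2D}(I - P_{\calN}) + \Lambda_{\mathrm{dev}}$ and ask for $\Norm{\Lambda_{\mathrm{dev}}|_{\calN^\perp}}_{\mathrm{op}} = o(\ell^{2D})$, invoking a positivity charging argument to absorb large shapes. The obstacle is a normalization mismatch between the two natural scales in this problem. The diagonal of $V^\top V$ is $\Norm{B_t^D}^2 \approx \ell^{2D}$, but $\calN$ is spanned by vectors of squared norm $\Norm{B_{t_1}^D}^2 + \Norm{B_{t_2}^D}^2 \approx 2\ell^{2D}$ (two nonzero blocks per generator). Writing the paper's explicit null-space matrix $N$ (Definition~\ref{def:N-matrix}), one has $NN^\top \approx 2\ell^{2D}\cdot\Id$, so $\ell^{2D}P_\calN = \ell^{2D}N^\top(NN^\top)^{-1}N \approx \tfrac12 N^\top N$. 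The paper's analysis (Lemmas~\ref{lem:VV-diag-blocks}, \ref{lem:VV-off-diag-blocks}, \ref{lem:NN}) decomposes $V^\top V = \Omega(\ell^{2D})\Id + P_1 + W + E$ and $N^\top N = P_2 - W + E'$, where $W$ (Definition~\ref{def:W-matrix}) has $\spec{W} = \Theta(\ell^{2D})$, $P_1,P_2\succeq 0$, and $\spec{E},\spec{E'} = o(\ell^{2D})$. Plugging this in, your deviation is
\begin{equation*}
\Lambda_{\mathrm{dev}} \;\approx\; V^\top V + \tfrac12 N^\top N - \ell^{2D}\Id \;\approx\; P_1 + \tfrac12 P_2 + \tfrac12 W + o(\ell^{2D}),
\end{equation*}
so the critical matrix $W$ is only half-cancelled. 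Moreover $\tfrac12 W$ cannot be absorbed by a positivity argument: since $V^\top V$ annihilates $\calN$, one has $W|_\calN \approx -\Omega(\ell^{2D})\Id_\calN - P_1|_\calN$, i.e.\ $W$ has negative eigenvalues of magnitude $\Theta(\ell^{2D})$. Thus $\Lambda_{\mathrm{dev}}$ is neither small in operator norm nor (approximately) PSD, and the two-sided Weyl step you propose does not go through.

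The paper sidesteps this by never normalizing $N$ into a projector: it works with the polynomial-entry matrix $V^\top V + N^\top N$ and proves $\lambda_{\min}(V^\top V + N^\top N) \geq \Omega(\ell^{2D})$ (Lemma~\ref{lem:VV-NN}). Because $V^\top V$ and $N^\top N$ have orthogonal column spans (as $VN^\top = 0$), this single bound simultaneously gives the rank identity, the characterization $\ker V = \spn(\calN)$, and the singular value lower bound. Crucially, keeping $N^\top N$ un-normalized means the full $-W$ appears and cancels the $W$ from $V^\top V$ exactly; the positivity charging is then used only on the genuinely PSD residuals $P_1$ (from a half-gram decomposition of the diagonal blocks of $V^\top V$) and $P_2$ (the diagonal blocks of $N^\top N$), not on $W$. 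Your write-up conflates these two mechanisms: the $W$ cancellation is an exact sign cancellation against $N^\top N$, not a PSD absorption. To repair your argument you should replace $\ell^{2D}P_\calN$ by $N^\top N$ outright, and separately lower-bound $\sigma_{\min}(N)$ (Lemma~\ref{lem:rank-of-N}) to recover the dimension count for $\calN$.
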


From Lemma~\ref{lem:U-equals-V-hd}, we know that $\dim(\calU_D) = m\ell_{2D} - \binom{m}{2}$. Combining Lemma~\ref{lem:U-equals-V-hd} and \ref{lem:V-singular-value-hd}, we will prove that the coefficient vectors of $2D$-th order partial derivatives form a sufficiently incoherent spanning set for $\calU_D$ by establishing the following lower bound on the $\dim(\calU_D)$-th singular value of the matrix that populates them as columns.

\begin{lemma}[Singular value lower bound of $U$]
\label{lem:singular-value-of-U-hd}
    Under the same assumptions as Lemma~\ref{lem:V-singular-value-hd}, let $U$ be the $\ell_{4D} \times n_{2D}$ matrix where each column is the coefficient vector of the degree-$4D$ polynomial $\partial_I P \circ M(y)$ for multiset $I\in[n]^{2D}$.
    Then, with probability $1- n^{-\Omega(D)}$,
    \begin{equation*}
        \sigma_{\rank(U)}(U) = \Omega((n\ell)^D)
    \end{equation*}
    where $\rank(U) = m\ell_{2D} - \binom{m}{2}$ is our candidate rank.
\end{lemma}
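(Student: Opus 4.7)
The plan is to transfer the singular value lower bound on $V$ from Lemma~\ref{lem:V-singular-value-hd} to $U$ using Lemma~\ref{lem:U-equals-V-hd} as the bridge. Since $\calU_D = \calV_D$ by Lemma~\ref{lem:U-equals-V-hd} and $\rank(V) = m\ell_{2D} - \binom{m}{2} =: r$ by Lemma~\ref{lem:V-singular-value-hd}, we immediately get $\rank(U) = r$, confirming the candidate rank in the statement. For the quantitative bound, I apply Lemma~\ref{lem:U-equals-V-hd} to each column $V^{(t,T)} = B_t(y)^D y_T$ of $V$: with single-nonzero target $(0,\dots,y_T,\dots,0)$ placed in the $t$-th slot, the lemma supplies $R^{(t,T)} \in \R^{n_{2D}}$ with $U R^{(t,T)} = V^{(t,T)}$ and $\|R^{(t,T)}\|_2 \leq O(n^{-D})$. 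Stacking these as columns of $\bar R \in \R^{n_{2D} \times m\ell_{2D}}$ yields $V = U\bar R$. From $\bar R\bar R^T \preceq \spec{\bar R}^2 I$ we get $VV^T \preceq \spec{\bar R}^2 UU^T$, and Weyl's inequality (Theorem~\ref{thm:weyl}) applied at the $r$-th eigenvalue gives $\sigma_r(U) \geq \sigma_r(V)/\spec{\bar R} \geq \Omega(\ell^D)/\spec{\bar R}$.

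It therefore suffices to show $\spec{\bar R} \leq O(n^{-D})$. Since the preimage construction in Lemma~\ref{lem:U-equals-V-hd} is linear in the target polynomials $(Q_1,\dots,Q_m)$, for any unit $c \in \R^{m\ell_{2D}}$ with blocks $c^{(t)} \in \R^{\ell_{2D}}$ the product $\bar R c = \sum_{(t,T)} c_{(t,T)} R^{(t,T)}$ agrees with the $R$ vector Lemma~\ref{lem:U-equals-V-hd} produces for the combined target $Q_t(y) = \sum_T c_{(t,T)} y_T$. Reading the norm bound in Lemma~\ref{lem:U-equals-V-hd} in its Frobenius form then yields $\|\bar R c\|_2 \leq O(n^{-D})\sqrt{\sum_t \|Q_t\|_F^2} = O(n^{-D})\|c\|_2$, since the blockwise norms $\|Q_t\|_F = \|c^{(t)}\|_2$ add in quadrature to $\|c\|_2$. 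Hence $\spec{\bar R} = O(n^{-D})$, and we conclude $\sigma_r(U) \geq \Omega((n\ell)^D)$.

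The main obstacle is precisely this tight spectral bound on $\bar R$: the naive route of bounding only the individual column norms and invoking $\spec{\bar R} \leq \|\bar R\|_F \leq \sqrt{m\ell_{2D}} \cdot O(n^{-D})$ is polynomially lossy and yields only $\sigma_r(U) \geq \Omega(n^D/\sqrt{m})$, which is $\sqrt{m}\,\ell^D$ weaker than claimed. Recovering the sharp $\Omega((n\ell)^D)$ bound therefore hinges on Lemma~\ref{lem:U-equals-V-hd} delivering a Frobenius-type norm on the combined target, rather than a coarser $\ell_1$ aggregation of blockwise norms — this is the natural output of the pseudoinverse/least-squares construction underlying that lemma, and the columnwise per-target bound used above is its specialization to rank-one targets.
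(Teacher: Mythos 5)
Your proof is correct, and it proves the same bound the paper does by essentially the same ``transfer'' principle (pass $\sigma_{\min}$ from $V$ to $U$ via the well-conditioned preimage map supplied by Lemma~\ref{lem:U-equals-V-hd}), but your execution differs in a way worth noting. The paper picks a single extremal target --- $q$ in the row span of $V$ with $Vq = u_{r_V}$, the bottom left singular vector of $U$ --- applies the lemma once to get a preimage $p$ with $\|p\|_2 \leq O(n^{-D})\|q\|_2$, and closes the loop by observing that $p$ is orthogonal to the top $r_V - 1$ right singular directions of $U$. You instead build a full factorization $V = U\bar R$ by stacking a preimage for each column of $V$, then deduce $VV^\top \preceq \spec{\bar R}^2\, UU^\top$ and compare $r$-th eigenvalues. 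The matrix-level argument is cleaner and makes the role of the quadrature-form norm bound explicit. Two small caveats: (1) your step $VV^\top \preceq \spec{\bar R}^2 UU^\top \Rightarrow \lambda_r(VV^\top)\le \spec{\bar R}^2\lambda_r(UU^\top)$ is the Weyl/Courant--Fischer \emph{monotonicity} under the Loewner order, not the perturbation form of Theorem~\ref{thm:weyl} that the paper states; it's elementary but technically a different statement. (2) Your bound on $\spec{\bar R}$ needs the preimage $R\mapsto (Q_1,\dots,Q_m)$ to be a linear map, which the lemma \emph{statement} does not assert; it is, however, true of the minimum-norm (pseudoinverse) solution actually constructed in the lemma's proof, and you correctly flag this. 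Both your route and the paper's hinge on the same strengthening of Lemma~\ref{lem:U-equals-V-hd}'s stated norm bound --- the proof of that lemma delivers $\|R\|_2^2 \leq O(n^{-2D})\sum_t\|Q_t\|_F^2$ rather than the stated $\ell_1$ aggregation --- and you identify correctly why the weaker stated form would lose a $\sqrt{m}$ factor.
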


Finally, we can upgrade the analyses above to handle non-zero error matrices $E$. Our bounds above can be used to infer the following distance bounds between our estimate and the truth. 
\begin{lemma}
Under the same hypothesis as Lemma~\ref{lem:singular-value-of-U-hd}, with probability at least $1-n^{-\Omega(D)}$ on the choice of the $A_t$s and the internal randomness of the algorithm, 
    $\|\wt{U} - U\|_F \leq O(D^{D/2})\|E\|_F.$ 
\end{lemma}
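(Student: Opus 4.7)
Since $\wh P = P + E$ and partial differentiation is linear, each column of $\wt U - U$ is exactly $(\partial_I E)\circ M(y)$ for the multiset $I\in[n]^{2D}$ labelling that column. Therefore
\[
    \|\wt U - U\|_F^2 \;=\; \sum_{|I|=2D} \|(\partial_I E)\circ M\|^2 \;\le\; \sum_{|I|=2D} \|\partial_I E\|^2,
\]
because the restriction $\circ M$ simply zeros out every monomial in $y$ that would require a variable outside the support of $M$. The proof thus reduces to bounding the right-hand side by a $D$-dependent constant times $\|E\|_F^2$.

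\paragraph{Plan.} To control $\sum_I \|\partial_I E\|^2$, identify $E$ with its symmetric coefficient tensor of order $6D$. For an ordered tuple $(i_1,\dots,i_{2D}) \in [n]^{2D}$, the polynomial $\partial_{i_1}\cdots\partial_{i_{2D}}E$ has symmetric coefficient tensor equal to $\tfrac{(6D)!}{(4D)!}\,E[i_1,\dots,i_{2D},\cdot]$, a rescaled slice of $E$. Combining this with the elementary identity $\sum_{(i_1,\dots,i_{2D})\in[n]^{2D}}\|E[i_1,\dots,i_{2D},\cdot]\|_F^2 = \|E\|_F^2$ (obtained by expanding both sides entrywise) yields
\[
    \sum_{(i_1,\dots,i_{2D})\in[n]^{2D}} \|\partial_{i_1}\cdots\partial_{i_{2D}}E\|_F^2 \;=\; \Paren{\tfrac{(6D)!}{(4D)!}}^{2}\|E\|_F^2.
\]
Each multiset of size $2D$ appears between $1$ and $(2D)!$ times among ordered tuples, so the sum over multisets is bounded by the same quantity.

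\paragraph{Main obstacle.} The only real technical subtlety is that the columns of $\wt U$ are stored as monomial coefficient vectors (with norm $\|\cdot\|$), whereas the slicing identity naturally lives in symmetric-tensor Frobenius norm $\|\cdot\|_F$; for a homogeneous degree-$4D$ polynomial the two norms differ by multinomial factors $\binom{4D}{J}$ which range over $[1,(4D)!]$. The plan is to carry out the entire chain in tensor norm and only cash out the conversion once, at the very end, absorbing both the multinomial factors and $\frac{(6D)!}{(4D)!}$ into a single combinatorial constant depending only on $D$; this produces the asserted $O(D^{D/2})$ factor after taking square roots. The argument is purely deterministic, and the probability qualifier in the statement is inherited from the hypotheses of Lemma~\ref{lem:singular-value-of-U-hd} but is not actually used in this step. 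The care-point is simply a tidy bookkeeping of the multinomial factors so that the final constant is genuinely independent of $n$.
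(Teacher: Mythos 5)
Your proposal is correct and follows essentially the same route as the paper: linearity of differentiation to identify $\wt U-U$ columnwise with $(\partial_I E)\circ M$, the observation that restriction by $M$ can only shrink coefficient norms, and then the fact that partial-derivative coefficients are (rescaled) slices of $E$'s coefficient tensor, so the double sum over $(I,J)$ telescopes to a $D$-dependent multiple of $\|E\|_F^2$. The only cosmetic difference is that you route the bookkeeping through the symmetric-tensor Frobenius norm (picking up the factor $\tfrac{(6D)!}{(4D)!}$ and multinomial conversion weights) whereas the paper works directly at the level of monomial coefficients with $(\partial_I E)[J]=c_{I,I\cup J}\,E[I\cup J]$ and a count of how many $(I,J)$ map to a given multiset $R=I\cup J$; both give a constant depending only on $D$, which is all that is used downstream, and you correctly note the probability clause in the statement is vestigial for this step.
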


\begin{proof}
Recall that $\wt{U}$ is a matrix in $\R^{\ell_{4D} \times n_{2D}}$ whose columns are the vectors $(\partial_I \wh{P}) \circ M$. For a multiset $I$ and polynomial $Q(x)$, recall that $Q[I]$ denotes the coefficient of $x_I$ in $Q(x)$.
For all multisets $J \in [\ell]^{4D}$ and $I \in [n]^{2D}$, we have that the $(J, I)$ entry of $U$ and $\wt{U}$ is $((\partial_{I} P) \circ M)[J]$ and $((\partial_{I}\wh{P}) \circ M)[J]$ respectively. Subtracting the two we get that the $(J,I)$-entry of $U - \wt{U}$ equals:
$((\partial_{I}(P - \wh{P})) \circ M)[J] = ((\partial_{I}E) \circ M)[J]$.

Due to the structure of $M$ (columns are $e_i$'s with no repeated columns) one can check that:
\[\sum_{I \in [n]^{2D},\ J \in [\ell]^{4D}} ((\partial_I E) \circ M) [J]^2 \leq \sum_{I \in [n]^{2D},\ J \in [n]^{4D}} (\partial_I E)[J]^2.\]
Each summand in the RHS above is at most $O(D!)E[I \cup J]$, and since $I \cup J$ ranges over multisets $R \in [n]^{6D}$ of size $6D$ we get:
\begin{equation*}
    \|U - \wt{U}\|_F \leq O(D^{D/2})\|E\|_F \mper
    \qedhere
\end{equation*}
\end{proof}

\paragraph{Completing the proof of Lemma~\ref{lem:analysis-restricted-partials-hd}}
\begin{lemma}[Lemma~\ref{lem:analysis-restricted-partials-hd} restated: Correctness of Algorithm~\ref{algo:partial-der-hd}]\label{lem:partial-der-hd}
Fix $D\in \N$. Let $m, \ell, n \in \N$ be parameters such that $m \leq (\frac{\ell}{\polylog(\ell)})^{2D}$ if $D\leq 2$, and $m\leq (\frac{\ell}{\polylog(\ell)})^{D}$ if $D>2$, and that $m\ell^{2D} \leq (\frac{n}{\polylog(n)})^{2D}$.
    Given $\wh{P} = \sum_{t \in [m]} A_t(x)^{3D} + E(x)$ where each $A_t$ is a degree-2 homogeneous polynomial with i.i.d.\ $\calN(0,1)$ entries, and a restriction matrix $M \in \R^{n \times \ell}$, we have that with probability $1- n^{-\Omega(D)}$ over the choice of $A_t$'s, Algorithm~\ref{algo:partial-der-hd} outputs a subspace $\wt{\calV}_D$ of $\R^{\ell_{4D}}$ that satisfies:
    \[ \Norm{\wt{\calV}_D - \calV_D}_F \leq O\left(\frac{\|E\|_F}{(n\ell)^D}\right) \mcom \]
    with $B_t(y) = A_t \circ M(y)$.
\end{lemma}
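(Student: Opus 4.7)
The strategy is to view Algorithm~\ref{algo:partial-der-hd} as computing the top-$r$ left-singular subspace of the noisy matrix $\wt{U}$, which is an additive perturbation of the clean matrix $U$ (whose columns are the coefficient vectors of $(\partial_I P)\circ M$ for $I\in[n]^{2D}$), and to apply Wedin's theorem (Theorem~\ref{thm:wedin}) to control how far this subspace moves from the left-singular subspace of $U$, which we will identify with $\calV_D$.

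\textbf{Assembling the ingredients.} Four facts, all already established in the excerpt, make Wedin directly applicable. First, by Lemma~\ref{lem:U-equals-V-hd} the column span of $U$ coincides with $\calU_D = \calV_D$ with probability $1 - n^{-\Omega(D)}$. Second, by Lemma~\ref{lem:singular-value-of-U-hd}, $U$ has exact rank $r := m\ell_{2D} - \binom{m}{2}$ with $\sigma_r(U) \geq \Omega((n\ell)^D)$ and $\sigma_{r+1}(U) = 0$; in particular, the left-singular subspace corresponding to the top $r$ singular values of $U$ is \emph{precisely} $\calV_D$. Third, the entrywise calculation already performed in the excerpt gives $\|\wt U - U\|_F \leq O(D^{D/2})\|E\|_F$, so $\spec{\wt U - U}$ enjoys the same bound. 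Fourth, the output of the algorithm $\wt{\calV}_D$ is by construction the top-$r$ left-singular subspace of $\wt U$.

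\textbf{Invoking Wedin.} Take $\Sigma_1$ to be the diagonal block containing the top $r$ singular values of $U$ and $\Sigma_2$ the block of remaining (zero) singular values. Then
\begin{equation*}
\min_{i,j}\bigl\lvert[\Sigma_1]_{ii} - [\Sigma_2]_{jj}\bigr\rvert \;=\; \min_i [\Sigma_1]_{ii} \;\geq\; \Omega((n\ell)^D) \;=: \delta,
\end{equation*}
and the same lower bound holds for $\min_i |[\Sigma_1]_{ii}|$. Applying Theorem~\ref{thm:wedin} to $A = U$ and $E = \wt U - U$, and reading off the left-singular-subspace part of the guarantee, we obtain
\begin{equation*}
\|\wt{\calV}_D - \calV_D\|_F \;\leq\; \frac{\sqrt 2\,\|\wt U - U\|_F}{\delta} \;\leq\; \frac{\sqrt 2\cdot O(D^{D/2})\,\|E\|_F}{\Omega((n\ell)^D)} \;=\; O\!\left(\frac{\|E\|_F}{(n\ell)^D}\right),
\end{equation*}
which is exactly the claim. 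If the orientation $\ell_{4D}\geq n_{2D}$ required by the statement of Theorem~\ref{thm:wedin} happens to fail, the identical argument applied to $U^\top$ and $\wt U^\top$ yields the same bound.

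\textbf{Main obstacle.} All of the real work sits in the earlier lemmas — identifying the exact rank $r$ via Lemma~\ref{lem:U-equals-V-hd}, establishing the singular-value lower bound $\sigma_r(U)\geq \Omega((n\ell)^D)$ via Lemma~\ref{lem:singular-value-of-U-hd} (which is where the graph-matrix moment method does the heavy lifting), and controlling the entrywise perturbation $\|\wt U - U\|_F$. Once these are in hand, the present lemma reduces to a single invocation of Wedin, so the only remaining care is to verify that the Wedin gap condition is satisfied in the regime of interest — a triviality given that the output bound is only informative when $\|E\|_F \ll (n\ell)^D$ anyway.
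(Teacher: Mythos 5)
Your proof is correct and follows essentially the same approach as the paper: identify the clean subspace $\calV_D$ with the top-$r$ left-singular space of $U$ via Lemmas~\ref{lem:U-equals-V-hd} and \ref{lem:singular-value-of-U-hd}, bound $\|\wt U - U\|_F$ entrywise, and apply Wedin. The only (minor, and welcome) addition is your explicit note about the orientation requirement $m \geq n$ in Theorem~\ref{thm:wedin}, which the paper leaves implicit.
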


\begin{proof}
By our analysis above (when $E(x) = 0$) we know that the column space of $U$ defined as $\calU_D$ is equal to $\calV_D$. We have that $\dim(\calV_D) = \dim(\calU_D) = m\ell_{2D} - \binom{m}{2}$ and moreover  $\sigma_{m\ell_{2D} - \binom{m}{2}}(U) \geq \Omega((n\ell)^D)$. Recall that $\wt{\calV}_D$ is the subspace spanned by the top $m\ell_{2D} - \binom{m}{2}$ dimensional left singular vectors of $\wt{U}$. Applying Wedin's theorem to matrix $U$ (with $\calU_1$ being the top $m\ell_{2D} - \binom{m}{2}$ dimensional left singular vector space of $U$ which is equal to $\calV_D$ and the error matrix $U - \wt{U}$) we get that,
\begin{equation*}
    \Norm{\wt{\calV}_D - \calV_D }_F \leq O\left(\frac{\|U - \wt{U}\|_F}{\sigma_{m \ell_{2D} - \binom{m}{2}}(U)}\right) \leq O\left(\frac{D^D \|E\|_F}{(n\ell)^D}\right) \mper
\end{equation*}
This completes the proof.
\end{proof}

\paragraph{Structure of the subsequent sections}
In Section~\ref{sec:U-equals-V-hd} we prove Lemma~\ref{lem:U-equals-V-hd}.
The proof relies on an application of a lower bound on the singular values of a certain matrix (Lemma~\ref{lem:singular-value-of-L}) shown in  Section~\ref{sec:sval-U-equals-V}.
Next, we prove Lemma~\ref{lem:V-singular-value-hd} in Section~\ref{sec:analysis-of-V-hd} that relies on some  singular value lower bounds deferred to Sections~\ref{sec:rank-of-N} and \ref{sec:singular-value-of-VV-NN}.  
Finally, in Section~\ref{sec:analysis-of-U-hd} we prove Lemma~\ref{lem:singular-value-of-U-hd} by combining Lemmas~\ref{lem:U-equals-V-hd} and \ref{lem:V-singular-value-hd}.

\subsubsection{Proof of Lemma~\ref{lem:U-equals-V-hd}: \texorpdfstring{$\calU_D = \calV_D$}{U=V}}
\label{sec:U-equals-V-hd}

First observe that $\partial_I A_t^{3D}(x)$ is a sum of products of scalars of the form $A_t[i_1,i_2]$ and linear polynomials of the form $\iprod{A_t[j], x}$.
More specifically, the terms in $\partial_I A_t^{3D}(x)$ can be categorized by the number of linear terms $\gamma_1$ and number of scalar terms $\gamma_2$ where $\gamma_1 + 2\gamma_2 = |I|$.
To formally write out each term in $\partial_I A_t^{3D}(x)$, we need the following definition.

\begin{definition}[Bucket profile]
\label{def:bucket-profile}
    Given integers $\gamma_1, \gamma_2 \geq 0$ such that $\gamma_1 + 2 \gamma_2 = 2D$, we call $\gamma = (\gamma_1, \gamma_2)$ a \emph{bucket profile} of the $2D$ partial derivative, and we write $\gamma^*$ to denote the special bucket $(2D, 0)$.
    Let $\Gamma_{2D}$ be the set of bucket profiles.
    
    Moreover, we define the \emph{bucket partition} of $\{1,2,\dots,2D\}$ as follows: $\kappa_1(\gamma) \coloneqq (1,2,\dots,\gamma_1)$ and $\kappa_2(\gamma) \coloneqq ((\gamma_1+1, \gamma_1+2), \dots, (2D-1, 2D))$, a set of $\gamma_2/2$ pairs.
\end{definition}

\begin{remark}[Interpretation of $\gamma$ and the terms in partial derivatives of $A_t^{3D}$]
    To better understand Definition~\ref{def:bucket-profile}, we make the following remarks,
    \begin{itemize}
        \item 
        Imagine $A_t^{3D}$ as being $3D$ buckets, and taking $2D$ partial derivatives means dropping $2D$ balls in these buckets such that each bucket contains at most $2$ balls.
        Then, $\gamma_1$ (resp.\ $\gamma_2$) denotes the number of buckets with 1 (resp.\ 2) balls, hence $\gamma_1 + 2\gamma_2 = 2D$.

        \item
        Let $I = (i_1,\dots,i_{2D}) \in [n]^{2D}$ be an ordered tuple, and consider $\partial_I A_t^{3D}$ and a bucket profile $\gamma \in \Gamma_{2D}$.
        Note that there are $3D - (\gamma_1+\gamma_2) = D+\gamma_2$ empty buckets.
        Thus, $\gamma$ represents the terms in $\partial_I A_t^{3D}$ that are products of $A_t^{D+\gamma_2}$ and $\gamma_1$ linear polynomials, which can be represented as
        $\Iprod{A_t[i_{\pi(k)}], x}$ for $k\in \kappa_1(\gam)$ where $|\kappa_1(\gam)| = \gamma_1$, for some permutation $\pi\in \bbS_{2D}$.
        Here note that $A_t[i_{\pi(k)}]$ is a vector of dimension $n$.
    \end{itemize}
\end{remark}



\paragraph{Reduce $\calU_D = \calV_D$ to proving feasibility of a linear system}

With Definition~\ref{def:bucket-profile}, we can formally write out the partial derivatives in a form that is convenient for our analysis: for a multiset $I = \{i_1,\dots, i_{2D}\} \in[n]^{2D}$ of size $2D$,
\begin{equation}
\label{eq:partial-explicit-hd}
    \partial_I A_t^{3D}(x) = \sum_{\gamma \in \Gamma_{2D}} c_{\gamma} A_t(x)^{D + \gamma_2} \sum_{\pi \in \bbS_{2D}} \prod_{k\in \kappa_1(\gamma)} \Iprod{A_t[i_{\pi(k)}], x} \prod_{(k_1,k_2)\in \kappa_2(\gamma)} A_t[i_{\pi(k_1)},i_{\pi(k_2)}]
\end{equation}
where $c_{\gamma} > 0$ is a scalar depending on $\gamma$ and bounded by $O(D)^{2D}$.
Note that one can view the summation over $\pi \in \bbS_{2D}$ as a way of symmetrizing over $I$, as the partial derivative should not depend on the ordering of $I$.
Thus, we have
\begin{equation*}
    \partial_I P(x) = \sum_{t=1}^m A_t(x)^D \sum_{\gamma\in \Gamma_{2D}} A_t(x)^{\gamma_2} p_{t,I,\gamma}(x)
\end{equation*}
where $p_{t,I,\gamma}(x)$ is a homogeneous polynomial of degree $\gamma_1$ consisting of products of $\gamma_1$ linear polynomials of the form $\iprod{A_t[i], x}$.

We next set $x = My$ where $M \in \R^{n\times \ell}$ is a given restriction matrix and $y$ is an $\ell$-dimensional variable.
Then, for $|I| = 2D$,
\begin{equation}
\label{eq:projected-partial-hd}
    \partial_I P \circ M(y) = \sum_{t=1}^m B_t(y)^D \sum_{\gamma\in \Gamma_{2D}} B_t(y)^{\gamma_2} q_{t,I,\gamma}(y)
\end{equation}
where from \pref{eq:partial-explicit-hd} we see that $q_{t,I,\gamma}$ is a degree-$\gamma_1$ polynomial:
\begin{equation}
\label{eq:projected-partial-explicit-hd}
    q_{t,I,\gamma}(y) = c_{\gamma} \cdot \sum_{\pi \in \bbS_{2D}} \prod_{k\in \kappa_1(\gamma)} \Iprod{M^\top A_t[i_{\pi(k)}], y} \prod_{(k_1,k_2)\in \kappa_2(\gamma)} A_t[i_{\pi(k_1)},i_{\pi(k_2)}] \mper
\end{equation}

From \pref{eq:projected-partial-hd} it is clear that for any $I$, the projected partial derivative $\partial_I P \circ M(y)$ lies in $\calV_D$, which means $\calU_D \subseteq \calV_D$.
To prove Lemma~\ref{lem:U-equals-V-hd}, we first write out $\sum_{I\in[n]^{2D}} R_I \cdot (\partial_I A \circ M(y))$ using \pref{eq:projected-partial-hd}:
\begin{equation*}
    \sum_{I\in[n]^{2D}} R_I \cdot (\partial_I A \circ M(y))
    = \sum_{t=1}^m B_t(y)^D \sum_{\gamma \in \Gamma_{2D}}^D B_t(y)^{\gamma_2} \sum_{I\in[n]^{2D}} R_I \cdot q_{t,I,\gamma}(y)
\end{equation*}
where $\gamma = (\gamma_1, \gamma_2)$ and $q_{t,I,\gamma}(y)$ is a homogeneous polynomial of degree $\gamma_1$.

Our main idea is to focus on the bucket profile $\gamma^* = (2D, 0)$ where $q_{t,I, \gamma^*}$ is a product of $2D$ linear polynomials.
We show that the polynomials $\{q_{t,I,\gamma^*}\}_{I}$ already give us enough freedom to construct any degree-$2D$ polynomials.
Specifically, we show that the following linear system in variables $\{R_I\}_{I}$ is feasible for any $Q_1,\dots,Q_m$:

\begin{definition}[Linear system for proving $\calU_D = \calV_D$]
\label{def:linear-system-for-U-equals-V}
    Given arbitrary degree-$2D$ homogeneous polynomials $Q_1,\dots,Q_m$, we define the following linear system in variables $\{R_I\}_{I\in [n]^{2D}}$:
    \begin{equation} \label{eq:U-equals-V-constraints-hd}
    \begin{aligned}
        &\sum_{I\in[n]^{2D}} R_I \cdot q_{t,I,\gamma^*}(y) = Q_t(y), & & \forall t\in [m], \\
        &\sum_{I\in[n]^{2D}} R_I \cdot q_{t,I,\gamma}(y) = 0, & & \forall \gamma \neq \gamma^*,\ t\in[m] \mper
    \end{aligned}
\end{equation}
\end{definition}

Note that the equations in Definition~\ref{def:linear-system-for-U-equals-V} are written as polynomial equations.

\paragraph{Writing the linear system in matrix form}
For a bucket profile $\gamma$, let $J = \{j_1,j_2,\dots,j_{\gamma_1}\} \in [\ell]^{\gamma_1}$ be a multiset. From \pref{eq:projected-partial-explicit-hd}, the coefficient of $y_J$ in $q_{t,I,\gamma}$ is
\begin{equation*}
\begin{aligned}
    \wh{q_{t,I,\gamma}}(J) &= c_{\gamma}' \cdot \sum_{\pi \in \bbS_{2D}} \prod_{k\in \kappa_1(\gamma)} (A_t M)[i_{\pi(k)}, j_k] \prod_{(k_1,k_2)\in \kappa_2(\gamma)} A_t[i_{\pi(k_1)},i_{\pi(k_2)}] \\
    &= c_{\gamma}' \cdot \sum_{\pi \in \bbS_{2D}} \Paren{ \bigotimes_{k\in [\gamma_1]} (A_t M)[j_k] \otimes A_t^{\ot \gamma_2} }[\pi(I)] \\
    &= c_{\gamma}'' \cdot \Sym \Paren{ \bigotimes_{k\in [\gamma_1]} (A_t M)[j_k] \otimes A_t^{\ot \gamma_2} }[I] \mper
\end{aligned}
\end{equation*}
Here recall that we denote $T[I] = T[i_1,i_2,\dots,i_{2D}]$ for a $2D$-order tensor $T$.
Note also that $A_tM \in \R^{n\times \ell}$ consists of $\ell$ columns of $A_t$ (determined by the restriction $M$), thus $(A_tM)[j_k] \in \R^n$ is simply a column of $A_t$.
Then, the equations in \pref{eq:U-equals-V-constraints-hd} reduces to
\begin{equation} \label{eq:U-equals-V-linear-system-hd}
    \Iprod{R_I, \Sym\Paren{ \bigotimes_{k\in [\gamma_1]} (A_t M)[j_k] \otimes A_t^{\ot \gamma_2}} }
    =
    \begin{cases}
        Q_t[J], & \forall t\in[m], J\in[\ell]^{2D}, \text{ for } \gamma = (2D,0) \\
        0, & \forall t\in[m], J\in[\ell]^{\gamma_1}, \text{ for } \gamma \in \Gamma_{2D}, \gamma \neq (2D,0).
    \end{cases}
\end{equation}

The above is a linear system in $n_{2D}$ variables $\{R_I\}_I$.
Since $\gamma_1 + 2\gamma_2 = 2D$, $\gamma_1$ can range from $0, 2, \dots, 2D$, and each $J$ is a multiset of size $\gamma_1$.
Thus we have in total $m \sum_{k=0}^{D} \ell_{2k}$ linear constraints.
The following matrix defines the linear system:

\begin{definition}
\label{def:L-matrix-hd}
    For each $\gamma \in \Gamma_{2D}$, let $L_{\gamma}$ be the $m \ell_{\gamma_1} \times n^{2D}$ matrix where each row is indexed by $t\in[m]$ and multiset $J\in [\ell]^{\gamma_1}$.
    \begin{equation*}
        L_{\gamma}[ (t,J), \cdot] = \fl \Paren{ \Sym \Paren{\bigotimes_{k\in [\gamma_1]} (A_t M)[j_k] \otimes A_t^{\ot \gamma_2} }}
    \end{equation*}
    i.e., each row is the flattened vector of a symmetric $2D$-th order tensor of dimension $n$.
    Moreover, let $r_D \coloneqq m\sum_{k=0}^D \ell_{2k}$ and define $L$ to be the $r_D \times n^{2D}$ matrix formed by concatenating the rows of $L_{\gamma}$ for $\gamma \in \Gamma_{2D}$.
\end{definition}

It is clear that the linear system \pref{eq:U-equals-V-linear-system-hd} can be written as
\begin{equation} \label{eq:U-equals-V-linear-system-matrix-form}
\begin{aligned}
    L_{(2D,0)} \cdot \fl(R_I) &= 
    \begin{bmatrix}
        \fl(Q_1) \\
        \vdots \\
        \fl(Q_m)
    \end{bmatrix} \mcom \\
    \begin{bmatrix}
        L_{(2D-2,1)} \\
        \vdots \\
        L_{(0, D)}
    \end{bmatrix}
    \cdot \fl(R_I) &= 0 \mper
\end{aligned}
\end{equation}

Next, we prove a singular value lower bound for $L$.

\begin{lemma}[Singular value lower bound for $L$]
\label{lem:singular-value-of-L}
    Fix $D\in \N$, let $m,\ell,n\in \N$ such that $m \ell^{2D} \leq (\frac{n}{\polylog(n)})^{2D}$.
    Let $L \in \R^{r_D \times n^{2D}}$ be the matrix defined in Definition~\ref{def:L-matrix-hd} where $r_D = m \sum_{k=0}^{D} \ell_{2k}$.
    Then, with probability $1 - n^{-\Omega(D)}$, $\sigma_{r_D}(L) \geq \Omega(n^D)$.
\end{lemma}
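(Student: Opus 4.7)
\textbf{Proof plan for Lemma~\ref{lem:singular-value-of-L}.}

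The plan is to lower bound the smallest singular value of $L$ by showing $\lambda_{\min}(LL^\top) \geq \Omega(n^{2D})$ with high probability, applying the ``reduce smallest singular value to largest singular value of a deviation'' strategy that this paper systematically exploits. Concretely, I will write
\begin{equation*}
    LL^\top \;=\; \Lambda \;+\; \calE \mcom
\end{equation*}
where $\Lambda \in \R^{r_D\times r_D}$ is positive semidefinite and satisfies $\Lambda \succeq c \cdot n^{2D} \cdot \Id_{r_D}$ for an absolute constant $c>0$, and $\calE$ is a mean-zero matrix whose entries are polynomials of degree $\leq 4D$ in the Gaussian coefficients of $A_1,\dots,A_m$. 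Combining these bounds with Weyl's inequality gives the desired $\lambda_{\min}(LL^\top) \geq \Omega(n^{2D})$.

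\textbf{Step 1 (main PSD term).} Since the $A_t$'s are independent, $\E[LL^\top]$ is block-diagonal in $t$. I will compute each $t$-block of $\E[LL^\top]$ using Gaussian moment identities. The diagonal entries indexed by $(t,J,\gamma) = (t',J',\gamma')$ evaluate, by exactly the computation behind Claim~\ref{claim:norm-of-polynomial-powers}, to $\Theta(n^{2D})$, while the nonzero cross-$\gamma$ or cross-$J$ expectations within a single $t$-block organize into a Gram-like PSD matrix that I can control. I then define $\Lambda$ to be the part of $\E[LL^\top]$ that can be written as a sum of PSD Grammians of independent random vectors; the lower bound $\Lambda \succeq \Omega(n^{2D})\cdot \Id_{r_D}$ follows from this Gram structure together with standard non-degeneracy for Gaussian tensors, using the parameter assumption $m\ell^{2D}\leq (n/\polylog n)^{2D}$ to rule out collisions.

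\textbf{Step 2 (bounding the deviation).} The matrix $\calE = LL^\top - \Lambda$ has mean-zero entries of degree $\leq 4D$ in the $A_t[i,j]$'s. Following the blueprint laid out in Section~\ref{sec:overview-singular-val} and the graph-matrix framework of \cite{HKP15,BHK19,graphmatrixbounds}, I decompose $\calE$ as a linear combination of graph matrices indexed by shapes on labeled graphs with $(t,J,\gamma)$-type vertices. For each shape the trace moment method yields a spectral norm of the form $n^{\alpha}\cdot (m\ell^{2D})^{\beta}$ with $\alpha + \beta \cdot (2D) \leq 2D$; after summing over the finite (depending on $D$) family of shapes and plugging in $m\ell^{2D} \leq (n/\polylog n)^{2D}$, the total contribution to $\spec{\calE}$ is $O(n^{2D}/\polylog(n))$ with probability $1 - n^{-\Omega(D)}$.

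\textbf{Step 3 (positivity of problematic shapes).} The main obstacle, and the place that genuinely requires work, is that some shapes in the decomposition of $\calE$ have spectral norm that is \emph{not} $o(n^{2D})$ on its own. Mirroring the positivity arguments in the sum-of-squares lower bound literature \cite{BHK19,GJJPR20,HK22,sparseindset}, I will identify these shapes and show that, as operators, they contribute only non-negative eigenvalues to $LL^\top$ (because they arise as Grammians of auxiliary vectors associated to the row labels). These terms are therefore \emph{moved from $\calE$ into $\Lambda$}, which only strengthens the inequality $\Lambda \succeq \Omega(n^{2D})\Id$. Once these positive shapes are absorbed, the residual part of $\calE$ is controlled by Step~2.

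Combining Steps 1--3 gives $\lambda_{\min}(LL^\top) \geq \Omega(n^{2D}) - O(n^{2D}/\polylog n) = \Omega(n^{2D})$, hence $\sigma_{r_D}(L) \geq \Omega(n^D)$ with probability $1 - n^{-\Omega(D)}$. The hard part is Step~3: correctly pinpointing which shapes are PSD-absorbable and which must be bounded in operator norm. The threshold $m\ell^{2D}\leq (n/\polylog n)^{2D}$ in the hypothesis is precisely the value at which the surviving shapes in Step~2 become negligible, so the proof is tight at this range.
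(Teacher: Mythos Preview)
Your high-level strategy (show $\lambda_{\min}(LL^\top)\geq\Omega(n^{2D})$ by splitting into a large PSD part plus a small-norm deviation, analyzed via graph matrices) matches the paper. But you have misplaced the difficulty: the ``hard part'' you identify in Step~3 simply does not occur for the $L$ matrix. In the paper's actual proof (the $K=2$ case of Lemma~\ref{lem:singular-value-of-L-higher-deg}), after a preliminary simplification the decomposition is just $LL^\top=\Omega(n^{2D})\cdot\Id+E$ with $\spec{E}=o(n^{2D})$: \emph{every} off-diagonal shape is shown, by direct vertex-separator analysis, to have norm $o(n^{2D})$ under $m\ell^{2D}\leq(n/\polylog n)^{2D}$. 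No positivity/charging argument is required. You seem to be importing the structure of the harder Lemma~\ref{lem:VV-NN} (the $V^\top V+N^\top N$ analysis), where large-norm shapes \emph{do} appear and must be cancelled or absorbed as PSD; that phenomenon is specific to that lemma and does not arise here.

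Two further deviations from the paper worth noting. First, the paper does not work with $\E[LL^\top]$ as in your Step~1; it simply lower-bounds the literal diagonal entries $L_{\gamma}L_{\gamma}^\top[(s,I),(s,I)]\geq\Omega(n^{2D})$ by Gaussian concentration (the dominant contribution is the term where all edges are doubled, leaving the $2D$ hexagon vertices isolated). Second, the paper begins with a key simplification you omit: it deletes all columns of $L$ indexed by tuples $H\in[n]^{2D}$ with repeated entries (legitimate by Fact~\ref{fact:deleting-rows-sval}), so that the hexagon vertices never collapse. This removes a whole class of degenerate shapes and is what makes the off-diagonal case analysis clean enough to avoid any positivity step. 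Your plan would still go through, but it is more elaborate than needed and the emphasis on Step~3 is misleading for this particular lemma.
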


We defer the proof to Section~\ref{sec:sval-U-equals-V}.
Lemma~\ref{lem:singular-value-of-L} allows us to complete the proof of Lemma~\ref{lem:U-equals-V-hd}.

\begin{proof}[Proof of Lemma~\ref{lem:U-equals-V-hd}]
    From the analysis above, we see that to prove that there exists coefficients $R = \{R_I\}_{I \in [n]^{2D}}$ such that
    \begin{equation*}
        \sum_{I\in[n]^{2D}} R_I \cdot (\partial_{I} A\circ M(y)) = \sum_{t=1}^m B_t(y)^D Q_t(y) \mcom
    \end{equation*}
    it suffices to prove that the linear constraints in \pref{eq:U-equals-V-constraints-hd} are satisfied.
    The constraints reduce to the linear system in \pref{eq:U-equals-V-linear-system-matrix-form} with the $r_D \times n^{2D}$ matrix $L$ defined in Definition~\ref{def:L-matrix-hd}.
    By Lemma~\ref{lem:singular-value-of-L}, $\sigma_{r_D}(L) \geq \Omega(n^D)$ implies that there is a solution $R$, and further
    \begin{equation*}
        \Norm{\fl(R)}_2^2 \leq O(n^{-2D}) \cdot \sum_{t=1}^m \Norm{\fl(Q_t)}_2^2 \mper
    \end{equation*}
    This completes the proof.
\end{proof}

\subsubsection{Proof of Lemma~\ref{lem:V-singular-value-hd}: Analysis for \texorpdfstring{$V$}{V}}
\label{sec:analysis-of-V-hd}

\paragraph{Overview of proof of Lemma~\ref{lem:V-singular-value-hd}.}
Consider the matrix $V^\top V \in \R^{m\ell_{2D} \times m\ell_{2D}}$.
We would like to show that the rank of $V^\top V$ is $m\ell_{2D} - \binom{m}{2}$, meaning that it has a $\binom{m}{2}$-dimensional null space.
Let $p \in \R^{m\ell_{2D}}$ be a vector in the null space, and we view $p = (p_1, p_2,\dots, p_m)$ where each $p_i \in \R^{\ell_{2D}}$ is the coefficient vector of a degree-$2D$ homogeneous polynomial.
Then, $Vp = 0$ is equivalent to
\begin{equation}
\label{eq:null-space-of-V-hd}
    \sum_{t=1}^m B_t(y)^D p_t(y) = 0 \mper
\end{equation}
Observe that the $\binom{m}{2}$ tuples $(p_1,\dots,p_m)$ in $\calN$ are indeed solutions to \pref{eq:null-space-of-V-hd} simply because
\begin{equation*}
    B_{t_1}(y)^D B_{t_2}(y)^D - B_{t_2}(y)^D B_{t_1}(y)^D \mper
\end{equation*}

Thus, our goal is to show that 1) these $\binom{m}{2}$ solutions in $\calN$ are linearly independent, and 2) the linear span of $\calN$ is exactly the set of solutions to \pref{eq:null-space-of-V-hd}.
We first define the following matrix $N \in \R^{\binom{m}{2} \times m\ell_{2D}}$ where each row is dimension $m\ell_{2D}$ representing a tuple $(p_1,\dots,p_m)\in \calN$:

\begin{definition}[Null space of $V$]
\label{def:N-matrix}
    We define $N \in \R^{\binom{m}{2} \times m\ell_{2D}}$ to be the matrix whose rows are indexed by $(t_1,t_2)$ for $t_1 < t_2 \in [m]$ and each row represents a collection of $m$ degree-$2D$ polynomials $(p_1,\dots,p_m)$ such that
    \begin{equation*}
        p_s(y) =
        \begin{cases}
            B_{t_2}(y)^D & \text{if } s = t_1, \\
            - B_{t_1}(y)^D & \text{if } s = t_2, \\
            0           & \text{otherwise}.
        \end{cases}
    \end{equation*}
\end{definition}

By definition, each row of $N$ is a solution to \pref{eq:null-space-of-V-hd}, thus $VN^\top = 0$.
We first show that $N$ is rank $\binom{m}{2}$, which implies that $\calN$ is linearly independent.

\begin{lemma}[Rank of $N$]
\label{lem:rank-of-N}
    Let $m, \ell, D \in \N$ such that $m \leq (\frac{\ell}{\polylog(\ell)})^{2D}$.
    Let $N \in \R^{\binom{m}{2} \times m\ell_{2D}}$ be the matrix defined in Definition~\ref{def:N-matrix}.
    Then, with probability $1- \ell^{-\Omega(D)}$,
    \begin{equation*}
        \sigma_{\binom{m}{2}}(N) \geq \Omega(\ell^D) \mper
    \end{equation*}
\end{lemma}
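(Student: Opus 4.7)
The plan is to reduce the singular value lower bound on $N$ to a spectral gap statement for the much smaller $m\times m$ Gram matrix $H \in \R^{m \times m}$ with entries $H_{ij} = \Iprod{B_i^D, B_j^D}$. Note $H = (B^{(D)})^\top B^{(D)}$ where $B^{(D)} \in \R^{\ell_{2D}\times m}$ has columns given by the coefficient vectors of the $B_t^D$. Once this reduction is in place, all that remains is to show $\lambda_{\min}(H) \geq \Omega(\ell^{2D})$, equivalently $\sigma_m(B^{(D)}) \geq \Omega(\ell^D)$, which is an instance of the singular value lower bound in Lemma~\ref{lem:condition-number-hd} applied with $n$ replaced by $\ell$ (the $B_t$'s are random quadratics in $\ell$ variables, and the hypothesis $m\leq (\ell/\polylog(\ell))^{2D}$ if $D\leq 2$ and $m \leq (\ell/\polylog(\ell))^{D}$ if $D>2$ matches the assumption needed).

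For the reduction, given $y \in \R^{\binom{m}{2}}$ I encode it as an antisymmetric matrix $W \in \R^{m\times m}$ by setting $W_{t_1,t_2} = y_{(t_1,t_2)}$ for $t_1 < t_2$ and $W_{t_2,t_1} = -W_{t_1,t_2}$. Then $\|y\|^2 = \tfrac{1}{2}\|W\|_F^2$. Viewing each row of $N$ as an element of $\R^m \otimes \R^{\ell_{2D}}$ of the form $e_{t_1} \otimes B_{t_2}^D - e_{t_2} \otimes B_{t_1}^D$, regrouping $N^\top y$ by the first tensor factor yields
\[
    N^\top y \;=\; \sum_{i=1}^m e_i \otimes \Paren{\sum_{j=1}^m W_{ij}\, B_j^D} \;=\; \sum_{i=1}^m e_i \otimes \Paren{B^{(D)} W_{i,:}^\top}\mper
\]
Consequently,
\[
    \Norm{N^\top y}_2^2 \;=\; \sum_{i=1}^m \Norm{B^{(D)} W_{i,:}^\top}_2^2 \;=\; \tr\Paren{W H W^\top} \;=\; \tr\Paren{H W^\top W}\mper
\]
Since $W^\top W \succeq 0$, the standard PSD trace inequality gives $\tr(HW^\top W) \geq \lambda_{\min}(H)\tr(W^\top W) = 2\lambda_{\min}(H)\|y\|^2$, and therefore $\sigma_{\binom{m}{2}}(N)^2 \geq 2\lambda_{\min}(H)$. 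Combined with $\lambda_{\min}(H) \geq \Omega(\ell^{2D})$ this yields the target bound $\sigma_{\binom{m}{2}}(N) \geq \Omega(\ell^D)$, and the failure probability $\ell^{-\Omega(D)}$ comes directly from the failure probability of the lower bound on $H$ (plus the norm estimate $\|B_t^D\|^2 = \Theta(\ell^{2D})$ from Claim~\ref{claim:norm-of-polynomial-powers} used implicitly to pin down $\sigma_1(B^{(D)})$).

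The reduction itself is elementary; the substantive content of the lemma is hidden in the lower bound on $\sigma_m(B^{(D)})$. The main obstacle lies in the tight regime $D\leq 2$ where $m$ can be as large as $(\ell/\polylog(\ell))^{2D}$: in that range each off-diagonal entry $\Iprod{B_i^D, B_j^D}$ of $H$ has typical size only $\tilde{\Theta}(\ell^D)$, so elementary Gershgorin-type estimates on $\offdiag(H)$ fall short by polynomial factors, and the sharper spectral norm bound must instead come from the graph matrix decomposition and trace moment machinery previewed in Section~\ref{sec:overview-singular-val} --- the same machinery invoked in the proof of Lemma~\ref{lem:condition-number-hd}.
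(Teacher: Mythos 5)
Your antisymmetric-matrix reduction is correct and is a genuinely different route from the paper's. Let me verify the algebra: viewing each row of $N$ indexed by $(t_1,t_2)$ as $e_{t_1}\otimes B_{t_2}^D - e_{t_2}\otimes B_{t_1}^D$ and encoding $y\in\R^{\binom{m}{2}}$ as the antisymmetric $W$, the regrouping $N^\top y = \sum_i e_i\otimes \sum_j W_{ij}B_j^D$ is right (the sign convention works out because $W_{ji}=-W_{ij}$), and $\|N^\top y\|^2 = \tr(WHW^\top)=\tr(HW^\top W)\ge \lambda_{\min}(H)\tr(W^\top W)=2\lambda_{\min}(H)\|y\|^2$ follows from positive semidefiniteness of $H-\lambda_{\min}(H)\Id$ and $W^\top W$. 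So $\sigma_{\binom{m}{2}}(N)^2\ge 2\lambda_{\min}(H)$, an elegant compression of the $\binom{m}{2}\times\binom{m}{2}$ problem into an $m\times m$ one. The paper instead bounds $NN^\top$ directly: its diagonal is $\|\multilinear(B_{s_1}^D)\|^2+\|\multilinear(B_{s_2}^D)\|^2=\Omega(\ell^{2D})$ and the off-diagonal entries are (after deleting non-multilinear columns via Fact~\ref{fact:deleting-rows-sval}) inner products $\Iprod{\multilinear(B_{s}^D),\multilinear(B_{t}^D)}$, which they bound by $\wt{O}(\sqrt{m}\ell^D)$ via the graph-matrix shape with two square vertices and $2D$ circle vertices. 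Your off-diagonal of $H$ produces essentially the same shape; the two routes bottom out in the same graph-matrix calculation, but your reduction organizes it more cleanly and avoids the sparsity bookkeeping on $NN^\top$'s $\binom{m}{2}$-dimensional index set.

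There is, however, a hypothesis mismatch you should fix. Lemma~\ref{lem:rank-of-N}'s hypothesis is $m\le(\ell/\polylog(\ell))^{2D}$ for all $D$, whereas Lemma~\ref{lem:condition-number-hd} (applied with $n$ replaced by $\ell$) inherits the hypotheses of Lemma~\ref{lem:analysis-restricted-partials-hd}, namely the stronger $m\le(\ell/\polylog(\ell))^{D}$ when $D>2$, plus the restriction condition $m\ell^{2D}\le(n/\polylog(n))^{2D}$ which becomes vacuous under $n\to\ell$. So for $D>2$ your citation does not cover $m\in((\ell/\polylog)^D,(\ell/\polylog)^{2D}]$ — precisely the regime where Lemma~\ref{lem:rank-of-N} is still supposed to hold. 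This is not a fatal flaw because the Gram-matrix bound $\lambda_{\min}(H)\ge\Omega(\ell^{2D})$ genuinely only needs $m\ll\ell^{2D}$ (the shape for $\Iprod{B_i^D,B_j^D}$, $i\ne j$, has no phantom edges between the left and right blocks so the minimum vertex separator is $\{\square{i}\}$ or $\{\square{j}\}$ regardless of $D$, giving $\wt{O}(\sqrt{m}\ell^D)$), but this must be argued directly rather than by citing Lemma~\ref{lem:condition-number-hd}, whose stronger $D>2$ hypothesis is imported from the span-finding and $\calV_D$ analysis where products of distinct $B_t$'s with free degree-$2D$ multipliers appear — a complication absent from $H$. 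In short: replace the citation with a self-contained graph-matrix bound on $\offdiag(H)$, observing that it needs only $m\le(\ell/\polylog)^{2D}$, and the argument goes through for all $D$.
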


Next, we would like to show that $\spn(\calN)$ are the only solutions to \pref{eq:null-space-of-V-hd}.
To this end, we prove the following,
\begin{lemma}
\label{lem:VV-NN}
    Let $m, \ell, D \in \N$ such that $m \leq (\frac{\ell}{\polylog(\ell)})^{2D}$ if $D\leq 2$, and $m\leq (\frac{\ell}{\polylog(\ell)})^{D}$ if $D>2$, and let $N \in \R^{\binom{m}{2} \times m\ell_{2D}}$ be the matrix defined in Definition~\ref{def:N-matrix}.
    Then with probability $1-\ell^{-\Omega(D)}$,
    \begin{equation*}
        \lambda_{\min}\Paren{V^\top V + N^\top N}
        \geq \Omega(\ell^{2D}) \mper
    \end{equation*}
\end{lemma}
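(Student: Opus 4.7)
The plan is to establish the bound via the ``diagonal + deviation'' strategy outlined in Section~\ref{sec:overview-singular-val}. Let $M = V^\top V + N^\top N \in \R^{m\ell_{2D} \times m\ell_{2D}}$ and write $M = \Lambda + E$, where $\Lambda$ is the diagonal of $M$ and $E$ is the off-diagonal part. I aim to show (i) every diagonal entry of $\Lambda$ is at least $\Omega(\ell^{2D})$ with probability $1 - \ell^{-\Omega(D)}$, and (ii) $\|E\|_{\mathrm{op}} = o(\ell^{2D})$ with the same probability. Combining these via Weyl's inequality (Theorem~\ref{thm:weyl}) then yields $\lambda_{\min}(M) \geq \Omega(\ell^{2D})$.

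For (i), the diagonal entry of $V^\top V$ at index $(s,T)$ equals $\|B_s(y)^D\, y_T\|^2$, i.e.\ the squared norm of the coefficient vector of the degree-$4D$ polynomial $B_s(y)^D\, y_T$. The key observation is that the map $U \mapsto U + T$ on multisets of size $2D$ is injective, so the monomials of $B_s(y)^D$ remain distinct after multiplication by $y_T$; hence $\|B_s(y)^D\, y_T\|^2 = \|B_s(y)^D\|^2 = \Theta(\ell^{2D})$ by Claim~\ref{claim:norm-of-polynomial-powers} and a union bound over the $m \ell_{2D}$ diagonal indices. The diagonal of $N^\top N$ is nonnegative and only helps.

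For (ii), the essential algebraic fact is an exact cancellation between off-diagonal entries of $V^\top V$ and those of $N^\top N$. Unrolling inner products in the coefficient basis gives, for $s < s'$, $(V^\top V)_{(s,T),(s',T')} = \sum_U b_s^D[U]\, b_{s'}^D[U + T - T']$, summed over multisets $U$ of size $2D$ for which $U + T - T'$ is a valid multiset. The $U = T'$ summand is $b_s^D[T']\, b_{s'}^D[T]$, which is exactly cancelled by $(N^\top N)_{(s,T),(s',T')} = -b_s^D[T']\, b_{s'}^D[T]$. Thus, between blocks with $s \ne s'$, $M$'s off-diagonal entries see only ``crossed'' terms with $U \ne T'$, each a product of coefficients of the two independent random polynomials $B_s^D$ and $B_{s'}^D$. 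Inside a diagonal block $s = s'$, $T \ne T'$, the analogous computation yields $M_{(s,T),(s,T')} = \sum_{t=1}^m b_t^D[T]\, b_t^D[T'] + \sum_{U \ne T'} b_s^D[U]\, b_s^D[U + T - T']$; the first sum concentrates sharply around its expectation since it averages $m$ independent low-degree polynomials.

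The main obstacle is then controlling $\|E\|_{\mathrm{op}}$. The plan is to deploy the graph-matrix decomposition machinery of \cite{BHK19,graphmatrixbounds}: the entries of $E$ are degree-$2D$ polynomials in the independent Gaussian coefficients of $B_1,\dots,B_m$, so $E$ decomposes as a sum of graph matrices indexed by ``shapes'' that record how the indices $T, T', U$ and the entries of the $B_t$'s are identified; the trace moment method then bounds each shape's operator norm in terms of a combinatorial count on the underlying graph. The lopsided regime $m \le (\ell/\polylog(\ell))^{2D}$ for $D \le 2$ (resp.\ $m \le (\ell/\polylog(\ell))^{D}$ for $D > 2$) is precisely what makes every shape's contribution $o(\ell^{2D})$, matching the bound on $\Lambda$. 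As flagged in the overview, some shapes carry large \emph{positive} eigenvalues that ordinary trace bounds cannot control; I plan to absorb these using the positivity/charging argument of \cite{BHK19,GJJPR20,HK22,sparseindset}, which is harmless since positive deviations only raise $\lambda_{\min}(M)$. The detailed trace computations are carried out in Section~\ref{sec:singular-value-of-VV-NN}.
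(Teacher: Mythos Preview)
Your proposal follows essentially the same approach as the paper: diagonal lower bound via Claim~\ref{claim:norm-of-polynomial-powers}, the exact cancellation between the $s\ne s'$ blocks of $V^\top V$ and $N^\top N$ (which the paper packages as the matrix $W$ of Definition~\ref{def:W-matrix} and Lemmas~\ref{lem:VV-off-diag-blocks}--\ref{lem:NN}), graph-matrix norm bounds for the residual shapes, and a positivity argument to absorb the few shapes whose operator norm matches the diagonal scale.

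Two points in the within-block ($s=s'$) analysis need sharpening. First, your remark that $\sum_t b_t^D[T]\,b_t^D[T']$ ``concentrates sharply around its expectation'' is not the right observation: the expectation is generically nonzero for $T\ne T'$, so concentration alone cannot put this into the error $E$. The correct fact is that the diagonal block of $N^\top N$ is PSD (it is a block of $\sum_{t\ne s} \mathrm{vec}(B_t^D)\,\mathrm{vec}(B_t^D)^\top$), and this is exactly how the paper handles it---see the first line of the proof of Lemma~\ref{lem:NN}. Second, the remaining within-block $V^\top V$ piece $\sum_{U} b_s^D[U]\,b_s^D[U+T-T']$ decomposes into graph matrices some of which have norm $\widetilde O(\ell^{2D})$, exactly the diagonal scale, so these too cannot be absorbed into $E$. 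This is precisely where the paper invokes the half-gram factorization (Definition~\ref{def:half-gram-shape} and Claim~\ref{claim:norm-bound-gram-decompos}): the offending shapes are written as $M_\beta M_\beta^\top$ plus strictly lower-order intersection terms, yielding the $P_1$ of Lemma~\ref{lem:VV-diag-blocks}. Your closing paragraph correctly anticipates a positivity step, but note that it is the within-block $V^\top V$ terms---not the between-block ones---that require it.
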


We defer the proofs of Lemmas~\ref{lem:rank-of-N} and \ref{lem:VV-NN} to Section~\ref{sec:rank-of-N} and \ref{sec:singular-value-of-VV-NN}.
Lemmas~\ref{lem:rank-of-N} and \ref{lem:VV-NN} immediately allow us to complete the proof of Lemma~\ref{lem:V-singular-value-hd}.

\begin{proof}[Proof of Lemma~\ref{lem:V-singular-value-hd}]
    By Definition~\ref{def:N-matrix}, we know $VN^\top = 0$ since each row of $N$ represents polynomials $(p_1,\dots,p_m)$ such that each $p_i$ is degree $2D$ and $\sum_{t=1}^m B_t(y)^D p_t(y) = 0$.
    This implies that the matrices $V^\top V$ and $N^\top N$ have orthogonal column span.
    By Lemma~\ref{lem:rank-of-N}, $\rank(N) = \binom{m}{2}$, and by Lemma~\ref{lem:VV-NN}, the row span of $N$ is exactly the null space of $V^\top V$.
    This implies that $\rank(V) = m\ell_{2D} - \binom{m}{2}$ and that $\spn(\calN)$ is exactly the set of solutions to $\sum_{t=1}^m B_t(y)^D p_t(y) = 0$.
    
    $V^\top V + N^\top N$ having smallest eigenvalue at least $\Omega(\ell^{2D})$ further shows that the smallest singular value of $V$, $\sigma_{m\ell_{2D} - \binom{m}{2}}(V)$, being lower bounded by $\Omega(\ell^D)$. This completes the proof.
\end{proof}

\subsubsection{Proof of Lemma~\ref{lem:singular-value-of-U-hd}: Analysis for \texorpdfstring{$U$}{U}}
\label{sec:analysis-of-U-hd}

For starters, we recall the definition for $\calU_D$ as the following,
\begin{equation*}
    \calU_D \coloneqq \spn\Set{\partial_{I} P \circ M(y)}_{I\in [n], |I|=2D},
\end{equation*}
where $P(x) = \sum_{t=1}^m A_t(x)^3$ and $M$ is the given restriction matrix such that $\partial_I P \circ M(y) = \partial_I P (My)$.

We defined $U$ to be the $\ell_{4D} \times n_{2D}$ matrix where each column is the coefficient vector of the degree-$4D$ polynomial $\partial_I P \circ M(y)$ for multiset $I\in[n]^{2D}$.
In this section we prove Lemma~\ref{lem:singular-value-of-U-hd} which states a singular value lower bound for $U$.

\begin{proof}[Proof of Lemma~\ref{lem:singular-value-of-U-hd}]
    Let $V$ be the $\ell_{4D} \times m\ell_{2D}$ matrix defined in Lemma~\ref{lem:V-singular-value-hd}, and let $r_V = \rank(V)$ which equals $m\ell_{2D} - \binom{m}{2}$ by Lemma~\ref{lem:V-singular-value-hd}.
    Lemma~\ref{lem:V-singular-value-hd} also states that with high probability, $\sigma_{r_V}(V) \geq \Omega(\ell^D)$.
    Moreover, Lemma~\ref{lem:U-equals-V-hd} shows that $\colspan(U) = \colspan(V)$ and hence we know that $\rank(U) = \rank(V)$.
    Consider the singular value decompositions of $U$ and $V$:
    \begin{equation*}
        U = \sum_{i=1}^{r_V} \sigma_i u_i \wt{u}_i^\top, \quad
        V = \sum_{i=1}^{r_V} \tau_i v_i \wt{v}_i^\top,
    \end{equation*}
    where the column spans coincide: $\spn\{u_i\}_{i\in[r_V]} = \spn\{v_i\}_{i\in[r_V]}$.
    
    Let $q\in \R^{m\ell_{2D}}$ be a vector in the row span of $V$, i.e.\ $\spn\{\wt{v}_i\}_{i\in[r_V]}$, such that $Vq = u_{r_V}$, which is the bottom (left) singular vector of $U$.
    Note that we can equivalently view $q \in \R^{m\ell_{2D}}$ as degree-$2D$ polynomials $Q_1(y),\dots,Q_m(y)$, and that $Vq$ represents the degree-$4D$ polynomial $\sum_{t=1}^m B_t(y)^D Q_t(y)$.
    
    Then, by Lemma~\ref{lem:U-equals-V-hd}, there exists a vector $p \in \R^{n_{2D}}$ (a flattened $2D$-th order tensor) such that $Up = Vq$ and $\|p\|_2 \leq O(n^{-D}) \|q\|_2$.
    Moreover, since $Up = u_{r_V}$, $p$ must be orthogonal to $\spn\{u_1,\dots,u_{r_V-1}\}$, implying that
    \begin{equation*}
        \|Up\|_2 \leq \sigma_{r_V} \|p\|_2 \leq \sigma_{r_V}(U) \cdot O(n^{-D}) \|q\|_2 \mper
    \end{equation*}
    We also have that 
    \begin{equation*}
        \|Vq\|_2 \geq \tau_{r_V} \|q\|_2 \geq \Omega(\ell^D) \cdot \|q\|_2 \mper
    \end{equation*}
    This proves that $\sigma_{r_V}(U) \geq \Omega((n\ell)^D)$, completing the proof.
\end{proof}

\subsection{Proof of Lemma~\ref{lem:span-finding-hd}: Span finding}
\label{sec:span-finding-hd}

\begin{mdframed}
  \begin{algorithm}[Span-Finding]
    \label{algo:span-finding-hd}\mbox{}
    \begin{description}
    \item[Input:] A basis for $\widetilde{\calV}_D$.
    \item[Output:] A basis for $\widetilde{\calW}_D$ such that: $\Norm{\widetilde{\calW}_D - \spn(B_t(y)^D \mid t \in [m])}_F \leq O(m\ell^{2D}\|\calV_D - \wt{\calV}_D\|_F)$.
     
    \item[Operation:]\mbox{}
    \begin{enumerate}
    	\item Choose a random degree-$2$ homogeneous polynomial $p(y)$ and compute the subspace $\calV_p = \spn(p(y)^D y_S \mid S \in [\ell]^{2D})$.
	    \item Let $\widetilde{\calW}_p$ be the top $m$-dimensional subspace of $\proj(\calV_p) + \proj(\widetilde{\calV_D})$, spanned by the orthonormal vectors $\wt{w_p^1},\ldots, \wt{w_p^m}$.
		\item Let $V_p \in \R^{\ell_{4D} \times \ell_{2D}}$ be the matrix whose columns are $p(y)^Dy_S$ for multisets $S \in [\ell]^{2D}$. For each $i \in [m]$: 
		compute $\wt{w_i} = \argmin_{w \in \R^{\ell_{2D}}} \Norm{V_p \cdot w - \wt{w_p^i}}_2$. 
		\item Output $\widetilde{\calW}_D = \spn(\wt{w_i})$.
      \end{enumerate}
    \end{description}
  \end{algorithm}
\end{mdframed}

\paragraph{Notation:} We will use the following notation throughout this section.
\begin{enumerate}
\item $\calV_D := \spn(B_t(y)^D \cdot y_S \mid t \in [m], S \in [\ell]^{2D})$. This subspace will be associated with the matrix $V \in \R^{\ell_{4D}\times \ell_{2D}}$ whose columns are the coefficient vectors of the polynomials $B_t(y)^D y_S$. 
\item Given a polynomial $p(y)$, define $\calV_p := \spn(p(y)^D y_S \mid S \in [\ell]^{2D})$. This subspace will be associated with the matrix $V_p \in \R^{\ell_{4D} \times \ell_{2D}}$ whose columns are the coefficient vectors of the polynomials $p(y)^Dy_S$.
\item $\calW_p := \spn(p(y)^D \cdot B_t(y)^D \mid t \in [m])$. This subspace will be associated with the matrix $W_p$ whose columns $w_p^{i}, i \in [m]$ form an orthonormal basis for $\calW_p$.
\item $\calW_D := \spn(B_t(y)^D \mid t \in [m])$. 
\item The tilde-versions of these quantities (e.g. $\wt{\calV}_D$) denote the noisy estimates given as input/estimated by the algorithm unless specified otherwise.
\end{enumerate}

The algorithm for span-finding in the case where $E(x) = 0$ is very straightforward. Given $\calV_D$, we first compute the intersection of $\calV_D$ with the subspace $\calV_p$ for a random degree $2$ homogeneous polynomial $p$. It is easy to see that the subspace $\calW_p$ lies inside the intersection. We show that in fact the intersection is \emph{equal} to $\calW_p$ when $B_t$'s are also random polynomials. Given the subspace $\calW_p$ we can now divide out by the polynomial $p(y)^D$ to get the subspace $\calW_D$ spanned by polynomials $B_t(y)^D$. To make sure that this algorithm is robust to error we need to do each of these steps carefully: take a robust intersection of subspaces and divide out by $p(y)^D$ when the polynomials might be close to a multiple of $p(y)^D$. Let us now show that the exact algorithm works: 

\begin{lemma}\label{lem:intersection-hd}
For degree-2 homogeneous polynomials $B_t(y), t\in [m]$ and $p(y)$, with coefficients chosen independently at random from $\calN(0,1)$, we have that with probability $1$ over the draw of $A_t$s, we have:
    \[\calV_D \cap \calV_p = \calW_p \mcom\]
with $\dim(\calW_p) = m$.
\end{lemma}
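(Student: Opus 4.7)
The plan is to establish the two inclusions $\calW_p \subseteq \calV_D \cap \calV_p$ and $\calV_D \cap \calV_p \subseteq \calW_p$, then separately verify that $\dim(\calW_p) = m$. The containment $\calW_p \subseteq \calV_D \cap \calV_p$ is immediate: each generator $p(y)^D B_t(y)^D$ is both a $B_t^D$-multiple (expanding the $p^D$ factor in the monomial basis puts it in $\calV_D$) and a $p^D$-multiple (expanding $B_t^D$ similarly puts it in $\calV_p$).

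For the nontrivial inclusion, I would take any $f \in \calV_D \cap \calV_p$ and use the two descriptions to write
\[
p(y)^D q_0(y) \;=\; \sum_{t=1}^m B_t(y)^D q_t(y)
\]
for some degree-$2D$ polynomials $q_0, q_1, \ldots, q_m$. Setting $\widetilde{B}_0 := p$ and $\widetilde{B}_t := B_t$ for $t \geq 1$, and $\widetilde{q}_0 := -q_0$, $\widetilde{q}_t := q_t$ otherwise, this becomes a null-space equation $\sum_{t=0}^{m} \widetilde{B}_t^D\, \widetilde{q}_t = 0$ for the $m+1$ random quadratics $p, B_1, \ldots, B_m$. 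The key input is Lemma~\ref{lem:V-singular-value-hd} applied to this enlarged family (our parameter regime still satisfies its hypothesis with $m+1$ in place of $m$): it characterizes the full solution space as spanned by the ``pairwise swap'' tuples $\sigma_{(s,t)}$ indexed by $0 \leq s < t \leq m$, with $\widetilde{B}_t^D$ in slot $s$, $-\widetilde{B}_s^D$ in slot $t$, and zero elsewhere. Decomposing $(\widetilde{q}_0, \widetilde{q}_1, \ldots, \widetilde{q}_m) = \sum_{s<t} \alpha_{s,t}\, \sigma_{(s,t)}$, only the swaps with $s = 0$ contribute to the slot-$0$ entry, and each contributes a $B_t^D$ term. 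Hence $q_0 = -\widetilde{q}_0 \in \spn\{B_t^D : t \in [m]\}$, which yields $f = p^D q_0 \in \calW_p$.

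For the dimension claim, I would observe that any relation $\sum_{t=1}^m c_t\, p^D B_t^D = 0$, after dividing by the nonzero polynomial $p^D$, forces $\sum_{t=1}^m c_t B_t^D = 0$; generic linear independence of $\{B_t^D\}_{t \in [m]}$ (the same input used in Lemma~\ref{lem:condition-number-hd}) then gives $c_t = 0$ for all $t$, so $\dim(\calW_p) = m$.

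The only real obstacle is keeping the sign bookkeeping between $q_0$ and $\widetilde{q}_0$ clean (trivial but easy to garble) and justifying the ``probability $1$'' phrasing: each failure mode above — the solution space of $\sum \widetilde{B}_t^D \widetilde{q}_t = 0$ being strictly larger than $\spn(\calN)$, or the $B_t^D$ being linearly dependent — is cut out by the vanishing of a nonzero polynomial in the coefficients of the $B_t$'s and $p$, hence occurs on a proper Zariski-closed subset of Lebesgue measure zero, which upgrades the high-probability statement of Lemma~\ref{lem:V-singular-value-hd} to the probability-$1$ conclusion asserted here.
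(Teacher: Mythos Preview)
Your proposal is correct and follows essentially the same approach as the paper: write any element of $\calV_D \cap \calV_p$ as $p^D q_0 = \sum_t B_t^D q_t$, augment the family with $B_0 := p$, and apply Lemma~\ref{lem:V-singular-value-hd} to the $(m+1)$-tuple to conclude $q_0 \in \spn(B_t^D : t\in[m])$, then upgrade to probability~$1$ via the Zariski-closed/measure-zero argument (the paper phrases this as ``combined with the Schwartz--Zippel lemma''). The only cosmetic difference is in the $\dim(\calW_p)=m$ step: you divide by $p^D$ and invoke linear independence of $\{B_t^D\}$ directly, whereas the paper reads it off from the fact that $\calN$ has dimension exactly $\binom{m+1}{2}$, forcing the $m$ swap vectors involving index~$0$ to be linearly independent---both routes rest on the same underlying rank statement.
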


\begin{proof}
It is clear that $\calW_p \subseteq \calV_D \cap \calV_p$. Let us now prove the other inclusion. Every non-zero polynomial that lies in $\calV_D$ and in $\calV_p$ must satisfy the equation $p(y)^D q_0(y) = \sum_{t \in [m]} B_t(y)^D q_t(y)$ for some degree $2D$ homogeneous polynomials $q_0(y), \ldots, q_m(y)$ where $q_0(y)$ is non-zero. Letting $B_0(y) = p(y)$, Lemma~\ref{lem:V-singular-value-hd} shows that with high probability over the choice of $B_t$'s the solution space for the above linear system is the ${m+1 \choose 2}$-dimensional subspace $\calN = \calN_1 + \calN_2$, with $\calN_1 = \spn(q_0 = B_t^D, q_t = p_0^D, q_{t'} = 0, \forall t'  \in [m]\setminus \{t\} \mid t \in [m])$ and $\calN_2 = \spn(q_0 = 0, q_{t_1} = B_{t_2}^D, q_{t_2} = -B_{t_1}^D, q_t = 0, \forall t'  \in [m]\setminus \{t_1,t_2\} \mid t_1,t_2 \in [m])$. Combined with the Schwartz-Zippel lemma, this in fact shows that the same event holds with probability $1$ over the draw of the $A_t$s. Since the second subspace sets $q_0(y)$ to $0$, we get that the non-zero solutions to $q_0(y)$ must be in the subspace $\spn(B_t(y)^D \mid t \in [m])$ and therefore every polynomial in the intersection must lie in the subspace $\calN_1 = \spn(p(y)^DB_t(y)^D \mid t \in [m]) = \calW_p$. Also note that since $\calN$ has dimension ${m+1 \choose 2}$ and is spanned by ${m+1 \choose 2}$ vectors, $\calW_p = \calN_1$ must be $m$-dimensional, which completes the proof of the lemma.
\end{proof}

Given subspace $\calW_p$, it is clear that dividing any basis of $\calW_p$ by the polynomial $p(y)^D$ gives a basis for ${\calW_D = \spn(B_t(y)^D \mid t \in [m])}$. Step 3 of the algorithm can be equivalently thought of as solving for a degree $2D$ homogeneous polynomial $w(y)$ that minimizes: $\|p(y)^Dw(y) - w_p^i(y)\|_2$ ($w_p^i$ is the coefficient vector of a degree $4D$ homogeneous polynomial). In the case where $w_p^{i}$ is a multiple of $p(y)^D$, we find $w(y)$ such that $p(y)^D w(y) = w_p^i(y)$, i.e. we have successfully divided by $p(y)^D$. Thus our algorithm outputs $\calW_D$ in the case when $E(x) = 0$. Let us now analyse the error-resilience of the algorithm.

\paragraph{Error Resilience:}
We will now show that given a subspace $\wt{\calV}_D$ close to $\calV_D$, it is possible to take a ``robust intersection'' of $\wt{\calV}_D$ with $\calV_p$ to get a subspace $\wt{\calW_p}$ (Step 2 in Algorithm~\ref{algo:span-finding-hd}) that is close to $\calW_p$. We then show how to do a ``robust division'' to obtain subspace $\calW_D$. Before analysing the error-resilience of the algorithm though, let us describe a procedure for taking a robust intersection of subspaces.

\subsubsection{Robust intersection of subspaces}
Suppose that we are given two subspaces $\calV_1, \calV_2$ with the promise that they intersect in a $m$-dimensional subspace $\calW$, consider the problem of finding $\calW$. Furthermore we want a robust algorithm to do so, in the sense that given \textit{perturbed} subspaces $\wt{V_1},\wt{V_2}$ we want to find a subspace $\wt{W}$ that is close to $W$. We will do so by the following simple algorithm: Output the top $m$-dimensional eigenspace of the matrix $\wt{M} = \proj(\wt{\calV_1}) + \proj(\wt{\calV_2})$. We will now show the correctness of this algorithm:

\begin{lemma}\label{lem:robust-int}
Let $\calV_1, \calV_2$ be two $k_1,k_2$-dimensional subspaces of $\R^n$ (respectively) such that $\calV_1 \cap \calV_2 = \calW$ with $\dim(\calW) = m$. Let $M = \proj(\calV_1)+\proj(\calV_2)$ be such that $\lambda_{k_1+k_2 - m}(M) > \delta$. Let $\wt{\calV_1},\wt{\calV_2}$ be such that $\|\calV_i - \wt{\calV_i}\| \leq \gamma$. Then we have that the top $m$-dimensional eigenspace of $\wt{M} = \proj(\wt{\calV_1})+\proj(\wt{\calV_2})$ denoted by $\wt{\calW}$ is close to $\calW$: 
\[\|\calW - \wt{\calW}\| \leq O\left(\frac{\gamma}{\delta}\right).\]
\end{lemma}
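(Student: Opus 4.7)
The plan is to reduce the claim to an application of Wedin's theorem (Theorem~\ref{thm:wedin}) to the symmetric matrices $M$ and $\wt{M}$, after first identifying the top-$m$ eigenspace of $M$ as $\calW$ itself and establishing a quantitative eigengap separating it from the remaining spectrum.

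First, I would show that the eigenvalue-$2$ eigenspace of $M = \proj(\calV_1) + \proj(\calV_2)$ is exactly $\calW$. The inclusion $\calW \subseteq \ker(M - 2\Id)$ is immediate since any $v \in \calW$ lies in both subspaces, so $Mv = v + v = 2v$. The reverse inclusion follows by noting that for $\|v\|_2=1$ with $Mv = 2v$, taking the inner product with $v$ gives $\|\proj(\calV_1) v\|_2^2 + \|\proj(\calV_2) v\|_2^2 = \Iprod{v, Mv} = 2$, which combined with $\|\proj(\calV_i) v\|_2 \leq \|v\|_2 = 1$ forces $\|\proj(\calV_i) v\|_2 = 1$, hence $v \in \calV_1 \cap \calV_2 = \calW$. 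Since $M \preceq 2\Id$, this in fact identifies $\calW$ as the top-$m$ eigenspace of $M$ and shows $\lambda_m(M) = 2$.

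The main technical step, and the one I expect to require the most care, is establishing the eigengap $\lambda_m(M) - \lambda_{m+1}(M) \geq \delta$. My plan is to prove the cleaner pairing identity
\[ \lambda_{m+1}(M) \;=\; 2 - \lambda_{k_1+k_2-m}(M), \]
which together with the hypothesis $\lambda_{k_1+k_2-m}(M) > \delta$ immediately yields $\lambda_{m+1}(M) < 2 - \delta$. I would prove this identity via the principal angle / CS decomposition of the pair $(\calV_1, \calV_2)$: one chooses orthonormal bases aligned with the principal angles $\theta_1, \dots, \theta_{\min(k_1,k_2)} \in [0, \pi/2]$, and restricted to the orthogonal complement of $\calW$ inside $\calV_1 + \calV_2$ the matrix $M$ decomposes into two-dimensional blocks whose eigenvalues are $1 \pm \cos\theta_i$ for $\theta_i \in (0, \pi/2)$, plus one-dimensional blocks with eigenvalue $1$ (coming from $\theta_i = \pi/2$ and from leftover directions when $k_1 \neq k_2$). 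In both cases the nonzero eigenvalues outside $\calW$ pair up so that the largest eigenvalue strictly below $2$ equals $2$ minus the smallest nonzero eigenvalue, which gives the identity.

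Finally, I would conclude by a direct application of Wedin's theorem. The perturbation bound $\|M - \wt{M}\|_F \leq \|\proj(\calV_1) - \proj(\wt{\calV_1})\|_F + \|\proj(\calV_2) - \proj(\wt{\calV_2})\|_F \leq 2\gamma$ follows from the triangle inequality and the paper's definition of $\|\calV_i - \wt{\calV_i}\|_F$. Taking $\Sigma_1$ to consist of the top $m$ eigenvalues of $M$ (all equal to $2$) and $\Sigma_2$ the rest (all at most $2 - \delta$), the hypotheses $\min_{i,j}|[\Sigma_1]_{i,i} - [\Sigma_2]_{j,j}| \geq \delta$ and $\min_i [\Sigma_1]_{i,i} = 2 > \delta$ of Theorem~\ref{thm:wedin} are satisfied, so
\[ \Norm{\proj(\calW) - \proj(\wt{\calW})}_F \leq \frac{\sqrt{2}\,\|M - \wt{M}\|_F}{\delta} = O\!\left(\frac{\gamma}{\delta}\right), \]
which is exactly the claim.
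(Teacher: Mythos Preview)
Your proof is correct and follows the same overall architecture as the paper: identify $\calW$ as the eigenvalue-$2$ eigenspace of $M$, establish an eigengap of at least $\delta$ below it, bound $\|M-\wt M\|$ by $2\gamma$, and invoke Wedin. The only substantive difference is in how you obtain the eigengap. You appeal to the CS decomposition / principal-angle structure to deduce the exact pairing identity $\lambda_{m+1}(M) = 2 - \lambda_{k_1+k_2-m}(M)$. The paper instead strips off $\calW$ to reduce to the case of two subspaces with trivial intersection (Claim~\ref{claim:subspace-int}), and then argues by contrapositive: if some unit $v$ had $v^\top U v > 2-\delta$, then $v$ would be $\delta$-close (in a suitable sense) to both subspaces, forcing a small singular value for the stacked basis matrix $[\calU_1\,|\,\calU_2]$, contradicting $\lambda_{r_1+r_2}(U)>\delta$. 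Your route is cleaner and yields a sharper statement (an equality rather than an inequality), at the cost of invoking a slightly heavier piece of machinery; the paper's route is more elementary and self-contained but only gives the inequality needed. Either is perfectly adequate here.
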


\begin{proof}
First note that $\rk(M) = k_1+k_2 - m$ and every unit vector $w$ in the intersection of $V_1,V_2$ will satisfy $w^T M w = 2$. Furthermore, every vector $w \notin W$ will satisfy: $w^T M w < 2$. Hence the top $m$-dimensional eigenspace of $M$ will be \emph{equal} to $\calW$ and will correspond to eigenvalue $2$. 

We will now prove that $\lambda_{k_1+k_2 - m}(M) > \delta$ implies that $\lambda_{m+1}(M) < 2-\delta$. Consider the matrix $U = M - 2\proj(\calW) = \proj(\calV_1 \cap \calW^\perp) + \proj(\calV_2 \cap \calW^\perp)$. Since $(\calV_1 \cap \calW^\perp) \cap (\calV_2 \cap \calW^\perp) = \phi$, we have that $\rk(U) = k_1 + k_2 - 2m$ and $\lambda_{i}(U) = \lambda_{m+i}(M)$, hence it suffices to prove that, $\lambda_{1}(U) < 2-\delta$. This follows immediately from the following claim:

\begin{claim}\label{claim:subspace-int}
Let $\calU_1,\calU_2$ be two $r_1,r_2$-dimensional subspaces of $\R^n$ (resp.) such that $\calU_1 \cap \calU_2 = \phi$ and $U = \proj(\calU_1)+\proj(\calU_2)$. If $\lambda_{r_1+r_2}(U) > \delta$ then $\lambda_1(U) < 2 - \delta$.
\end{claim}

We will now apply Wedin's theorem to $M$. We have that $\|\wt{M} - M\| \leq 2\gamma$ and letting $\calU_1 = \calW$ be the top $m$-dimensional eigenspace of $M$, $\calU_2$ being the next $k-m$-dimensional eigenspace, we can check that, $\min_{i,j}|[\Sigma_1]_{i,i} - [\Sigma_2]_{j,j}| > \delta$ and $\min_{i} |[\Sigma_1]_{i,i}| = 2 > \delta$, from the conditions of the lemma. Hence applying Wedin's theorem we get that,
\[\|\proj(\calW) - \proj(\wt{\calW})\|_F^2 \leq \frac{2\gamma^2}{\delta^2} \mper \]
\end{proof}

Let us complete the above proof by proving the claim:
\begin{proof}[Proof of Claim~\ref{claim:subspace-int}]
We will prove the contrapositive by considering the case that $\lambda_{1}(U) > 2-\delta$ achieved via the unit vector $v$. 
Let $v$ be the top eigenvector for $U$ such that \[v^TUv = v^T\proj(\calU_1)v+v^T\proj(\calU_2)v >2-\delta \mper \]
Suppose $v^T\proj(\calU_1)v = 1-t >1-\delta$, we know there is a unit vector $a\in R^{r_1}$ s.t. $
\|v-\calU_1 a\|^2 \leq t
$;
Similarly, we would have a unit vector $b\in R^{r_2}$ s.t. $
\|v-\calU_2 b\|^2 \leq \delta-t  
$
By a triangle inequality, we have \[ 
\|\calU_1a-\calU_2 b\|^2 \leq \delta
\]
which upper bounds the smallest singular value of the matrix $[\calU_1, \calU_2]$ by $\delta$, i.e. $\lambda_{r_1+r_2}(\calU)\leq \delta$.
\end{proof}

\subsubsection{Error resilience of Algorithm~\ref{algo:span-finding-hd}}
\begin{lemma}\label{lem:sval-intersection-hd}
For degree-2 homogeneous polynomials $B_t(y), t\in [m]$ and $p(y)$, with coefficients chosen independently at random from $\calN(0,1)$ we have that with probability $1 - \ell^{-\Omega(D)}$:
\[\lambda_{(m+1)\ell_{2D} - {m+1 \choose 2}}(\proj(\calV_D)+\proj(\calV_p)) > \Omega\left(\frac{1}{m\ell^{2D}}\right).\]
\end{lemma}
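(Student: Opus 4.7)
The plan is to reduce this spectral lower bound to the singular-value bound already established in Lemma~\ref{lem:V-singular-value-hd}, by viewing $p$ as an additional, $(m{+}1)$-st random quadratic. Form the concatenated matrix $V_{\mathsf{all}} = [V,\ V_p] \in \R^{\ell_{4D} \times (m+1)\ell_{2D}}$, where $V$ is the matrix from Lemma~\ref{lem:V-singular-value-hd} and $V_p$ is the analogous matrix built from $p$ in place of the $B_t$'s. Since $p$ is drawn independently of $B_1,\dots,B_m$, the collection $(p,B_1,\dots,B_m)$ is a family of $m+1$ i.i.d.\ random quadratics, and the hypothesis $m \leq (\ell/\polylog \ell)^{2D}$ absorbs the extra summand into the polylogarithmic factor. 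Thus Lemma~\ref{lem:V-singular-value-hd} applies and gives, with probability $1 - \ell^{-\Omega(D)}$, both $\rank(V_{\mathsf{all}}) = (m+1)\ell_{2D} - \binom{m+1}{2}$ and $\sigma_{\min,>0}(V_{\mathsf{all}}) \geq \Omega(\ell^D)$.

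Since the column span of $V_{\mathsf{all}}$ is exactly $\calV_D + \calV_p$, the rank conclusion identifies the index $(m+1)\ell_{2D} - \binom{m+1}{2}$ with $\rank(\proj(\calV_D)+\proj(\calV_p))$, so the quantity we must lower bound is its smallest \emph{nonzero} eigenvalue. To convert the singular-value bound on $V_{\mathsf{all}}$ into this eigenvalue bound, I would use the elementary Löwner inequality $MM^\top \preceq \sigma_{\max}(M)^2 \cdot \proj(\mathrm{colspan}\, M)$ applied blockwise:
\[
V_{\mathsf{all}} V_{\mathsf{all}}^\top \;=\; VV^\top + V_p V_p^\top \;\preceq\; \max\bigl(\sigma_{\max}(V),\, \sigma_{\max}(V_p)\bigr)^2 \cdot \bigl(\proj(\calV_D) + \proj(\calV_p)\bigr).
\]

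To control the right-hand side, note that each column of $V$ is the coefficient vector of $B_t(y)^D y_S$, and multiplication by the monomial $y_S$ merely re-indexes the coefficients of $B_t(y)^D$ injectively. Hence the squared $\ell_2$-norm of such a column equals $\|B_t(y)^D\|^2 = \Theta(\ell^{2D})$ by Claim~\ref{claim:norm-of-polynomial-powers} (union-bounded over the $m+1$ quadratics), and with $m\ell_{2D} = O(m\ell^{2D})$ columns we obtain $\sigma_{\max}(V)^2 \leq \|V\|_F^2 = O(m\ell^{4D})$; similarly $\sigma_{\max}(V_p)^2 = O(\ell^{4D})$. Combining with the lower bound on $\sigma_{\min,>0}(V_{\mathsf{all}})$, on the subspace $\calV_D + \calV_p$,
\[
\lambda_{\min,>0}\bigl(\proj(\calV_D) + \proj(\calV_p)\bigr) \;\geq\; \frac{\sigma_{\min,>0}(V_{\mathsf{all}})^2}{O(m\ell^{4D})} \;=\; \Omega\!\left(\frac{1}{m\ell^{2D}}\right),
\]
which is the claimed bound. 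The main technical content is already carried by Lemma~\ref{lem:V-singular-value-hd}; the step here is essentially a packaging reduction. The only mild subtlety is the Löwner-to-projection conversion, which is cheap precisely because the column norms of $V$ and $V_p$ concentrate sharply at scale $\ell^D$, so going through an orthonormal-basis rewriting of the projections loses only polynomial factors in $m$ and $\ell$.
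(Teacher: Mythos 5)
Your proposal is correct and follows essentially the same route as the paper's proof: form the concatenated matrix $[V\ V_p]$, apply Lemma~\ref{lem:V-singular-value-hd} to the $m+1$ random quadratics $(B_1,\dots,B_m,p)$ to get $\sigma_{(m+1)\ell_{2D}-\binom{m+1}{2}}([V\ V_p]) \geq \Omega(\ell^D)$, then pass from $V V^\top + V_p V_p^\top$ to $\proj(\calV_D)+\proj(\calV_p)$ by bounding $\sigma_{\max}(V)^2, \sigma_{\max}(V_p)^2$ via Frobenius norms ($O(m\ell^{4D})$ and $O(\ell^{4D})$ respectively, using Claim~\ref{claim:norm-of-polynomial-powers}). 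The ``blockwise L\"owner'' step you describe is precisely the paper's inequality $v^\top V_p V_p^\top v + v^\top V V^\top v \le \max(\sigma_{\max}(V_p)^2,\sigma_{\max}(V)^2)\cdot v^\top M v$, so the two arguments coincide.
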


\begin{proof}
Let $M = \proj(\calV_D)+\proj(\calV_p)$. Let $A_p D_p B_p^T$ and $ADB^T$ be the singular value decomposition of $V_p$ and $V$ respectively. We have that $[V_p ~~ V][V_p ~~ V]^T = V_pV_p^T + VV^T = A_pD_p^2A_p^T + AD^2A^T$ and $M = A_pA_p^T + AA^T$.

First note that the left singular vector subspace of $[V_p ~~ V]$ is the column space of $[V_p ~~ V]$ which is $\calV_p + \calV_D$. The latter is a subspace of dimension $\ell_{2D} + (m\ell_{2D} - {m \choose 2}) - m = (m+1)\ell_{2D} - {m+1 \choose 2}$, since by Lemma~\ref{lem:intersection-hd} we have that $\dim(\calV_D \cap \calV_p) = m$. Using Lemma~\ref{lem:V-singular-value-hd} we have that $\sigma_{(m+1)\ell_{2D} - {m+1 \choose 2}}([V_p ~~ V]) \geq \Omega(\ell^D)$ which implies that for every vector $v$ in $\calV_p+\calV_D$:
\begin{equation}\label{eq:span-find-hd}
v^T V_p V_p v^T + v^T V V v^T = v^T [V_p ~~ V][V_p ~~ V]^T v \geq \Omega(\ell^{2D}) \|v\|_2^2.
\end{equation}
We can upper bound the LHS by: 
\[\max(\sigma_{\max}(V_p)^2, \sigma_{\max}(V)^2) (v^T A_p A_p^T v + v^T AA^T v) \leq \max(\|V_p\|_F^2, \|V\|_F^2) \cdot v^T M v.\]

We have that with probability $1 - \ell^{-\Omega(D)}$,
\[\|V\|_F^2 = \sum_{\substack{t \in [m] \\ S \in [\ell]^{2D}}} \|B_t(y)^D y_S\|_2^2 = \sum_{t \in [m]} \ell_{2D}\|B_t(y)^D \|_2^2 \leq O(m\ell^{4D}),\]
where the last inequality follows from Claim~\ref{claim:norm-of-polynomial-powers}. Similarly we get that $\sigma_{\max}(V_p)^2 \leq O(\ell^{4D})$. Combining equation~\ref{eq:span-find-hd} with the above we get,
\[v^T M v \geq \Omega\left(\frac{\ell^{2D}}{m\ell^{4D}}\right)\|v\|_2^2,\] 
for all $v \in \calV_D + \calV_p$ which completes the eigenvalue lower bound for $M$.
\end{proof}

\begin{lemma}[Robust Intersection]\label{lem:robust-int-hd}
Given degree 2 homogeneous polynomials $B_t(y), t \in [m]$ and $p(y)$ with coefficients drawn independently at random from $\calN(0,1)$, with probability $1 - \ell^{-\Omega(D)}$ Step 2 of the algorithm outputs $\wt{\calW_p}$ such that:
\[\|\wt{\calW_p} - \calW_p\|_F \leq O(m\ell^{2D}\|\wt{\calV}_D - \calV_D\|_F).\]
\end{lemma}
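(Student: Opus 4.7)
\textbf{Proof proposal for Lemma~\ref{lem:robust-int-hd}.}
The plan is to simply instantiate the generic robust-intersection guarantee of Lemma~\ref{lem:robust-int} with $\calV_1 = \calV_D$ and $\calV_2 = \calV_p$. Note that in Step~2 of Algorithm~\ref{algo:span-finding-hd}, the subspace $\calV_p$ is constructed by the algorithm itself from the chosen polynomial $p$, so there is no perturbation error on that side, i.e.\ $\wt{\calV}_p = \calV_p$. The only noisy input subspace is $\wt{\calV}_D$, whose distance from $\calV_D$ is the quantity $\|\wt{\calV}_D - \calV_D\|_F$ that appears in the final bound.

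First I would verify the three hypotheses of Lemma~\ref{lem:robust-int}. The intersection hypothesis $\calV_D \cap \calV_p = \calW_p$ with $\dim(\calW_p) = m$ is Lemma~\ref{lem:intersection-hd}, which holds with probability~$1$ over the choice of the $B_t$'s and $p$. For the dimensions, Lemma~\ref{lem:V-singular-value-hd} gives $k_1 = \dim(\calV_D) = m\ell_{2D} - \binom{m}{2}$, while by definition $k_2 = \dim(\calV_p) = \ell_{2D}$, so that $k_1 + k_2 - m = (m+1)\ell_{2D} - \binom{m+1}{2}$; this is exactly the index appearing in Lemma~\ref{lem:sval-intersection-hd}. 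The required eigenvalue lower bound
\begin{equation*}
    \lambda_{k_1+k_2-m}\bigl(\proj(\calV_D) + \proj(\calV_p)\bigr) \geq \Omega\Bigl(\tfrac{1}{m\ell^{2D}}\Bigr)
\end{equation*}
is then precisely the conclusion of Lemma~\ref{lem:sval-intersection-hd}, which holds with probability at least $1 - \ell^{-\Omega(D)}$.

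With these pieces in place, applying Lemma~\ref{lem:robust-int} with $\gamma = \|\wt{\calV}_D - \calV_D\|_F$ and $\delta = \Omega(1/(m\ell^{2D}))$ yields
\begin{equation*}
    \|\wt{\calW_p} - \calW_p\|_F \leq O\Bigl(\tfrac{\gamma}{\delta}\Bigr) = O\bigl(m\ell^{2D}\,\|\wt{\calV}_D - \calV_D\|_F\bigr),
\end{equation*}
which is the desired bound. The overall failure probability is at most $\ell^{-\Omega(D)}$, inherited from Lemma~\ref{lem:sval-intersection-hd} (Lemma~\ref{lem:intersection-hd} being almost sure).

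There is essentially no remaining obstacle: the work has already been done in Lemmas~\ref{lem:intersection-hd}, \ref{lem:sval-intersection-hd}, and \ref{lem:robust-int}. The only point that warrants a brief sentence is the observation that $\calV_p$ is noise-free, so the second source of error in Lemma~\ref{lem:robust-int} vanishes; otherwise the proof is a one-line application of the generic bound. The hardest step in the entire argument is really hidden inside Lemma~\ref{lem:sval-intersection-hd}, where the $1/(m\ell^{2D})$ eigenvalue gap is extracted from the singular value lower bound on $[V_p~~V]$ of Lemma~\ref{lem:V-singular-value-hd} together with Frobenius-norm upper bounds on $V_p$ and $V$ via Claim~\ref{claim:norm-of-polynomial-powers}.
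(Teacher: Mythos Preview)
Your proposal is correct and follows essentially the same approach as the paper: invoke Lemma~\ref{lem:intersection-hd} for the intersection structure, Lemma~\ref{lem:sval-intersection-hd} for the eigenvalue gap $\delta = \Omega(1/(m\ell^{2D}))$, and then apply Lemma~\ref{lem:robust-int} with $\gamma = \|\wt{\calV}_D - \calV_D\|_F$ (noting $\calV_p$ is noise-free). Your verification that $k_1 + k_2 - m = (m+1)\ell_{2D} - \binom{m+1}{2}$ matches the index in Lemma~\ref{lem:sval-intersection-hd} is a nice extra check that the paper leaves implicit.
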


\begin{proof}
Recall that the output of Step 2 of the algorithm outputs $\wt{\calW_p}$: the top $m$ dimensional subspace of $\wt{M} = \proj(\calV_p) + \proj(\wt{\calV}_D)$. Using the fact that $\calV_p \cap \calV_D = \calW_p$ such that $\dim(\calW_p) = m$ (Lemma~\ref{lem:intersection-hd}), we can apply Lemma~\ref{lem:robust-int} for robust subspace intersection, with subspaces $\wt{\calV}_D \approx \calV_D$ and $\calV_p$ to get that $\wt{\calW_p}$ is close to the true intersection $\calV_p \cap \calV_D = \calW_p$:
\[\|\calW_p - \wt{W_p}\|_F \leq O\left(\frac{\|\wt{\calV}_D - \calV_D\|_F}{\lambda_{(m+1)\ell_{2D} - {m+1 \choose 2}}(M)}\right) \leq O(m\ell^{2D}\|\wt{\calV}_D - \calV_D\|_F),\]
by Lemma~\ref{lem:sval-intersection-hd}.
\end{proof}

We will now show that Step 3 of Algorithm performs a robust division by the polynomial $p(y)^D$, that is, $\wt{\calW}_D$ is close to the subspace $\calW_D$.

\begin{lemma}[Robust Division]\label{lem:robust-div-hd}
Given degree 2 homogeneous polynomials $B_t(y), t \in [m]$ and $p(y)$ with coefficients drawn independently at random from $\calN(0,1)$, with probability $1 - \ell^{-\Omega(D)}$, Step 4 of the algorithm outputs $\wt{\calW}_D$ such that:
\[\|\calW_D - \wt{\calW}_D\|_F \leq O(\ell^{2D}\|\calW_p - \wt{\calW_p}\|_F).\]
\end{lemma}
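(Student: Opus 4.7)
The plan is to view Step~3 of Algorithm~\ref{algo:span-finding-hd} as approximate division by $p(y)^D$ realized via the pseudoinverse $V_p^\dagger$, and to propagate the input error through it. Let $\wt{W_p} \in \R^{\ell_{4D} \times m}$ be the matrix whose columns are the orthonormal basis vectors $\wt{w_p^i}$ of $\wt{\calW_p}$, and define its noiseless companion $W_p^{\ast} \coloneqq \proj(\calW_p)\wt{W_p}$, whose columns $w_p^i$ lie in $\calW_p$. A direct trace calculation using $\wt{W_p}\wt{W_p}^\top = \proj(\wt{\calW_p})$ gives $\|\wt{W_p} - W_p^{\ast}\|_F = \tfrac{1}{\sqrt{2}}\|\calW_p - \wt{\calW_p}\|_F$. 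Because $w_p^i \in \calW_p \subseteq \calV_p = \colspan(V_p)$, the vector $q_i^{\ast} \coloneqq V_p^\dagger w_p^i$ is the unique degree-$2D$ coefficient vector satisfying $p(y)^D q_i^{\ast}(y) = w_p^i(y)$; since ``dividing by $p(y)^D$'' is a linear bijection from $\calW_p = \spn(p^D B_t^D)$ onto $\calW_D = \spn(B_t^D)$, we get $q_i^{\ast}\in \calW_D$, and once $\|\calW_p - \wt{\calW_p}\|_F$ is below an absolute constant the $q_i^{\ast}$'s span all of $\calW_D$.

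By Lemma~\ref{lem:ls-facts} the output of Step~3 is exactly $\wt{w}_i = V_p^\dagger \wt{w_p^i}$, so forming $\wt{Q} \coloneqq [\wt{w}_1 \mid \cdots \mid \wt{w}_m]$ and $Q^{\ast} \coloneqq [q_1^{\ast} \mid \cdots \mid q_m^{\ast}]$ we have $\wt{Q} - Q^{\ast} = V_p^\dagger(\wt{W_p} - W_p^{\ast})$ and hence $\|\wt{Q} - Q^{\ast}\|_F \leq \spec{V_p^\dagger} \cdot \|\wt{W_p} - W_p^{\ast}\|_F$. The key observation is that $V_p$ is precisely the matrix of Lemma~\ref{lem:V-singular-value-hd} in the special case $m = 1$ with the single polynomial $p$; the hypotheses of that lemma are trivially satisfied for $m=1$, so with probability $1 - \ell^{-\Omega(D)}$ we have $\sigma_{\min}(V_p) \ge \Omega(\ell^D)$, whence $\spec{V_p^\dagger} \leq O(\ell^{-D})$ and
\[
    \|\wt{Q} - Q^{\ast}\|_F \le O(\ell^{-D})\,\|\calW_p - \wt{\calW_p}\|_F.
\]

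It remains to convert this matrix-level closeness to the projection-level bound $\|\calW_D - \wt{\calW}_D\|_F$. The main technical obstacle is lower-bounding $\sigma_{\min}(Q^{\ast})$, since $Q^{\ast}$ is not a natural object to analyze directly. The trick is to exploit the identity $V_p Q^{\ast} = W_p^{\ast}$: for any unit $v \in \R^m$, $\|V_p Q^{\ast} v\|_2 = \|W_p^{\ast} v\|_2 \ge \sigma_{\min}(W_p^{\ast}) \ge 1/2$ once $\|\calW_p - \wt{\calW_p}\|_F \leq 1/\sqrt{2}$, while Claim~\ref{claim:norm-of-polynomial-powers} gives $\spec{V_p} \le \|V_p\|_F \le O(\ell^{2D})$; together these yield $\sigma_{\min}(Q^{\ast}) \ge \Omega(\ell^{-2D})$. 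Applying the standard subspace perturbation bound (Wedin's theorem to $Q^{\ast}$ viewed as an $\ell_{2D}\times m$ matrix, with spectral gap $\sigma_{\min}(Q^{\ast})$ to zero) gives
\[
    \|\calW_D - \wt{\calW}_D\|_F \le O\Paren{\frac{\|\wt{Q} - Q^{\ast}\|_F}{\sigma_{\min}(Q^{\ast})}} \le O(\ell^{2D})\cdot O(\ell^{-D})\cdot \|\calW_p - \wt{\calW_p}\|_F = O(\ell^D)\,\|\calW_p - \wt{\calW_p}\|_F,
\]
which is at least as strong as the claimed $O(\ell^{2D})$ bound.
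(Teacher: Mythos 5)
Your proof is correct, and it takes a genuinely different (and slightly sharper) route than the paper's. The paper compares the Gram matrices $WW^\top$ and $\wt{W}\wt{W}^\top$, where $W = V_p^\dagger W_p$ for an \emph{arbitrary} orthonormal basis $W_p$ of $\calW_p$. The Gram matrix detour is necessary there because the bases $W_p$ and $\wt{W_p}$ may be far apart as matrices (e.g.\ by an orthogonal rotation) even when $\calW_p$ and $\wt{\calW_p}$ are close as subspaces; the Gram matrices are basis-independent, but one pays for this with two factors of $\spec{V_p^\dagger}$ in the numerator and a squared singular value $\sigma_m(W)^2$ in the denominator of the Wedin bound, which is exactly how the paper arrives at $O(\ell^{2D})$. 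Your trick of choosing the \emph{aligned} reference basis $W_p^{\ast} = \proj(\calW_p)\wt{W_p}$ — which is provably close to $\wt{W_p}$ by the trace identity $\|\wt{W_p} - W_p^{\ast}\|_F = \tfrac{1}{\sqrt{2}}\|\calW_p-\wt{\calW_p}\|_F$ — lets you compare $Q^{\ast} = V_p^\dagger W_p^{\ast}$ and $\wt{Q} = V_p^\dagger\wt{W_p}$ directly as rectangular matrices, with a single $\spec{V_p^\dagger}$ factor and the \emph{unsquared} singular value $\sigma_m(Q^{\ast})$ in the denominator. Both arguments use the same two ingredients (the $m=1$ case of Lemma~\ref{lem:V-singular-value-hd} for $\sigma_{\min}(V_p)\geq\Omega(\ell^D)$, and Claim~\ref{claim:norm-of-polynomial-powers} for $\|V_p\|_F \leq O(\ell^{2D})$), but your bookkeeping yields the tighter $O(\ell^D)$, which would also sharpen the downstream Lemma~\ref{lem:span-finding-hd} to $O(m\ell^{3D})$. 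The only minor caveat — shared by the paper — is that the quantitative step ($\sigma_{\min}(W_p^{\ast})\geq 1/2$, and $W_p^{\ast}$ spanning $\calW_p$) needs $\|\calW_p-\wt{\calW_p}\|_F\leq 1/\sqrt{2}$, but when that fails the claimed inequality is vacuous since $\|\calW_D - \wt{\calW}_D\|_F \leq \sqrt{2m} \leq O(\ell^D)$ under the lemma's parameter regime, so this is not a gap.
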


\begin{proof}
Since $p$ is a random degree $2$ homogeneous polynomial we can apply Lemma~\ref{lem:V-singular-value-hd} (with $m = 1$ and $G_1 = p$) to get that $\sigma_{\ell_{2D}}(V_p) \geq \Omega(\ell^D)$. Recall that $\wt{W_p}$ is the matrix whose columns are $\wt{w_p^1},\ldots, \wt{w_p^m}$ which are orthonormal vectors that span $\wt{\calW_p}$. Since $V_p$ has rank $\ell_{2D}$ (full column rank), Step 3 of Algorithm~\ref{algo:span-finding-hd} computes the solution to the least-squares program: $\wt{w_i} = V_p^{\dagger}\wt{w_p^i}$. Let $\wt{W}$ be a matrix whose columns are $\wt{w_i}$, i.e. $\wt{W} = V_p^{\dagger}\wt{W_p}$ and analogously let $W = V_p^{\dagger}W_p$, where $W_p$ is the matrix with orthonormal columns $w_p^{i}$ that span $\calW_p$. 
We get that:
\begin{align*}
\|WW^T - \wt{W}\wt{W}^T\|_F &= 
\|V_p^{\dagger}(\wt{W_p}\wt{W_p^T} - WW^T) (V_p^{\dagger})^T\|_F \\
&\leq \spec{V_p^\dagger}^2 \cdot \|\wt{W_p}\wt{W_p^T} - W_p W_p^T\|_F \\
&= \frac{\|\calW_p - \wt{\calW_p}\|_F}{\sigma_{\ell_{2D}}(V_p)^2} 
= O\left(\frac{\|\calW_p - \wt{\calW_p}\|_F}{\ell^{2D}}\right).
\end{align*}

Now note that $\calW_D = \spn(w_i)$ is the column space of $WW^T$ and therefore also the top $m$ dimensional eigenspace, the same being true for $\wt{\calW}_D$. Since the matrices $WW^T$ and $\wt{W}\wt{W}^T$ are close in Frobenius norm we can apply Wedin's theorem (Theorem~\ref{thm:wedin}) to get that their top $m$ dimensional eigenspaces are close:
\begin{equation}\label{eq:W_D-closeness}
\|\calW_D - \wt{\calW}_D\|_F \leq O\left(\frac{\|WW^T - \wt{W}\wt{W}^T\|_F}{\lambda_{m}(WW^T)} \right) \leq O\left(\frac{\|\calW_p - \wt{\calW_p}\|_F}{\ell^{2D} \lambda_m(WW^T)}\right).
\end{equation}

Let us finish the proof by bounding $\lambda_{m}(WW^T) = \sigma_m(W)^2$. Since $W_p$ lies in the column space of $V_p$ (the polynomials $w_p^i(y)$ are multiples of $p(y)^D$) we get that, $W_p = V_p \cdot W$. We know that $\sigma_m(W_p) = 1$ ($W_p$ has orthonormal columns by construction) therefore we get that for any unit vector $v \in \R^m$:
\[1 \leq \|W_p v\|_2 = \|V_p W v\|_2 \leq \sigma_{\max}(V_p) \cdot \|Wv\|_2 \leq \|V_p\|_F \|Wv\|_2,\]
which implies that $\sigma_m(W) \geq 1/\|V_p\|_F$. Using Claim~\ref{claim:norm-of-polynomial-powers} we have that with probability $1-\ell^{-\Omega(D)}$:
\[\|V_p\|_F^2 = \sum_{S \in [\ell]^{2D}} \|p(y)^D y_S\|_2^2 = \ell_{2D} \|p(y)\|_2^2 \leq O(\ell^{4D}),\]
which implies that $\lambda_m(WW^T) = \sigma_m(W)^2 \geq \Omega(1/\ell^{4D})$.
Plugging this into the equation~\ref{eq:W_D-closeness} we get that:
\[\|\calW_D - \wt{\calW}_D\|_F \leq O(\ell^{2D}\|\calW_p - \wt{\calW_p}\|_F)\mper \qedhere \]
\end{proof}

Together with Lemma~\ref{lem:robust-int-hd} and \ref{lem:robust-div-hd}, we can complete the proof of Lemma~\ref{lem:span-finding-hd}.

\begin{lemma}[Restatement of Lemma~\ref{lem:span-finding-hd}]
Let $D,m,\ell\in \N$ such that $m \leq (\frac{\ell}{\polylog(\ell)})^{2D}$. Given degree 2 homogeneous polynomials $B_t, t\in[m]$ in $\ell$ variables with coefficients drawn i.i.d from $\calN(0,1)$, with probability $1- \ell^{-\Omega(D)}$  Algorithm~\ref{algo:span-finding-hd} outputs $\wt{\calW}_D$ that satisfies:
\[ \Norm{\wt{\calW}_D - \calW_D}_F \leq O\left(m\ell^{4D} \|\calV_D - \wt{\calV}_D\|_F\right).\]
\end{lemma}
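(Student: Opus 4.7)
The approach is essentially a composition argument: Algorithm~\ref{algo:span-finding-hd} is a two-stage pipeline — robust intersection (Step 2) followed by robust division (Steps 3-4) — and each stage has already been analyzed in isolation. My plan is therefore to invoke Lemma~\ref{lem:robust-int-hd} for Step 2 and Lemma~\ref{lem:robust-div-hd} for Steps 3-4, then chain the two error bounds. No new probabilistic or linear-algebraic content is required beyond what has been established; the work is bookkeeping.

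First, I would apply Lemma~\ref{lem:robust-int-hd} to Step 2, which produces $\wt{\calW}_p$ as the top $m$-dimensional eigenspace of $\proj(\calV_p) + \proj(\wt{\calV}_D)$. That lemma already packages the two necessary ingredients: the exact-case identity $\calV_D \cap \calV_p = \calW_p$ of dimension $m$ (Lemma~\ref{lem:intersection-hd}, which in turn uses the null-space characterization from Lemma~\ref{lem:V-singular-value-hd} applied with $B_0 := p$), and the spectral-gap lower bound $\lambda_{(m+1)\ell_{2D} - \binom{m+1}{2}}\bigl(\proj(\calV_D) + \proj(\calV_p)\bigr) \geq \Omega(1/(m\ell^{2D}))$ from Lemma~\ref{lem:sval-intersection-hd}. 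The conclusion I invoke is
\[
\|\wt{\calW}_p - \calW_p\|_F \leq O(m\ell^{2D}) \cdot \|\wt{\calV}_D - \calV_D\|_F.
\]

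Next, I would apply Lemma~\ref{lem:robust-div-hd}, which analyzes the least-squares division by $p(y)^D$ in Steps 3-4 and gives $\|\wt{\calW}_D - \calW_D\|_F \leq O(\ell^{2D}) \cdot \|\wt{\calW}_p - \calW_p\|_F$. Substituting the first bound into the second yields
\[
\|\wt{\calW}_D - \calW_D\|_F \leq O(\ell^{2D}) \cdot O(m\ell^{2D}) \cdot \|\wt{\calV}_D - \calV_D\|_F = O(m\ell^{4D}) \cdot \|\wt{\calV}_D - \calV_D\|_F,
\]
which is precisely the claimed bound.

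There is no real obstacle here; the only item worth double-checking is that the high-probability events from the two component lemmas can be combined. Both hold with probability $1 - \ell^{-\Omega(D)}$ over the same joint draw of the $B_t$'s and $p$ (independent standard Gaussian coefficients), so a single union bound preserves an overall failure probability of $\ell^{-\Omega(D)}$, matching the statement. Thus the restated Lemma~\ref{lem:span-finding-hd} follows as an immediate corollary of Lemmas~\ref{lem:robust-int-hd} and \ref{lem:robust-div-hd}.
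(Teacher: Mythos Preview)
Your proposal is correct and matches the paper's own proof essentially line for line: the paper also simply chains Lemma~\ref{lem:robust-int-hd} and Lemma~\ref{lem:robust-div-hd}, multiplying the $O(m\ell^{2D})$ and $O(\ell^{2D})$ factors to obtain the claimed $O(m\ell^{4D})$ bound. Your added remark about the union bound over the two high-probability events is a small clarification the paper leaves implicit.
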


\begin{proof}
This follows immediately by combining Lemmas~\ref{lem:robust-int-hd} and \ref{lem:robust-div-hd}. Lemma~\ref{lem:robust-int-hd} states that when the polynomials $B_t$'s and $p$ are random, with probability $1-\ell^{-\Omega(D)}$ Step 2 of the algorithm outputs $\wt{\calW_p}$ such that:
\[\|\wt{\calW_p} - \calW_p\|_F \leq O(m\ell^{2D}\|\wt{\calV}_D - \calV_D\|_F).\]
Lemma~\ref{lem:robust-div-hd} states that with probability $1-\ell^{-\Omega(D)}$ Step 4 of the algorithm outputs $\wt{\calW_D}$ such that:
\[\|\calW_D - \wt{\calW}_D\|_F \leq O(\ell^{2D}\|\calW_p - \wt{\calW_p}\|_F).\]
Combining both the inequalities the bound on $\|\calW_D - \wt{\calW}_D\|_F$ easily follows.
\end{proof}

\subsection{Proof of Lemma~\ref{lem:analysis-desym-hd}: Desymmetrization}\label{sec:desymm-hd}
\begin{mdframed}
  \begin{algorithm}[Desymmetrization]
    \label{algo:desymmetrization-hd}\mbox{}
    \begin{description}
    \item[Input:] Basis for subspace $\wt{\calW}_D$.
    \item[Output:] $\wt{T} \in \R^{\ell_{2D}^3}$ such that: \[\Norm{\wt{T} - \sum_{t \in [m]} (\Sym(B_t^{\ot D}))^{\otimes 3}}_F \leq \poly(m)\Paren{\ell^{O(D)}\|\wt{\calW}_D - \calW_D\|_F + \|E\|_F}.\]
    \item[Operation:]\mbox{}
    \begin{enumerate}
    \item Let $\wt{W}$ be the matrix whose columns $\{\wt{w_1},\ldots,\wt{w_m}\}$ form an orthonormal basis for $\wt{\calW}_D$. Compute the matrix $\wt{W}_{uniq}^{\otimes 3}$ whose columns are $\wt{w_i} \ot \wt{w_j} \ot \wt{w_k}$ indexed by multisets $[i,j,k] \subseteq [m]$.
    \item Solve the following least-squares minimization problem:
    \[\min_{Y \in \R^{m_3}}\Norm{\fl(\wh{P}\circ M) - \Sym_{6D} \cdot \wt{W}^{\otimes 3}_{uniq} \cdot Y}_F.\]
    \item Compute $Z \in \R^{m^3}$ indexed by $i,j,k \in [m]$ with $Z_{i,j,k} = Y_{[i,j,k]}/6$ if $|[i,j,k]| = 3$, $Z_{i,j,k} = Y_{[i,j,k]}/3$ if $|[i,j,k]| = 2$ and $Z_{i,j,k} = Y_{[i,j,k]}$ if  $|[i,j,k]| = 1$, where $[i,j,k] \subseteq [m]$ denotes a multiset of size $3$. 
    \item Output $\wt{T} = \wt{W}^{\otimes 3} \cdot Z$. \end{enumerate}
    \end{description}
  \end{algorithm}
\end{mdframed}

\paragraph{Notations:} We will use the following notations throughout this section.
\begin{enumerate}
\item $\calW_D := \spn( B_t(y)^D \mid t \in [m])$. Let $W$ denote the matrix with columns $\{w_1,\ldots,w_m\}$ that form an orthonormal basis for $\calW_D$.
\item Let $C_t$ denote the coefficient vector of $B_t(y)^D$, i.e. $C_t = \Sym_{2D}(B_t^{\otimes D})$,
\item For a matrix $W\in \R^{n\times m}$ with columns $w_1,\dots,w_m$, we use $W^{\ot k}$ to denote the $n^k \times m^k$ matrix whose columns are all possible tensor products of $k$ columns of $W$: $w_{i_1} \ot \cdots \ot w_{i_k}$ for $i_1,\ldots,i_k \in [m]$.
Furthermore, we denote $W^{\ot k}_{uniq}$ to be the $n^k \times m_k$ matrix whose columns are $w_{i_1} \otimes \cdots \otimes w_{i_k}$ for $i_1 \leq i_2 \leq \cdots \leq i_k \in [m]$.
\end{enumerate}

Let us first describe the algorithm in the case when $\delta_2 = 0$, i.e. given $\calW_D$ and $\Sym_{6D}(\sum_t C_t^{\ot 3}) = \Sym_{6D} (\sum_t \Sym_{2D}(B_t^{\otimes D})^{\otimes 3})$, we will obtain the unsymmetrized tensor $\sum_t \Sym_{2D}(B_t^{\otimes D})^{\otimes 3}$. From now on we will drop the subscript of $\Sym$ when it is clear from context.  therefore we will show how to recover $\sum_t C_t^{\ot 3}$.

A priori we do not have enough information to recover the tensor $\sum_t C_t^{\ot 3}$ from the tensor $\Sym(\sum_t C_t^{\ot 3})$, but we show that given the subspace spanned by the polynomials $C_t$'s we can ``desymmetrize'' to recover it. Since each vector $C_t$ belongs to $\calW$, $C_t$ can be written as $W c_t$  for some vector $c_t \in \bbR^m$. Therefore $C_t^{\otimes 3} = W^{\otimes 3} \cdot c_t^{\otimes 3}$, where $W^{\otimes 3} \in \R^{\ell_{2D}^3 \times m^3}$ is the matrix whose columns are $w_i \otimes w_j \otimes w_k$ ranging over $i,j,k \in [m]$. Summing over $t$ we get that the tensor $T = \sum C_t^{\otimes 3}$ equals $W^{\otimes 3} \cdot (\sum c_t^{\otimes 3})$. Let us write the vector $\sum c_t^{\otimes 3}$ as the vector of unknown variables $Z = [Z_{ijk}]_{i,j,k \in [m]}$. Let $\Sym_{6D} \in \R^{\ell_{6D} \times \ell_{2D}^3}$ denote the matrix that symmetrizes a vector in $\R^{\ell_{2D}^3}$ so that it is a valid coefficient vector of a degree $6D$ polynomial. We have the following linear system:
\[\Sym_{6D}\Bigparen{\sum_t C_t^{\otimes 3}} = \Sym_{6D} \cdot W^{\otimes 3} \cdot Z \mper\]
The matrix $\Sym_{6D} \cdot W^{\otimes 3}$ still does not have full column rank though (hence is not invertible), because
$\Sym_{6D} \cdot (w_i \otimes w_j \otimes w_k) = \Sym_{6D} \cdot (w_j \otimes w_i \otimes w_k) = \ldots $ for any multiset $[i,j,k] \subseteq [m]$. To fix this, consider the matrix $W^{\otimes 3}_{uniq} \in \R^{\ell_{2D}^3 \times m_3}$ with columns that are the ``unique'' columns of $W^{\otimes 3}$: $w_i \otimes w_j \otimes w_k$ for $i \leq j \leq k$. Let $Y \in \R^{m_3}$ be the corresponding vector of unknowns indexed naturally using multisets $S \subseteq [m]$ of size $3$. Giving a particular setting of a symmetric vector $Z \in \R^{m^3}$ ($Z_{i,j,k} = Z_{\pi(i),\pi(j),\pi(k)}, \forall \pi \in \bbS_3$), let $Y_{[i,j,k]} = 6Z_{i,j,k}$ if $|[i,j,k]| = 3$, $Y_{[i,j,k]} = 3Z_{i,j,k}$ if $|[i,j,k]| = 2$ and $Y_{[i,j,k]} = Z_{i,j,k}$ if  $|[i,j,k]| = 1$ (as in the algorithm), and define this invertible (over symmetric $Z$'s) linear transformation as $L \in \R^{m \times m_3}$:
\begin{equation} \label{eq:L}
    Z = LY \mper
\end{equation}
One can check that, 

\[\Sym_{6D}(\sum_t C_t^{\otimes 3}) = \Sym_{6D} \cdot W^{\otimes 3} \cdot Z = \Sym_{6D} \cdot W^{\otimes 3}_{uniq} \cdot Y.\]

We will show that the matrix $\Sym_{6D} \cdot W^{\otimes 3}_{uniq}$ has full column rank (Lemma~\ref{lem:desymm-mainsv-hd}), therefore,
\[Y = (\Sym_{6D} \cdot W^{\otimes 3}_{uniq})^\dagger(\Sym_{6D}(\sum_t C_t^{\otimes 3})).\]

Given $Y$ we can recover $Z$ by multiplying by matrix $D$, and finally multiplying by $W^{\ot 3}$ we get the desymmetrized tensor $\sum_t C_t^{\otimes 3}$:
\begin{equation}\label{eq:exact-desymm}
\sum_t C_t^{\otimes 3} = W^{\ot 3} L (\Sym_{6D} \cdot W^{\otimes 3}_{uniq})^\dagger(\Sym_{6D}(\sum_t C_t^{\otimes 3})).
\end{equation}

\begin{lemma}\label{lem:desymm-mainsv-hd}
Let $m,\ell, D \in \N$ such that $m \leq (\frac{\ell}{\polylog(\ell)})^{2D}$.
Then, with probability $1- \ell^{-\Omega(D)}$,
\[\sigma_{m_3}(\Sym_{6D} \cdot W^{\otimes 3}_{uniq}) > \Omega\left(\frac{1}{m^{1.5}}\right).\]
\end{lemma}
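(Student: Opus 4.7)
The plan is to reduce to a singular-value lower bound on a matrix whose columns are the coefficient vectors of triple products $B_s(y)^D B_t(y)^D B_u(y)^D$ via a change of basis, and then to analyze that matrix in the spirit of Lemma~\ref{lem:V-singular-value-hd}. The first observation is that the columns of $W^{\otimes 3}_{uniq}$---namely the vectors $w_i \otimes w_j \otimes w_k$ for $i \leq j \leq k$---are orthonormal in $\R^{\ell_{2D}^3}$ because the $w_i$'s are orthonormal. Hence $\|W^{\otimes 3}_{uniq} Y\|_2 = \|Y\|_2$ for every $Y \in \R^{m_3}$, and it suffices to show that the coefficient vector of
\[ Q_Y(y) := \sum_{i \leq j \leq k} Y_{[i,j,k]} \, w_i(y)\, w_j(y)\, w_k(y) \]
has $\ell_2$-norm at least $\Omega(\|Y\|_2/m^{1.5})$.

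Next, I would change bases from the orthonormal $w_i$'s to the $B_t(y)^D$'s. Let $C \in \R^{\ell_{2D} \times m}$ be the matrix with columns $C_t = \Sym(B_t^{\otimes D})$, so $W = CT$ for some invertible $T \in \R^{m \times m}$. An analog of Lemma~\ref{lem:condition-number-hd} applied to the quadratic polynomials $B_t$ in $\ell$ variables---valid because $B_t = M^\top A_t M$ has i.i.d.\ Gaussian coefficients and $m \leq (\ell/\polylog(\ell))^{2D}$---gives $\kappa(C) = O(1)$, and hence $\kappa(T) = O(1)$. Substituting $w_i(y) = \sum_t T_{ti}\, B_t(y)^D$ rewrites
\[ Q_Y(y) = \sum_{s \leq t \leq u} Z_{[s,t,u]} \, B_s(y)^D B_t(y)^D B_u(y)^D, \]
where $Z$ is related to $Y$ by the restriction of $T^{\otimes 3}$ to the symmetric subspace indexed by multisets. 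Since $\kappa(T^{\otimes 3}|_{\mathrm{sym}}) = \kappa(T)^3 = O(1)$ and the orbit-size corrections associated with unordered triples are $O(1)$, one obtains $\|Z\|_2 = \Theta(\|Y\|_2)$.

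The heart of the argument is to lower bound $\sigma_{m_3}(M_\Delta)$, where $M_\Delta \in \R^{\ell_{6D} \times m_3}$ has columns equal to the coefficient vectors of $(B_s B_t B_u)(y)^D = B_s(y)^D B_t(y)^D B_u(y)^D$ for $s \leq t \leq u$. I would follow the blueprint of Lemma~\ref{lem:V-singular-value-hd}: (i) characterize the trivial linear dependencies among these triple products (which should reduce exactly to the unordered indexing already factored out by passing to multisets), (ii) decompose $M_\Delta^\top M_\Delta$ as a diagonally-dominant main term plus a deviation, and (iii) bound the spectral norm of the deviation using the graph-matrix / trace-moment tools outlined in Section~\ref{sec:overview-singular-val}, now with shapes reflecting the cubic combinatorics of unordered triples $[s,t,u]$. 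A bound of the form $\sigma_{m_3}(M_\Delta) \gtrsim \ell^{3D}/\poly(m)$, combined with the normalization $\|B_t(y)^D\|_2 = \Theta(\ell^D)$ implicit in the change of basis (each unit $w_i$ decomposes as a bounded linear combination of the $B_t(y)^D$'s), yields the claimed $\Omega(1/m^{1.5})$ lower bound.

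The main obstacle will be step (iii): the entries of $M_\Delta$ are degree-$6D$ polynomials in the Gaussian coefficients of the $B_t$'s with heavy cross-column correlations, since each $B_t$ appears in many columns of $M_\Delta$. Naive concentration or leave-one-out arguments would be too lossy, and the cleanest route is to port the graph-matrix decomposition developed in Section~\ref{sec:singular-value-of-VV-NN} to this cubic setting---technically delicate, but following the same template as the Lemma~\ref{lem:V-singular-value-hd} analysis.
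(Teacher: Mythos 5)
Your approach is essentially the one in the paper: after peeling off the orthonormality of $W^{\otimes 3}_{uniq}$, the matrix $M_\Delta$ you isolate is exactly the paper's $\Sym_{6D} \cdot C^{\otimes 3}_{uniq}$, and the graph-matrix analysis you flag as the heart of the argument is precisely Lemma~\ref{lem:desymm-singular-value-hd}, which shows $\sigma_{m_3}(\Sym_{6D} C^{\otimes 3}_{uniq}) \geq \Omega(\ell^{3D})$. The one place you diverge is the change-of-basis bound: you invoke $\kappa(C) = O(1)$ (an analog of Lemma~\ref{lem:condition-number-hd}) to control the transform $T$ with $C = W T^{-1}$, whereas the paper's Lemma~\ref{lem:desymm-basis-hd} only uses the crude bound $\sigma_{m_3}(V^{\otimes 3}_{uniq}) \geq 1/\|C\|_F^3 \geq 1/(m^{1.5}\ell^{3D})$, and this Frobenius-norm slack is what produces the $1/m^{1.5}$ in the lemma statement; your spectral-norm version would actually give the stronger conclusion $\Omega(1)$. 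Two small bookkeeping slips: $\|Z\|_2 = \Theta(\ell^{-3D}\|Y\|_2)$, not $\Theta(\|Y\|_2)$, since $\kappa(T)=O(1)$ only controls the ratio of singular values while the $\ell^{-3D}$ scaling is absorbed by the $\ell^{3D}$ from $\sigma_{m_3}(M_\Delta)$; and step (i) is vacuous since passing to unordered triples already quotients out the only trivial dependencies.
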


Let us first decompose the matrix $W^{\otimes 3}_{uniq}$ into a random matrix times a basis transformation matrix. Define $C$ as the matrix whose columns are the vectors $C_t$ and analogously define $C^{\otimes 3}$ as well as $C^{\otimes 3}_{uniq}$. Let $V$ be a basis transformation matrix between $C$ and $W$: $C \cdot V = W$. We have the following easy to prove lemma:

\begin{lemma}\label{lem:desymm-basis-hd}
\begin{enumerate}
\item  $C^{\otimes 3}_{uniq} \cdot V^{\otimes 3}_{uniq} = W^{\otimes 3}_{uniq}.$
\item With probability $1- \ell^{-\Omega(D)}$ we have that:
$\sigma_{m_3}(V^{\otimes 3}_{uniq}) \geq \frac{1}{m^{1.5}\ell^{3D}}.$
\end{enumerate}
\end{lemma}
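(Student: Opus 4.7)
The two parts of the lemma are essentially decoupled. Part (1) is an algebraic rewriting of the basis change $CV=W$ after symmetrization; part (2) is a singular value bound obtained by chaining three inequalities—an upper bound on $\sigma_{\max}(C)$ via Claim~\ref{claim:norm-of-polynomial-powers}, a lower bound on $\sigma_{\min}(V)$ via orthonormality of $W$, and a lower bound on $\sigma_{\min}(V^{\otimes 3}_{uniq})$ obtained by factoring through the symmetric subspace $\mathrm{Sym}^3(\R^m)$.

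For part (1), writing $w_i=\sum_a V_{ai}c_a$ and expanding the triple tensor (equivalently, polynomial) product yields $w_i\otimes w_j\otimes w_k=\sum_{(a,b,c)\in[m]^3}V_{ai}V_{bj}V_{ck}\,(c_a\otimes c_b\otimes c_c)$. Grouping these $m^3$ terms by the multiset $[a,b,c]$ (the identifications made by $\Sym_{6D}$ in every downstream use of $W^{\otimes 3}_{uniq}$) gives the claimed identity with
\[
V^{\otimes 3}_{uniq}\bigl[[a,b,c],[i,j,k]\bigr]\;=\;\sum_{(a',b',c')\in\mathrm{ord}([a,b,c])} V_{a'i}V_{b'j}V_{c'k},
\]
where $\mathrm{ord}([a,b,c])$ denotes the orderings of the multiset; the value is independent of the choice of canonical representative of $[i,j,k]$ by a simple relabeling of the summation.

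For part (2), Step~A bounds $\sigma_{\max}(C)$: each column $C_t$ of $C$ is the coefficient vector of $B_t(y)^D$ for a random quadratic $B_t$, so Claim~\ref{claim:norm-of-polynomial-powers} together with a union bound over $t\in[m]$ (valid since $m\leq(\ell/\polylog\ell)^{2D}$) gives $\|C_t\|^2=O(\ell^{2D})$ uniformly in $t$ with probability $1-\ell^{-\Omega(D)}$, hence $\sigma_{\max}(C)\leq\|C\|_F=O(\sqrt m\,\ell^D)$. Step~B bounds $\sigma_{\min}(V)$: orthonormality $W^\top W=I_m$ gives $\|v\|_2=\|Wv\|_2=\|CVv\|_2\leq\sigma_{\max}(C)\|Vv\|_2$ for every $v\in\R^m$, so $\sigma_{\min}(V)\geq 1/\sigma_{\max}(C)=\Omega(1/(\sqrt m\,\ell^D))$. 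Step~C bounds $\sigma_{\min}(V^{\otimes 3}_{uniq})$: letting $n_{[i,j,k]}\in\{1,3,6\}$ denote the number of orderings of the multiset $[i,j,k]$, define the embedding $\psi:\R^{m_3}\to\R^{m^3}$ by $\psi(Y)_{(i,j,k)}=Y_{[i,j,k]}/n_{[i,j,k]}$ (which always lands in $\mathrm{Sym}^3(\R^m)$) and the projection $\pi:\R^{m^3}\to\R^{m_3}$ by $\pi(v)_{[a,b,c]}=\sum_{(a',b',c')\in\mathrm{ord}([a,b,c])} v_{(a',b',c')}$. A direct entrywise calculation confirms $V^{\otimes 3}_{uniq}=\pi\circ V^{\otimes 3}\circ\psi$. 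Three facts then close the bound: (i) $V^{\otimes 3}$ preserves $\mathrm{Sym}^3(\R^m)$ and its restriction there has smallest singular value at least $\sigma_{\min}(V)^3$; (ii) $\|\psi(Y)\|_2^2=\sum_{[i,j,k]}Y_{[i,j,k]}^2/n_{[i,j,k]}\geq\|Y\|_2^2/6$; (iii) for symmetric $v$, $\|\pi(v)\|_2^2=\sum_{[a,b,c]}n_{[a,b,c]}^2 v_{(a,b,c)}^2\geq\|v\|_2^2$. Chaining these gives $\sigma_{\min}(V^{\otimes 3}_{uniq})\geq\sigma_{\min}(V)^3/\sqrt 6$, which combined with Step~B yields $\Omega(1/(m^{1.5}\ell^{3D}))$.

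The main obstacle is Step~A: the sharp $O(\sqrt m\,\ell^D)$ bound on $\sigma_{\max}(C)$ is what makes the final $m^{-1.5}\ell^{-3D}$ dependence tight, and it is the only place random-polynomial concentration enters the argument; fortunately Claim~\ref{claim:norm-of-polynomial-powers} delivers exactly this. Part~(1) is largely bookkeeping, and the factorization through $\mathrm{Sym}^3(\R^m)$ in Step~C only costs $O(1)$ factors arising from the multiplicities $n_{[i,j,k]}\in\{1,3,6\}$.
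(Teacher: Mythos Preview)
Your proposal is correct, but for part~(2) it takes a longer route than the paper. The paper argues directly at the third-tensor level using part~(1): since the columns of $W$ are orthonormal, so are the columns of $W^{\otimes 3}_{uniq}$ (tensor products of distinct orthonormal vectors are orthonormal), hence for any unit $v\in\R^{m_3}$ one has $1=\|W^{\otimes 3}_{uniq}v\|_2=\|C^{\otimes 3}_{uniq}V^{\otimes 3}_{uniq}v\|_2\le\|C^{\otimes 3}_{uniq}\|_2\,\|V^{\otimes 3}_{uniq}v\|_2\le\|C\|_F^3\,\|V^{\otimes 3}_{uniq}v\|_2$, and the bound $\|C\|_F^2=\sum_t\|B_t^D\|^2=O(m\ell^{2D})$ from Claim~\ref{claim:norm-of-polynomial-powers} finishes it in one line. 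You instead drop to $\sigma_{\min}(V)$ via $CV=W$ (your Step~B), and then lift back to $V^{\otimes 3}_{uniq}$ through the explicit factorization $\pi\circ V^{\otimes 3}\circ\psi$ on $\mathrm{Sym}^3(\R^m)$ (your Step~C), paying an extra $1/\sqrt 6$ for the multiplicity bookkeeping. Both arguments use the same concentration input (Claim~\ref{claim:norm-of-polynomial-powers}) and yield the same bound; what the paper's approach buys is brevity---it never needs to isolate $\sigma_{\min}(V)$ or introduce $\psi,\pi$---while what your approach buys is an explicit description of $V^{\otimes 3}_{uniq}$ as an $m_3\times m_3$ matrix, which in particular makes transparent why part~(1) holds modulo the $\Sym_{6D}$ identifications you flagged.
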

\begin{proof}
The proof of (1) is straightforward so let us go to the proof of (2). For any unit vector $v \in \R^{m_3}$ we have that:
\[\|C_{uniq}^{\ot 3}V^{\ot 3}_{uniq}v\|_2 = \|W^{\otimes 3}_{uniq} v\|_2 \geq 1,\]
since $W^{\otimes 3}_{uniq}$ is an orthonormal matrix. We can upper bound the LHS by $\spec{C_{uniq}^{\ot 3}}\|V^{\ot 3}_{uniq}v\|_2$
rearranging which gives us that $\|V^{\ot 3}_{uniq}v\|_2 \geq \frac{1}{\spec{C_{uniq}^{\ot 3}}} \geq \frac{1}{\spec{C}^3} \geq \frac{1}{\|C\|_F^3}$ which implies that $\sigma_{m_3}(V^{\ot 3}_{uniq}) \geq \frac{1}{\|C\|_F^3}$.

We have that $\|C\|_F^2 = \sum_{t \in m}\|B_t(y)^D\|_2^2$ which is less than $O(m\ell^{2D})$ with probability $1- \ell^{-\Omega(D)}$ (Claim~\ref{claim:norm-of-polynomial-powers}). So we get that $\|C\|_F^3 \geq 1/(m^{1.5}\ell^{3D})$ completing the proof of the lemma.
\end{proof}

The following lemma is crucial to the singular value lower bound. We defer the proof to Section~\ref{sec:sval-desymm}.

\begin{lemma}
\label{lem:desymm-singular-value-hd}
    Let $m,\ell, D \in \N$ such that $m \leq (\frac{\ell}{\polylog(\ell)})^{2D}$.
    Then, with probability $1- \ell^{-\Omega(D)}$,
    \begin{equation*}
        \sigma_{m_3} \Paren{\Sym_{6D} \cdot C^{\otimes 3}_{uniq}} \geq \Omega(\ell^{3D}) \mper
    \end{equation*}
\end{lemma}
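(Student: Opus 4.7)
The plan is to reduce the singular value lower bound to a smallest-eigenvalue bound on the Gram matrix
\[ G \;:=\; (\Sym_{6D} \cdot C^{\otimes 3}_{uniq})^\top (\Sym_{6D} \cdot C^{\otimes 3}_{uniq}) \;\in\; \R^{m_3 \times m_3}, \]
reformulated as $\lambda_{\min}(G) \geq \Omega(\ell^{6D})$. The key observation is that for each multiset $T = [t_1,t_2,t_3]$, the column $\Sym_{6D}(C_{t_1} \otimes C_{t_2} \otimes C_{t_3})$ equals, up to a positive constant $c_T = \Theta(1)$ depending only on the multiplicities in $T$, the coefficient vector $\psi_T \in \R^{\ell_{6D}}$ of the degree-$6D$ polynomial $B_{t_1}(y)^D B_{t_2}(y)^D B_{t_3}(y)^D$. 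Hence $G_{TS} = c_T c_S \langle \psi_T, \psi_S \rangle$, and the problem reduces to lower bounding $\lambda_{\min}(H)$, where $H_{TS} := \langle \psi_T, \psi_S \rangle$.

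I would then follow the diagonal-versus-deviation strategy outlined in Section~\ref{sec:overview-singular-val}: write $H = D + E$ with $D$ diagonal and $E$ off-diagonal, control both, and apply Weyl's inequality. Each diagonal entry $D_{TT} = \|\psi_T\|^2$ is a degree-$12D$ polynomial in the Gaussian coefficients of the $B_t$s with expectation $\Theta(\ell^{6D})$; a direct extension of Claim~\ref{claim:norm-of-polynomial-powers} via Gaussian hypercontractivity, union bounded over the $m_3 \leq \ell^{O(D)}$ multisets $T$, gives $D_{TT} = \Theta(\ell^{6D})$ uniformly with probability $1 - \ell^{-\Omega(D)}$. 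So it suffices to show $\|E\| \leq \ell^{6D}/\polylog(\ell)$.

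To bound $\|E\|$ I would stratify $E = \sum_{k=0}^{2} E^{(k)}$ according to the multiset intersection $k = |T \cap S|$. For each $k$, the entries of $E^{(k)}$ are explicit low-degree polynomials in the Gaussian coefficients whose combinatorial structure is dictated by the overlap pattern of the two triples. Such matrices admit a graph matrix decomposition in the sense of~\cite{HKP15,BHK19,graphmatrixbounds,sparseindset}, and the trace moment method reduces the spectral-norm bound for each $E^{(k)}$ to combinatorial counts on the underlying shapes. The hypothesis $m \leq (\ell/\polylog(\ell))^{2D}$ is exactly what makes these counts produce a bound of $\ell^{6D}/\polylog(\ell)$, since the natural per-stratum scaling is of order $\ell^{6D} \cdot (m/\ell^{2D})^{\Theta(3-k)} \cdot \polylog(\ell)$.

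The main obstacle will be that some of the graph matrices arising from the smaller-overlap strata (or from the mean parts of higher-overlap contributions) do not individually have small spectral norm: their large norm is caused by large \emph{positive} eigenvalues which actually \emph{help} $\lambda_{\min}(H)$. As emphasized in Section~\ref{sec:overview-singular-val}, the right remedy is the positivity-charging argument used in the sum-of-squares lower bound literature~\cite{BHK19,GJJPR20,HK22,sparseindset}: decompose each problematic graph matrix as $P + F$ with $P \succeq 0$ (absorbed harmlessly into the diagonal) and $F$ a genuine deviation whose spectral norm the trace method controls. Assembling the per-stratum bounds, applying Weyl's inequality to $H = D + E$, and finally multiplying back the constants $c_T, c_S$ to pass from $H$ to $G$ yields $\lambda_{\min}(G) \geq \Omega(\ell^{6D})$ and hence the claimed $\sigma_{m_3}(\Sym_{6D} \cdot C^{\otimes 3}_{uniq}) \geq \Omega(\ell^{3D})$.
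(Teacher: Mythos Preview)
Your approach is essentially the same as the paper's: reduce to $\lambda_{\min}$ of the $m_3\times m_3$ Gram matrix, show the diagonal is $\Theta(\ell^{6D})$ via an extension of Claim~\ref{claim:norm-of-polynomial-powers}, and bound the off-diagonal stratified by $|T\cap S|$ using graph matrix norm bounds. Two points are worth noting. First, the paper simplifies the graph-matrix analysis by first invoking Fact~\ref{fact:deleting-rows-sval} to delete all rows of $H=\Sym_{6D}\cdot C^{\otimes 3}_{uniq}$ indexed by multisets with repeated elements; this forces the $6D$ circle vertices in every shape to be distinct, eliminating collapses among them and considerably cleaning up the separator analysis. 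Second, with that simplification the positivity-charging you anticipate is \emph{not} needed for this lemma: every off-diagonal stratum already has spectral norm $o(\ell^{6D})$ directly. Concretely, with $k=|T\cap S|\in\{0,1,2\}$ overlaps, the minimum vertex separator yields a bound of $\wt{O}(\sqrt{m^3\ell^{6D}}\cdot(\ell^{2D}/m)^{k/2})$, which for $k=2$ is $\wt{O}(\sqrt{m}\,\ell^{5D})=o(\ell^{6D})$ under $m\leq(\ell/\polylog\ell)^{2D}$. The positivity-charging machinery is essential for Lemma~\ref{lem:VV-NN} but is overkill here.
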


The above two lemmas immediately imply that $(\Sym_{6D} W^{\ot 3}_{uniq})$ is invertible: 
\begin{proof}[Proof of Lemma~\ref{lem:desymm-mainsv-hd}]
We have that $\Sym_{6D} \cdot W^{\ot 3}_{uniq} = \Sym_{6D} \cdot C^{\ot 3}_{uniq} \cdot V^{\ot 3}_{uniq}$. Since $V^{\ot 3}_{uniq}$ is a square matrix, we can multiply the two singular value lower bounds in Lemmas~\ref{lem:desymm-basis-hd},~\ref{lem:desymm-singular-value-hd} to get:
\begin{equation*}
    \sigma_{m_3}(\Sym_{6D} \cdot W^{\ot 3}_{uniq}) \geq \sigma_{m_3}(\Sym_{6D} \cdot C^{\ot 3}_{uniq}) \cdot \sigma_{m_3}(V^{\ot 3}_{uniq})
    \geq \Omega\left(\frac{1}{m^{1.5}}\right) \mper
    \qedhere
\end{equation*}
\end{proof}

\paragraph{Error Resilience:} We have already proved (equation~\ref{eq:exact-desymm}) that when given the subspace $\calW_D$ we can recover the desymmetrized tensor $\sum_t C_t^{\ot 3}$. We will now show that in the case when the input subspace $\wt{\calW}_D$ is sufficiently close to $\calW_D$ and $\|E\|_F$ is bounded, Algorithm~\ref{algo:desymmetrization-hd} outputs a solution close to $\sum C_t^{\ot 3}$. Roughly we show that instead of solving the linear system in equation~\ref{eq:exact-desymm}, the algorithm solves a least-squares minimization program and since $\sigma_{m_3}(\Sym_{6D} W^{\ot 3}_{uniq})$ is at least $1/\poly(m)$ the algorithm is error resilient.

Before proving the correctness of the algorithm let us prove the following lemma that relates the bases of two subspaces that are close to each other:

\begin{lemma}\label{lem:bases}
Given $d$-dimensional subspaces $\calW$ and $\wt{\calW}$ with $\|\calW - \wt{\calW}\|_F \leq 1$, there exists orthonormal matrices $W,\wt{W} \in \R^{n \times d}$ with $WW^\top = \proj(\calW)$ and $\wt{W}\wt{W}^\top = \wt{\calW}$ such that:
\[\wt{W} = W + \Gamma,\]
with $\|\Gamma\|_F \leq O(d\|\wt{\calW} - \calW\|_F)$.
\end{lemma}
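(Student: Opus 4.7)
The plan is to reduce the claim to choosing \emph{aligned} orthonormal bases for the two subspaces via a Procrustes-style rotation, and then relate $\|\wt W - W\|_F^2$ to $\|\proj(\wt\calW)-\proj(\calW)\|_F^2$ through the principal angles between the two subspaces.

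First I would pick arbitrary orthonormal bases $W_0, \wt W_0 \in \R^{n\times d}$ of $\calW$ and $\wt\calW$, and compute the singular value decomposition $W_0^\top \wt W_0 = U\Sigma V^\top$, where $\Sigma = \diag(\sigma_1,\dots,\sigma_d)$ with $\sigma_i \in [0,1]$ (since $W_0^\top \wt W_0$ has operator norm at most $1$). The $\sigma_i$'s are exactly the cosines of the principal angles between $\calW$ and $\wt\calW$. Now set $W \coloneqq W_0 U$ and $\wt W \coloneqq \wt W_0 V$; these are still orthonormal bases with $WW^\top = \proj(\calW)$ and $\wt W \wt W^\top = \proj(\wt\calW)$, but additionally $W^\top \wt W = U^\top W_0^\top \wt W_0 V = \Sigma$ is diagonal with non-negative entries.

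Next I would compute both Frobenius norms in terms of the $\sigma_i$'s. Expanding,
\begin{equation*}
    \|\wt W - W\|_F^2 \;=\; 2d - 2\tr(W^\top \wt W) \;=\; 2\sum_{i=1}^d (1-\sigma_i),
\end{equation*}
while
\begin{equation*}
    \|\proj(\wt\calW)-\proj(\calW)\|_F^2 \;=\; 2d - 2\tr\bigl(WW^\top \wt W \wt W^\top\bigr) \;=\; 2d - 2\|W^\top \wt W\|_F^2 \;=\; 2\sum_{i=1}^d (1-\sigma_i^2).
\end{equation*}
Since $\sigma_i \in [0,1]$, we have $1-\sigma_i \leq 1-\sigma_i^2$ term by term, so
\begin{equation*}
    \|\wt W - W\|_F^2 \;\leq\; \|\proj(\wt\calW)-\proj(\calW)\|_F^2 \;=\; \|\wt\calW - \calW\|_F^2 .
\end{equation*}
Setting $\Gamma = \wt W - W$ therefore yields $\|\Gamma\|_F \leq \|\wt\calW - \calW\|_F$, which is in fact stronger than the claimed $O(d\|\wt\calW-\calW\|_F)$ bound.

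There is no real obstacle here; the only subtlety is that an \emph{arbitrary} pair of orthonormal bases could disagree by a sign or by a full rotation even when the subspaces coincide, so taking the SVD-based alignment above (equivalent to solving the orthogonal Procrustes problem for $W_0,\wt W_0$) is essential to ensure that the diagonal entries of $W^\top\wt W$ are non-negative and therefore close to $1$ whenever the projectors are close. The hypothesis $\|\wt\calW-\calW\|_F \leq 1$ is not actually needed for the argument; it only ensures that the dimensions match and that the bound $\|\Gamma\|_F \leq 1$ is non-trivial.
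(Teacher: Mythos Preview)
Your proof is correct and in fact yields the sharper bound $\|\Gamma\|_F \leq \|\wt\calW - \calW\|_F$, without the extra factor of $d$. The computations are all fine: the key identities $\|\wt W - W\|_F^2 = 2\sum_i(1-\sigma_i)$ and $\|\proj(\wt\calW)-\proj(\calW)\|_F^2 = 2\sum_i(1-\sigma_i^2)$ hold once the bases are aligned so that $W^\top\wt W=\Sigma$, and the termwise inequality $1-\sigma_i \le (1-\sigma_i)(1+\sigma_i)$ for $\sigma_i\in[0,1]$ is immediate.

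Your route is genuinely different from the paper's. The paper fixes an arbitrary basis $\wt W$ for $\wt\calW$ and any basis $W'$ for $\calW$, writes $\wt W = W'\wt V + E\wt W$ with $\wt V = W'^\top \wt W$, and then argues that $\wt V$ is close to an orthogonal matrix $V$ by analyzing $\wt V\wt V^\top = I_d + W'^\top E W'$ via Weyl's theorem and an SVD of $\wt V$; setting $W = W'V$ and collecting errors yields $\|\Gamma\|_F \le O(d\|E\|_F)$. The factor of $d$ enters when they pass from a spectral-norm perturbation on the singular values of $\wt V$ to a Frobenius-norm bound on $\wt V - V$, and the hypothesis $\|\calW-\wt\calW\|_F\le 1$ is used there to ensure the singular values of $\wt V$ stay near $1$. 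Your Procrustes-style alignment bypasses both of these artifacts: by rotating \emph{both} bases simultaneously so that $W^\top\wt W$ is already diagonal and nonnegative, you compare the two Frobenius norms directly via the principal angles and never lose the $d$ factor, nor do you need the closeness hypothesis. The paper's approach is a bit more ``operational'' (it keeps one basis fixed and corrects the other), but your argument is both simpler and quantitatively stronger.
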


\begin{proof}
Let $E = \proj(\wt{\calW}) - \proj(\calW)$. Let $\wt{W}$ be an orthonormal basis for $\wt{\calW}$: $\wt{W}\wt{W}^\top = \proj(\wt{\calW})$ and similarly let $W'$ be an orthonormal basis for $\calW$: $W'W'^\top = \proj(\calW)$. Then, we have that ${\wt{W}\wt{W}^\top = W'W'^\top + E.}$
Multiplying both sides on the right by $\wt{W}$ we get:
\begin{equation} \label{eq:bases-close}
\wt{W} = W'W'^\top\wt{W} + E\wt{W} = W'\wt{V} + E\wt{W},
\end{equation}
for $\wt{V} = W'^\top\wt{W}$. We will now show that $\wt{V}$ is close to an orthonormal matrix $V \in \R^{d\times d}$. We have:
\[\wt{V}\wt{V}^\top = W'^\top(\wt{W}\wt{W}^\top) W' = W'^\top(W'W'^\top + E) W' = \Id_{d} + W'^\top E W',\]
where, $\spec{W'^\top E W'} = \spec{E}$. By Weyl's theorem (Theorem~\ref{thm:weyl}) we get that the eigenvalues of $\wt{V}\wt{V}^\top$ are in the range $[1-\spec{E}, 1+\spec{E}]$. 

Let $\wt{V}$ have the singular value decomposition $ADB^\top$. Since the diagonal entries of $D^2$ lie in $[1-\spec{E}, 1+\spec{E}]$, we get that the diagonal entries of $D$ are in the range, $[1-\spec{E}, 1+\spec{E}]$ since $\spec{E} \leq 1$, therefore $D = \Id + E_1$, where $E_1$ is a diagonal matrix with $\spec{E_1} \leq \spec{E}$. Therefore we get,
\[\wt{V} = A(\Id + \Gamma)B^\top = AB^\top + A E_1 B^\top = V+E_2,\]
for $V$ equal to the orthonormal matrix $AB^\top$ and: 
\[\|E_2\|_F \leq \spec{A}\|E_1\|_F\spec{B} \leq d\spec{E} 
\leq d\|E\|_F.\] 
Plugging this back into equation~\ref{eq:bases-close} we get:
\[\wt{W} = W'V + W'E_2 + E\wt{W} = W + \Gamma,\]
where $W = W'V$ is an orthonormal basis for $\calW$ ($WW^\top = \proj(\calW)$) and $\|\Gamma\|_F \leq d\|E\|_F+\|E\|_F = O(d\|E\|_F)$.
\end{proof}

\begin{lemma}[Correctness of Algorithm~\ref{algo:desymmetrization-hd}, restatement of Lemma~\ref{lem:analysis-desym-hd}]
    Let $D,m,\ell\in \N$ such that $m \leq (\frac{\ell}{\polylog(\ell)})^{2D}$.
    For each $t\in[m]$, let $B_t$ be a degree-2 homogeneous polynomial in $\ell$ variables with i.i.d.\ $\calN(0,1)$ entries.
    Suppose $\wt{\calW}_D$ is a subspace of $\R^{\ell_{2D}}$ such that $\Norm{\wt{\calW}_D -\calW_D}_F \leq 1/(m^{3.5}\ell^{O(D)})$, then with probability $1- \ell^{-\Omega(D)}$ over the choice of $B_t$'s,
    Algorithm~\ref{algo:desymmetrization-hd} outputs a tensor $\wt{T}$ such that:
    \[ \Norm{\wt{T} - \sum_{t \in [m]} \Paren{\Sym(B_t^{\ot D})}^{\ot 3} }_F \leq 
    \poly(m)(\ell^{O(D)}\|\calW_D - \wt{\calW}_D\|_F + \|E\|_F)\mper\]
\end{lemma}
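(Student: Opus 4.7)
The plan is to view Algorithm~\ref{algo:desymmetrization-hd} as a noisy implementation of the exact identity~\eqref{eq:exact-desymm}, and to track the propagation of error through each step. In the noiseless case (with $\wt{\calW}_D = \calW_D$ and $E = 0$), the unique minimizer would be $Y^* = (\Sym_{6D}\, W^{\ot 3}_{uniq})^\dagger \Sym_{6D}(\sum_t C_t^{\ot 3})$, and the remaining steps would output $\sum_t C_t^{\ot 3}$ exactly. The algorithm performs the analogous computation with $\wt{W}$ in place of $W$ and with $\fl(\wh{P}\circ M) = \Sym_{6D}(\sum_t C_t^{\ot 3}) + \fl(E\circ M)$ in place of the noiseless right-hand side. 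Thus two error sources need to be tracked: the perturbation $\Sym_{6D}(\wt{W}^{\ot 3}_{uniq} - W^{\ot 3}_{uniq})$ of the coefficient matrix, and the additive noise $\fl(E\circ M)$ in the right-hand side.

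The first step is to invoke Lemma~\ref{lem:bases} to pick an orthonormal basis $W$ of $\calW_D$ so that the orthonormal basis $\wt{W}$ of $\wt{\calW}_D$ can be written as $\wt{W} = W + \Gamma$ with $\|\Gamma\|_F \leq O(m\|\wt{\calW}_D - \calW_D\|_F)$. A telescoping expansion of $(W+\Gamma)^{\ot 3}_{uniq} - W^{\ot 3}_{uniq}$, together with the facts that $W$ has orthonormal columns and $\|\Sym_{6D}\|_{\mathrm{op}} \leq 1$, yields
\[
\Norm{\Sym_{6D}(\wt{W}^{\ot 3}_{uniq} - W^{\ot 3}_{uniq})}_{\mathrm{op}} \leq O(m)\|\Gamma\|_F \leq O(m^2) \Norm{\wt{\calW}_D - \calW_D}_F \mper
\]
Lemma~\ref{lem:desymm-mainsv-hd} guarantees $\sigma_{m_3}(\Sym_{6D}\, W^{\ot 3}_{uniq}) \geq \Omega(m^{-1.5})$, and the hypothesis $\|\wt{\calW}_D - \calW_D\|_F \leq 1/(m^{3.5}\ell^{O(D)})$ is calibrated precisely so that the above perturbation has operator norm bounded by half this smallest nonzero singular value, which is exactly the condition to apply Theorem~\ref{lem:least-sq-robust}.

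Applying Theorem~\ref{lem:least-sq-robust} with $A = \Sym_{6D}\, W^{\ot 3}_{uniq}$, $b = \Sym_{6D}(\sum_t C_t^{\ot 3})$ and the above perturbations yields
\[
\Norm{\wt{Y} - Y^*}_2 \leq \poly(m)\Paren{\Norm{\Sym_{6D}(\wt{W}^{\ot 3}_{uniq} - W^{\ot 3}_{uniq})}_{\mathrm{op}} \cdot \|b\|_2 + \|E\|_F} \mcom
\]
where $\|b\|_2 \leq \|\sum_t C_t^{\ot 3}\|_F \leq \ell^{O(D)} m$ by Claim~\ref{claim:norm-of-polynomial-powers} applied to each $C_t = \Sym(B_t^{\ot D})$. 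Steps 3 and 4 of the algorithm then apply the bounded linear transformation $L$ (with $\|L\|_{\mathrm{op}} \leq 6$) followed by multiplication by $\wt{W}^{\ot 3}$ (with $\|\wt{W}^{\ot 3}\|_{\mathrm{op}} \leq 1 + o(1)$). A final triangle inequality
\[
\Norm{\wt{T} - \textstyle\sum_t C_t^{\ot 3}}_F \leq \Norm{\wt{W}^{\ot 3} L (\wt{Y} - Y^*)}_F + \Norm{(\wt{W}^{\ot 3} - W^{\ot 3}) L Y^*}_F \mcom
\]
combined with $\|Y^*\|_2 \leq \poly(m)\ell^{O(D)}$ (since $\|Y^*\|_2 \leq \|(\Sym_{6D} W^{\ot 3}_{uniq})^\dagger\|_{\mathrm{op}} \|b\|_2$) and the analogous Frobenius bound on $\wt{W}^{\ot 3} - W^{\ot 3}$, produces the advertised bound $\poly(m)(\ell^{O(D)}\|\calW_D - \wt{\calW}_D\|_F + \|E\|_F)$.

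The main technical obstacle is the bookkeeping required to ensure that the various polynomial-in-$m$ factors picked up from the operator-to-Frobenius conversions and from the norm of $Y^*$ never swamp the inverse polynomial singular value $\Omega(m^{-1.5})$ coming from Lemma~\ref{lem:desymm-mainsv-hd}. The strong hypothesis $\|\wt{\calW}_D - \calW_D\|_F \leq 1/(m^{3.5}\ell^{O(D)})$ in the lemma statement is designed exactly to give enough slack both to verify the applicability hypothesis of Theorem~\ref{lem:least-sq-robust} and to absorb these polynomial factors in the final error accounting.
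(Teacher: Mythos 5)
Your proposal follows the same route as the paper's proof: fix compatible orthonormal bases via Lemma~\ref{lem:bases}, track the perturbation of $\Sym_{6D}\,W^{\ot 3}_{uniq}$ and the additive noise in the right-hand side through the least-squares step using Theorem~\ref{lem:least-sq-robust} and the singular value bound of Lemma~\ref{lem:desymm-mainsv-hd}, then finish with a triangle inequality splitting $\wt{T}-T$ into a least-squares error piece and a basis-perturbation piece. One small factual caveat: $\|\Sym_{6D}\|_{\mathrm{op}}\leq 1$ is not correct in the convention the paper uses here---$\Sym_{6D}$ sums a vector in $\R^{\ell_{2D}^3}$ over all ordered multiset-partitions of each degree-$6D$ monomial, so its operator norm is $D^{\Theta(D)}$ rather than $\leq 1$ (the paper simply uses the crude bound $\sqrt{D^{O(D)}\ell_{6D}}$). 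This does not break your argument---the $\ell^{O(D)}$ factor already present in both the hypothesis on $\|\wt{\calW}_D-\calW_D\|_F$ and the advertised error bound absorbs it---but your remark that the hypothesis is ``calibrated precisely'' for the operator norm check slightly misattributes where that $\ell^{O(D)}$ slack is spent.
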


\begin{proof}
Let $W,\wt{W} \in \R^{\ell_{2D} \times m}$ be a basis for the subspaces $\calW_D,\wt{\calW}_D$ which we will henceforth call $\calW,\wt{\calW}$, given by Lemma~\ref{lem:bases}: $\wt{W} = W+\Gamma'$, with $\spec{\Gamma'} \leq O(m\|\calW - \wt{\calW}\|_F)$. We have that $H = W_{uniq}^{\ot 3}$ and $\wt{H} = \wt{W}_{uniq}^{\ot 3}$ are an orthonormal basis for $\calH = \calW^{\ot 3}_{uniq}$ and $\wt{\calH} = \wt{\calW}^{\ot 3}_{uniq}$ respectively, with $\wt{H}^{\ot 3} - H^{\ot 3}$ defined as $\Gamma$.

Let $B, \wt{B}$ be the coefficient vectors of the polynomials $P\circ M(y)$ and $\wh{P}\circ M(y)$ respectively. Recall the linear transformation $L$ from equation~\ref{eq:L}. Step 4 of the algorithm outputs $\wt{T}$ such that,
\[\wt{T} = \wt{W}^{\ot 3}L(\Sym_{6D} \cdot \wt{H})^\dagger\wt{B}\footnote{Note that this quantity does not depend on the particular choice of basis for $\wt{\calW}$.},\]
and analogously in the $E(x) = 0$ case we get:
\[T = W^{\ot 3}L(\Sym_{6D} \cdot H)^\dagger B = \sum_{t \in [m]} (\Sym_{2D}(B_t^{\ot D}))^{\ot 3},\]
where the last equality follows since $\Sym_{6D} H$ has full column rank (Lemma~\ref{lem:desymm-mainsv-hd}) with high probability and the analysis for equation~\ref{eq:exact-desymm}.

Therefore it suffices to bound $\|T - \wt{T}\|_2$ to prove the lemma. Using $\wt{W}^{\ot 3} = W^{\ot 3}+\Gamma_1$, we have:
\begin{align}
\|T - \wt{T}\|_2 &= \|W^{\ot 3}L(\Sym_{6D} H)^\dagger B - \wt{W}^{\ot 3}L(\Sym_{6D} \wt{H})^\dagger\wt{B}\|_2 \nonumber \\ 
&\leq \|W^{\ot 3}L[(\Sym_{6D} H)^\dagger B - (\Sym_{6D} \wt{H})^\dagger\wt{B}]\|_2 + \|\Gamma_1 (\Sym_{6D} \wt{H})^\dagger\wt{B}\|_2 \nonumber \\ 
&\leq \spec{L}\|(\Sym_{6D} H)^\dagger B - (\Sym_{6D} \wt{H})^\dagger\wt{B}\|_2 + \|\Gamma_1(\Sym_{6D} \wt{H})^\dagger\wt{B}\|_2 \mper \label{eq:desymm1}
\end{align}

Let $E_1 = \Sym_{6D} H - \Sym_{6D} \wt{H}$. We can bound the first term in equation~\ref{eq:desymm1} by using robustness of least-squares minimization for the program $\min_{y \in \R^{m_3}} \|\Sym_{6D} H y - B\|_2$ (with noisy estimates $\wt{H}$ and $\wt{B}$), since $\Sym_{6D} H$ is a matrix with full column rank (Lemma~\ref{lem:desymm-mainsv-hd}). Applying Lemma~\ref{lem:least-sq-robust} we get:
\begin{align*}
\Norm{(\Sym_{6D} H)^\dagger B - (\Sym_{6D} \wt{H})^\dagger\wt{B}}_2 
&\leq \frac{\sqrt{2}\spec{E_1}\|B\|_2 + \sigma_{m_3}(\Sym_{6D} H)\|B - \wt{B}\|_2}{\sigma_{m_3}(\Sym_{6D} H)(\sigma_{m_3}(\Sym_{6D} H) - \spec{E_1})}.
\end{align*}

We will now bound the second term in equation~\ref{eq:desymm1}:
\begin{align*}
\Norm{\Gamma_1(\Sym_{6D} \wt{H})^\dagger\wt{B}}_2 &\leq \spec{\Gamma_1}\spec{(\Sym_{6D} \wt{H})^\dagger}\|\wt{B}\|_2 \\
&\leq \spec{\Gamma_1}\left(\frac{1}{\sigma_{m_3}(\Sym_{6D}H)- \spec{E_1}}\right)(\|B\|_2 + \|\wt{B} - B\|_2)
\end{align*}

Let us bound all the parameters involved above:
\begin{enumerate}
\item $\spec{\Gamma} \leq \spec{\Gamma_1} \leq O(m^2 \|\calW - \wt{\calW}\|_F)$: We know that $W,\wt{W}$ are bases for $\calW,\wt{\calW}$ with $\wt{W} = W+\Gamma'$ with $\|\Gamma'\|_F \leq O(m\|\calW - \wt{\calW}\|_F)$. Let $W$ have columns $w_1,\ldots,w_m$ and $\Gamma'$ have columns $\gamma_1,\ldots,\gamma_m$. We have that $\spec{\Gamma} \leq \spec{\Gamma_1} \leq \|\wt{W}^{\ot 3} - W^{\ot 3}\|_F$, so bounding the latter gives:
\begin{align*}
\|\wt{W}^{\ot 3} - W^{\ot 3}\|_F^2 &= \|(W+\Gamma')^{\ot 3} - W^{\ot 3}\|_F^2 \\
&=\sum_{i,j,k\in [m]} \|(w_i+\gamma_i)\ot(w_j+\gamma_j)\ot(w_k+\gamma_k) - w_i \ot w_j\ot w_k\|_2^2 \\
&\leq \sum_{i,j,k\in [m]} O(\|\gamma_i \ot w_j \ot w_k\|_2^2)\\
&= O(\|\Gamma'\|_F^2 m^2) \mper
\end{align*}
where in the first inequality we used Cauchy-Schwarz and absorbed the terms that have more than $1$ $\gamma$-term into the terms with only one $\gamma$ term, since $\|\Gamma'\|_2 \ll 1$. So we get that $\spec{\Gamma}\leq\spec{\Gamma_1} \leq O(\|\Gamma'\|_F m) \leq O(m^2 \|\calW - \wt{\calW}\|_F)$.
\item $\spec{L} \leq \|L\|_F = O(m^{1.5})$.
\item $\spec{E_1} \leq \spec{\Sym_{6D}}\spec{\Gamma} \leq \sqrt{D^{O(D)}\ell_{6D}} \cdot O(m^2\|\calW - \wt{\calW}\|_F) \leq O(m^2\ell^{O(D)}\|\calW - \wt{\calW}\|_F)$. 
\item $\|B\|_2 \leq O(m\ell^{3D})$: By triangle inequality we get $\|B\|_2 \leq \sum_{t \in [m]}\|B_t(y)^{3D}\|_2$. Each summand is at most  $O(\ell^{3D})$ with probability $1 - \ell^{-\Omega(D)}$ (Claim~\ref{claim:norm-of-polynomial-powers}), therefore implying that with probability $1 - \ell^{-\Omega(D)}$, $\|B\|_2 \leq O(m\ell^{3D})$. 
\item $\|B - \wt{B}\|_2 = \|E \circ M\|_F \leq \|E\|_F$.
\item $\sigma_{m_3}(\Sym_{6D} H) \geq \Omega\left(\frac{1}{m^{1.5}}\right)$ using Lemma~\ref{lem:desymm-mainsv-hd}.
\end{enumerate}

Assuming $\|\calW - \wt{\calW}\|_F < 1/(m^{3.5}\ell^{O(D)})$ and plugging in all the parameters above gives that:
\begin{equation*}
    \Norm{T - \wt{T}}_2 \leq  \poly(m)(\ell^{O(D)}\|\calW - \wt{\calW}\|_F + \|E\|_F) \mper
    \qedhere
\end{equation*}
\end{proof}
\subsection{Proof of Lemma~\ref{lem:analysis-aggregation-hd}: Analysis of aggregating restrictions}
\label{sec:aggregation-hd}

Let us first recall Definition~\ref{def:restriction-matrix}: for a subset $S\subseteq [n]$ and matrix $A\in \R^{n \times n}$, we write $\calR_S(A)$ to be the matrix obtained by zeroing out the $(i,j)$ entry of $A$ if $i$ or $j$ is not in $S$.

\begin{lemma}[Restatement of Lemma~\ref{lem:analysis-aggregation-hd}]
    Let $D,n,\ell,m \in \N$ such that $6D \leq \ell \leq n$.
    There is an $n^{O(D)}$-time computable collection $\calS$ of subsets of $[n]$ such that each $S\in \calS$ satisfies $\ell \leq |S| \leq 6D\ell$ and that
    \begin{equation*}
        \E_{S\sim \calS} \sum_{t=1}^m \Paren{\Sym \Paren{ \calR_{S}(A_t)^{\ot D} }}^{\ot 3}
        = C \circ \sum_{t=1}^m \Paren{\Sym(A_t^{\ot D}) }^{\ot 3}
    \end{equation*}
    where $C \in (\R^n)^{\ot 6D}$ is a fixed tensor whose entries depend only on the entry locations, and each entry of $C$ has value within $((\ell/2n)^{6D}, 1)$.
\end{lemma}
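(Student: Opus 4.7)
The plan is to first show that $\calR_S$ has a very clean combinatorial effect on the symmetric tensor, and then to produce $\calS$ as a small pseudorandom family with easily computable marginals.

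I first claim that for every tuple $I=(i_1,\dots,i_{2D})\in [n]^{2D}$,
\[
\Sym\bigl(\calR_S(A_t)^{\ot D}\bigr)[I]\;=\;\mathbf{1}\bigl[\{i_1,\dots,i_{2D}\}\subseteq S\bigr]\cdot \Sym(A_t^{\ot D})[I].
\]
If every $i_k\in S$, each factor $\calR_S(A_t)[\pi(i_{2j-1}),\pi(i_{2j})]$ coincides with $A_t[\pi(i_{2j-1}),\pi(i_{2j})]$ for every permutation $\pi\in\bbS_{2D}$, so averaging over $\pi$ returns $\Sym(A_t^{\ot D})[I]$; if some $i_k\notin S$, then in every $\pi$ the index $i_k$ is a row or column of one of the $D$ matrix factors, which is therefore zero, killing the whole product. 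Taking the third tensor power and summing in $t$,
\[
\sum_{t=1}^m \bigl(\Sym(\calR_S(A_t)^{\ot D})\bigr)^{\ot 3}[I_1,I_2,I_3]
=\mathbf{1}\bigl[I_1\cup I_2\cup I_3\subseteq S\bigr]\cdot \sum_{t=1}^m \bigl(\Sym(A_t^{\ot D})\bigr)^{\ot 3}[I_1,I_2,I_3],
\]
so taking the expectation in $S$ proves the desired identity with $C[I_1,I_2,I_3]=\Pr_{S\sim\calS}[I_1\cup I_2\cup I_3\subseteq S]$, a quantity that depends only on the underlying set of indices (of size at most $6D$).

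For $\calS$, I will use a standard $k$-wise independent Bernoulli sampler with $k$ a small multiple of $D$ and bias $p=2\ell/n$. Concretely, fix a prime $q\in [n,2n]$ and let $f:\bbF_q\to \bbF_q$ range over polynomials of degree at most $12D-1$; setting $S_f := \{i\in[n]:f(i)\bmod \lfloor 1/p\rfloor = 0\}$ gives $q^{12D}=n^{O(D)}$ candidate sets enumerable in $n^{O(D)}$ time. Under uniform $f$ the indicators $\mathbf{1}[i\in S_f]$ are $12D$-wise independent with marginal $p$ (up to a factor of $2$), so $\Pr_f[T\subseteq S_f]=p^{|T|}$ for every $T$ of size at most $12D$, and $\E_f|S_f|\in [\ell,2\ell]$.

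To enforce the size constraint, I restrict $\calS$ to those $f$ with $|S_f|\in[\ell,6D\ell]$. Markov's inequality gives $\Pr_f[|S_f|>6D\ell]\leq 1/(3D)$, and Chebyshev (with the pairwise-independent variance bound $\mathrm{Var}\,|S_f|\leq \E|S_f|\leq 2\ell$) gives $\Pr_f[|S_f|<\ell]\leq O(1/\ell)$, so the event $\calE=\{|S_f|\in[\ell,6D\ell]\}$ has probability at least $1/2$ for $D$ and $\ell$ at least a small constant. An identical estimate on the law of $S_f$ conditional on $T\subseteq S_f$ (still valid because the $12D-|T|\geq 6D$ remaining bits on $[n]\setminus T$ retain enough independence for Markov/Chebyshev) gives $\Pr[\calE\mid T\subseteq S_f]\geq 1/2$. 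Hence conditioning on $\calE$ shifts each marginal $\Pr_\calS[T\subseteq S]$ from $p^{|T|}$ by at most a factor of $2$, placing
\[
C[I_1,I_2,I_3]\;\in\;\bigl[\tfrac{1}{2}p^{6D},\,1\bigr]\;\subseteq\;\bigl((\ell/2n)^{6D},\,1\bigr),
\]
as required. The step requiring the most care is the concentration estimate for $|S_f|$ conditioned on $T\subseteq S_f$: taking $k$ moderately larger than $6D$ (e.g.\ $k=12D$) is the cleanest way to make this an off-the-shelf application of standard moment inequalities, and the support size remains $n^{O(D)}$.
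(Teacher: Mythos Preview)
Your reduction to computing $\Pr_{S\sim\calS}[T\subseteq S]$ via the indicator identity for $\Sym(\calR_S(A_t)^{\ot D})[I]$ is exactly what the paper does, and is correct.

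Where you diverge is in constructing $\calS$. The paper does \emph{not} use a Bernoulli sampler plus conditioning. Instead it takes $\calH$ to be all \emph{non-constant} polynomials of degree $<6D$ over a field of size $n$, fixes an arbitrary $T\subseteq[n]$ with $|T|=\ell$, and sets $\calS=\{h^{-1}(T):h\in\calH\}$. Because every non-constant $h$ of degree $<6D$ hits each value between $1$ and $6D-1$ times, $|h^{-1}(T)|\in[\ell,(6D-1)\ell]$ \emph{deterministically}; no conditioning is needed. A short Vandermonde count then gives an exact closed form $\Pr[J\subseteq S]=\frac{n^{6D-r}\ell^{r}-\ell}{n^{6D}-n}$ depending only on $r=|J|$, which is manifestly in $((\ell/2n)^{6D},1)$.

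Your route is also valid, but buys you a noisier object: the sampler has support $n^{12D}$ rather than $n^{6D}$; the marginals are only known up to constant factors rather than exactly; and the entries $C[I]$ need no longer depend only on the number of distinct indices in $I$ (the joint law of a length-$n$ Reed--Solomon evaluation is not permutation-invariant, so conditioning on the global event $\calE$ can break the symmetry you implicitly use). None of this obstructs the lemma as stated---$C$ is still a fixed, computable tensor with entries in the claimed range---but it means each $C[I]$ must be tabulated by enumerating $\calS$ rather than read off from $|T(I)|$. The ``most care'' step you flagged is real but routine: using Chebyshev (not Markov) for the upper tail under the conditioning $T\subseteq S_f$ gives $\Pr[\calE\mid T\subseteq S_f]\geq 1-O(1/\ell)$ uniformly in $D$, which is what you need. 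The paper's preimage trick simply sidesteps all of this.
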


We prove Lemma~\ref{lem:analysis-aggregation-hd} by constructing a pseudorandom family of hash functions similar to a $k$-wise independent hash family.
Specifically, given a parameter $k\in \N$, we construct a family $\calH$ of functions $[n] \to [n]$ that satisfies the following:
\begin{enumerate}
    \item For a subset $T \subseteq [n]$, for any $r\leq k$ and any distinct elements $x_1,\dots, x_r \in [n]$, the probability $\Pr_{h\sim \calH}[ \forall i \leq r,\ h(x_i) \in T]$ only depends on $r$ and $|T|$,
    
    \item For any subset $T\subseteq [n]$ of size $\ell$, the cardinality of $h^{-1}(T)$ is $\Theta(\ell)$ for all $h\in \calH$.
\end{enumerate}
Note that the most standard construction of $k$-wise independent hash functions consists of all degree $k-1$ univariate polynomials, including constant polynomials, over a field $\F$ of size $n$.
This only satisfies the first requirement but not the second due to the constant polynomials.
We make a simple modification to the standard construction to satisfy both requirements.

We first state the following standard fact.

\begin{fact} [Vandermonde matrix]
\label{fact:vandermonde}
    Fix a finite field $\F$.
    Let $r, k\in \N$ such that $r \leq k$, and let $x_1,\dots, x_r$ be distinct values in $\F$.
    Then, the following $r \times k$ Vandermonde matrix
    \begin{equation*}
        \begin{bmatrix}
            1 & x_1 & \cdots & x_1^{k-1} \\
            1 & x_2 & \cdots & x_2^{k-1} \\
            \vdots & \vdots & \ddots & \vdots \\
            1 & x_r & \cdots & x_r^{k-1}
        \end{bmatrix}
    \end{equation*}
    is rank $r$.
\end{fact}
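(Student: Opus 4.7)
The plan is to reduce the claim about the $r \times k$ rectangular Vandermonde matrix to the classical identity for the determinant of a square Vandermonde matrix. Since $r \leq k$, the rank of the rectangular matrix is at least the rank of any $r \times r$ submatrix, so I would extract the leading square block $V$ with entries $V_{ij} = x_i^{j-1}$ for $i,j \in [r]$ and reduce the problem to showing $\det(V) \neq 0$ in $\F$. Because the matrix only has $r$ rows, the rank cannot exceed $r$, so establishing this single determinant nonvanishing pins the rank down exactly.

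Next I would establish the classical identity $\det(V) = \prod_{1 \leq i < j \leq r} (x_j - x_i)$. Two standard routes are available: one can treat $\det(V)$ formally as a polynomial in the indeterminates $x_1, \dots, x_r$, note that it vanishes whenever any two of them coincide (since two rows of $V$ become equal), conclude that each $x_j - x_i$ divides it in the polynomial ring, and then match total degree and a single monomial coefficient to identify the proportionality constant as $1$. Alternatively one can proceed by induction on $r$: subtract $x_1$ times column $j{-}1$ from column $j$ (right to left), so that the first row becomes $(1, 0, \dots, 0)$ and every other row $i$ acquires the factor $x_i - x_1$ which can be pulled out, reducing the computation to a Vandermonde of size $r-1$ in $x_2, \dots, x_r$. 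The inductive route is purely algebraic and goes through verbatim over any field, so that is the approach I would take.

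Finally, I would conclude by invoking distinctness: since $x_1, \dots, x_r$ are pairwise distinct elements of the field $\F$, each factor $x_j - x_i$ is a nonzero element of $\F$, and a product of nonzero elements in a field is nonzero. Hence $\det(V) \neq 0$ and $V$ is invertible, so the rectangular matrix has rank at least $r$, and combined with the trivial upper bound, rank exactly $r$. The only subtlety worth flagging is that we are working over an arbitrary finite field, so the proof should avoid any use of derivatives, ordering, or continuity; the column-operation induction is entirely formal and honors this constraint, making this step essentially routine rather than the main obstacle.
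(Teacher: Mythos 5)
Your proof is correct: restricting to the leading $r\times r$ block, computing its determinant as $\prod_{i<j}(x_j-x_i)$ by the column-operation induction (which is valid over any field), and invoking distinctness of the $x_i$ gives exactly the claimed rank. The paper states this as a standard fact without proof, and your argument is precisely the canonical one it implicitly relies on.
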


The troublesome functions in the standard construction of $k$-wise independent hash family are the constant polynomials that don't satisfy the second requirement.
Thus, we simply delete those from our hash family.

\begin{lemma} \label{lem:hash-family}
    Fix a finite field $\F$ with $|\F|=q$ and let $k\in \N$, $k\geq 2$. Let $\calH$ be the following family of non-constant polynomials of degree $\leq k-1$:
    \begin{equation*}
        \calH = \{a_0 + a_1 x + \cdots + a_{k-1} x^{k-1} \mid a_0,\dots, a_{k-1}\in \F, a_1,\dots,a_{k-1} \text{ not all zero} \} \mper
    \end{equation*}
    Fix any $T \subseteq \F$ with $|T| \geq 2$.
    For any $r\leq k$, any distinct $x_1,\dots,x_r \in \F$,
    \begin{equation*}
        \Pr_{h\sim H} \Brac{\forall i\in[r],\ h(x_i) \in T} = \frac{q^{k-r} |T|^r - |T|}{q^k-q} \mper
    \end{equation*}
\end{lemma}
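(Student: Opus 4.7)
The plan is a direct counting argument: enumerate polynomials in $\calH$ satisfying the evaluation constraints and divide by $|\calH|$.

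First I would compute $|\calH|$. Since there are $q^k$ polynomials of degree $\leq k-1$ in $\F[x]$ (one for each choice of coefficient tuple $(a_0,\dots,a_{k-1}) \in \F^k$), and exactly $q$ of these are constant (those with $a_1 = \cdots = a_{k-1} = 0$), we get $|\calH| = q^k - q$, matching the denominator.

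Next I would count the number of polynomials of degree $\leq k-1$ (constants allowed for now) such that $h(x_i) \in T$ for all $i \in [r]$. Consider the $\F$-linear evaluation map $\Phi : \F^k \to \F^r$ sending $(a_0,\dots,a_{k-1})$ to $(h(x_1),\dots,h(x_r))$ where $h(x) = \sum_{j=0}^{k-1} a_j x^j$. The matrix of $\Phi$ in the standard bases is exactly the $r\times k$ Vandermonde matrix on $x_1,\dots,x_r$, which has rank $r$ by Fact~\ref{fact:vandermonde} (using $r \leq k$ and distinctness). Hence $\Phi$ is surjective and every fiber has size $q^{k-r}$, so the number of polynomials (in all of $\F^k$) whose evaluation vector lies in $T^r$ is exactly $|T|^r \cdot q^{k-r}$.

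Then I would subtract off the constant polynomials meeting the constraint: a constant polynomial $h \equiv c$ satisfies $h(x_i) \in T$ for all $i$ iff $c \in T$, contributing exactly $|T|$ polynomials. Therefore the number of \emph{non-constant} polynomials satisfying the constraint is $q^{k-r}|T|^r - |T|$, and dividing by $|\calH| = q^k - q$ gives the claimed probability. The only minor thing worth noting is that the denominator is nonzero since $q \geq |T| \geq 2$ and $k \geq 2$, and the numerator is manifestly nonnegative because $q^{k-r} |T|^r \geq |T|^r \geq |T|$. There is no real obstacle here; the lemma is a straightforward application of the Vandermonde rank fact together with inclusion–exclusion removing the constants from the usual $k$-wise independent hash family.
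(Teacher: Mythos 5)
Your proof is correct and takes essentially the same route as the paper: both count solutions to the linear system via the rank-$r$ Vandermonde matrix (fibers of size $q^{k-r}$), then remove the $|T|$ constant polynomials and divide by $|\calH| = q^k - q$. Your version is marginally tidier — you subtract the constants in one step via the evaluation-map framing rather than the paper's case split on whether $b_1,\dots,b_r$ are all equal — but the underlying argument is the same.
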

\begin{proof}
    For any distinct $x_1,\dots, x_r \in \F$ and any $b_1,\dots,b_r \in T$, we consider the following linear system with variables $(a_0,a_1,\dots,a_{k-1})$,
    \begin{equation}
    \label{eq:hash-functions}
        \begin{bmatrix}
            1 & x_1 & \cdots & x_1^{k-1} \\
            1 & x_2 & \cdots & x_2^{k-1} \\
            \vdots & \vdots & \ddots & \vdots \\
            1 & x_r & \cdots & x_r^{k-1}
        \end{bmatrix}
        \begin{bmatrix}
            a_0 \\ a_1 \\ \vdots \\ a_{k-1}
        \end{bmatrix}
        =
        \begin{bmatrix}
            b_1 \\ b_2 \\ \vdots \\ b_r
        \end{bmatrix}
        \mper
    \end{equation}
    
    Each solution to the above corresponds to a polynomial $h(x) = a_0 + a_1 x + \cdots + a_{k-1} x^{k-1}$ such that $h(x_i) = b_i$ for all $i\in[r]$.
    Let $V \in \F^{r \times k}$ be the Vandermonde matrix in \pref{eq:hash-functions}.
    By Fact~\ref{fact:vandermonde}, $\rank(V) = r$, thus the solutions to \pref{eq:hash-functions} form an affine subspace of dimension $k-r$ which contains $q^{k-r}$ vectors in $\F^{k}$.
    We split into two cases:
    \begin{itemize}
        \item $b_1 = b_2 = \cdots = b_r$: note that $(a_0, a_1,\dots,a_{k-1}) = (b_1,0,\dots,0)$ is a solution to \pref{eq:hash-functions}, which corresponds to the constant polynomial $h(x) = 1$.
        This is also the only constant polynomial that satisfies \pref{eq:hash-functions}.
        
        \item $b_1,\dots,b_r$ not all equal: no constant polynomial satisfies \pref{eq:hash-functions}.
    \end{itemize}
    Thus, there are
    \begin{equation*}
        |T| \cdot (q^{k-r}-1) + (|T|^r-|T|) \cdot q^{k-r} = q^{k-r} |T|^r - |T|
    \end{equation*}
    number of non-constant polynomials $h$ that satisfy $h(x_i) = b_i$ for all $i\in[r]$.
    Since $|\calH| = q^k - q$, this completes the proof.
\end{proof}

With Lemma~\ref{lem:hash-family}, we can construct a desired collection $\calS$ of subsets of $[n]$ where each $S\in \calS$ has bounded cardinality.

\begin{lemma} \label{lem:collection-of-subsets}
    Let $n\in \N$ be a prime power and given $\ell, k \in \N$ such that $2\leq k \leq \ell \leq n$.
    There exists a $n^{O(k)}$ time algorithm that outputs a collection $\calS$ of $n^{k}-n$ subsets of $[n]$ such that each $S\in \calS$ satisfies $\ell \leq |S| \leq (k-1)\ell$ and for any $r\leq k$ and distinct indices $i_1,\dots,i_r \in [n]$,
    \begin{equation*}
        \Pr_{S\sim \calS} \Brac{i_1,\dots,i_r \in S} = \frac{n^{k-r}\ell^r - \ell}{n^k-n} \mper
    \end{equation*}
\end{lemma}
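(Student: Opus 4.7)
The plan is to invoke Lemma~\ref{lem:hash-family} in the natural way. Since $n$ is a prime power, identify $[n]$ with the finite field $\F$ of order $n$, and let $\calH$ be the family of non-constant polynomials over $\F$ of degree at most $k-1$ as in Lemma~\ref{lem:hash-family}, so $|\calH| = n^k - n$. Fix a subset $T \subseteq \F$ with $|T| = \ell$ to be chosen carefully, and set
\[
\calS \;\coloneqq\; \{ S_h \mid h \in \calH\}, \qquad S_h \;\coloneqq\; h^{-1}(T) \subseteq [n] \mper
\]
Enumerating the polynomials in $\calH$ and evaluating each preimage runs in $n^{O(k)}$ time, matching the runtime requirement.

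The cardinality $|\calS| = n^k - n$ is immediate from the construction. For distinct $i_1, \dots, i_r \in [n]$, membership $i_j \in S_h$ is equivalent to $h(i_j) \in T$; since the hypothesis $k \leq \ell$ guarantees $|T| = \ell \geq 2$, Lemma~\ref{lem:hash-family} applies with $q = n$ and yields
\[
\Pr_{S \sim \calS} \Brac{ i_1, \dots, i_r \in S } \;=\; \Pr_{h \sim \calH} \Brac{ h(i_1), \dots, h(i_r) \in T } \;=\; \frac{n^{k-r}\ell^r - \ell}{n^k - n} \mper
\]

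For the size bounds, the upper bound $|S_h| \leq (k-1)\ell$ is the easy half: every non-constant polynomial $h \in \calH$ of degree $d \leq k-1$ satisfies $|h^{-1}(t)| \leq d \leq k-1$ for each $t \in \F$, so $|S_h| = \sum_{t \in T}|h^{-1}(t)| \leq (k-1)\ell$. The lower bound $|S_h| \geq \ell$ is the main technical obstacle and the step I expect to require the most care: while $\E_T |h^{-1}(T)| = \ell$ for a uniformly random $T$ of size $\ell$, for a fixed $T$ the preimage $|h^{-1}(T)|$ can drop well below $\ell$ when $h$ has small image---for example $h(x) = x^{k-1}$ over $\F$ when $\gcd(k-1, n-1)$ is large. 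I would resolve this by choosing $T$ with algebraic structure---for instance as the image of a fixed $(k-1)$-to-$1$ map or as a union of fibers of a carefully chosen polynomial---so that $T$ intersects the image of every non-constant polynomial of degree $\leq k-1$ with enough total fiber mass, thereby ensuring $|h^{-1}(T)| \geq \ell$ uniformly over $h \in \calH$ while leaving the probability computation above intact.
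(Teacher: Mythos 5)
Your construction and probability computation via Lemma~\ref{lem:hash-family} are exactly the paper's, and your upper bound $|S_h| \leq (k-1)\ell$ is correct for the same reason. You have also correctly flagged the problematic step: the lower bound $|S_h| \geq \ell$. The paper's own argument for this step is in fact wrong as written: it asserts that for each $b \in T$ the equation $h(x) = b$ has at least one solution in $\F$, i.e.\ that every non-constant polynomial of degree $\leq k-1$ over $\F_n$ is surjective, which is false (over $\F_5$, $h(x) = x^2$ has image $\{0,1,4\}$, so $h^{-1}(\{2,3\}) = \emptyset$). There is a more global signal that the lemma cannot hold as stated: plugging $r=1$ into the probability formula gives $\Pr_S[i \in S] = \ell/n$, hence $\E_S|S| = \ell$; a lower bound $|S| \geq \ell$ on every $S \in \calS$ would then force $|S| = \ell$ identically, which the hash family does not achieve (degree-$\geq 2$ hash functions have variable fiber sizes). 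So no choice of $T$ repairs the claim at its stated strength.

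Your proposed remedy of choosing $T$ with algebraic structure so that it ``covers'' every image with enough fiber mass is speculative and cannot succeed in the stated parameter range; for instance with $n=5$, $k=\ell=3$ one can check by hand that for every size-$3$ subset $T \subseteq \F_5$ there is a degree-$2$ polynomial $h$ with $|h^{-1}(T)| \leq 2 < \ell$. A cleaner repair, closer in spirit to your observation that $\E_T|h^{-1}(T)| = \ell$, is to take $T$ to be a uniformly random $\ell$-subset of $\F_n$: each summand $|h^{-1}(b)|$ is bounded by $k-1$ and $\sum_b |h^{-1}(b)| = n$, so a concentration bound for sampling without replacement plus a union bound over the $n^k - n$ hash functions yields $|h^{-1}(T)| \geq \ell/2$ simultaneously for all $h \in \calH$ with positive probability once $\ell \gtrsim k^3\log n$. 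This relaxes the lower bound to $\Omega(\ell)$ and adds a mild polylogarithmic floor on $\ell$, both harmless for the paper's use of this lemma in Lemma~\ref{lem:analysis-aggregation-hd}, where $\ell = n^{\Omega(1)}$ and $k = O(KD)$.
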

\begin{proof}
    Let $\F$ be a field of size $n$ with a bijective map to $[n]$.
    With a slight abuse of notation, we will use $\F$ and $[n]$ interchangeably.
    Let $\calH$ be the set of non-constant polynomials of degree $\leq k-1$ defined in Lemma~\ref{lem:hash-family} such that $|\calH| = n^k - n$.
    Pick any fixed subset $T\subseteq \F$ of size $\ell$.
    For any $h\in \calH$, let $h^{-1}(T) \coloneqq \{x\in \F: h(x) \in T\}$, which we also view as a subset of $[n]$.
    Define
    \begin{equation*}
        \calS \coloneqq \{h^{-1}(T) \mid h \in \calH\} \mper
    \end{equation*}
    We will prove that $\calS$ is the desired collection of subsets.
    
    First, fix an $h\in \calH$.
    For each $b\in T$, since $h$ is not a constant polynomial and is of degree $\leq k-1$, $h(x) = b$ must have at least one solution and at most $k-1$ solutions.
    Thus,
    \begin{equation*}
        |T| \leq \Abs{h^{-1}(T)} \leq (k-1)|T|\mcom
    \end{equation*}
    meaning each $S\in \calS$ satisfies $\ell \leq |S| \leq (k-1)\ell$.
    
    Next, for $r\leq k$ and any distinct indices $i_1,\dots, i_r \in [n]$, let $x_1,\dots, x_r\in \F$ be their corresponding field elements.
    The probability that $i_1,\dots,i_r \in S$ over $S\sim \calS$ is exactly the probability that $h(x_1),\dots, h(x_r) \in T$ over $h\sim \calH$, hence by Lemma~\ref{lem:hash-family},
    \begin{equation*}
        \Pr_{S\sim \calS} \Brac{i_1,\dots,i_r \in S} = \frac{n^{k-r}\ell^r - \ell}{n^k-n} \mper
        \qedhere
    \end{equation*}
\end{proof}

Lemma~\ref{lem:analysis-aggregation-hd} is a simple corollary of Lemma~\ref{lem:collection-of-subsets}.

\begin{proof}[Proof of Lemma~\ref{lem:analysis-aggregation-hd}]
    The construction $\calS$ from Lemma~\ref{lem:collection-of-subsets} with parameter $k = 6D$ satisfies that all $S \in \calS$ has $\ell \leq |S| \leq 6D\ell$.
    Next, consider a multiset $I = \{i_1,\dots, i_{6D}\} \subseteq [n]$ of size $6D$, which we view as an index of a $6D$-th order symmetric tensor.
    Let $J = \{j_1,\dots, j_r\}$ be the set of $r$ unique elements in $I$.
    
    Since $\calR_S(A_t)$ is just zeroing out entries of $A_t$, for any $S\subseteq [n]$,
    \begin{equation*}
        \Paren{\Sym \Paren{\calR_S(A_t)^{\ot D} }}^{\ot 3}[I]
        = \1(J \subseteq S) \cdot \Sym\Paren{A_t^{\ot 3D}} [I]
    \end{equation*}

    By the guarantee of Lemma~\ref{lem:collection-of-subsets},
    \begin{equation*}
    \begin{aligned}
        \E_{S\sim \calS}\Brac{ \Paren{\Sym \Paren{\calR_S(A_t)^{\ot D} }}^{\ot 3}[I] }
        &= \Pr_{S\sim \calS}[J \subseteq S] \cdot \Sym\Paren{A_t^{\ot 3D}} [I] \\
        &= \frac{n^{6D-r} \ell^{r} - \ell}{n^{6D}-n} \cdot \Sym\Paren{A_t^{\ot 3D}} [I]
    \end{aligned}
    \end{equation*}
    Since $r$ can only range from $1$ to $6D$, $\frac{n^{6D-r} \ell^{r} - \ell}{n^{6D}-n}$ is between $(\ell/2n)^{6D}$ and $1$.
    Moreover, this coefficient only depends on the number of unique elements in $I$.
    This implies that when viewed as a coefficient tensor indexed by $I$, the entries only depend on the entry locations.
    This completes the proof.
\end{proof}

\subsection{Proof of Lemma~\ref{lem:analysis-D-root-hd}: \texorpdfstring{$D$}{D}th roots of polynomials}
\label{sec:D-root-hd}

In this section, we give an algorithm to desymmetrize a single noisy $D$th power and complete the proof of Lemma~\ref{lem:analysis-D-root-hd} that analyzes the last step in the algorithm. We restate it before continuing.

\begin{lemma}[Stable Computation of $D$-th Roots, Lemma~\ref{lem:analysis-D-root-hd} restated]
    Let $D,n \in \N$ and $\delta \geq 0$.
    Let $P \in \R^{n\times n}$ be an unknown symmetric matrix.
    Suppose $\wt{P_D}(x)$ is a homogeneous degree-$D$ polynomial in $n$ variables such that its coefficient tensor satisfies $\Norm{\wt{P_D} - \Sym(P^{\otimes D})}_F \leq \delta$.
    There is an algorithm that runs in $n^{O(D)}$ time and outputs $\wt{Q} \in \R^{n\times n}$ such that if $D$ is odd, then
    \begin{equation*}
        \Norm{\wt{Q} - P}_F \leq O(\sqrt{n}\delta^{1/D}) \mcom
    \end{equation*}
    and if $D$ is even, then
    \begin{equation*}
        \min_{\sigma \in \{\pm1\}} \Norm{ \wt{Q} - \sigma P }_{F} \leq O(n \delta^{1/3D}) \cdot \|P\|_{\max} \mper
    \end{equation*}
\end{lemma}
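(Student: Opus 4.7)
The idea is to evaluate the polynomial $\wt{P_D}$ (identified with its coefficient tensor) at a short list of probe vectors, recover scalar quantities of the form $(v^\top P v)^D$ to within pointwise error $\delta \|v\|^{2D}$ via Cauchy--Schwarz with $v^{\otimes 2D}$, invert the map $t\mapsto t^D$, and then reassemble $P$ by polarization identities. The odd and even cases are handled differently because $t\mapsto t^D$ preserves sign for odd $D$ but not for even $D$.

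For odd $D$, the real $D$-th root is a sign-preserving homeomorphism of $\R$ and is $1/D$-Hölder continuous, $|a - b| \leq O(1) \cdot |a^D - b^D|^{1/D}$. I would fix the probe set $\mathcal{F} := \{e_i\}_{i\in[n]} \cup \{e_i + e_j\}_{i < j}$, define $\hat{s}(v) := \mathrm{sign}(\wt{P_D}(v))\cdot |\wt{P_D}(v)|^{1/D}$ for each $v \in \mathcal{F}$, and conclude $|\hat{s}(v) - v^\top P v| = O(\delta^{1/D})$. Polarization then reads off the entries: $P_{ii} = \hat{s}(e_i)$ and $P_{ij} = \tfrac{1}{2}(\hat{s}(e_i + e_j) - \hat{s}(e_i) - \hat{s}(e_j))$ for $i \neq j$, and aggregating the entrywise errors yields the stated Frobenius bound.

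For even $D$, pointwise evaluation reveals only $|v^\top P v|$ since $(v^\top P v)^D \geq 0$, and recovery up to a global sign $\sigma \in \{\pm 1\}$ is the best one can hope for. First I would extract $p(v)^2$ (where $p(x) := x^\top P x$) by taking a positive $(D/2)$-th root of $|\wt{P_D}(v)|$, and then $|p(v)|$ by a square root. To coherently assign signs, I would pick a reference $v_0 \in \mathcal{F}$ maximizing $|p(v_0)|$; a simple case analysis shows $\max_{v \in \mathcal{F}} |p(v)| \geq \|P\|_{\max}$. Fixing $p(v_0) > 0$ by convention, the first-order expansion
\begin{equation*}
    p(v_0 + \epsilon v) = p(v_0) + 2\epsilon\, v_0^\top P v + \epsilon^2 p(v)
\end{equation*}
keeps the left-hand side positive for $\epsilon > 0$ sufficiently small, so $|\wt{P_D}(v_0 + \epsilon v)|^{1/D} \approx p(v_0 + \epsilon v)$ reveals $v_0^\top P v$ through its linear-in-$\epsilon$ coefficient. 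Iterating across probes (and over additional reference vectors as needed to span all of $P$) yields $P$ up to $\sigma$ by a final linear system. The compounded losses from the nested root extractions and the division by $p(v_0)$ produce the weaker exponent $1/(3D)$ and the multiplicative $\|P\|_{\max}$ in the stated bound.

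The main obstacle is the stability of this sign-disambiguation step when $|p(v)|$ is near the noise floor: one must pick $\epsilon$ small enough for the quadratic remainder $\epsilon^2 p(v)$ to be negligible, yet large enough for $\epsilon \cdot v_0^\top P v$ to dominate the propagated noise. This trade-off, combined with the requirement $|p(v_0)| \gtrsim \|P\|_{\max}$ (rather than simply an absolute constant), is what forces the weaker exponent $\delta^{1/(3D)}$ and the appearance of $\|P\|_{\max}$ in the final bound.
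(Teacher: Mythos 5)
Your approach is genuinely different from the paper's: the paper solves a degree-$2D$ sum-of-squares relaxation for a pseudo-distribution $\mu$ over $Q$ satisfying $\pE_\mu \|\Sym(Q^{\otimes D}) - \wt{P_D}\|_F^2 \leq \delta^2$, then extracts $P$ (for odd $D$) or $\Sym(P^{\otimes 2})$ (for even $D$) by evaluating linear functionals $\pE_\mu[\iprod{Q,zz^\top}]$ resp. $\pE_\mu[\iprod{Q,zz^\top}^2]$ against the pseudo-distribution and applying the SoS inequalities of Lemmas~\ref{lem:a-minus-b} and \ref{lem:a-squared-minus-b-squared}. You instead evaluate the given coefficient tensor at a fixed probe set, take scalar $D$-th roots, and polarize. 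Both are legitimate directions, but there are two substantive issues with your version as written.

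First, in the odd case your argument delivers a weaker conclusion than the lemma states. The probing/polarization scheme controls each \emph{entry} of $\wt{Q}-P$ by $O(\delta^{1/D})$, so summing over $\Theta(n^2)$ entries gives $\|\wt{Q}-P\|_F \leq O(n\,\delta^{1/D})$, not the claimed $O(\sqrt{n}\,\delta^{1/D})$. The paper's SoS route avoids this loss precisely because evaluating $\pE_\mu[Q(z)-P(z)]$ over unit vectors $z$ bounds the \emph{operator} norm $\|\pE_\mu Q - P\|_2 \leq 2\delta^{1/D}$ directly, and only then converts to Frobenius, paying a single $\sqrt{n}$. Your entrywise errors are correlated (all stem from the same tensor $E := \wt{P_D}-\Sym(P^{\otimes D})$ with $\|E\|_F\leq\delta$), but after the nonlinear $D$-th root this correlation is not obviously exploitable, so you would need a new idea to recover the $\sqrt{n}$.

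Second, the even-$D$ argument is a plan, not a proof. You correctly identify the central obstacle — disambiguating the sign of $p(v)$ coherently across probes — but the key steps are asserted rather than carried out: (i) the claim $\max_{v\in\mathcal F}|p(v)|\geq\|P\|_{\max}$ is false in general for $\mathcal F = \{e_i\}\cup\{e_i+e_j\}$ (one can make $p(e_i+e_j)=P_{ii}+P_{jj}+2P_{ij}$ cancel to zero when $P_{ij}$ is the largest entry); one only gets a factor-$2$ loss, or must enlarge $\mathcal F$ with $e_i-e_j$, and this needs to be stated; (ii) balancing the noise term $\eta/\epsilon$ from the nested root extractions against the bias term $\epsilon\|P\|_{\max}$ from the quadratic remainder is never carried out, and the claimed exponent $\delta^{1/(3D)}$ does not emerge from it — the trade-off you describe appears to give a different (and $P$-dependent) exponent, so it is not clear the stated bound is what your method yields; (iii) ``reassembling $P$'' from the perturbed probes requires sweeping over multiple reference vectors, each of which must have $|p(v_0)|$ bounded away from the noise floor, and this is left unaddressed. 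By contrast, the paper handles even $D$ by first using the SoS inequality $(a^2-b^2)^D \leq 2^{D-2}(a^D-b^D)^2$ to \emph{reduce} to the $D=2$ case with $\max$-norm error $O(\delta^{2/D})$, and then invoking the explicit entrywise algorithm of Lemma~\ref{lem:square-root-of-polynomial} — which is, in effect, a carefully case-analyzed version of exactly the reference-vector / polarization idea you are proposing, but at the cleanest degree. If you want your direct $D$-th-root route to work, it would be worth emulating that structure: reduce to $D=2$ first (e.g.\ by taking $(D/2)$-th roots pointwise, which you do gesture at) and then deploy a fully specified polarization algorithm there, with the case split on whether $P$ has a large diagonal entry.
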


Our algorithm is based on a sum-of-squares semidefinite relaxation for desymmetrization. We recall some minimal background needed to describe it below.
\subsubsection{Background on Sum-of-Squares}
We will use the sum-of-squares semidefinite programming method to design and analyze the algorithm to prove Lemma~\ref{lem:analysis-D-root-hd}. We direct the reader to the monograph~\cite{TCS-086} and the survey~\cite{BarakS14} for more details.

A \emph{pseudo-distribution} on $\R^n$ is a finitely supported \emph{signed} measure $\mu :\R^n \rightarrow \R$ such that $\sum_{x: \mu(x) \neq 0} \mu(x) = 1$. The associated \emph{pseudo-expectation} is a linear operator that assigns to every polynomial $f:\R^n \rightarrow \R$, the value $\pE_\mu f = \sum_{x: \mu(x) \neq 0} \mu(x) f(x)$ that we call the pseudo-expectation of $f$. We say that a pseudo-distribution $\mu$ on $\R^n$ has degree $d$ if $\pE_\mu[f^2] \geq 0$ for every polynomial $f$ on $\R^n$ of degree $\leq d/2$.

A pseudo-distribution of degree $d$ is said to satisfy a constraint $\{q \geq 0\}$ for any polynomial $q$ of degree $\leq d$ if for every square polynomial $p$ of degree $\leq d-\deg(q)$, $\pE_\mu[ p q] \geq 0$. We say that $\mu$ $\tau$-approximately satisfies a constraint $\{q \geq 0\}$ if for any sum-of-squares polynomial $p$, $\pE_\mu[pq] \geq - \tau \Norm{p}_2$ where $\Norm{p}_2$ is the $\ell_2$ norm of the coefficient vector of $p$. 

We will rely on the following basic fact about the pseudo-expectations. 

\begin{fact}[Pseudo-distribution Jensen] \label{fact:pseudo-expectation Jensen}
For any pseudo-distribution $\mu$ of degree $d$ on $\R^n$ and any polynomial $p$ such that $p^{2k}$ is of degree at most $d$, $\pE_{\mu}[p^{2k}] \geq \pE_{\mu}[p^2]^{k}$. 
\end{fact}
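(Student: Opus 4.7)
The plan is to derive the inequality from iterated pseudo-expectation Cauchy--Schwarz. Recall that Cauchy--Schwarz for a degree-$d$ pseudo-distribution $\mu$ states that $\pE_\mu[fg]^2 \leq \pE_\mu[f^2]\pE_\mu[g^2]$ for any polynomials $f, g$ with $\deg((\lambda f + g)^2) \leq d$; this follows from $\pE_\mu[(\lambda f + g)^2] \geq 0$ for all $\lambda \in \R$, which forces the discriminant of the resulting quadratic in $\lambda$ to be nonpositive. This is essentially the only SoS tool I will need.

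Introduce the abbreviation $a_j := \pE_\mu[p^{2j}]$ for $j = 0, 1, \ldots, k$, so that $a_0 = \pE_\mu[1] = 1$ and the goal is $a_k \geq a_1^k$. First, I establish log-concavity of the sequence $(a_j)$: for each $j \in \{1, \ldots, k-1\}$, applying Cauchy--Schwarz with $f = p^{j-1}$ and $g = p^{j+1}$ yields $a_j^2 = \pE_\mu[p^{j-1} \cdot p^{j+1}]^2 \leq \pE_\mu[p^{2(j-1)}]\,\pE_\mu[p^{2(j+1)}] = a_{j-1} a_{j+1}$. The degree bookkeeping works out precisely because the highest-degree object involved is $p^{2(j+1)}$, which has degree at most $2k\deg(p) \leq d$ by hypothesis --- this is exactly where the assumption $\deg(p^{2k}) \leq d$ is used, and the symmetric split of exponents is chosen to make the degree budget bind only at $j = k-1$.

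Assuming all $a_j$ are strictly positive, log-concavity gives $a_{j+1}/a_j \geq a_j/a_{j-1}$, so the ratios $a_j/a_{j-1}$ are nondecreasing in $j$. Since the first ratio is $a_1/a_0 = \pE_\mu[p^2]$, telescoping $a_k = a_0 \cdot \prod_{j=1}^k (a_j/a_{j-1})$ produces $a_k \geq \pE_\mu[p^2]^k$, which is the desired inequality.

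The only technical obstacle is handling possible zeros in the sequence, since the ratio argument implicitly divides. If $\pE_\mu[p^2] = 0$, the claim reduces to $\pE_\mu[p^{2k}] \geq 0$, which holds because $p^{2k} = (p^k)^2$ is a square of degree at most $d$ and $\mu$ is a degree-$d$ pseudo-distribution. Otherwise, letting $j^\star$ be the smallest index in $\{1, \ldots, k\}$ with $a_{j^\star} = 0$ (if any), we have $j^\star \geq 2$ because $a_1 > 0$ and $a_0 = 1$; the log-concavity inequality at index $j = j^\star - 1$ then gives $a_{j^\star - 1}^2 \leq a_{j^\star - 2} \cdot a_{j^\star} = 0$, contradicting the minimality of $j^\star$. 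Hence all $a_j$ with $j \leq k$ are strictly positive and the telescoping argument goes through unconditionally.
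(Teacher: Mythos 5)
The paper states this fact without proof (it is a standard pseudo-expectation inequality cited from the SoS literature), so there is no in-paper argument to compare against; your proposal supplies a complete proof, and it is the standard one. Your argument is correct: pseudo-expectation Cauchy--Schwarz applied to $f=p^{j-1}$, $g=p^{j+1}$ gives log-concavity $a_j^2\le a_{j-1}a_{j+1}$ of the even pseudo-moments, the degree budget is respected since the largest square appearing is $(\lambda p^{j-1}+p^{j+1})^2$ of degree at most $\deg(p^{2k})\le d$, and the telescoping of the nondecreasing ratios $a_j/a_{j-1}$ yields $a_k\ge a_1^k$. Your handling of possible zeros is also careful and correct: the case $\pE_\mu[p^2]=0$ reduces to $\pE_\mu[(p^k)^2]\ge 0$, and otherwise log-concavity propagates strict positivity down the sequence, so the division is legitimate. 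The only point glossed over is the degenerate case of the discriminant argument when $\pE_\mu[f^2]=0$ (where the quadratic in $\lambda$ collapses to an affine function and one instead concludes $\pE_\mu[fg]=0$), but Cauchy--Schwarz still holds there, so this is not a gap.
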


Of use to us is the following basic connection that forms the basis of the sum-of-squares algorithm. 

\begin{fact}[Sum-of-Squares Algorithm, \cite{parrilo2000structured,Lasserre01}] \label{fact:sos-algorithm}
Given a system of degree $\leq d$ polynomial constraints $\{q_i \geq 0\}$ in $n$ variables and the promise that there is a degree-$d$ pseudo-distribution satisfying $\{q_i \geq 0\}$ as constraints, there is a $n^{O(d)} \polylog( 1/\tau)$ time algorithm to find a pseudo-distribution of degree $d$ on $\R^n$ that $\tau$-approximately satisfies the constraints $\{q_i \geq 0\}$.
\end{fact}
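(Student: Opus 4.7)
The plan is to cast the desymmetrization as the polynomial optimization
\[
\min_{Q = Q^\top \in \R^{n\times n}} \Norm{\Sym(Q^{\otimes D}) - \wt{P_D}}_F^2
\]
and solve its degree-$\Theta(D)$ sum-of-squares relaxation via Fact~\ref{fact:sos-algorithm} in time $n^{O(D)}\polylog(1/\tau)$; choosing $\tau$ to be a sufficiently small inverse-polynomial keeps the total runtime $n^{O(D)}$. Since $Q = P$ witnesses feasibility with objective value $\leq \delta^2$, the resulting degree-$\Theta(D)$ pseudo-distribution $\mu$ over $n\times n$ symmetric matrices satisfies $\pE_\mu\Brac{\Norm{\Sym(Q^{\otimes D}) - \wt{P_D}}_F^2} \leq \delta^2 + \tau$. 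For odd $D$ the output is $\wt{Q} = \pE_\mu[Q]$; for even $D$ the sign ambiguity $Q \leftrightarrow -Q$ forces an additional sign-resolution pass that I describe below.

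For odd $D$ the analysis transfers the tensor-level pseudo-expectation bound into a uniform spectral bound on $\wt{Q}-P$. The Cauchy–Schwarz SoS identity $\Iprod{T,v^{\otimes 2D}}^2 \leq \Norm{T}_F^2\cdot \Norm{v}^{4D}$ applied to $T=\Sym(Q^{\otimes D}) - \wt{P_D}$, combined with $\Iprod{\Sym(Q^{\otimes D}), v^{\otimes 2D}} = (v^\top Q v)^D$ and the hypothesis on $\wt{P_D}$, yields
\[
\pE_\mu\Brac{\Paren{(v^\top Q v)^D - (v^\top P v)^D}^2} \leq O(\delta^2)\cdot \Norm{v}^{4D} \quad \forall v \in \R^n.
\]
Writing $Y = v^\top(Q-P)v$ and using the factorization $a^D - b^D = (a-b)\cdot f(a,b)$ with $f$ a sum of squares for odd $D$, one obtains the SoS identity $(a^D-b^D)^2 = Y^2 \cdot f(a,b)^2$. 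Combining an SoS-verifiable lower bound on $f(a,b)^2$ (which exploits the moment bound $\pE_\mu[(v^\top Q v)^{2D}] \leq R\Norm{v}^{4D}$ that follows from feasibility via Cauchy–Schwarz) with pseudo-distribution Jensen (Fact~\ref{fact:pseudo-expectation Jensen}) gives $\pE_\mu[Y^2] \leq O(\delta^{2/D})\Norm{v}^4$ uniformly in $v$. The contraction $|v^\top(\wt{Q}-P)v| = |\pE_\mu[Y]| \leq \sqrt{\pE_\mu[Y^2]}$ yields $\Norm{\wt{Q}-P}_{\mathrm{op}} \leq O(\delta^{1/D})$, and since $\wt{Q}-P$ is symmetric in $\R^{n\times n}$, $\Norm{\wt{Q}-P}_F \leq \sqrt{n}\,\Norm{\wt{Q}-P}_{\mathrm{op}}$ gives the claimed bound.

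For even $D$ the map $Q\mapsto \Sym(Q^{\otimes D})$ collapses $Q$ and $-Q$, so the odd-$D$ argument only recovers $|v^\top P v|$ (which is well-defined since $(v^\top Qv)^D$ is itself non-negative for even $D$), fixing $\wt{Q}$ up to a global sign. To resolve this sign, I plan to exploit the first-order $t$-coefficient of $\wt{P_D}(v + tw) \approx (v^\top P v + 2t\,v^\top P w + t^2 w^\top P w)^D$, which equals $2D(v^\top P v)^{D-1}(v^\top P w)$. Because $D-1$ is odd, $(v^\top P v)^{D-1}$ transmits the sign of $v^\top P v$, so anchoring at a pair where $|v^\top P v|$ is of order $\Norm{P}_{\max}$ lets us read $\mathrm{sign}(v^\top P v)\cdot (v^\top P w)$ off the $t^1$ coefficient, and propagating through a connected graph on $[n]$ fixes all entries up to a single global sign. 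The worse rate $\delta^{1/3D}$ emerges because the sign-resolution effectively inverts a chain of three $\delta^{1/D}$-scale operations (even-root extraction, division by a $(D-1)$-th power, and propagation through the graph), and $\Norm{P}_{\max}$ enters through the magnitude of the anchor value.

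The hard part will be the SoS upgrade from $(a^D-b^D)^2$-control to $(a-b)^2$-control for odd $D$: no unconditional polynomial inequality $(a-b)^{2D} \leq C(a^D-b^D)^2$ holds (consider $a=-b$), so the argument has to combine the factorization $(a^D-b^D)^2 = (a-b)^2 f(a,b)^2$ with SoS-verifiable lower bounds on $f(a,b)^2$ that rule out the bad regime using the moment bounds implied by the SoS feasibility constraint. A secondary issue is the even-$D$ sign-resolution, which requires locating an anchor at which $|v^\top P v|$ is not much smaller than $\Norm{P}_{\max}$ so that the inversion of the $(D-1)$-th power remains stable; this is exactly the role of the $\Norm{P}_{\max}$ factor in the final error bound.
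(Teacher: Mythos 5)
You have proved the wrong statement. The statement you were given is Fact~\ref{fact:sos-algorithm}, the Parrilo--Lasserre result that a degree-$d$ pseudo-distribution $\tau$-approximately satisfying a feasible system of polynomial constraints can be computed in time $n^{O(d)}\polylog(1/\tau)$; in the paper this is a black-box citation to \cite{parrilo2000structured,Lasserre01} with no proof supplied. Your proposal instead outlines an argument for Lemma~\ref{lem:analysis-D-root-hd} (stable computation of $D$-th roots), which is a downstream \emph{application} that invokes Fact~\ref{fact:sos-algorithm} as a subroutine. Casting desymmetrization as $\min_Q \Norm{\Sym(Q^{\otimes D})-\wt{P_D}}_F^2$, the SoS Cauchy--Schwarz step, the passage from $(a^D-b^D)^2$-control to $(a-b)^2$-control, and the even-$D$ sign resolution are all about that lemma, not about the algorithmic fact itself. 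A proof of the actual statement would have to argue that a degree-$d$ pseudo-expectation is represented by a moment matrix indexed by monomials of degree $\leq d/2$ (dimension $n^{O(d)}$), that the conditions $\pE_\mu[f^2]\geq 0$ and $\pE_\mu[p\, q_i]\geq 0$ are positive-semidefiniteness constraints on this matrix and on localized moment matrices, and that the resulting feasibility SDP of size $n^{O(d)}$ can be solved to accuracy $\tau$ by the ellipsoid or interior-point method in time polynomial in the program size and in $\log(1/\tau)$, which is exactly why the conclusion is only approximate satisfaction. None of this appears in your write-up.

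As a secondary point, even read as an attempt at Lemma~\ref{lem:analysis-D-root-hd}, your claim that ``no unconditional polynomial inequality $(a-b)^{2D}\leq C(a^D-b^D)^2$ holds (consider $a=-b$)'' is incorrect for odd $D$: at $a=-b$ one has $a^D-b^D = 2a^D \neq 0$ and $(a-b)^{2D} = 2^{2D}a^{2D} = 2^{2(D-1)}\cdot(2a^D)^2$, so the point is not a counterexample, and indeed the paper's Lemma~\ref{lem:a-minus-b} proves the unconditional SoS inequality $(a-b)^{2k}\leq 2^{2(k-1)}(a^k-b^k)^2$ for odd $k$. The conditional machinery you propose (SoS-verifiable lower bounds on the cofactor $f(a,b)^2$ using moment bounds from feasibility) is therefore unnecessary for the odd case; the counterexample $a=-b$ is genuine only for even $D$, which is precisely why the paper routes the even case through $a^2-b^2$ (Lemma~\ref{lem:a-squared-minus-b-squared}) and a separate square-root step, at the cost of the weaker $\delta^{1/3D}$ rate.
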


An (unconstrained) \emph{sum-of-squares} proof of non-negativity of a polynomial $p$ is an identity of the form $p = \sum_i q_i^2$ for polynomials $q_1, q_2, \ldots$. A \emph{sum-of-squares} proof of non-negativity of a polynomial $p$ on the solution set of a system of polynomial inequalities $\{r_i \geq 0\}_{i \leq R}$ is an identity of the form $p = \sum_i q_i^2 + \sum_{S \subseteq [R]} p_S \prod_{i\in S} r_i$. The \emph{degree} of such a sum-of-squares proof equals the maximum of the degree of $q_i^2$ and $\deg(p_S) + \prod_{i \in S} r_i$ over all $S$ appearing in the sum above. We write $\{q_i \geq 0\} \sststile{t}{} \{p \geq 0\}$ where $t$ is the degree of the sum-of-squares proof.

We will rely on the following basic connection between sum-of-squares proofs:

\begin{fact} \label{fact:proofs-to-pseudo-dist}
Suppose $\{q_i \geq 0\} \sststile{t}{} \{p \geq 0\}$ for some polynomials $q_i$ and $p$. Let $\mu$ be a pseudo-distribution satisfying $\{q_i \geq 0\}$ of degree $\geq t$. Then, $\pE_\mu[p] \geq 0$.
\end{fact}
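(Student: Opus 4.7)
The plan is to prove this by simply unwinding the definitions of a sum-of-squares proof, a pseudo-distribution, and what it means to satisfy a constraint, and then applying $\pE_\mu$ linearly to the SoS identity given by the hypothesis. There is essentially no heavy machinery: the content of the fact is that $\pE_\mu$ ``respects'' all the operations (sums, squares, multiplication by constraints) that appear in an SoS proof up to the allotted degree.

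More concretely, first I would invoke the hypothesis $\{q_i \geq 0\} \sststile{t}{} \{p \geq 0\}$ to obtain an explicit polynomial identity
\begin{equation*}
p \;=\; \sigma_0 \;+\; \sum_{S \subseteq [R]} \sigma_S \prod_{i \in S} q_i,
\end{equation*}
where each $\sigma_S$ (including $\sigma_0$) is a sum of squares and each summand on the right has total degree at most $t$. Next, I would apply $\pE_\mu$ to both sides. Since $\pE_\mu$ is a linear functional on polynomials of degree at most $\deg(\mu) \geq t$, and every term in the identity has degree $\leq t$, we get
\begin{equation*}
\pE_\mu[p] \;=\; \pE_\mu[\sigma_0] \;+\; \sum_{S \subseteq [R]} \pE_\mu\!\left[\sigma_S \prod_{i \in S} q_i\right].
\end{equation*}

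To conclude $\pE_\mu[p] \geq 0$, I would argue each summand on the right-hand side is nonnegative. Writing $\sigma_0 = \sum_j r_j^2$ with $\deg(r_j) \leq t/2$, the condition that $\mu$ is a degree-$\geq t$ pseudo-distribution gives $\pE_\mu[r_j^2] \geq 0$ for each $j$, so $\pE_\mu[\sigma_0] \geq 0$ by linearity. For $S \neq \emptyset$, writing $\sigma_S = \sum_j r_{S,j}^2$ with each $r_{S,j}^2 \prod_{i \in S} q_i$ of degree $\leq t$, the definition of ``$\mu$ satisfies $\{q_i \geq 0\}$'' (interpreted, as is standard in the SoS literature and throughout this paper, to cover products of the constraint polynomials as well) yields $\pE_\mu\!\left[ r_{S,j}^2 \prod_{i \in S} q_i \right] \geq 0$; summing over $j$ handles that term.

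The only mildly subtle point, and what I would flag as the ``obstacle,'' is step involving products $\prod_{i \in S} q_i$: the paper's stated definition of satisfaction is written for a single constraint $\{q \geq 0\}$, whereas the identity produced by the SoS proof multiplies several $q_i$'s together. The resolution is to take the conventional meaning that ``$\mu$ satisfies $\{q_i \geq 0\}_{i \in [R]}$'' means $\pE_\mu[\sigma \prod_{i \in S} q_i] \geq 0$ for every SoS $\sigma$ and every $S \subseteq [R]$ with $\deg(\sigma) + \sum_{i\in S}\deg(q_i) \leq \deg(\mu)$; equivalently, one can view $\prod_{i\in S} q_i$ as a single derived constraint in the system. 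Once this convention is in place, the argument above goes through verbatim and we obtain $\pE_\mu[p] \geq 0$.
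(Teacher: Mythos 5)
Your proof is correct and is the standard definition-unwinding argument; the paper states this as a Fact without proof, citing the SoS literature, so there is no in-paper argument to compare against. The one subtlety you flag — that the paper's definition of "satisfying a constraint" is phrased for a single $\{q \geq 0\}$ while the SoS certificate involves products $\prod_{i \in S} q_i$ — is genuine, and your resolution (adopting the standard convention that satisfaction extends to products of the constraint polynomials, with the degree bookkeeping you state) is exactly the intended reading.
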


We will also use the following sum-of-squares version of the almost triangle inequality. 

\begin{fact}[SoS Almost Triangle Inequality] \label{fact:sos-almost-triangle}
For indeterminates $a,b$ and any $k \in \N$, 
\[
\sststile{2k}{} \Set{(a+b)^{2k} \leq 2^{2k} (a^{2k} + b^{2k})}\mper
\]
\end{fact}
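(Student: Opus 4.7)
The plan is to write $2^{2k}(a^{2k}+b^{2k})-(a+b)^{2k}$ explicitly as a non-negative combination of squares of polynomials in $a,b$ by a linear change of variables. Introduce $c := \tfrac{a+b}{2}$ and $d := \tfrac{a-b}{2}$, both linear in $a,b$, so that $a=c+d$ and $b=c-d$. Since $(a+b)^{2k}=(2c)^{2k}=2^{2k}c^{2k}$, the desired identity reduces to analyzing $a^{2k}+b^{2k}$ in the variables $c,d$.

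Expanding with the binomial theorem gives
\begin{equation*}
a^{2k}+b^{2k} = (c+d)^{2k}+(c-d)^{2k} = 2\!\!\sum_{\substack{0\le i\le 2k\\ i \text{ even}}}\!\binom{2k}{i}c^{\,2k-i}d^{\,i},
\end{equation*}
because the odd-$i$ terms cancel. Subtracting, I would obtain
\begin{equation*}
2^{2k}\bigl(a^{2k}+b^{2k}\bigr)-(a+b)^{2k} \;=\; 2^{2k}c^{2k} \;+\; 2^{2k+1}\!\!\sum_{\substack{2\le i\le 2k\\ i \text{ even}}}\!\binom{2k}{i}c^{\,2k-i}d^{\,i},
\end{equation*}
where I have used $(2^{2k+1}-2^{2k})c^{2k}=2^{2k}c^{2k}$ to split off the $i=0$ term.

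The key observation is that in every remaining summand both exponents $2k-i$ and $i$ are even, so
\begin{equation*}
c^{\,2k-i}d^{\,i} \;=\; \bigl(c^{(2k-i)/2}\,d^{\,i/2}\bigr)^{2},
\end{equation*}
a square of a polynomial in $a,b$ (since $c,d$ are linear in $a,b$). All binomial coefficients $\binom{2k}{i}$ and the prefactor $2^{2k+1}$ are positive, and the isolated term $2^{2k}c^{2k}=2^{2k}(c^{k})^{2}$ is itself a square of a degree-$k$ polynomial in $a,b$. Hence the right-hand side is a non-negative combination of squares of polynomials of degree $k$, producing an SoS certificate of degree $2k$, which is exactly $\sststile{2k}{}\{(a+b)^{2k}\le 2^{2k}(a^{2k}+b^{2k})\}$ as required. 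There is no real obstacle here: the only thing to be careful about is the bookkeeping on the $i=0$ term and noting that the parity of the exponents in the $c,d$ variables lets every cross term be written as a perfect square rather than merely a non-negative polynomial.
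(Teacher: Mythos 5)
Your proof is correct. The paper states this as a standard fact without supplying a proof, so there is no in-paper argument to compare against; your substitution $c=\tfrac{a+b}{2}$, $d=\tfrac{a-b}{2}$ gives a clean explicit certificate. The identity
\[
2^{2k}\bigl(a^{2k}+b^{2k}\bigr)-(a+b)^{2k} \;=\; 2^{2k}\bigl(c^{k}\bigr)^{2} \;+\; 2^{2k+1}\!\!\sum_{\substack{2\le i\le 2k\\ i \text{ even}}}\!\binom{2k}{i}\bigl(c^{(2k-i)/2}d^{\,i/2}\bigr)^{2}
\]
checks out (e.g.\ for $k=1$ it gives $3a^2-2ab+3b^2$), every square on the right is the square of a degree-$k$ polynomial in $a,b$, and all coefficients are positive, so this is indeed a degree-$2k$ SoS proof. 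As a side remark, your certificate actually establishes the slightly stronger inequality $2(a+b)^{2k}\le 2^{2k}(a^{2k}+b^{2k})$, since the isolated term $2^{2k}c^{2k}$ equals $(a+b)^{2k}$ itself.
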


\subsubsection{Algorithm for \texorpdfstring{$D=2$}{D=2}: Square root of polynomial}
We first look at the case when $D=2$. Given a noisy version of a degree-$4$ polynomial $A(x)^2$ where $A(x) = x^\top A x$ for an unknown symmetric matrix $A$, we hope to construct $A$.
We remark that we can only recover up to sign.

\begin{lemma}[Taking square root of polynomials]
\label{lem:square-root-of-polynomial}
    Let $P \in \R^{n\times n}$ be an unknown symmetric matrix, and let $\delta = o(1) \cdot \|P\|_{\max}$.
    Suppose $\wt{P_2}(x)$ be a homogeneous degree-4 polynomial in $n$ variables such that its coefficient tensor satisfies $\Norm{\wt{P_2} - \Sym(P^{\ot 2})}_{\max} \leq \delta$. 
    Then, there is an algorithm that runs in $O(n^2)$ time and outputs a symmetric matrix  $\wt{Q} \in \R^{n\times n}$ such that
    \begin{equation*}
        \min_{\sigma \in \{\pm1\}} \Norm{ \wt{Q} - \sigma P }_{\max} \leq O(\delta^{1/6}) \cdot \|P\|_{\max} \mper
    \end{equation*}
\end{lemma}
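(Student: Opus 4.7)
The plan is a direct pivot-based reconstruction that reads only $O(n^2)$ specific coefficients of $\wt{P_2}$ and outputs $\wt{Q}$ entry-by-entry via scalar arithmetic. After rescaling so that $\|P\|_{\max}=1$ (restoring the scale at the end), I rely on the exact identities obtained by expanding $(x^\top P x)^2$: the coefficient of $x_i^4$ equals $P_{ii}^2$, of $x_i^3 x_j$ equals $4P_{ii}P_{ij}$, of $x_i^2 x_j^2$ equals $4P_{ij}^2 + 2P_{ii}P_{jj}$, and of $x_i^2 x_j x_k$ (for distinct $i,j,k$) equals $4P_{ii}P_{jk} + 8P_{ij}P_{ik}$. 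I set a threshold $\tau = \delta^{1/6}$ and split into two cases based on whether $P$ has a large diagonal entry.

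\emph{Case A (diagonal pivot).} If there exists $i^\star$ with $\wt{P_2}[x_{i^\star}^4] \geq \tau^2$, then $|P_{i^\star i^\star}| \gtrsim \tau$. Set $\wt{Q}_{i^\star i^\star} = \sqrt{\wt{P_2}[x_{i^\star}^4]}$ with positive sign (this fixes the overall $\pm$-ambiguity of $P$). Then define $\wt{Q}_{i^\star j} = \wt{P_2}[x_{i^\star}^3 x_j]/(4 \wt{Q}_{i^\star i^\star})$ and $\wt{Q}_{jk} = (\wt{P_2}[x_{i^\star}^2 x_j x_k] - 8 \wt{Q}_{i^\star j} \wt{Q}_{i^\star k})/(4 \wt{Q}_{i^\star i^\star})$. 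Each step divides by a quantity of size $\gtrsim \tau$, so two stages of error propagation yield entrywise error $O(\delta/\tau^3) = O(\delta^{1/2})$.

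\emph{Case B (no large diagonal entry).} Otherwise $\max_i |P_{ii}| \lesssim \tau$. Since $\|P\|_{\max} = 1$, there must exist an off-diagonal pair $(j^\star, k^\star)$ with $|P_{j^\star k^\star}| \geq 1 - O(\tau)$, which we detect by maximizing $\wt{P_2}[x_j^2 x_k^2] \approx 4 P_{jk}^2$. Set $\wt{Q}_{j^\star k^\star} = \tfrac{1}{2}\sqrt{\wt{P_2}[x_{j^\star}^2 x_{k^\star}^2]}$ with a chosen sign (fixing the $\pm$-ambiguity), and set all diagonals of $\wt{Q}$ to zero, which is an error of $O(\tau)$ since the true diagonals have magnitude at most $\tau$. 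For each $i \notin \{j^\star, k^\star\}$, recover $\wt{Q}_{j^\star i}$ and $\wt{Q}_{k^\star i}$ by solving the $2 \times 2$ linear system coming from the pair of coefficients $\wt{P_2}[x_{j^\star}^2 x_{k^\star} x_i]$ and $\wt{P_2}[x_{k^\star}^2 x_{j^\star} x_i]$. The coefficient matrix is $\bigl[\begin{smallmatrix}8 P_{j^\star k^\star} & 4 P_{j^\star j^\star}\\ 4 P_{k^\star k^\star} & 8 P_{j^\star k^\star}\end{smallmatrix}\bigr]$, with diagonal entries $\approx 8$ and off-diagonals $\lesssim 4\tau$; replacing the unknown $P_{j^\star j^\star}, P_{k^\star k^\star}$ by $0$ makes the matrix $O(1)$-conditioned and contributes only $O(\tau)$ error to the solution. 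Finally, recover $\wt{Q}_{lm}$ for $l, m \notin \{j^\star, k^\star\}$ from $\wt{P_2}[x_{j^\star} x_{k^\star} x_l x_m] \approx 8(P_{j^\star k^\star} P_{lm} + P_{j^\star l} P_{k^\star m} + P_{j^\star m} P_{k^\star l})$ by subtracting off the cross terms (estimated in the previous step) and dividing by $\wt{Q}_{j^\star k^\star} \approx 1$; this is stable because the pivot is $\approx 1$. The overall error in Case B is $O(\tau) = O(\delta^{1/6})$.

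\emph{Main obstacle.} The technically delicate part is Case B, where the diagonal entries of $\wt{Q}$ are deliberately left incorrect (set to zero rather than to their true values $\lesssim \tau$). This ``diagonal-zeroing'' introduces a $O(\tau)$ misspecification of the coefficient matrix in the $2 \times 2$ solve, yet because the pivot $|P_{j^\star k^\star}| \approx 1$ dwarfs the diagonal entries, the matrix remains well-conditioned and the misspecification contributes only $O(\tau)$ error to the solution, exactly at the level of our target precision. Balancing the Case A error $O(\delta/\tau^3) = O(\delta^{1/2})$ against the Case B error $O(\tau) = O(\delta^{1/6})$ at $\tau = \delta^{1/6}$ yields max-norm error $O(\delta^{1/6})$; undoing the $\|P\|_{\max}$-rescaling recovers the claimed bound $O(\delta^{1/6}) \cdot \|P\|_{\max}$.
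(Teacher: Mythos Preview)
Your proposal is correct and follows essentially the same two-case pivot-based reconstruction as the paper: the paper also splits on whether some $\wt{P_2}[i,i,i,i]$ exceeds a $\Theta(\delta^{1/3})$ threshold, uses the large diagonal entry (or, in the second case, the large off-diagonal entry together with zeroed diagonals) as a pivot, and reads off the remaining entries via the same coefficient identities. The only cosmetic differences are that you work with monomial coefficients rather than symmetric-tensor entries and phrase the Case-B pivot-row step as a $2\times 2$ solve (which collapses to the paper's direct division once the small diagonals are replaced by $0$); your sharper Case-A bound $O(\delta^{1/2})$ versus the paper's $O(\delta^{1/6})$ is immaterial since Case~B dominates.
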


We will need the following simple fact:

\begin{lemma}
\label{fact:square-root-error}
    Let $a > 0$ and $\delta\in \R$ such that $|\delta| \leq a^2$.
    Then $\Paren{\sqrt{a^2+\delta}-a}^2 \leq |\delta|$.
\end{lemma}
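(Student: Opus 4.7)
The plan is to rationalize the difference $\sqrt{a^2+\delta}-a$ using its algebraic conjugate. Since $|\delta|\leq a^2$, we have $a^2+\delta \geq 0$ so the square root is real, and the identity
\[
\sqrt{a^2+\delta}-a \;=\; \frac{(\sqrt{a^2+\delta}-a)(\sqrt{a^2+\delta}+a)}{\sqrt{a^2+\delta}+a} \;=\; \frac{\delta}{\sqrt{a^2+\delta}+a}
\]
is valid (the denominator is strictly positive since $a>0$). Squaring both sides reduces the claim to proving $\delta^2/(\sqrt{a^2+\delta}+a)^2 \leq |\delta|$, i.e.\ $(\sqrt{a^2+\delta}+a)^2 \geq |\delta|$.

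I would then split into two easy cases depending on the sign of $\delta$. If $\delta\geq 0$, then $\sqrt{a^2+\delta}\geq a$, so $(\sqrt{a^2+\delta}+a)^2 \geq 4a^2 \geq 4|\delta|$ using the hypothesis $|\delta|\leq a^2$. If $\delta<0$, then $(\sqrt{a^2+\delta}+a)^2 \geq a^2 \geq |\delta|$, again by the hypothesis. In either case the required inequality holds, and dividing through by $(\sqrt{a^2+\delta}+a)^2$ gives
\[
\Paren{\sqrt{a^2+\delta}-a}^2 \;=\; \frac{\delta^2}{(\sqrt{a^2+\delta}+a)^2} \;\leq\; \frac{|\delta|^2}{|\delta|} \;=\; |\delta|,
\]
with the understanding that the case $\delta=0$ is trivial.

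There is no real obstacle here; the only subtlety is making sure the rationalization step is justified, which is exactly where the hypothesis $|\delta|\leq a^2$ is used (to guarantee $a^2+\delta\geq 0$). The argument is completely elementary and amounts to a single application of the conjugate identity followed by a two-case comparison.
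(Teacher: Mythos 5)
Your proof is correct and rests on the same key idea as the paper's: the conjugate factorization $\delta = (\sqrt{a^2+\delta}+a)(\sqrt{a^2+\delta}-a)$. The paper phrases it slightly more directly, observing that $\sqrt{a^2+\delta}+a \geq |\sqrt{a^2+\delta}-a|$ and hence $|\delta| \geq (\sqrt{a^2+\delta}-a)^2$ immediately, whereas you rationalize and square first before bounding the denominator; the two are equivalent.
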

\begin{proof}
    If $\delta \geq 0$, then $\sqrt{a^2+\delta}+a \geq \sqrt{a^2+\delta}-a \geq 0$, and thus $\delta = \Paren{\sqrt{a^2+\delta}+a} \Paren{\sqrt{a^2+\delta}-a} \geq \Paren{\sqrt{a^2+\delta}-a}^2$.
    Similarly if $\delta \leq 0$, then $a + \sqrt{a^2+\delta} \geq a - \sqrt{a^2+\delta} \geq 0$, and similar calculations show that $|\delta| = \Paren{a+\sqrt{a^2+\delta}} \Paren{a-\sqrt{a^2+\delta}} \geq \Paren{a-\sqrt{a^2+\delta}}^2$.
\end{proof}

\begin{proof}[Proof of Lemma~\ref{lem:square-root-of-polynomial}]
    Denote $P_2 \coloneqq \Sym(P^{\ot 2})$, which is unknown to us.
    We will use the following identities for $i \neq j \neq k \neq \ell$:
    \begin{itemize}
        \item $P_2[i,i,i,i] = P[i,i]^2$.
        \item $P_2[i,i,i,j] = P[i,i] P[i,j]$.
        \item $P_2[i,i,j,j] = \frac{1}{3} P[i,i] P[j,j] + \frac{2}{3} P[i,j]^2$.
        \item $P_2[i,i,j,k] = \frac{1}{3} P[i,i] P[j,k] + \frac{2}{3} P[i,j] P[i,k]$.
        \item $P_2[i,j,k,\ell] = \frac{1}{3}(P[i,j] P[k,\ell] + P[i,k] P[j, \ell] + P[i,\ell] P[j,k])$.
    \end{itemize}
    First observe that from $P_2[i,i,i,i]$ we have the approximate magnitudes of the diagonal entries of $P$.
    Our algorithm simply constructs the matrix $\wt{Q}$ entry by entry.
    We split into two cases depending on the diagonal.
    
    \paragraph{$P$ has a large diagonal entry.} Suppose $\wt{P_2}[i,i,i,i] > 4\|P\|_{\max}^2 \delta^{1/3}$ for some $i\in[n]$.
    Then, we do the following:
    \begin{enumerate}
        \item Set $\wt{Q}[i,i] = \sqrt{\wt{P_2}[i,i,i,i]}$.
        \item For $j\neq i$, set $\wt{Q}[i,j] = \frac{\wt{P_2}[i,i,i,j]}{\wt{Q}[i,i]}$.

        \item For $j, k \neq i$, set $\wt{Q}[j,k] = \frac{3\wt{P_2}[i,i,j,k] - 2\wt{Q}[i,j] \wt{Q}[i,k]}{\wt{Q}[i,i]}$.
    \end{enumerate}
    By Fact~\ref{fact:square-root-error}, $\Abs{\wt{Q}[i,i] - \Abs{P[i,i]}} \leq \sqrt{\delta}$.
    We can assume without loss of generality that $\Abs{P[i,i]} = P[i,i]$ because if it's the opposite, we will simply recover $\wt{Q} \approx -P$.
    
    Next, for $j\neq i$, since $P_2[i,i,i,j] = P[i,i] P[i,j]$ and $\wt{Q}[i,i] > 2\|P\|_{\max} \delta^{1/6}$,
    \begin{equation*}
    \begin{aligned}
        \Abs{\wt{Q}[i,j] - P[i,j]}
        &= \Abs{\frac{\wt{P_2}[i,i,i,j] - \wt{Q}[i,i]P[i,j]}{\wt{Q}[i,i]} } \\
        &\leq \Abs{ \frac{(P[i,i]P[i,j] \pm \delta) - (P[i,i]\pm \sqrt{\delta}) P[i,j]}{\wt{Q}[i,i]} } \\
        &\leq \frac{\|P\|_{\max} \sqrt{\delta} + \delta}{\wt{Q}[i,i]}
        \leq  \delta^{1/3} \mper
    \end{aligned}
    \end{equation*}
    Here we assume that $\delta \ll \|P\|_{\max}^2$.
    
    Finally, for $i \neq j,k$ (here $j$ can equal $k$), similar calculations show that
    \begin{equation*}
        \Abs{ \wt{Q}[j,k] - P[j,k] } = \Abs{ \frac{3\wt{P_2}[i,i,j,k] - 2\wt{Q}[i,j] \wt{Q}[i,k] - \wt{Q}[i,i] P[j,k]}{\wt{Q}[i,i]} } \leq \frac{4\|P\|_{\max} \delta^{1/4} + O(\delta^{1/2})}{\wt{Q}[i,i]}
        \leq 4\delta^{1/6} \mper
    \end{equation*}
    
    Therefore, we obtain $\wt{Q}$ such that for all $i,j\in[n]$, $\Abs{ \wt{Q}[i,j] - P[i,j] } \leq 4\delta^{1/6}$.

    \paragraph{$P$ has small diagonal entries.}
    Suppose $\wt{P_2}[i,i,i,i] \leq 4\|P\|_{\max}^2 \delta^{1/3}$ for all $i\in[n]$.
    Since $P_2[i,i,j,j] = P[i,i]P[j,j] + P[i,j]^2$, there must be a pair $(i,j)$ such that $\wt{P_2}[i,i,j,j] \geq \|P\|_{\max}^2(1-O(\delta^{1/3}))$.
    Then we do the following:
    
    \begin{enumerate}
        \item Set $\wt{Q}[i,j] = \sqrt{\wt{P_2}[i,i,j,j]}$.
        
        \item Set $\wt{Q}[k,k] = 0$ for all $k\in[n]$.
        
        \item For $k\neq i \neq j$, set $\wt{Q}[i,k] = \frac{3\wt{P_2}[i,i,j,k]}{2 \wt{Q}[i,j]}$ and $\wt{Q}[j,k] = \frac{3\wt{P_2}[i,j,j,k]}{2 \wt{Q}[i,j]}$.
        
        \item For $k \neq \ell \neq i \neq j$, set $\wt{Q}[k,\ell] = \frac{3\wt{P_2}[i,j,k,\ell] - \wt{Q}[i,k] \wt{Q}[j,\ell] - \wt{Q}[i,\ell] \wt{Q}[j,k]}{\wt{Q}[i,j]}$.
    \end{enumerate}
    
    First, for each $k\in[n]$, $\wt{Q}[k,k] = 0$, thus
    \begin{equation*}
        \Abs{ \wt{Q}[k,k] - P[k,k] } = \Abs{P[k,k]} \leq 2\|P\|_{\max} \delta^{1/6} \mper
    \end{equation*}
    And since $P[i,j]^2 = \wt{P_2}[i,i,j,j] - P[i,i] P[j,j] \pm \delta = \wt{P_2}[i,i,j,j] \pm 4\|P\|_{\max}^2 \delta^{1/3} \pm O(\delta)$, by Fact~\ref{fact:square-root-error},
    \begin{equation*}
        \Abs{ \wt{Q}[i,j] - |P[i,j]| } \leq 2 \|P\|_{\max} \delta^{1/6} \mper
    \end{equation*}
    We can again assume without loss of generality that $|P[i,j]| = P[i,j]$.
    
    Next, for $k\neq i,j$, since $\wt{Q}[i,j] \geq \|P\|_{\max}(1-O(\delta^{1/3}))$ and $|P[i,k]| \leq \|P\|_{\max}$,
    \begin{equation*}
    \begin{aligned}
        \Abs{ \wt{Q}[i,k] - P[i,k] } 
        &= \Abs{ \frac{3 \wt{P_2}[i,i,j,k] - 2\wt{Q}[i,j] P[i,k]}{2\wt{Q}[i,j]} } \\
        &= \Abs{ \frac{ (P[i,i]P[j,k] + 2P[i,j] P[i,k] \pm \delta) - 2 (P[i,j] \pm 2\|P\|_{\max} \delta^{1/6}) P[i,k]}{2\wt{Q}[i,j]} } \\
        &\leq 6\|P\|_{\max} \delta^{1/6} \mper
    \end{aligned}
    \end{equation*}
    For $k \neq \ell \neq i \neq j$, similar calculations show that
    \begin{equation*}
        \Abs{ \wt{Q}[k,\ell] - P[k,\ell] }
        = \Abs{ \frac{3 \wt{P_2}[i,j,k,\ell] - \wt{Q}[i,k] \wt{Q}[j,\ell] - \wt{Q}[i,\ell] \wt{Q}[j,k] - \wt{Q}[i,j] P[k,\ell]}{\wt{Q}[i,j]} }
        \leq O(\|P\|_{\max} \delta^{1/6}) \mper
    \end{equation*}
    Therefore, we obtain $\wt{Q}$ such that for all $i,j\in[n]$, $\Abs{ \wt{Q}[i,j] - P[i,j] } \leq O(\|P\|_{\max}\delta^{1/6})$.
\end{proof}

\subsubsection{Algorithm for the general case: \texorpdfstring{$D$}{D}-th root of polynomial}
We now focus on taking $D$-th of polynomials for $D\geq 2$.
We will use the SoS algorithm.

\begin{lemma}
\label{lem:sum-of-ab}
    Let $a,b$ be indeterminates, and let $k \in \N$ be an even integer, $k\geq 2$. Then,
    \begin{equation*}
        \sststile{k}{a,b} \Set{\sum_{i=0}^k a^i b^{k-i} \geq \frac{1}{2}(a^k + b^k) } \mper
    \end{equation*}
\end{lemma}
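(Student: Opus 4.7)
The plan is to exhibit an explicit degree-$k$ SoS identity of the form
\begin{equation*}
    2\sum_{i=0}^{k} a^i b^{k-i} - (a^k + b^k) \;=\; (a+b)^2 \cdot R(a,b) \mcom
\end{equation*}
where $R(a,b)$ is manifestly a sum of squares. This will immediately yield the statement after dividing by $2$, with each square having degree $k/2$.

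First, I would establish the telescoping identity
\begin{equation*}
    (a+b) \sum_{i=0}^{k-1} a^i b^{k-1-i} \;=\; 2\sum_{i=0}^{k} a^i b^{k-i} - (a^k+b^k) \mcom
\end{equation*}
which follows by direct expansion: distributing $(a+b)$ shifts indices and gives $\sum_{i=1}^{k} a^i b^{k-i} + \sum_{i=0}^{k-1} a^i b^{k-i}$, in which each ``middle'' monomial $a^i b^{k-i}$ for $1 \leq i \leq k-1$ appears twice and the boundary monomials $a^k, b^k$ once each. This step does not require the parity of $k$.

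Next, I would factor a second copy of $(a+b)$ out of $\sum_{i=0}^{k-1} a^i b^{k-1-i}$. Using $\sum_{i=0}^{k-1} a^i b^{k-1-i} = (a^k - b^k)/(a-b)$ together with the identity $a^k - b^k = (a^2 - b^2)\bigl(a^{k-2} + a^{k-4}b^2 + \cdots + b^{k-2}\bigr)$, which is valid precisely because $k$ is even, yields
\begin{equation*}
    \sum_{i=0}^{k-1} a^i b^{k-1-i} \;=\; (a+b)\sum_{j=0}^{(k-2)/2} a^{k-2-2j}b^{2j} \mper
\end{equation*}
Here the evenness of $k$ is crucial: it is what lets me pull out a second factor of $(a+b)$. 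Combining the two displayed identities gives
\begin{equation*}
    2\sum_{i=0}^{k} a^i b^{k-i} - (a^k + b^k)
    \;=\; (a+b)^2 \sum_{j=0}^{(k-2)/2} a^{k-2-2j}b^{2j} \mper
\end{equation*}

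Finally, I would observe that because $k$ is even, every exponent $k-2-2j$ and $2j$ appearing on the right is even, so
\begin{equation*}
    \sum_{j=0}^{(k-2)/2} a^{k-2-2j}b^{2j} \;=\; \sum_{j=0}^{(k-2)/2} \bigl( a^{(k-2)/2 - j} b^{j} \bigr)^{2} \mper
\end{equation*}
Multiplying by $(a+b)^2$ and dividing by $2$ produces the explicit decomposition
\begin{equation*}
    \sum_{i=0}^{k} a^i b^{k-i} - \tfrac{1}{2}(a^k+b^k)
    \;=\; \tfrac{1}{2}\sum_{j=0}^{(k-2)/2} \Bigl( (a+b)\, a^{(k-2)/2 - j} b^{j} \Bigr)^{2} \mcom
\end{equation*}
which is a sum of squares of degree-$k/2$ polynomials in $a,b$, hence a degree-$k$ SoS certificate. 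There is no real obstacle; the only step that actually uses the parity hypothesis is the factorization $a^k - b^k = (a-b)(a+b)(\cdots)$, and once that factor of $(a+b)$ is secured the squared binomial $(a+b)^2$ absorbs it and the rest is non-negative term by term.
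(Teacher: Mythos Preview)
Your proof is correct and takes a genuinely different route from the paper. The paper argues by splitting $\sum_{i=0}^k a^i b^{k-i}$ into even-$i$ and odd-$i$ terms, applying the degree-$2$ SoS inequality $ab \geq -\tfrac{1}{2}(a^2+b^2)$ to each odd term (multiplied by a square), and then checking by algebra that the resulting lower bound telescopes to $\tfrac{1}{2}(a^k+b^k)$. You instead produce an exact polynomial identity $2\sum_{i=0}^k a^i b^{k-i} - (a^k+b^k) = (a+b)^2 \sum_{j} (a^{(k-2)/2-j} b^{j})^2$ by factoring $(a+b)$ twice out of the left-hand side, the second factor coming from $a^k-b^k = (a^2-b^2)(a^{k-2}+a^{k-4}b^2+\cdots+b^{k-2})$ when $k$ is even. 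Your argument is slightly more elementary in that it never invokes an auxiliary SoS inequality---it gives a closed-form SoS certificate directly---while the paper's approach makes the mechanism (odd terms are the only obstruction, and they are controlled by the AM--GM-type bound) a bit more visible. Both yield degree-$k$ certificates, and both use the evenness of $k$ in essentially the same place: to ensure all ``leftover'' monomials have even exponents and are therefore squares.
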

\begin{proof}
    Observe that for odd $i \leq k-1$, $a^i b^{k-i} = (ab)\cdot a^{i-1} b^{k-i-1}$ where $a^{i-1} b^{k-i-1}$ is a squared polynomial of degree $k-2$.
    Further, we know that $\sststile{2}{a,b} \Set{ ab \geq -\frac{1}{2}(a^2+b^2) }$.
    Thus, we split $\sum_{i=0}^k a^i b^{k-i}$ into even and odd terms:
    \begin{equation*}
        \sststile{k}{a,b} \sum_{i=0}^k a^i b^{k-i}
        = \sum_{i \text{ even}} a^i b^{k-i} + \sum_{i \text{ odd}} a^i b^{k-i}
        \geq \sum_{j=0}^{k/2} a^{2j} b^{k-2j} - \frac{1}{2}(a^2+b^2) \sum_{j=0}^{k/2-1} a^{2j} b^{k-2-2j} \mcom
    \end{equation*}
    using the fact that $a^{2j} b^{k-2-2j}$ is a squared polynomial.
    Next, observe that
    \begin{equation*}
        \frac{1}{2}(a^2+b^2)(a^{k-2} + a^{k-4} b^2 + \cdots + a^2 b^{k-4} + b^{k-2}) = \frac{1}{2}(a^k + b^k) + (a^{k-2}b^2 + \cdots + a^2 b^{k-2}) \mper
    \end{equation*}
    Thus, we have
    \begin{equation*}
        \sststile{k}{a,b} \Set{ \sum_{i=0}^k a^i b^{k-i} \geq \frac{1}{2}(a^k + b^k) } \mper
        \qedhere
    \end{equation*}
\end{proof}

\begin{lemma} \label{lem:a-minus-b}
    Let $a,b$ be indeterminates, and let $k \in \N$ be an odd integer.
    Then,
    \begin{equation*}
        \sststile{2k}{a,b} \Set{ (a-b)^{2k} \leq 2^{2(k-1)} (a^k - b^k)^2 } \mper
    \end{equation*}
\end{lemma}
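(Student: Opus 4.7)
The proof hinges on the linear change of variables $s = a+b$, $t = a-b$, under which the SoS decomposition becomes essentially transparent. First I would expand
\begin{equation*}
    a^k - b^k = \frac{(s+t)^k - (s-t)^k}{2^k} = \frac{1}{2^{k-1}} \sum_{\substack{j \text{ odd}\\ 1\leq j\leq k}} \binom{k}{j} s^{k-j}t^j,
\end{equation*}
where the even-$j$ terms of the two binomial expansions cancel. Factoring out $t$, this rewrites as
\begin{equation*}
    2^{k-1}(a^k-b^k) \;=\; t \cdot \widetilde{R}(s,t), \qquad \widetilde{R}(s,t) \coloneqq \sum_{\substack{j \text{ odd}\\ 1\leq j\leq k}}\binom{k}{j}\, s^{k-j}\, t^{j-1}.
\end{equation*}

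\textbf{Key observation.} Since $k$ is odd and $j$ ranges over odd values, both exponents $k-j$ and $j-1$ are even in every term of $\widetilde{R}$. Hence every monomial of $\widetilde{R}$ has the form $s^{2\alpha}t^{2\beta}$ (with $\alpha+\beta=(k-1)/2$) and carries a positive coefficient, so every monomial of $\widetilde{R}(s,t)^2$ is a perfect square $\bigl(s^{\alpha_1+\alpha_2}t^{\beta_1+\beta_2}\bigr)^2$ with a positive coefficient. The unique monomial of $\widetilde{R}^2$ of $s$-degree zero forces $\alpha_1=\alpha_2=0$ and thus $\beta_1=\beta_2=(k-1)/2$, contributing exactly $\binom{k}{k}^2 t^{2(k-1)} = t^{2(k-1)}$; all remaining monomials of $\widetilde{R}^2$ carry $s$-degree $\geq 2$ and are themselves individual squares. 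Therefore
\begin{equation*}
    \widetilde{R}(s,t)^2 - t^{2(k-1)} \;\geq\; 0 \qquad \text{as an SoS in } (s,t).
\end{equation*}

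\textbf{Conclusion.} Multiplying by the square $t^2 = (a-b)^2$ gives
\begin{equation*}
    2^{2(k-1)}(a^k-b^k)^2 - (a-b)^{2k} \;=\; t^2\,\widetilde{R}(s,t)^2 - t^{2k} \;=\; t^2\bigl(\widetilde{R}(s,t)^2 - t^{2(k-1)}\bigr) \;\geq\; 0,
\end{equation*}
using that products of SoS polynomials are SoS via $(\sum p_i^2)(\sum q_j^2) = \sum (p_iq_j)^2$. Since the substitution $(s,t) = (a+b,a-b)$ is a linear bijection, squares in $(s,t)$ pull back to squares in $(a,b)$ of the same degree, so the identity above is a degree-$2k$ SoS proof in the original variables, establishing the claim.

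\textbf{Main obstacle.} The entire substance of the argument lies in choosing the right coordinates: in $(a,b)$ the inequality looks opaque, and a direct attempt by induction on odd $k$ (say using the recursion $L_k = aL_{k-1} + b^{k-1}$ with $L_k = (a^k-b^k)/(a-b)$, and trying to show that $4L_k - (a-b)^2 L_{k-2}$ is SoS) is feasible but cumbersome, requiring one to extract the expected $(a+b)^2$ factor by hand and verify positive-definiteness of the quotient. Once we pass to $(s,t)$, a parity argument forces every surviving monomial of $\widetilde{R}^2$ to be an individual square, and the SoS decomposition can be read off.
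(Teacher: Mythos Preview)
Your proof is correct and takes a genuinely different route from the paper's. The paper factors $a^k-b^k=(a-b)\sum_{i=0}^{k-1}a^ib^{k-1-i}$, then invokes two auxiliary SoS inequalities: Lemma~\ref{lem:sum-of-ab} (which lower-bounds $\sum_i a^ib^{k-1-i}$ by $\tfrac12(a^{k-1}+b^{k-1})$ in SoS of degree $k-1$) and the SoS almost-triangle inequality $(a-b)^{k-1}\le 2^{k-2}(a^{k-1}+b^{k-1})$. Chaining these gives the degree-$2k$ proof.

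Your change of variables $(s,t)=(a+b,a-b)$ is cleaner and entirely self-contained: the parity observation that $k-j$ and $j-1$ are both even forces every monomial of $\widetilde R(s,t)^2$ to be a nonnegative multiple of a perfect square, and the only $s$-free term is exactly $t^{2(k-1)}$. So $\widetilde R^2-t^{2(k-1)}$ is visibly SoS, multiplying by $t^2$ finishes, and the linear substitution preserves both SoS-ness and degree. This bypasses the need for any auxiliary lemma. The paper's approach has the advantage of reusing tools it has already built (and which it needs elsewhere, e.g.\ for the even-$D$ case in Lemma~\ref{lem:a-squared-minus-b-squared}), but for this statement in isolation your argument is more direct and arguably more illuminating.
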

\begin{proof}
    The statement is clearly true for $k=1$, so we can assume that $k \geq 3$.
    First note that $(a^k - b^k)^2 = (a-b)^2 \Paren{\sum_{i=0}^{k-1} a^i b^{k-1-i}}^2$.
    Since $k-1$ is even, by Lemma~\ref{lem:sum-of-ab} we have that $\sststile{k-1}{a,b} \Set{ \sum_{i=0}^{k-1} a^i b^{k-1-i} \geq \frac{1}{2}(a^{k-1} + b^{k-1}) }$, hence
    \begin{equation*}
        \sststile{2k}{a,b} \Set{ (a^k - b^k)^2 \geq (a-b)^2 \cdot \frac{1}{4}(a^{k-1} + b^{k-1})^2 } \mper
    \end{equation*}
    Next, since $k-1$ is even, by SoS almost triangle inequality (Fact~\ref{fact:sos-almost-triangle}),
    \begin{equation*}
        \sststile{k-1}{a,b} \Set{ (a-b)^{k-1} \leq 2^{k-2} (a^{k-1} + b^{k-1}) } \mper
    \end{equation*}
    Thus, $\sststile{2k}{a,b} \Set{(a-b)^{2k} \leq (a-b)^2 \cdot 2^{2(k-2)} (a^{k-1}+b^{k-1})^2}$, and we have
    \begin{equation*}
        \sststile{2k}{a,b} \Set{ 2^{2(k-1)} (a^k - b^k)^2 - (a-b)^{2k} \geq 0 } \mper
    \end{equation*}
    This completes the proof.
\end{proof}

\begin{lemma} \label{lem:a-squared-minus-b-squared}
    Let $a,b$ be indeterminates, and let $k \in \N$.
    Then,
    \begin{equation*}
        \sststile{4k}{a,b} \Set{ (a^2 - b^2)^{2k} \leq 2^{2(k-1)} (a^{2k} - b^{2k})^2} \mper
    \end{equation*}
\end{lemma}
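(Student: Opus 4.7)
The plan is to reduce to Fact~\ref{fact:sos-almost-triangle} via the standard factorization $a^{2k} - b^{2k} = (a^2 - b^2) \cdot S(a,b)$, where $S(a,b) \coloneqq \sum_{i=0}^{k-1} a^{2i} b^{2(k-1-i)}$. Squaring this identity gives $(a^{2k}-b^{2k})^2 = (a^2-b^2)^2 \, S(a,b)^2$. Since $(a^2-b^2)^2$ is itself a square of degree $4$, multiplying any SoS certificate by it preserves the SoS property while adding $4$ to the degree. Hence it suffices to produce a degree-$(4k-4)$ SoS certificate of
\[
\sststile{4k-4}{a,b} \Set{ 2^{2(k-1)} \, S(a,b)^2 \geq (a^2-b^2)^{2k-2} },
\]
after which multiplying by $(a^2-b^2)^2$ yields the desired degree-$4k$ certificate.

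To prove this intermediate inequality, I would bound each side separately against $a^{4k-4} + b^{4k-4}$. For the upper bound on $(a^2-b^2)^{2k-2}$, Fact~\ref{fact:sos-almost-triangle} applied to the indeterminates $a^2$ and $-b^2$ with even exponent $2k-2$ gives
\[
\sststile{4k-4}{a,b} \Set{ (a^2-b^2)^{2k-2} \leq 2^{2k-2} \bigl( a^{4k-4} + b^{4k-4} \bigr) }.
\]
For the matching lower bound on $S(a,b)^2$, I would expand
\[
S(a,b)^2 = \sum_{i,j=0}^{k-1} a^{2(i+j)} b^{2(2k-2-i-j)},
\]
observe that the extremal pairs $(i,j)=(k-1,k-1)$ and $(i,j)=(0,0)$ produce the monomials $a^{4k-4}$ and $b^{4k-4}$, and note that every remaining summand is a square monomial since all exponents are even and nonnegative. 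Thus $S(a,b)^2 - (a^{4k-4} + b^{4k-4})$ is manifestly a sum of squared monomials of degree $\leq 4k-4$, giving $\sststile{4k-4}{a,b} \{S(a,b)^2 \geq a^{4k-4} + b^{4k-4}\}$.

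Chaining these two bounds establishes the intermediate inequality with the prescribed constant $2^{2(k-1)}$, and multiplying through by the square $(a^2-b^2)^2$ then yields a degree-$4k$ SoS certificate for the original statement. I do not anticipate any real obstacle: the factorization sidesteps the parity restriction appearing in Lemma~\ref{lem:a-minus-b} because $(a^{2k}-b^{2k})$ always factors through $(a^2-b^2)$ regardless of the parity of $k$, the upper bound is a direct invocation of Fact~\ref{fact:sos-almost-triangle}, and the lower bound is a term-by-term inspection of a positive expansion. The only bookkeeping is checking that the final degree is $4k-4+4 = 4k$, as required.
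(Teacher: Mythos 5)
Your proof is correct for $k \geq 2$ and takes a genuinely different route from the paper's. Both arguments begin by factoring $a^{2k}-b^{2k}=(a^2-b^2)S(a,b)$ with $S(a,b)=\sum_{i=0}^{k-1}a^{2i}b^{2(k-1-i)}$, pulling out the square $(a^2-b^2)^2$, and reducing to a degree-$(4k-4)$ SoS certificate for $2^{2(k-1)}S^2-(a^2-b^2)^{2(k-1)}$. From there the paper writes this expression as a difference of squares $\bigl(2^{k-1}S-(a^2-b^2)^{k-1}\bigr)\bigl(2^{k-1}S+(a^2-b^2)^{k-1}\bigr)$ and shows each factor is SoS by expanding $(a^2-b^2)^{k-1}$ binomially and observing that every coefficient $2^{k-1}\pm(-1)^i\binom{k-1}{i}$ is nonnegative; the product of two SoS polynomials is SoS. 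You instead sandwich through the leading-term polynomial $a^{4k-4}+b^{4k-4}$: the difference $S^2-(a^{4k-4}+b^{4k-4})$ is a nonnegative combination of squared monomials by direct expansion, and $2^{2(k-1)}(a^{4k-4}+b^{4k-4})-(a^2-b^2)^{2k-2}$ is SoS of degree $4k-4$ by substituting $a\mapsto a^2$, $b\mapsto -b^2$ into Fact~\ref{fact:sos-almost-triangle}. Your route is more elementary in that it reuses an already-stated fact and avoids the binomial-coefficient bookkeeping, while the paper's is slightly slicker in exhibiting a direct factorization rather than a two-step chaining bound. One small gap: your claim that $S^2-(a^{4k-4}+b^{4k-4})$ is SoS fails for $k=1$, where the two ``extremal pairs'' $(0,0)$ and $(k-1,k-1)$ coincide and $S^2=1<2=a^0+b^0$, so the chain $1\geq 2\geq 1$ breaks; but for $k=1$ the target inequality reads $(a^2-b^2)^2\leq(a^2-b^2)^2$ and is an identity, so a one-line separate remark closes the gap.
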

\begin{proof}
    First note that $a^{2k} - b^{2k} = (a^2 - b^2) \sum_{i=0}^{k-1} a^{2(k-1-i)} b^{2i}$, thus
    \begin{equation*}
        2^{2(k-1)}(a^{2k} - b^{2k})^2 - (a^2 - b^2)^{2k}
        = (a^2-b^2)^2 \Paren{ \Paren{2^{k-1} \sum_{i=0}^{k-1} a^{2(k-1-i)} b^{2i} }^2 - (a^2-b^2)^{2(k-1)} } \mper
    \end{equation*}
    Next, $(a^2-b^2)^{k-1} = \sum_{i=0}^{k-1} \binom{k-1}{i} a^{2(k-1-i)} b^{2i}$,
    \begin{equation*}
        2^{k-1} \sum_{i=0}^{k-1} a^{2(k-1-i)} b^{2i} \pm (a^2 - b^2)^{k-1}
        = \sum_{i=0}^{k-1} \Paren{2^{k-1} \pm (-1)^i \binom{k-1}{i}} a^{2(k-1-i)} b^{2i} \mcom
    \end{equation*}
    which are both sum-of-squares polynomials since $\binom{k-1}{i} < 2^{k-1}$.
    By the fact that $a^2 - b^2 = (a-b)(a+b)$, we get that $2^{2(k-1)}(a^{2k} - b^{2k})^2 - (a^2 - b^2)^{2k}$ is a sum-of-squares polynomial, completing the proof.
\end{proof}

\begin{proof}[Proof of Lemma~\ref{lem:analysis-D-root-hd}]
    Let $Q$ be an $n \times n$ symmetric matrix-valued indeterminate.
    We solve for a degree-$2D$ pseudo-distribution $\mu$ in $Q$ such that
    \begin{equation*}
        \pE_{\mu} \Norm{ \Sym(Q^{\otimes D}) - \wt{P_D} }_F^2 \leq \delta^2 \mper
    \end{equation*}
    Note that $\Norm{ \Sym(Q^{\otimes D}) - \wt{P_D} }_F^2$ is a degree-$2D$ polynomial in $D$.
    Such a pseudo-distribution is feasible since the matrix $P$ satisfies the inequality.
    Then, since $\sststile{2}{a,b} (a+b)^2 \geq \frac{1}{2}a^2 - b^2$,
    \begin{equation*}
        \pE_{\mu} \Norm{ \Sym(Q^{\otimes D}) - \wt{P_D} }_F^2
        \geq \frac{1}{2} \pE_{\mu} \Norm{ \Sym(Q^{\otimes D}) - \Sym(P^{\otimes D}) }_F^2 - \pE_{\mu} \Norm{ \Sym(P^{\otimes D}) - \wt{P_D} }_F^2 \mper
    \end{equation*}
    Thus, we can conclude that
    \begin{equation} \label{eq:SymQD-minus-SymPD}
        \pE_{\mu} \Norm{ \Sym(Q^{\otimes D}) - \Sym(P^{\otimes D}) }_F^2 \leq 4\delta^2 \mper
    \end{equation}
    Consider a fixed $z\in \R^{n}$ such that $\|z\|_2 = 1$.
    With slight abuse of notation, we denote $Q(z) = z^\top Qz$ as a linear polynomial in $Q$.
    Then, by pseudo-distribution Jensen inequality (Fact~\ref{fact:pseudo-expectation Jensen}),
    \begin{equation} \label{eq:QZ-minus-PZ}
    \begin{aligned}
        \pE_{\mu} \Paren{Q(z)^D - P(z)^D}^2
        &= \pE_{\mu}\Brac{ \Iprod{\Sym(Q^{\otimes D}) - \Sym(P^{\otimes D}), z^{\otimes 2D}}^2 } \\
        &\leq \pE_{\mu}  \Norm{ \Sym(Q^{\otimes D}) - \Sym(P^{\otimes D}) }_F^2 \cdot \|z\|_2^{4D} \\
        &\leq 4\delta^2 \mper
    \end{aligned}
    \end{equation}
    
    \paragraph{$D$ is odd.}
    By pseudo-distribution Jensen inequality (Fact~\ref{fact:pseudo-expectation Jensen}), \pref{lem:a-minus-b} and \pref{eq:QZ-minus-PZ}, we have that
    \begin{equation*}
        \Paren{\pE_{\mu} [Q(z)-P(z)] }^{2D}
        \leq \pE_{\mu} \Paren{Q(z)-P(z)}^{2D}
        \leq 2^{2(D-1)} \cdot \pE_{\mu} \Paren{Q(z)^D - P(z)^D}^2
        \leq 2^{2D} \delta^2 \mper
    \end{equation*}
    The above holds for any unit vector $z \in \R^{n}$, and since $\pE_{\mu}[Q(z)-P(z)] = \Iprod{\pE_{\mu} [Q]- P, zz^\top}$,
    \begin{equation*}
        \Norm{ \pE_{\mu} [Q]- P}_2 = \max_{z:\|z\|_2=1} \Abs{\Iprod{\pE_{\mu} [Q]- P, zz^\top}} \leq 2 \delta^{1/D} \mper
    \end{equation*}
    As Frobenius norm and spectral norm is the same up to a factor $\sqrt{n}$, we get that $\Norm{ \pE_{\mu} [Q]- P}_F \leq 2\sqrt{n} \delta^{1/D}$.
    Thus, setting $\wt{Q} = \pE_{\mu}[Q]$, we get the desired result.
    
    \paragraph{$D$ is even.}
    By pseudo-distribution Jensen inequality (Fact~\ref{fact:pseudo-expectation Jensen}), \pref{lem:a-squared-minus-b-squared} (with $k=D/2$) and \pref{eq:QZ-minus-PZ}, we have that
    \begin{equation*}
    \begin{aligned}
        \pE_{\mu} \Brac{Q(z)^2 - P(z)^2}^D
        &\leq \pE_{\mu} \Brac{\Paren{Q(z)^2 - P(z)^2}^D } \\
        &\leq 2^{D-2} \cdot \pE_{\mu} \Paren{Q(z)^D - P(z)^D}^2
        \leq 2^{D} \delta^2 \mper
    \end{aligned}
    \end{equation*}
    Thus, $\Abs{\pE_{\mu} \Brac{Q(z)^2} - P(z)^2} \leq 2\delta^{2/D}$ holds for all unit vectors $z$.
    This means that we get a tensor $\wt{P_2}$ such that
    \begin{equation*}
        \Norm{ \wt{P_2} - \Sym(P^{\ot 2}) }_{\max} \leq 2\delta^{2/D} \mper
    \end{equation*}
    Now we can apply Lemma~\ref{lem:square-root-of-polynomial} and obtain a symmetric matrix $\wt{Q}\in \R^{n\times n}$ such that
    \begin{equation*}
        \min_{\sigma \in \{\pm1\}} \Norm{ \wt{Q} - \sigma P }_{\max} \leq O(\delta^{1/3D}) \cdot \|P\|_{\max} \mper
    \end{equation*}
    This implies a Frobenius norm bound $\min_{\sigma \in \{\pm1\}} \Norm{ \wt{Q} - \sigma P }_{F} \leq O(n\delta^{1/3D}) \cdot \|P\|_{\max}$.
\end{proof}

\section{Decomposing Power-Sums of High-degree Polynomials}
In this section, we give a generalization of our algorithm for decomposing power-sums of quadratics to power-sums of arbitrary degree polynomials. The outline of the algorithm remains the same with one main difference in the span finding step (Algorithm~\ref{algo:span-finding-hd}).
For the quadratic case, we sample a random polynomial $p(y)$ of degree $2$ and work with the polynomial $p(y)^D$ in span-finding; for generic degree-$K$ polynomials, we will  instead sample a random polynomial of degree $2(K-1)D$.

For the technical analysis, despite following the same outline, there are two major differences from the quadratic case: 1) the reduction from $\calU_D = \calV_D$ to proving feasibility of a linear system (described in Section~\ref{sec:U-equals-V-hd}), and 2) the analysis of $V$ and its null space (described in Section~\ref{sec:analysis-of-V-hd}).
For (1), the difference arises from the structure of the partial derivatives; we will derive the analogous linear system in Section~\ref{sec:U-equals-V-high-deg}.
For (2), the major difference arises from the fact that the equation $\sum B_t(y)^D p_t(y) + G(y)q(y) = 0$ (equation~\ref{eq:null-space-high-deg}) for a random degree $2(K-1)D$ polynomial $G(y)$ and random degree $K$ polynomials $B_t(y)$, has a lot more solutions; we will state the analogous $V$ matrix and identify the correct null space in Section~\ref{sec:analysis-of-V-high-deg}.




The main result of this section is the following theorem that is analogous to Theorem~\ref{thm:main-theorem-hd}.

\begin{theorem} \label{thm:main-theorem-high-degree}
    There is an algorithm that takes input parameters $n,m,D,K\in \N$, an accuracy parameter $\tau>0$, and the coefficient tensor $\wh{P}$ of a degree-$3KD$ polynomial $\wh{P}$ in $n$ variables with total bit complexity $\size(\wh{P})$, runs in time $(\size(\wh{P})n)^{O(KD)} \polylog (1/\tau)$, and outputs a sequence of symmetric tensors $\wt{A}_1,\wt{A}_2,\ldots, \wt{A}_m \in (\R^{n})^{\ot K}$ with the following guarantee.

    Suppose $\wh{P}(x) = \sum_{t=1}^m A_t(x)^{3D} + E(x)$ where each $A_t$ is a symmetric $K$-th order tensor of independent $\calN(0,1)$ entries, $\|E\|_F\leq n^{-O(D)}$, and $m \leq (\frac{n}{\polylog(n)})^{\frac{2KD}{5K-4}}$.
    Then, with probability at least $0.99$ over the draw of $A_t$s and internal randomness of the algorithm, for odd $D$,
    \begin{equation*}
        \min_{\pi \in \bbS_m} \max_{t \in [m]} \Norm{\wt{A}_t - A_{\pi(t)}}_F \leq \poly(n) \Paren{ \|E\|_F^{1/D} + \tau^{1/D} } \mcom
    \end{equation*}
    and for even $D$,
    \begin{equation*}
        \min_{\pi \in \bbS_m} \max_{t \in [m]} \min_{\sigma\in\{\pm1\}} \Norm{\wt{A}_t - \sigma A_{\pi(t)}}_F \leq \poly(n) \Paren{ \|E\|_F^{1/3D} + \tau^{1/3D} } \mper
    \end{equation*}
\end{theorem}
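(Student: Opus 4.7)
The plan is to follow exactly the blueprint of Theorem~\ref{thm:main-theorem-hd}, instantiating Algorithm~\ref{algo:overall-algo-hd} with the generic parameter $K$ in place of $K=2$, and proving degree-$K$ analogs of each of the five black-box lemmas. First I fix the parameters: choosing $\ell = n^{(5K-4)/(3K)} \cdot \polylog^{-1}(n)$ balances the two constraints that will surface in the analysis, namely $m\,\ell^{(K-1)D} \lesssim n^{(K-1)D}$ (from the restricted-partials step, needed so that the restricted $2D$-th partial derivatives still span the full target subspace $\calV_D$) and $m \lesssim \ell^{2D/(K-1)}$ (from the new null-space analysis in span-finding). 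Both are satisfied by $m \leq (n/\polylog(n))^{2KD/(5K-4)}$, which is where the exponent in the theorem comes from. With this choice of $\ell$, the collection $\calS$ from Lemma~\ref{lem:analysis-aggregation-hd} instantiated with parameter $3KD$ in place of $6D$ plays the role of the aggregating family.

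Next I carry out each step of Algorithm~\ref{algo:overall-algo-hd} on each restriction $S \in \calS$. Step~\ref{step:partial-der-hd} is the natural analog of Lemma~\ref{lem:analysis-restricted-partials-hd}: the $(K-1)D$-th order partial derivatives of $\wh{P}$, when pulled back by the restriction $M_S$, form a subspace whose rank equals that of $\calV_D = \spn(B_t(y)^D y_T)$, and this rank equality again reduces to feasibility of a linear system of the shape \pref{eq:U-equals-V-linear-system-matrix-form} with bucket profiles now ranging over $\gamma_1 + K\gamma_2 = (K-1)D$; a singular value lower bound on the corresponding matrix $L$ (generalizing Lemma~\ref{lem:singular-value-of-L}) is proved by the same graph-matrix / deviation-matrix toolbox described in Section~\ref{sec:overview-singular-val}. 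Step~\ref{step:desym-hd} (desymmetrization), Step~\ref{step:aggregation-hd} (aggregation via Lemma~\ref{lem:analysis-aggregation-hd}), Step~\ref{step:tensor-decomp-hd} (tensor decomposition via Fact~\ref{fact:tensor-decomposition-algo} together with a degree-$K$ condition-number bound on the columns $\Sym(A_t^{\otimes D})$), and Step~\ref{step:D-root} (Lemma~\ref{lem:analysis-D-root-hd}) are all degree-agnostic: they port over verbatim, once a condition-number bound on the matrix with $\Sym(A_t^{\otimes D})$ as columns is established, which in turn reduces to another off-diagonal graph-matrix estimate.

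The main obstacle is Step~\ref{step:span-finding-hd}, i.e.\ the degree-$K$ analog of Lemma~\ref{lem:V-singular-value-hd}. In the quadratic case, the intersection $\calV_D \cap \calV_p$ reduces to classifying solutions of $\sum_t B_t(y)^D q_t(y) = 0$, and the key point is that the obvious $\binom{m}{2}$ ``swap'' solutions $(q_{t_1},q_{t_2}) = (B_{t_2}^D,-B_{t_1}^D)$ are the only ones. When $K > 2$ the analogous equation becomes
\begin{equation*}
\sum_{t=1}^m B_t(y)^D p_t(y) + G(y)\,q(y) = 0,
\end{equation*}
where $G$ is the random degree-$2(K-1)D$ probe polynomial and $q$ now has degree $KD$; because $\deg p_t = 2(K-1)D > KD$ for $K\geq 3$, the solution space acquires \emph{new} families beyond the swaps, namely for each $t$ the pair $(p_t,q) = (G,-B_t^D)$ and, more generally, solutions of the form $p_t = G\,h_t$, $p_s = -B_s^D h_s$ for degree-$(K-2)D$ polynomials $h_{\cdot}$. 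The key technical claim I need to establish is that these ``trivial'' families exhaust the null space, and that the corresponding augmented matrix $V$ has the claimed rank with a polynomially large smallest nonzero singular value. I prove this by writing the linear system explicitly in terms of symmetrized tensor products of $B_t$'s and $G$, subtracting off the rank due to the trivial families (constructing the analog of $N$ from Definition~\ref{def:N-matrix}), and applying the graph-matrix trace moment argument of Section~\ref{sec:overview-singular-val} to the resulting off-diagonal deviation matrix; the combinatorics of the shapes/graphs is where the exponent $\frac{2KD}{5K-4}$ becomes tight, and where most of the work lies.
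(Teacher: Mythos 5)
Your high-level plan — generalize the pipeline of Algorithm~\ref{algo:overall-algo-hd}, prove degree-$K$ analogs of each black-box lemma, and identify the augmented null space of $\sum_t B_t(y)^D p_t(y) + G(y)q(y) = 0$ as the main new difficulty — does match the paper's strategy, and you correctly spot that $G$ must have degree $2(K-1)D$ (not $2D$ as in the quadratic case) and that the new ``trivial'' families including $(p_t,q)=(G,\pm B_t^D)$ and swap solutions decorated with degree-$(K-2)D$ monomial multipliers must be subtracted off. However, several of your concrete choices are wrong in ways that would break the argument.

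First, the order of partial derivatives. You propose taking $(K-1)D$-th order derivatives; the paper takes $2D$-th order derivatives for every $K$, exactly as in the quadratic case. This is forced by the rest of your own calculation: a degree-$3KD$ input differentiated $2D$ times has degree $(3K-2)D$, and the target space $\calV_D = \spn(B_t(y)^D y_T)$ with $\deg(B_t^D)=KD$ then needs $\deg(y_T) = (3K-2)D - KD = 2(K-1)D$, which is the degree of $p_t$ you use in the span-finding equation. If you actually took $(K-1)D$ derivatives, you would land in degree $(2K+1)D$ and $y_T$ would have degree $(K+1)D$, contradicting your later claim that $\deg p_t = 2(K-1)D$. (For $K>4$, taking $(K-1)D$ derivatives of $\sum_t A_t^{3D}$ wouldn't even leave a factor of $A_t^D$ behind.) Relatedly, the bucket-profile constraint you write, $\gamma_1 + K\gamma_2 = (K-1)D$, has only two bucket types; for degree-$K$ components the correct definition (Definition~\ref{def:generalized-bucket-profile}) has $K$ bucket types with $\sum_{i=1}^K i\gamma_i = 2D$ and $\deg(\gamma) = \sum_i (K-i)\gamma_i$.

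Second, the parameter balance is off. The two constraints that actually arise are $m \lesssim \ell^{KD/2}$ (from the null-space / span-finding analysis, roughly because the $m$ vectors $B_t^D$ with $\sim\ell^{KD}$ coordinates must be well-conditioned) and $m\,\ell^{2(K-1)D} \lesssim n^{2D}$ (from the dimension count: the $\sim n^{2D}$ restricted partials must generate the $\sim m\ell^{2(K-1)D}$-dimensional $\calV_D$). Balancing these gives $\ell = n^{4/(5K-4)}$, which is $o(n)$ for all $K\geq 2$, and $m \leq \wt{O}(n^{2KD/(5K-4)})$. Your stated constraints $m\,\ell^{(K-1)D}\lesssim n^{(K-1)D}$ and $m\lesssim \ell^{2D/(K-1)}$ balance at $\ell = n^{(K-1)^2/((K-1)^2+2)}$ and $m \approx n^{2D(K-1)/((K-1)^2+2)}$, which happens to coincide with the target exponent only at $K=2$; for $K\geq 3$ it gives a different (and incorrect) bound. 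Your stated $\ell = n^{(5K-4)/(3K)}$ doesn't follow from your own constraints either and exceeds $n$ already at $K=2$. You will need to rederive the constraints from the actual analogs of Lemmas~\ref{lem:U-equals-V-high-deg} and~\ref{lem:V-singular-value-high-degree} and verify the exponent $2KD/(5K-4)$ is what emerges.
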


\begin{remark}
    The upper bound on the number of polynomials $m \leq \wt{O}(n^{\frac{2KD}{5K-4}})$ comes from the requirement $m \leq \wt{O}(\ell^{KD/2})$ and $m\ell^{2D(K-1)} \leq \wt{O}(n^{2D})$ from Lemma~\ref{lem:analysis-restricted-partials-high-degree} and Lemma~\ref{lem:span-finding-high-degree}.
    The best $m$ we get is by setting $\ell = n^{\frac{4}{5K-4}}$ (which is $o(n)$ when $K \geq 2$), which gives our bound on $m$.
    Note that for $K=2$, we get $m \leq \wt{O}(n^{2D/3})$.
\end{remark}

The first step of the algorithm is identical: given a noisy version of the polynomial $P(x) = \sum_{t=1}^m A_t(x)^{3D}$ where $A_t(x)$ is a degree-$K$ polynomial, we take $2D$ partial derivatives and restrict to a fixed subset of $\ell=o(n)$ variables, resulting in a polynomial of degree $(3K-2)D$.
We first define the notion of the relevant restriction operator analogous to Definition~\ref{def:restriction-matrix}.

\begin{definition}[Restriction operator]
\label{def:restriction-operator-high-degree}
    Given a restriction matrix $M$ corresponding to a set $S \subseteq [n]$ with $|S| = \ell$, we associate it with an operator $\calR_M$ defined as follows:
    for a $k$-th order tensor $T \in (\R^{n})^{\ot k}$, we write $\calR_M(T) \in (\R^{n})^{\ot k}$ 
    which is the symmetric $k$-th order tensor obtained by zeroing out the entry of $T$ indexed by $I$ if any index in $I$ is not in $S$.
\end{definition}

Given a restriction matrix $M$, let $B_t(y) = A_t \circ M(y)$, and
\begin{equation*}
    \calV_D \coloneqq \spn\Paren{ B_t(y)^D y_T \mid t\in[m], T\in[\ell]^{2(K-1)D} } \mper
\end{equation*}
Same as the analysis of Lemma~\ref{lem:analysis-restricted-partials-hd}, we prove that for small enough $m, \ell$, the linear span $\calU_D$ of the restricted partials of $P$ is \emph{equal} to $\calV_D$.

\begin{lemma}[Analysis of the Subspace of Restricted Partials of $\wh{P}$]
\label{lem:analysis-restricted-partials-high-degree}
    Fix $D,K\in \N$ and let $n,m,\ell \in \N$ such that $m \leq \paren{\frac{\ell}{\polylog(\ell)}}^{KD/2}$ and $m\ell^{2(K-1)D} \leq \paren{\frac{n}{\polylog(n)}}^{2D}$.
    Given $\wh{P} = \sum_{t \in [m]} A_t(x)^{3D} + E(x)$ where each $A_t$ is a degree-$K$ homogeneous polynomial with i.i.d.\ $\calN(0,1)$ entries, and a restriction matrix $M \in \R^{n \times \ell}$, we have that with probability $1- n^{-\Omega(D)}$ over the choice of $A_t$'s, an analogue of  Algorithm~\ref{algo:partial-der-hd} outputs a subspace $\wt{\calV}_D$ of $\R^{\ell_{(3K-2)D}}$ that satisfies: 
    \[ \Norm{\wt{\calV}_D - \calV_D }_F \leq O\Paren{\frac{\|E\|_F}{n^D \ell^{KD/2}}} \mcom \]
    with $B_t(y) = A_t \circ M(y)$. 
\end{lemma}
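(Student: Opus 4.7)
The plan is to follow the three-part blueprint used to prove Lemma~\ref{lem:analysis-restricted-partials-hd} in the quadratic case, with appropriate modifications for the degree-$K$ setting: first establish $\calU_D = \calV_D$ (with $\calU_D$ the span of restricted $2D$-th order partials of $P$) via feasibility of a linear system, then a singular value lower bound on the matrix $V$ whose columns are the $B_t(y)^D y_T$, and finally combine both estimates to lower bound the smallest nonzero singular value of the matrix $U$ of restricted partials and invoke Wedin's theorem.

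\textbf{Step 1 ($\calU_D = \calV_D$).} Taking $2D$ derivatives of $A_t^{3D}$ yields a polynomial divisible by $A_t^D$, so $\calU_D \subseteq \calV_D$. For the reverse inclusion (and its error-resilient quantitative version), I would generalize the bucket-profile framework of Definition~\ref{def:bucket-profile} to $K$-ary profiles $\gamma = (\gamma_1,\dots,\gamma_K)$ with $\sum_i i \gamma_i = 2D$, recording how many of the $3D$ copies of $A_t$ receive derivatives of order $i$. The "star" profile $\gamma^{\ast}$ in which $2D$ copies each absorb a single derivative plays the role of $(2D,0)$ in the quadratic case: its rows of the analogue matrix $L$ consist of symmetrized tensor products of $2D$ columns of the $A_t$ coefficient tensor together with $A_t^{\ot \gamma_2}$-style corrections that must be cancelled. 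As in Lemma~\ref{lem:U-equals-V-hd}, this reduces to a singular value lower bound $\sigma_{\min}(L) \geq \Omega(n^D)$ on the constraint matrix, proved under the regime $m \ell^{2(K-1)D} \leq \wt{O}(n^{2D})$ using the graph matrix decomposition machinery of Section~\ref{sec:overview-singular-val}. This singular value lower bound is exactly what delivers the factor $1/n^D$ in the final bound.

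\textbf{Step 2 (singular value lower bound for $V$).} This is where the degree-$K$ case diverges most from the quadratic case. Define $V \in \R^{\ell_{(3K-2)D} \times m\ell_{2(K-1)D}}$ whose columns are the coefficient vectors of $B_t(y)^D y_T$. Its null space corresponds to relations $\sum_{t=1}^m B_t(y)^D q_t(y)=0$ with $\deg q_t = 2(K-1)D$. For $K\geq 3$, $\deg q_t$ strictly exceeds $\deg B_t^D = KD$, so besides the quadratic-case swap $q_{t_1}=B_{t_2}^D,\ q_{t_2}=-B_{t_1}^D$ we also obtain, for any monomial $r$ of degree $(K-2)\cdot 2D$, the enriched swaps $q_{t_1}= B_{t_2}^D r,\ q_{t_2}=-B_{t_1}^D r$. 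I would characterize the null space $\calN$ as the linear span of exactly these swap relations (generalizing Definition~\ref{def:N-matrix} and Lemma~\ref{lem:V-singular-value-hd}), verify $\rank(N) = \binom{m}{2}\ell_{(K-2)\cdot 2D}$, and then use the elementary smallest-singular-value-via-largest-eigenvalue method of Section~\ref{sec:overview-singular-val} together with graph matrix norm bounds to prove
\[
\lambda_{\min}(V^\top V + N^\top N) \geq \Omega(\ell^{KD}),
\]
whenever $m \leq \wt{O}(\ell^{KD/2})$. This forces $\sigma_{\mathrm{rank}(V)}(V) \geq \Omega(\ell^{KD/2})$.

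\textbf{Step 3 (combining and Wedin).} Since $\calU_D=\calV_D$ (Step 1), the argument of Lemma~\ref{lem:singular-value-of-U-hd} goes through verbatim: for a bottom left singular vector of $U$ written as $Up = Vq$, the feasibility/conditioning bound from Step~1 yields $\|p\|_2 \leq O(n^{-D})\|q\|_2$, while Step~2 gives $\|Vq\|_2 \geq \Omega(\ell^{KD/2})\|q\|_2$, hence $\sigma_{\mathrm{rank}(U)}(U) \geq \Omega(n^D \ell^{KD/2})$. The perturbation bound $\|\wt{U}-U\|_F \leq O(D^{D/2})\|E\|_F$ is identical to the quadratic case, since the entries of $\wt{U}-U$ are coefficients of restricted partials of $E$ and the degree count $2D + (3K-2)D = 3KD$ matches the degree of $E$. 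Applying Wedin's theorem (Theorem~\ref{thm:wedin}) to $U$ with its top $\rank(U)$-dimensional left singular subspace equal to $\calV_D$ then yields
\[
\Norm{\wt{\calV}_D - \calV_D}_F \leq O\!\left(\frac{\|\wt{U}-U\|_F}{\sigma_{\mathrm{rank}(U)}(U)}\right) \leq O\!\left(\frac{\|E\|_F}{n^D \ell^{KD/2}}\right),
\]
as desired.

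The main obstacle is Step~2: the null space characterization of $V$. For $K\geq 3$ the set of trivial swap relations is substantially richer than in the quadratic case, and certifying that \emph{only} these swap relations span the kernel generically requires a careful graph-matrix decomposition to control both the lopsided spectral norms of certain deviation matrices and the large positive contributions (analogous to the positivity charging in \cite{BHK19,GJJPR20,HK22,sparseindset}). The regime $m \leq \wt{O}(\ell^{KD/2})$ is exactly the threshold at which this charging succeeds, which is why it appears as a hypothesis in the lemma and ultimately drives the tradeoff $m \leq \wt{O}(n^{2KD/(5K-4)})$ of Theorem~\ref{thm:main-theorem-high-degree} after optimizing over $\ell$.
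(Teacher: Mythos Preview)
Your proposal is correct and follows essentially the same approach as the paper: generalize the bucket-profile framework (the paper's Definition~\ref{def:generalized-bucket-profile}) to reduce $\calU_D=\calV_D$ to a singular value lower bound $\sigma_{\min}(L)\geq\Omega(n^D)$ on the constraint matrix (Lemma~\ref{lem:singular-value-of-L-higher-deg}), obtain $\sigma_{\rank(V)}(V)\geq\Omega(\ell^{KD/2})$ from $\lambda_{\min}(V^\top V+N^\top N)\geq\Omega(\ell^{KD})$ (Lemma~\ref{lem:VV-NN-generalized}), and then combine as in Lemma~\ref{lem:singular-value-of-U-hd} before applying Wedin. One minor arithmetic slip: the monomial $r$ in the enriched swap relations has degree $\deg q_t - \deg B_t^D = 2(K-1)D - KD = (K-2)D$, not $(K-2)\cdot 2D$, so the correct null-space count is $\rank(N)=\binom{m}{2}\ell_{(K-2)D}$ (cf.\ Definition~\ref{def:null-space-of-V-high-deg}).
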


The proof is almost identical to the proof of Lemma~\ref{lem:analysis-restricted-partials-hd}: we prove $\calU_D = \calV_D$ (Lemma~\ref{lem:U-equals-V-high-deg}) by proving that a different matrix $L$ (Definition~\ref{def:L-matrix-high-degree}) defining the linear system is full row rank (Lemma~\ref{lem:singular-value-of-L-higher-deg}), hence the linear system is feasible.
The only difference is the $L$ matrix since the structure of the partial derivatives is different.

\paragraph{Span finding step: comparison to the quadratic case.}
We would like to obtain an estimate of the subspace $\calW_D = \spn(B_t(y)^D \mid t\in[m])$ given a basis for $\wt{\calV}_D$.
We reemphasize that this is the main difference in the algorithm.
We choose a random polynomial $G$ of degree $2D(K-1)$ and write $\calV_G \coloneqq \spn(G(y) y_S \mid S\in[m]^{KD})$.
Note the difference between $\calV_D$ and $\calV_G$: they are both subspaces of polynomials of degree $(3K-2)D$, but $B_t(y)^D$ has degree $KD$ which is smaller than $2(K-1)D$ (for $K>2$).
Nevertheless, we show that we can still obtain $\calW_D$ by taking the intersection of $\calV_D$ and $\calV_G$ to obtain $\spn(B_t(y)^D G(y))$ (Lemma~\ref{lem:intersection-high-deg}) and follow the remaining steps in Algorithm~\ref{algo:span-finding-hd}.

\begin{lemma}[Extracting Span of $B_t(y)^D$]\label{lem:span-finding-high-degree}
    Fix $D,K\in \N$, $K\geq 2$ and let $n,m,\ell \in \N$ such that $m \leq \paren{\frac{\ell}{\polylog(\ell)}}^{KD/2}$.
    Given degree-$K$ homogeneous polynomials $B_t$ for $t\in[m]$ in $\ell$ variables with coefficients drawn i.i.d from $\calN(0,1)$, with probability $1- \ell^{-\Omega(D)}$ an analogue of Algorithm~\ref{algo:span-finding-hd} outputs $\wt{\calW}_D$ that satisfies:
    \[ \Norm{\wt{\calW}_D - \calW_D}_F \leq m\ell^{O(KD)} \|\calV_D - \wt{\calV}_D\|_F.\]
\end{lemma}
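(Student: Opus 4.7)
The plan is to follow the blueprint of the proof of Lemma~\ref{lem:span-finding-hd} with the random degree-$2$ polynomial $p$ replaced by a random degree-$2(K-1)D$ polynomial $G$, and $\calV_p$ replaced by $\calV_G \coloneqq \spn(G(y) y_S \mid S\in [\ell]^{KD})$. The algorithm samples $G$, extracts $\wt{\calW}_G$ as the top $m$-dimensional eigenspace of $\proj(\calV_G) + \proj(\wt{\calV}_D)$, solves $\min_w \|V_G w - \wt{w}_G^i\|_2$ for each orthonormal basis vector $\wt{w}_G^i$ of $\wt{\calW}_G$, where $V_G$ is the matrix whose columns are $G(y) y_S$, and outputs the span of the solutions as $\wt{\calW}_D$. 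The analysis mirrors Section~\ref{sec:span-finding-hd} in three parts: an exact intersection identity, a robust intersection bound via Lemma~\ref{lem:robust-int}, and a robust division bound via Wedin's theorem plus pseudo-inverse stability.

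The crucial step is the exact identity $\calV_D \cap \calV_G = \calW_G \coloneqq \spn(G \cdot B_t^D \mid t\in[m])$. Any element $f$ of the intersection satisfies $f = \sum_t B_t(y)^D p_t(y) = G(y) q_0(y)$ for polynomials $p_t$ of degree $2(K-1)D$ and $q_0$ of degree $KD$, which rearranges to
\begin{equation*}
    \sum_{t=1}^m B_t(y)^D p_t(y) - G(y) q_0(y) = 0 \mper
\end{equation*}
This is exactly the mixed linear system analyzed in Section~\ref{sec:analysis-of-V-high-deg} applied to the family $\{B_1,\ldots, B_m, G\}$. The characterization there identifies the null space as the span of the ``obvious'' solutions: antisymmetric pairs $p_s = r \cdot B_t^D,\ p_t = -r \cdot B_s^D$ (with $q_0 = 0$) for degree-$(K-2)D$ polynomials $r$, together with cross-solutions $p_t = c_t \cdot G,\ q_0 = \sum_{t=1}^m c_t B_t^D$ for scalars $c_t$. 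Every solution with $q_0 \neq 0$ must therefore force $q_0 \in \spn(B_t^D \mid t\in[m])$, so the intersection coincides with $\calW_G$, which has dimension exactly $m$ with high probability. A byproduct of the same analysis is a singular-value lower bound on the concatenated matrix $[V \mid V_G]$, providing the spectral gap needed for the next step.

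Given the exact intersection, robust intersection follows the argument of Lemma~\ref{lem:robust-int-hd}: the singular-value lower bound on $[V \mid V_G]$ combined with the Frobenius estimates $\|V\|_F^2,\|V_G\|_F^2 \leq \poly(m)\cdot \ell^{O(KD)}$ (via Claim~\ref{claim:norm-of-polynomial-powers}) yields a gap of $1/\poly(m, \ell^{KD})$ in $\proj(\calV_D)+\proj(\calV_G)$, and Lemma~\ref{lem:robust-int} gives $\|\wt{\calW}_G - \calW_G\|_F \leq m\ell^{O(KD)} \|\wt{\calV}_D-\calV_D\|_F$. For robust division, applying the single-polynomial case of the high-degree null-space lemma to $G$ alone gives $\sigma_{\ell_{KD}}(V_G) \geq \Omega(\ell^{(K-1)D})$, so the pseudo-inverse stability plus Wedin argument of Lemma~\ref{lem:robust-div-hd} goes through to produce $\|\wt{\calW}_D - \calW_D\|_F \leq \ell^{O(KD)} \|\wt{\calW}_G - \calW_G\|_F$. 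Chaining the two estimates yields the stated bound. The main obstacle is the null-space characterization in the exact intersection step: unlike the quadratic case where antisymmetric pairs are the only nontrivial null-space directions, for $K > 2$ each $p_t$ has strictly larger degree than $B_t^D$, permitting combinatorially larger families of antisymmetric solutions parameterized by polynomials of degree $(K-2)D$, and ruling out any further non-obvious solutions is precisely what the higher-degree analysis of Section~\ref{sec:analysis-of-V-high-deg} is designed to handle.
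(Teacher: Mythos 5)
Your proposal is correct and follows essentially the same route as the paper: replace the random degree-$2$ polynomial $p$ by a random degree-$2(K-1)D$ polynomial $G$, establish the exact intersection $\calV_D \cap \calV_G = \calW_G$ via the null-space characterization of $\sum_t B_t^D p_t + Gq = 0$ (Lemma~\ref{lem:V-singular-value-high-degree}), and then run the robust-intersection and robust-division arguments verbatim from Section~\ref{sec:span-finding-hd}, which is precisely what the paper does (Lemma~\ref{lem:intersection-high-deg} plus the remark that the error-resilience analysis is ``analogous''). The only imprecision is the attribution of the bound $\sigma_{\ell_{KD}}(V_G) \geq \Omega(\ell^{(K-1)D})$ to ``the single-polynomial case of the high-degree null-space lemma'': Lemma~\ref{lem:V-singular-value-high-degree} with $m=1$ still produces a matrix with a kernel, so it does not directly yield this; the bound instead follows from the diagonal-dominance analysis of $V_G^\top V_G$ in Lemma~\ref{lem:VG-VG} (undoing the $\ell^{-(K-2)D/2}$ normalization) or from a direct trace-moment estimate of $\offdiag(V_G^\top V_G)$ together with $\|G\|^2 = \Theta(\ell^{2(K-1)D})$ from Claim~\ref{claim:norm-of-polynomial-powers}. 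This does not affect the correctness of your chain of estimates.
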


The remaining steps are identical to the quadratic case.

\begin{lemma}[Desymmetrization of Restricted $\wh{P}$ via Least-Squares]
\label{lem:analysis-desym-high-degree}
    Let $D, K,m,\ell\in \N$ such that $m \leq (\frac{\ell}{\polylog(\ell)})^{KD/2}$.
    For each $t\in[m]$, let $B_t$ be a degree-$K$ homogeneous polynomial in $\ell$ variables with i.i.d.\ $\calN(0,1)$ entries.
    Suppose $\wt{\calW}_D$ is a subspace of $\R^{\ell_{KD}}$ such that $\Norm{\wt{\calW}_D -\calW_D} \leq 1/(\poly(m)\ell^{O(KD)})$, then with probability $1- n^{\Omega(D)}$ over the choice of $B_t$'s, an analogue of
    Algorithm~\ref{algo:desymmetrization-hd} outputs a tensor $\wt{T}$ such that:
    \[ \Norm{\wt{T} - \sum_{t \in [m]} \Paren{\Sym(B_t^{\ot D})}^{\ot 3} }_F \leq \poly(m) \Paren{\ell^{O(KD)}\|\calW_D - \wt{\calW}_D\|_F + \|E\|_F} \mper\]
\end{lemma}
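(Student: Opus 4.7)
The plan is to mirror the proof of the quadratic case (Lemma~\ref{lem:analysis-desym-hd}), replacing every quadratic-specific parameter by its degree-$K$ analogue. Given a basis $\wt{W}\in\R^{\ell_{KD}\times m}$ for $\wt{\calW}_D$, form the matrix $\wt{W}^{\ot 3}_{uniq}\in\R^{\ell_{KD}^3\times m_3}$ whose columns are $\wt{w}_i\ot\wt{w}_j\ot\wt{w}_k$ indexed by multisets $\{i,j,k\}\subseteq[m]$, solve the least-squares program
\[
\min_{Y\in\R^{m_3}}\Norm{\fl(\wh{P}\circ M)-\Sym_{3KD}\cdot\wt{W}^{\ot 3}_{uniq}\cdot Y}_2\mcom
\]
and output $\wt{T}=\wt{W}^{\ot 3}LY$, with $L$ the invertible map lifting a symmetric element of $\R^{m_3}$ to $\R^{m^3}$ as in~\eqref{eq:L}. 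When $E=0$ and $\wt{\calW}_D=\calW_D$, the identity $\sum_t(\Sym(B_t^{\ot D}))^{\ot 3}=W^{\ot 3}L(\Sym_{3KD}W^{\ot 3}_{uniq})^\dagger\fl(P\circ M)$ holds exactly, provided $\Sym_{3KD}\cdot W^{\ot 3}_{uniq}$ has full column rank.

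The main technical step is therefore a singular value lower bound on $\Sym_{3KD}\cdot W^{\ot 3}_{uniq}$ analogous to Lemma~\ref{lem:desymm-mainsv-hd}. Let $C\in\R^{\ell_{KD}\times m}$ be the matrix whose $t$-th column is $C_t=\Sym(B_t^{\ot D})$, and let $V\in\R^{m\times m}$ be the basis transformation $CV=W$; then $W^{\ot 3}_{uniq}=C^{\ot 3}_{uniq}V^{\ot 3}_{uniq}$, so it suffices to separately bound $\sigma_{m_3}(V^{\ot 3}_{uniq})$ and $\sigma_{m_3}(\Sym_{3KD}\cdot C^{\ot 3}_{uniq})$. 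The first bound is essentially free and repeats Lemma~\ref{lem:desymm-basis-hd}: orthonormality of the columns of $W^{\ot 3}_{uniq}$ gives $\sigma_{m_3}(V^{\ot 3}_{uniq})\geq 1/\|C\|_F^3$, and Claim~\ref{claim:norm-of-polynomial-powers} applied with parameters $D$ and $K$ shows $\|C\|_F^2=\sum_t\|B_t^D\|^2=O(m\ell^{KD})$ with probability $1-\ell^{-\Omega(D)}$, hence $\sigma_{m_3}(V^{\ot 3}_{uniq})\geq\Omega\!\left(1/(m^{3/2}\ell^{3KD/2})\right)$.

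The second bound, $\sigma_{m_3}(\Sym_{3KD}\cdot C^{\ot 3}_{uniq})\geq\Omega(\ell^{3KD/2})$, is the degree-$K$ analogue of Lemma~\ref{lem:desymm-singular-value-hd} and is the main obstacle: the tensors $C_t=\Sym(B_t^{\ot D})$ are now built from degree-$K$ objects, and the symmetrization of their threefold tensor products couples indices in a genuinely more intricate pattern. I expect to prove it by the same deviation-matrix / graph-matrix strategy outlined in Section~\ref{sec:overview-singular-val}: write $(\Sym_{3KD}C^{\ot 3}_{uniq})^\top(\Sym_{3KD}C^{\ot 3}_{uniq})$ as a dominant diagonal-like piece of size $\Theta(\ell^{3KD})$ plus an off-diagonal deviation matrix, decompose the deviation into graph matrices indexed by the shapes that arise from the $\Sym_{3KD}$-averaging, and bound their spectral norms (with appropriate positivity arguments for the shapes whose unsigned norm is too large to charge directly to the diagonal). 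The regime $m\leq\wt O(\ell^{KD/2})$ of the hypothesis is exactly what makes these graph matrix spectral norms $o(\ell^{3KD})$.

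Granting the above, combining the two bounds yields $\sigma_{m_3}(\Sym_{3KD}\cdot W^{\ot 3}_{uniq})\geq\Omega(m^{-3/2})$, and the error-resilience argument is then line-by-line identical to the quadratic case. Using Lemma~\ref{lem:bases} we align orthonormal bases so that $\wt{W}=W+\Gamma'$ with $\|\Gamma'\|_F=O(m\|\calW_D-\wt{\calW}_D\|_F)$, which lifts to $\|\wt{W}^{\ot 3}-W^{\ot 3}\|_F,\ \|\wt{W}^{\ot 3}_{uniq}-W^{\ot 3}_{uniq}\|_F=O(m^2\|\calW_D-\wt{\calW}_D\|_F)$. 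Applying Lemma~\ref{lem:least-sq-robust} to the least-squares program with the singular value lower bound just obtained, together with the crude bounds $\|L\|_F=O(m^{3/2})$, $\spec{\Sym_{3KD}}\leq\ell^{O(KD)}$, $\|\fl(P\circ M)\|_2\leq m\ell^{O(KD)}$ (from Claim~\ref{claim:norm-of-polynomial-powers} and triangle inequality), and $\|\fl(\wh{P}\circ M)-\fl(P\circ M)\|_2\leq\|E\|_F$, yields the claimed Frobenius error $\poly(m)(\ell^{O(KD)}\|\calW_D-\wt{\calW}_D\|_F+\|E\|_F)$. The hypothesis $\|\wt{\calW}_D-\calW_D\|_F\leq 1/(\poly(m)\ell^{O(KD)})$ is precisely what is needed to keep the perturbation of $\Sym_{3KD}\cdot W^{\ot 3}_{uniq}$ strictly below $\sigma_{m_3}(\Sym_{3KD}\cdot W^{\ot 3}_{uniq})$, so Wedin's theorem and Lemma~\ref{lem:p-inverse-stab} apply unchanged.
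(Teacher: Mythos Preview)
Your proposal is correct and follows essentially the same approach as the paper, which simply states that ``the remaining steps are identical to the quadratic case'' and relies on the degree-$K$ analogues of Lemmas~\ref{lem:desymm-basis-hd} and~\ref{lem:desymm-singular-value-hd} (note that Lemma~\ref{lem:desymm-diagonal-entries} is already stated for general $K$). One small remark: the graph-matrix analysis for $\sigma_{m_3}(\Sym_{3KD}\cdot C^{\ot 3}_{uniq})$ does not actually require positivity arguments---the off-diagonal shapes all have norm $o(\ell^{3KD})$ directly under the (weaker) assumption $m\leq(\ell/\polylog(\ell))^{KD}$, so your hypothesis $m\leq(\ell/\polylog(\ell))^{KD/2}$ is more than sufficient.
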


We show how to aggregate the desymmetrized estimates above for $n^{O(KD)}$ pseudorandom restriction matrices to obtain the estimate of the unrestricted tensor we need.
The following is identical to Lemma~\ref{lem:analysis-aggregation-hd} except for a small change in the parameters:

\begin{lemma}[Aggregating Pseudorandom Restrictions]
\label{lem:analysis-aggregation-high-degree}
    Let $D,K,n,\ell,m \in \N$ such that $6DK \leq \ell \leq n$.
    There is an $n^{O(KD)}$-time computable collection $\calS$ of subsets of $[n]$ such that each $S\in \calS$ satisfies $\ell \leq |S| \leq 6DK\ell$ and that
    \begin{equation*}
        \E_{S\sim \calS} \sum_{t=1}^m \Paren{\Sym\Paren{\calR_S(A_t)^{\ot D}}}^{\ot 3}
        = C \circ \sum_{t=1}^m \Paren{\Sym(A_t^{\ot D}) }^{\ot 3}
    \end{equation*}
    where $C \in (\R^n)^{\ot 6KD}$ is a fixed tensor whose entries depend only on the entry locations, and each entry of $C$ has value within $((\ell/2n)^{6KD}, 1)$.
\end{lemma}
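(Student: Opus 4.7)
The plan is to reduce the proof to a direct application of Lemma~\ref{lem:collection-of-subsets}, exactly mirroring the reduction from Lemma~\ref{lem:analysis-aggregation-hd} but with the $k$-wise independence parameter enlarged from $6D$ to $6KD$ to accommodate the higher-order tensor now being aggregated. Instantiating that lemma with $k = 6KD$ produces, in time $n^{O(KD)}$, a family $\calS$ of $n^k - n$ subsets of $[n]$ such that every $S \in \calS$ has $\ell \leq |S| \leq (k-1)\ell \leq 6KD\ell$, and for any $r \leq k$ and any distinct indices $i_1,\dots,i_r \in [n]$,
\begin{equation*}
    \Pr_{S\sim \calS}[i_1,\dots,i_r \in S] = \frac{n^{k-r}\ell^r - \ell}{n^k - n} \mper
\end{equation*}
The underlying hash-family construction, namely non-constant univariate polynomials of degree $\leq k-1$ over a field of size $n$ composed with the indicator of a fixed subset of size $\ell$, transfers over verbatim; only the degree bound $k$ changes.

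Next, I would analyze the averaged tensor entrywise. Fix a multi-index $I$ of the relevant order and let $J \subseteq [n]$ be the set of $r$ distinct indices appearing in $I$. Since $\calR_S(A_t)$ zeros out every entry of $A_t$ whose index is not entirely contained in $S$, for each $t\in[m]$,
\begin{equation*}
    \Paren{\Sym\Paren{\calR_S(A_t)^{\ot D}}}^{\ot 3}[I] = \1(J \subseteq S) \cdot \Paren{\Sym(A_t^{\ot D})}^{\ot 3}[I] \mper
\end{equation*}
Taking expectation over $S\sim \calS$ and applying the pseudorandomness guarantee converts $\E_S \1(J \subseteq S)$ into the coefficient $\frac{n^{k-r}\ell^r - \ell}{n^k - n}$, which depends only on $r$ and hence only on the location of the entry. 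Summing over $t$ and packing these scalars into a single tensor $C$ yields the claimed identity, and by construction $C$'s entries are determined purely by the entry locations.

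Finally, I would verify the numerical bounds on $C$. For $1 \leq r \leq k = 6KD$, the numerator $n^{k-r}\ell^r - \ell$ is at most $n^k$ and at least $\tfrac{1}{2}n^{k-r}\ell^r \geq \tfrac{1}{2}\ell^k$ whenever $\ell \geq 2$, while the denominator $n^k - n$ is at most $n^k$; hence every entry of $C$ lies in $((\ell/2n)^{6KD}, 1)$ as required. There is no substantive obstacle to overcome beyond the quadratic case: the only point requiring care is that $k = 6KD$ must be at least the order of the tensor we are averaging so that the $k$-wise independence covers every distinct-index pattern that can appear in a single entry, which is immediate from the construction.
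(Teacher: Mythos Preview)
Your proposal is correct and takes essentially the same approach as the paper: the paper itself does not give a separate proof of Lemma~\ref{lem:analysis-aggregation-high-degree} and simply states that it is identical to Lemma~\ref{lem:analysis-aggregation-hd} with a parameter change, which is precisely what you carry out by instantiating Lemma~\ref{lem:collection-of-subsets} with $k = 6KD$ instead of $6D$. The entrywise computation and the bounds on $C$ mirror the proof of Lemma~\ref{lem:analysis-aggregation-hd} in Section~\ref{sec:aggregation-hd} exactly (your upper-bound verification is slightly informal, but no more so than the paper's own one-line justification).
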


Given the partially desymmetrized tensor, we will apply the off-the-shelf algorithm for 3rd order tensor decomposition (Fact~\ref{fact:tensor-decomposition-algo}) to obtain $\Sym(A_t^{\ot D})$ for $t\leq m$.
To apply tensor decomposition, we need the following condition number bound:

\begin{lemma}[Condition number]
    Under the same assumptions as Lemma~\ref{lem:analysis-restricted-partials-high-degree},
    let $A_D$ be the $n_{DK} \times m$ matrix whose columns are the coefficient vectors of $A_t(x)^D$ for $t\in[m]$.
    Then, with probability $1 - n^{-\Omega(D)}$, the condition number $\kappa(A_D) \leq O(1)$.
\end{lemma}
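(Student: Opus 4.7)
The natural object to study is the Gram matrix $M := A_D^\top A_D \in \R^{m \times m}$, whose $(s,t)$ entry equals $\langle A_s^D, A_t^D \rangle$, the inner product of the coefficient vectors. Showing $\kappa(A_D) = O(1)$ is equivalent to showing that both eigenvalues of $M$ are of the same order $\Theta(n^{KD})$. I will decompose $M = D + N$ where $D = \mathrm{diag}(M)$ and $N$ is the off-diagonal part, bound the two pieces separately, and combine using Weyl's theorem.

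\textbf{Diagonal.} Each diagonal entry $M_{tt} = \|A_t^D\|^2$ is a degree-$2D$ polynomial in the $n_K$ i.i.d.\ standard Gaussian coefficients of $A_t$. Claim~\ref{claim:norm-of-polynomial-powers} gives $\|A_t^D\|^2 = \Theta(n^{KD})$ with probability $1 - n^{-\Omega(KD)}$, and since $m \ll n^{KD}$ (the theorem hypothesis $m \leq \wt O(n^{2KD/(5K-4)})$ is far smaller), a union bound over $t \in [m]$ shows $D = \Theta(n^{KD}) \cdot I_m + \Delta$ with $\|\Delta\|_{\mathrm{op}} \ll n^{KD}$ with high probability.

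\textbf{Off-diagonal.} We need $\|N\|_{\mathrm{op}} \ll n^{KD}$. For $s \neq t$, $N_{st} = \langle A_s^D, A_t^D \rangle$ is a degree-$2D$ polynomial in the disjoint coefficient sets of $A_s$ and $A_t$. When $D$ is odd, $\E[A_t^D] = 0$ since odd moments of centered Gaussians vanish, so $\E[N_{st}] = 0$. When $D$ is even, $\E[A_t^D]$ is a symmetric tensor supported on powers of $\|x\|^2$; its squared norm is $O(n^{KD/2})$, so the rank-one bias $\mathbf{1}\mathbf{1}^\top \cdot \|\E[A_t^D]\|^2$ contributes spectral norm at most $m \cdot O(n^{KD/2}) \ll n^{KD}$ (using $m \ll n^{KD/2}$ in our regime), which is absorbed. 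It remains to bound the spectral norm of the \emph{centered} off-diagonal matrix $\wt N$, whose entries are mean-zero degree-$2D$ Gaussian polynomials. Pointwise, Gaussian hypercontractivity gives $|\wt N_{st}| \leq \wt O(n^{KD/2})$ with high probability, but a naive $m \cdot \wt O(n^{KD/2})$ row-sum bound is wasteful.

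\textbf{Main obstacle.} The rows of $\wt N$ share randomness (the $s$-th row is entirely a function of $A_s$ together with the $A_t$'s), so standard matrix Bernstein / Wishart arguments do not directly apply. I will follow the trace moment approach outlined in Section~\ref{sec:overview-singular-val}: expand $\E[\tr(\wt N^{2k})]$ for a growing even $k$ and decompose $\wt N$ into a sum of graph matrices indexed by the structure of monomials of $A_s, A_t$ participating in each entry. Bounding each graph-matrix summand via the combinatorial norm estimates from \cite{BHK19,graphmatrixbounds} yields $\|\wt N\|_{\mathrm{op}} \leq \wt O(\sqrt{m} \cdot n^{KD/2})$, which under $m \leq \wt O(n^{KD/2})$ gives $\|\wt N\|_{\mathrm{op}} \ll n^{KD}$. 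Positive-eigenvalue summands that are not individually spectrally small can be charged against the diagonal as in the paper's general recipe for deviation matrices. Combining with the diagonal bound and Weyl's theorem yields $\sigma_{\min}(A_D)^2 \geq \Omega(n^{KD})$ and $\sigma_{\max}(A_D)^2 \leq O(n^{KD})$, hence $\kappa(A_D) = O(1)$.
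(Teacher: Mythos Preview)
Your approach is correct and matches the paper's machinery. The paper does not spell out a separate proof of this lemma; the argument is a direct simplification of the off-diagonal analysis in the proof of Lemma~\ref{lem:rank-of-N}, where inner products $\langle B_s^D, B_t^D\rangle$ for $s\neq t$ are expanded as sums of graph matrices and bounded via Proposition~\ref{prop:graph-matrix-norm}, yielding exactly your $\wt{O}(\sqrt{m}\cdot n^{KD/2})$.

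Two small points. Your explicit handling of the even-$D$ bias is not needed: Proposition~\ref{prop:graph-matrix-norm} already accounts for shapes with phantom (multi-)edges arising from collapses among the $KD$ middle circle vertices, and the fully collapsed shapes---which carry the nonzero mean---contribute at most $\wt{O}(m\cdot n^{KD/2})$, the same estimate you derived by hand. And the ``charging positive-eigenvalue summands to the diagonal'' step is unnecessary here: for this Gram matrix every off-diagonal shape already has norm $o(n^{KD})$, so the argument is simply diagonal versus off-diagonal. That charging device is specific to Lemma~\ref{lem:VV-NN}, where the null space of $V$ forces certain off-diagonal components to have norm comparable to the diagonal.
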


Finally, we extract $A_t$ from $\Sym(A_t^{\ot D})$ using the same procedure as Lemma~\ref{lem:analysis-D-root-hd}:

\begin{lemma}[Stable Computation of $D$-th Roots]
    Let $K, D, n \in \N$ and $\delta \geq 0$.
    Let $P \in (\R^n)^{\ot K}$ be an unknown symmetric tensor.
    Suppose $\wt{P_D}(x)$ is a homogeneous degree-$KD$ polynomial in $n$ variables such that its coefficient tensor satisfies $\Norm{\wt{P_D} - \Sym(P^{\otimes D})}_F \leq \delta$.
    There is an algorithm that runs in $n^{O(KD)}$ time and outputs $\wt{Q} \in (\R^n)^{\ot K}$ such that if $D$ is odd, then
    \begin{equation*}
        \Norm{\wt{Q} - P}_F \leq n^{O(K)} \delta^{1/D} \mcom
    \end{equation*}
    and if $D$ is even, then
    \begin{equation*}
        \min_{\sigma \in \{\pm1\}} \Norm{ \wt{Q} - \sigma P }_{F} \leq  n^{O(K)} \delta^{1/3D} \cdot \|P\|_{\max} \mper
    \end{equation*}
\end{lemma}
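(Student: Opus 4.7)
The plan is to mirror the quadratic ($K=2$) argument of Lemma~\ref{lem:analysis-D-root-hd}, replacing the matrix-valued indeterminate with a $K$-th order symmetric tensor indeterminate and generalizing the single square-root step to arbitrary $K$. Concretely, I set up a sum-of-squares relaxation in a $K$-th order symmetric tensor variable $Q$ and solve for a degree-$2D$ pseudo-distribution $\mu$ satisfying $\pE_\mu \Norm{\Sym(Q^{\ot D}) - \wt{P_D}}_F^2 \leq \delta^2$; feasibility is witnessed by $Q = P$, and the SoS algorithm (Fact~\ref{fact:sos-algorithm}) finds $\mu$ in time $n^{O(KD)}\polylog(1/\tau)$. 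Using $(a+b)^2 \geq \tfrac12 a^2 - b^2$ together with the input error bound then gives $\pE_\mu \Norm{\Sym(Q^{\ot D}) - \Sym(P^{\ot D})}_F^2 \leq 4\delta^2$. Contracting against $z^{\ot 2KD}$ for any unit $z\in\R^n$ and applying pseudo-distribution Jensen (Fact~\ref{fact:pseudo-expectation Jensen}) yields $\pE_\mu (Q(z)^D - P(z)^D)^2 \leq 4\delta^2$, where $Q(z) \coloneqq \Iprod{Q, z^{\ot K}}$ is a degree-$K$ polynomial in $z$ and a linear function of $Q$.

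For odd $D$, I invoke Lemma~\ref{lem:a-minus-b} to obtain the SoS identity $(Q(z) - P(z))^{2D} \leq 2^{2(D-1)}(Q(z)^D - P(z)^D)^2$, which by pseudo-Jensen gives $(\pE_\mu[Q(z)]-P(z))^{2D} \leq 2^{2D}\delta^2$. Thus the degree-$K$ form $\pE_\mu[Q(z)] - P(z)$ has spherical polynomial norm at most $2\delta^{1/D}$ uniformly over unit $z$. Since polarization converts a spherical bound on a symmetric order-$K$ tensor to a max-entry bound with an $O(1)^K$ loss, and max-entry to Frobenius costs an additional $n^{K/2}$ factor, setting $\wt{Q} = \pE_\mu[Q]$ gives $\Norm{\wt{Q} - P}_F \leq n^{O(K)}\delta^{1/D}$.

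For even $D$, Lemma~\ref{lem:a-squared-minus-b-squared} (with $k = D/2$) supplies the SoS identity $(Q(z)^2 - P(z)^2)^D \leq 2^{D-2}(Q(z)^D - P(z)^D)^2$, and another pseudo-Jensen step yields $\Abs{\pE_\mu[Q(z)^2] - P(z)^2} \leq 2\delta^{2/D}$ for every unit $z$. Letting $\wt{P_2}$ be the coefficient tensor of the degree-$2K$ polynomial $\pE_\mu[Q(z)^2]$, polarization again produces an entrywise max-norm approximation $\Norm{\wt{P_2} - \Sym(P^{\ot 2})}_{\max} \leq O(1)^K\delta^{2/D}$. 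The remaining and main obstacle is the tensor analogue of Lemma~\ref{lem:square-root-of-polynomial}: recovering $\pm P$ from an entrywise-close approximation of $\Sym(P^{\ot 2})$. I proceed by the same two-case template as the $K=2$ proof: either some diagonal-type entry $\wt{P_2}[I^*,I^*]$ for $I^*\in[n]^K$ is detectably large, revealing $|P[I^*]|$ up to a square-root loss, or no such diagonal is large, in which case the spread of $\|P\|_F^2$ across $\Sym(P^{\ot 2})$ forces an off-diagonal entry $\wt{P_2}[I^*, J^*]$ with $I^* \neq J^*$ to be large, revealing $|P[I^*]P[J^*]|$ instead. In either case every remaining entry $P[I]$ is extracted from a single coefficient equation of the form $\wh{P^2}[I^* \cup I] = c_I \cdot P[I^*]P[I] + (\text{products of entries already determined})$, where $c_I \neq 0$ is an explicit combinatorial symmetrization coefficient; subtracting the known cross terms and dividing by $P[I^*]$ yields $P[I]$. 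Each such inversion loses at most a sixth root of the error (matching the exponent $1/6$ in Lemma~\ref{lem:square-root-of-polynomial}), so the composed procedure gives a max-entry recovery error $O(\|P\|_{\max})\delta^{1/3D}$ and, after converting to Frobenius norm with an additional $n^{K/2}$ factor, the claimed bound $\min_{\sigma\in\{\pm1\}}\Norm{\wt{Q}-\sigma P}_F \leq n^{O(K)}\|P\|_{\max}\delta^{1/3D}$.
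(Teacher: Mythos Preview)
Your approach is the paper's: the paper in fact only proves the $K=2$ case (Lemma~\ref{lem:analysis-D-root-hd}) and states the general-$K$ version without a separate argument, so the SoS relaxation, the use of Lemmas~\ref{lem:a-minus-b} and~\ref{lem:a-squared-minus-b-squared}, and the reduction of even $D$ to a square-root step are exactly what is intended. The polarization step to convert the uniform bound on $\pE_\mu[Q(z)]-P(z)$ (resp.\ $\pE_\mu[Q(z)^2]-P(z)^2$) over unit $z$ into an entrywise bound with only an $O(1)^K$ loss is the correct replacement for the matrix spectral-to-Frobenius passage in the $K=2$ proof.

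There is, however, a real gap in your general-$K$ square-root sketch. For $K>2$ the entry $\Sym(P^{\ot 2})[I^*\cup J^*]$ is a \emph{sum} $\sum c_{J_1,J_2}\,P[J_1]P[J_2]$ over all size-$K$ multiset partitions $J_1\cup J_2 = I^*\cup J^*$, not the single product $P[I^*]P[J^*]$; so your Case~2 claim that a large off-diagonal entry ``reveals $|P[I^*]P[J^*]|$'' is false, and your peeling identity $\wh{P^2}[I^*\cup I] = c_I\,P[I^*]P[I] + (\text{already determined})$ is unjustified there. Case~1 does go through if the anchor is a \emph{pure} diagonal $I^*=(i,\dots,i)$ (then $\Sym(P^{\ot 2})[i^{2K}]=P[i^K]^2$ exactly, and inducting on the number of non-$i$ indices in $I$ makes every cross-term in $\Sym(P^{\ot 2})[(i^K)\cup I]$ a product of previously determined entries). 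But when all pure diagonals are small you still have to find and isolate an anchor: in Lemma~\ref{lem:square-root-of-polynomial} this works because, with small $P[i,i]$, the entry $\Sym(P^{\ot2})[i,i,j,j]$ cleanly isolates $P[i,j]^2$; for $K>2$ the analogous entry $\Sym(P^{\ot2})[2I^*]$ contains extra unknown cross-terms (e.g.\ for $K=4$, $I^*=(1,1,2,2)$ one gets $P[1,1,1,2]\,P[1,2,2,2]$), so the ``same two-case template'' does not close as written and needs a further layered case analysis that your sketch does not supply.
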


These sequence of steps and the analyses captured in the above lemmas are enough to immediately complete the proof of Theorem~\ref{thm:main-theorem-high-degree}.

\subsection{Analysis of the partial derivative span}
\label{sec:U-equals-V-high-deg}

It is clear that $\calU_D \subseteq \calV_D$.
We will prove that the two subspaces are in fact equal.
The following lemma is almost identical to Lemma~\ref{lem:U-equals-V-hd}.

\begin{lemma}[$\calU_D = \calV_D$]
\label{lem:U-equals-V-high-deg}
    Fix $K,D\in \N$. Let $\ell, m, n\in \N$ such that $m\ell^{2(K-1)D} \leq (\frac{n}{\polylog(n)})^{2D}$.
    Let $A_1,\dots,A_m \in (\R^n)^{\ot K}$ be random symmetric tensors with independent Gaussian entries, let $M \in \R^{n\times \ell}$ be a fixed restriction matrix.
    Then, with probability $1 - n^{-\Omega(D)}$ over the draw of $A_t$s, we have that $\calU_D = \calV_D$ and further, for any degree-$2(K-1)D$ homogeneous polynomials $Q_1,\dots, Q_m$, there exist coefficients $R = \{R_I\}_{I\in[n]^{2D}}$ such that
    \begin{equation*}
        \sum_{I\subseteq[n],\ |I|=2D} R_I \cdot \sum_{t=1}^m (\partial_{I} A_t^{3D}) \circ M(y) = \sum_{t=1}^m B_t(y)^D Q_t(y) \mcom
    \end{equation*}
    and moreover, $\|R\|_F \leq O(n^{-D}) \sum_{t=1}^m \|Q_t\|_F$.
\end{lemma}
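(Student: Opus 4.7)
\noindent\textbf{Proof plan for Lemma~\ref{lem:U-equals-V-high-deg}.} I would mirror the proof of Lemma~\ref{lem:U-equals-V-hd}, generalizing the bucket-profile bookkeeping to handle degree-$K$ polynomials $A_t$. Since the inclusion $\calU_D\subseteq\calV_D$ is immediate from the product rule, the content lies in the second conclusion: given arbitrary degree-$2(K-1)D$ targets $Q_1,\dots,Q_m$, produce coefficients $\{R_I\}_{I\in[n]^{2D}}$ of controlled norm realizing the identity. Equality $\calU_D=\calV_D$ then follows because $\calV_D$ is spanned by polynomials of the form $\sum_t B_t(y)^DQ_t(y)$ as the $Q_t$'s vary.

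\emph{Extended bucket profiles.} I would generalize Definition~\ref{def:bucket-profile} to tuples $\gamma=(\gamma_0,\gamma_1,\dots,\gamma_K)$ with $\sum_k\gamma_k=3D$ and $\sum_{k\geq 1}k\gamma_k=2D$, where $\gamma_k$ counts the number of the $3D$ factors of $A_t$ inside $A_t^{3D}$ absorbing exactly $k$ partial derivatives; note $k\leq K$ is forced and $\gamma_0\geq D$ is automatic because $\sum_{k\geq 1}\gamma_k\leq\sum_{k\geq 1}k\gamma_k=2D$. For each $\gamma$, an associated bucket partition of $\{1,\dots,2D\}$ into $\gamma_1$ singletons, $\gamma_2$ pairs, and so on up to $\gamma_K$ blocks of size $K$, tells us how the derivative indices of $I=(i_1,\dots,i_{2D})$ are distributed among the factors. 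The analogue of~\eqref{eq:partial-explicit-hd} becomes
\[
\partial_IP\circ M(y)\;=\;\sum_{t=1}^mB_t(y)^{D}\sum_{\gamma}B_t(y)^{\gamma_0-D}\,q_{t,I,\gamma}(y),
\]
where $q_{t,I,\gamma}(y)$ is a $\bbS_{2D}$-symmetrization of products of $k$-th order partial derivatives of $A_t$ composed with $M$, of total $y$-degree $\sum_{k\geq 1}(K-k)\gamma_k=K\sum_{k\geq 1}\gamma_k-2D\leq 2(K-1)D$. The ``main'' profile $\gamma^\star=(D,2D,0,\dots,0)$ saturates this bound: $B_t^{\gamma_0-D}=1$ and $q_{t,I,\gamma^\star}$ has degree exactly $2(K-1)D$, matching $Q_t$.

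\emph{Reduction to a linear system.} As in the quadratic case, it suffices to find $\{R_I\}$ that separately zeroes out every non-main profile and matches $Q_t$ on the main one:
\[
\sum_IR_I\,q_{t,I,\gamma^\star}(y)=Q_t(y)\quad\forall t,\qquad
\sum_IR_I\,q_{t,I,\gamma}(y)=0\quad\forall t,\ \gamma\neq\gamma^\star.
\]
Expanding in the $y$-monomial basis yields a linear system $L\cdot\fl(R)=b$, where $L$ is the direct analogue of the matrix in Definition~\ref{def:L-matrix-hd}: its rows are indexed by triples $(\gamma,t,J)$ with $J\in[\ell]^{\sum_{k\geq 1}(K-k)\gamma_k}$, each row being the flattened symmetrized tensor
\[
\fl\!\left(\Sym\!\left(\bigotimes_{k\geq 1}\bigotimes_{b\in\kappa_k(\gamma)}(M^{\otimes k})^\top\!\partial_bA_t\otimes A_t^{\otimes\gamma_0}\right)\right),
\]
and its columns are indexed by multisets $I\in[n]^{2D}$; the right-hand side $b$ is $Q_t$ on the $\gamma^\star$-block and zero elsewhere. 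The total row count is $r_D=m\cdot\sum_\gamma\ell_{\sum_{k\geq 1}(K-k)\gamma_k}=\tilde O(m\ell^{2(K-1)D})$, which is smaller than the column count $n_{2D}$ under the hypothesis $m\ell^{2(K-1)D}\leq(n/\polylog n)^{2D}$.

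\emph{Singular value lower bound.} The core of the proof, and the step I expect to be the main obstacle, is establishing $\sigma_{r_D}(L)\geq\Omega(n^D)$ with high probability, generalizing Lemma~\ref{lem:singular-value-of-L}. I would follow the blueprint of Section~\ref{sec:overview-singular-val}: decompose $LL^\top=\mathsf{diag}(LL^\top)+\offdiag(LL^\top)$, show that the diagonal is concentrated around a scalar of order $n^{2D}$ (the dominant Wick contraction for each self-pairing), and bound the spectral norm of the off-diagonal part, which splits into blocks indexed by ordered pairs $(\gamma,\gamma')$ of profiles. Each such block is a random matrix whose entries are constant-degree polynomials in the Gaussian entries of $A_t$, and I would expand it into graph matrices following~\cite{BHK19,graphmatrixbounds}, using the trace-moment bounds from that toolbox to show that every shape contributes spectral norm $o(n^{2D})$ under the parameter regime. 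The only new feature relative to the quadratic case is a larger but still polynomial ($\leq(2D+1)^K$) catalogue of profiles to enumerate, each producing shapes whose combinatorial bounds go through unchanged. Once $\sigma_{r_D}(L)\geq\Omega(n^D)$ is established, the minimum-norm solution satisfies
\[
\|\fl(R)\|_2\;\leq\;\sigma_{r_D}(L)^{-1}\|b\|_2\;\leq\;O(n^{-D})\sqrt{\textstyle\sum_t\|Q_t\|_F^2}\;\leq\;O(n^{-D})\sum_t\|Q_t\|_F,
\]
which finishes the proof.
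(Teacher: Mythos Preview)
Your proposal is correct and follows essentially the same route as the paper: generalized bucket profiles, reduction to the linear system governed by an $L$ matrix, and a singular value lower bound $\sigma_{r_D}(L)\geq\Omega(n^D)$ via the diagonal/off-diagonal graph-matrix analysis (this is exactly the paper's Lemma~\ref{lem:singular-value-of-L-higher-deg}, proved in Section~\ref{sec:sval-U-equals-V}). One small slip: in your displayed row of $L$ you include a factor $A_t^{\otimes\gamma_0}$, but the rows of $L$ encode only the coefficients of $q_{t,I,\gamma}$---the $B_t^{\gamma_0-D}$ factor is handled separately by the linear system constraints, so the row tensor should have order exactly $2D$ (cf.\ the paper's Definition~\ref{def:L-matrix-high-degree}); this is cosmetic and does not affect your argument.
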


In this section, we proceed to prove the main lemma for estimating the restricted partial span of powers of high-degree polynomials.
Recall that in Section~\ref{sec:U-equals-V-hd}, we reduce the task of proving $\calU_D = \calV_D$ to proving the feasibility of a linear system characterized by the matrix $L$ in Definition~\ref{def:L-matrix-hd}.
We will follow the same steps and get an analogous linear system.

Recall that in the case of powers of quadratic polynomials, the partial derivatives are nicely characterized as a sum of products of linear polynomial of $\langle A_t[i], x\rangle $ and scalars of the form of $A_t[i_1,i_2]$. This, not surprisingly, is no longer true when we go to higher degree polynomials; that being said, we can still get a somewhat ``nice'' characterization of the coefficient vectors that arise in the partial derivatives. Towards this end, we first need to generalize our definition of bucket profile from Definition~\ref{def:bucket-profile}:

\begin{definition}[Generalized bucket profile] \label{def:generalized-bucket-profile}
For any given $K\geq 2$,
We call $\gam = [\gam_1, \dots, \gam_{K} ]$ a \emph{generalized} bucket profile of the $2D$ partial derivative such that \[ 
\sum_{i=1}^K \gam_i\cdot i = 2D \mper
\] 
Let $\Gamma_{2D}$ be the set of bucket profiles, and let $\gamma^* = (2D,0,\dots,0)$ be the \emph{dominant} bucket profile.
Furthermore, we define $\deg(\gam) \coloneqq \sum_{i=1}^{K} (K-i) \gamma_i = K(\sum_{i=1}^K \gamma_i) - 2D$.

For each bucket profile, we associate with a \emph{bucket partition} $(\kappa_1(\gamma),\dots, \kappa_K(\gamma))$ of $\{1, 2, \dots, 2D\}$
such that there are $\gamma_i$ partitions (ordered tuples) of size $i$, sorted in \emph{increasing} order of the partition sizes, and $\kappa_i(\gam)$ is the collection of the $\gam_i$ partitions of size $i$:
\begin{equation*}
\begin{aligned}
    \kappa_1(\gamma) &= \bigparen{ (1),(2),\dots, (\gamma_1) }
    \mcom \quad
    \kappa_2(\gamma) = \bigparen{ (\gamma_1+1, \gamma_1+2), \dots, (\gamma_1 + 2\gamma_2-1, \gamma_1 + 2\gamma_2) } \mcom
\end{aligned}
\end{equation*}
and so on.
Furthermore, we define $(\wt{\kappa}_1(\gamma),\dots,\wt{\kappa}_K(\gamma))$ to be a partition of $\{1,2,\dots, \deg(\gamma)\}$
such that there are $\gamma_i$ partitions (ordered tuples) of size $K-i$, sorted in \emph{decreasing} order of the partition sizes, and $\wt{\kappa}_i(\gam)$ is the collection of the $\gamma_i$ partitions of size $K-i$:
\begin{equation*}
\begin{aligned}
    \wt{\kappa}_1(\gamma) &= \Bigparen{ (1,2,\dots,K-1), \dots, ((\gamma_1-1)(K-1)+1, \dots, \gamma_1(K-1)) } \\
    \wt{\kappa}_2(\gamma) &= \Big( (\gamma_1(K-1) +1, \dots, \gamma_1(K-1) + (K-2) ), \dots, \\
    & \quad \quad (\gamma_1(K-1) + (\gamma_2-1)(K-2)+1, \dots, \gamma_1(K-1) + \gamma_2(K-2)) \Big)
\end{aligned}
\end{equation*}
and so on.
Let $\kappa_i(\gamma) = (S_1,\dots,S_{\gamma_i} )$ and $\wt{\kappa}_i(\gamma) = (T_1,\dots, T_{\gamma_i})$ where $|S_j| + |T_j| = K$ for all $j\in[\gamma_i]$, and we define $\kappa_i \| \wt{\kappa}_i(\gamma)$ to be the collection of tuples $((S_1,T_1), \dots, (S_{\gamma_i}, T_{\gamma_i}))$.

\end{definition}

\begin{remark}[Interpretation of $\gamma$ and the terms in partial derivatives of $A_t^{3D}$]
\label{rem:bucket-profile-interpretation}
    To better understand Definition~\ref{def:generalized-bucket-profile}, we make the following remarks,
    \begin{itemize}
        \item Imagine $A_t^{3D}$ as being $3D$ buckets, and taking $2D$ partial derivatives means dropping $2D$ balls in these buckets such that each bucket contains at most $K$ balls.
        Then, for $i\in[K]$, $\gamma_i$ denotes the number of buckets with $i$ balls, hence $\sum_{i=1}^K \gam_i \cdot i = 2D$.
        Having $i$ balls corresponds to $i$-th derivatives of $A_t$, which is degree-$(K-i)$ polynomials.

        \item Let $I = (i_1,\dots,i_{2D}) \in [n]^{2D}$ be an ordered tuple, and consider $\partial_I A_t^{3D}$ and a bucket profile $\gamma \in \Gamma_{2D}$.
        Note that there are $3D - \sum_{j=1}^K \gamma_j$ empty buckets.
        Thus, $\gamma$ represents the terms in $\partial_I A_t^{3D}$ that are products of $A_t^{3D-\sum_j \gamma_j}$ and $\gamma_j$ polynomials of degree $K-j$.
        These $\gamma_j$ degree-$(K-j)$ polynomials can be represented as $\Iprod{A_t[I_{\pi(S)}], x^{\ot K-j}}$ for $S\in \kappa_j(\gam)$ where $|S| = j$ and for some permutation $\pi\in \bbS_{2D}$.
        Since $|I_{\pi(S)}| = |S| = j$, $A_t[I_{\pi(S)}]$ is an order-$(K-j)$ slice of the tensor $A_t\in (\R^n)^{\ot K}$, i.e.\ $A_t[I_{\pi(S)}]\in (\R^n)^{\ot K-j}$.

        \item
        The product over $j\in[K]$ and $S\in \kappa_j(\gamma)$ (where $|\kappa_j(\gam)| = \gam_j$) can further be written as the product between the tensor $\bigotimes_{j\in[K]} \bigotimes_{S\in\kappa_j(\gam)} A_t[I_{\pi(S)}]$, which is a tensor of order $\sum_{j=1}^{K}(K-j)\gamma_j = \deg(\gamma)$, and the tensor $x^{\ot \deg(\gam)}$.
        Therefore, the term in $\partial_I A_t^{3D}$ corresponding to $\gamma$ can be written as $A_t^{3D-\sum_{j=1}^K \gamma_j} \cdot p_{t,I,\gamma}(x)$ for some polynomial $p_{t,I,\gamma}$ of degree $\deg(\gam)$.
    \end{itemize}
\end{remark}

With Definition~\ref{def:generalized-bucket-profile} and the discussion in Remark~\ref{rem:bucket-profile-interpretation} in mind, we now write out the partial derivatives for $I = \{i_1,\dots, i_{2D}\} \in [n]^{2D}$ as follows,
\begin{equation*}
\begin{aligned}
    \partial_I A_t^{3D}(x) &= \sum_{\gam\in \Gamma_{2D}} \sum_{\pi\in \bbS_{2D}} c(\gam,K) A_t(x)^{3D - \sum_{j=1}^K \gamma_j } \prod_{j \in [K] } \prod_{S\in \kappa_j(\gam)} \langle A_t[I_{\pi(S)}] , x^{\otimes K-j} \rangle \\
    &= \sum_{\gam\in \Gamma_{2D}}c(\gam,K) A_t(x)^{3D - \sum_{j=1}^K \gamma_j } 
     \sum_{\pi\in \bbS_{2D}} \Iprod{ \bigotimes_{j \in [K] } \bigotimes_{S\in \kappa_j(\gam)} A_t[I_{\pi(S)}], x^{\ot \sum_{j=1}^K (K-j) \gamma_j }} \\
    &= \sum_{\gamma \in \Gamma_{2D}} A_t(x)^{3D - \sum_{j=1}^K \gamma_j} \cdot  p_{t,I,\gamma}(x)
\end{aligned}
\end{equation*}
where $c(\gamma,K)$ is a constant and $\deg(p_{t,I,\gamma}) = \sum_{j=1}^K (K-j) \gamma_j = \deg(\gamma)$.

We remark that the exponent $3D - \sum_{j=1}^K \gamma_j$ of $A_t(x)$ is minimized at $D$ when $\gamma = \gamma^* = (2D, 0, \dots, 0)$ and we have $\deg(p_{t,I,\gamma^*}) = 2D\cdot (K-1)$.
Note also that the total degree of $A_t^{3D}$ is $3KD$, hence after taking $2D$ derivatives, the degree of $\partial_I A_t^{3D}$ is $(3K-2)D$.

Next, we set $x = My$ where $M$ is the given restriction matrix, and let $q_{t,I,\gamma}(y) = p_{t,I,\gamma}(My)$.
For a given bucket profile $\gam$, the polynomial $q_{t,I,\gamma}(y)$ is a degree $\deg(\gamma)$ polynomial in variables $y$.
Similar to the strategy in Section~\ref{sec:U-equals-V-hd}, we define the linear system with the focus on $\gamma = \gamma^*$:

\begin{definition}[Linear system for $\calU_D = \calV_D$ for degree-$K$ polynomials]
\label{def:linear-system-for-U-equals-V-high-deg}
    Given arbitrary degree-$(K-1)D$ polynomials $Q_1,\dots,Q_m$, we define the following linear system in variables $\{R_I\}_{I\in[n]^{2D}}$:
    Given arbitrary degree-$2(K-1)D$ homogeneous polynomials $Q_1,\dots,Q_m$, we define the following linear system in variables $\{R_I\}_{I\in [n]^{2D}}$:
    \begin{equation*}
    \begin{aligned}
        &\sum_{I\in[n]^{2D}} R_I \cdot q_{t,I,\gamma^*}(y) = Q_t(y), & & \forall t\in [m], \\
        &\sum_{I\in[n]^{2D}} R_I \cdot q_{t,I,\gamma}(y) = 0, & & \forall \gamma \neq \gamma^*,\ t\in[m] \mper
    \end{aligned}
    \end{equation*}
\end{definition}

Note that $q_{t,I,\gamma}$ is of degree $\deg(\gamma)$, so each $\gamma \in \Gamma_{2D}$ and each multiset $J\in [\ell]^{\deg(\gamma)}$ gives a linear constraint in the linear system above.
Thus, for a $J\in [\ell]^{\deg(\gamma)}$, we would like to determine the coefficient of $y_J$ in $q_{t,I,\gamma}(y)$.
For convenience, we let $J' \in [n]^{\deg(\gamma)}$ be the multiset in $[n]$ if we map $J$ back to $[n]$.

\begin{equation*}
\begin{aligned}
    \wh{q}_{t,I,\gamma}(J) &= c'(\gamma,K) \cdot \sum_{\pi \in \bbS_{2D}} \sum_{\pi' \in \bbS_{\deg(\gamma)}}
        \prod_{i\in [K]} \prod_{S,T \in \kappa_i \| \wt{\kappa}_i(\gamma)} A_t[I_{\pi(S)}, J'_{\pi'(T)}]  \\
    &= c'(\gamma,K) \cdot \sum_{\pi \in \bbS_{2D}} \sum_{\pi' \in \bbS_{\deg(\gamma)}}
        \Paren{ \bigotimes_{i\in[K]} \bigotimes_{T\in \wt{\kappa}_i(\gamma)} A_t[J'_{\pi'(T)}] }[\pi(I)] \\
    &= c''(\gamma,K) \cdot \Sym \Paren{ \sum_{\pi' \in \bbS_{\deg(\gamma)}} \bigotimes_{i\in[K]} \bigotimes_{T\in \wt{\kappa}_i(\gamma)} A_t[J'_{\pi'(T)}] }[I] \mper
\end{aligned}
\end{equation*}

Let us parse through the expression above.
First, $|I| = 2D$ by definition and $|J| = \deg(\gamma)$ since this is the degree of $q_{t,I,\gamma}(y)$.
We recall Definition~\ref{def:generalized-bucket-profile}: for an $i\in [K]$, $\kappa_i \| \wt{\kappa}_i(\gam)$ denotes $\paren{(S_1,T_1),\dots, (S_{\gam_i}, T_{\gam_i})}$ where each $S, T \in \kappa_i \| \wt{\kappa}_i(\gamma)$ satisfy $|S| = i$ and $|T| = K-i$.
Thus, $|I_{\pi(S)}| = i$ and $|J'_{\pi(T)}| = K-i$, and together they index a single entry $A_t[I_{\pi(S)}, J'_{\pi'(T)}]$ of the order-$K$ tensor $A_t$.
Furthermore, $A_t[J'_{\pi'(T)}]$ is an order-$i$ slice of the tensor $A_t$, and thus taking tensor products over $i\in[K]$ and $T\in \wt{\kappa}_i(\gam)$, we get a tensor of order $\sum_{i=1}^K \gamma_i \cdot i = 2D$ as desired.

At a high level, we have written $\wh{q}_{t,I,\gamma}(J)$, which is the coefficient of monomial $y_J$ in $q_{t,I,\gamma}(y)$, as an entry of an order-$2D$ symmetric tensor indexed by $I \in [n]^{2D}$.
Now, we can rewrite $\sum_{I} R_I \cdot \wh{q}_{t,I,\gamma}(J)$ as an inner product between the tensor $R$ and an order-$2D$ symmetric tensor determined by $t, \gamma$ and $J$.
This leads us to the following matrix,

\begin{definition} \label{def:L-matrix-high-degree}
    For a fixed $K\geq 2$, for each $\gam$ in $\Gamma_{2D}$, let $L_\gam$ be the $m\ell_{\deg(\gam)} \times n^{2D}$ matrix  whose each row is indexed by $t\in [m]$ and a multiset $J \in [\ell]^{\deg(\gam)}$:
    \begin{equation*}
        L_{\gamma}[(t,J), \cdot] = \fl \Paren{ \Sym\Paren{ \sum_{\pi\in \bbS_{\deg(\gam)}} \bigotimes_{i\in [K]} \bigotimes_{T\in \wt{\kappa}_i(\gamma)} A_t[J'_{\pi(T)}] }}
    \end{equation*}
    i.e.\ each row is a flattened vector of a symmetric order $2D$ tensor of dimension $n$.
    Moreover, let $r_D = m \sum_{\gamma\in \Gamma_{2D}} \ell_{\deg(\gamma)}$, and define $L$ to be the $r_D \times n^{2D}$ matrix formed by concatenating the rows of $L_{\gamma}$ for $\gamma \in \Gamma_{2D}$.
\end{definition}

Now, we prove a singular value lower bound for $L$. We defer the proof to Section~\ref{sec:sval-U-equals-V}.

\begin{lemma}[Singular value lower bound for $L$ for degree-$K$ polynomials]
\label{lem:singular-value-of-L-higher-deg}
      For a fixed $K$, $D\in \N$, let $m,\ell,n\in \N$ such that $m \ell^{2(K-1)D} \leq (\frac{n}{\polylog(n)})^{2D}$.
    Let $L \in \R^{r_D \times n^{2D}}$ be the matrix defined in Definition~\ref{def:L-matrix-high-degree}.
    Then, with probability $1 - n^{-\Omega(D)}$, $\sigma_{r_D}(L) \geq \Omega(n^{D})$.
\end{lemma}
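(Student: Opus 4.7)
Following the same blueprint used for the quadratic case in Lemma~\ref{lem:singular-value-of-L} and articulated in Section~\ref{sec:overview}, I will lower bound
\[ \sigma_{r_D}(L)^2 = \lambda_{\min}(LL^\top) \]
by writing $LL^\top = \Lambda + \Delta$ with $\Lambda \coloneqq \E_{A}[LL^\top]$ and $\Delta \coloneqq LL^\top - \Lambda$, and then establishing (i) $\Lambda \succeq \Omega(n^{2D}) \cdot I_{r_D}$ and (ii) $\spec{\Delta} = o(n^{2D})$ with high probability, so that by Weyl's inequality $\lambda_{\min}(LL^\top) \geq \Omega(n^{2D})$. Note that the assumption $m\ell^{2(K-1)D} \leq (n/\polylog(n))^{2D}$ gives $r_D \leq n^{2D}/\polylog(n)$, so $L$ is a ``fat'' matrix for which this strategy is the right one.

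\textbf{Step 1 (structure of $\Lambda$).} Fix two rows indexed by $(t, J, \gamma)$ and $(t', J'', \gamma')$. Each row is the flattening of a symmetrized order-$2D$ tensor whose entries are products of slices of $A_t$ (of total order $2D$) indexed by $J'$ according to the partition $\wt\kappa(\gamma)$. Taking the inner product and applying Wick's theorem (over the independent Gaussian coefficients of $A_1,\dots, A_m$), only those contractions survive in which $t = t'$ and the multisets of slice-index patterns induced by $(\gamma, J')$ and $(\gamma', J''{}')$ match. The dominant surviving diagonal contribution is a contraction of order $n^{2D}$ (each free index of the symmetric tensor produces a factor of $n$), with a positive combinatorial constant depending only on the bucket profile and the symmetry class of the multiset $J$. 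Any off-diagonal surviving term can be grouped into blocks indexed by symmetry type; each block is a positive multiple of identity plus a correction of spectral norm $O(n^{2D}/\polylog(n))$ under the given $m, \ell, n$ regime.

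\textbf{Step 2 (graph-matrix bound on $\Delta$).} Each entry of $LL^\top$ is a polynomial of degree $2K$ in independent Gaussian coefficients of the $A_t$'s, and $\Delta$ decomposes as a sum over non-trivial Wick-contraction patterns indexed by pairs of bucket profiles $(\gamma, \gamma')$ together with a matching between their slice-indices. Each such term is, up to a combinatorial constant, a graph matrix in the sense of~\cite{graphmatrixbounds}; applying the standard spectral-norm bound for graph matrices reduces the task to analyzing the minimum vertex separator of the underlying shape. A short counting using $\deg(\gamma),\deg(\gamma')\leq 2(K-1)D$ and $\sum_i i\gamma_i = 2D$ shows that every such shape is $o(n^{2D})$ under our parameter regime, and since the number of shapes is $O_{D,K}(1)$, a union bound gives $\spec{\Delta} \leq n^{2D}/\polylog(n)$ with high probability.

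\textbf{Step 3 (positivity charging) and the main obstacle.} Just as in~\cite{BHK19,GJJPR20,HK22,sparseindset} and the quadratic case of Lemma~\ref{lem:singular-value-of-L}, some graph matrices in the decomposition of $LL^\top$ will have large spectral norm but contribute only positive semidefinite terms; these are absorbed into $\Lambda$ and thereby removed from the matrix being bounded in Step 2. The main obstacle is exactly this positivity separation at generic $K$: whereas for $K=2$ only the two bucket profiles $(2D,0)$ and $(2D-2,1),\dots,(0,D)$ need to be analyzed, here the generalized bucket profiles $\gamma = (\gamma_1,\dots,\gamma_K)$ produce a combinatorially richer family of shapes, in which partitions of several different sizes $K-i$ (coming from $i$-th partial derivatives) appear simultaneously in the same graph. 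The bulk of the work is in enumerating these shapes, identifying which contribute only positive semidefinite terms and can be charged to $\Lambda$, and verifying the requisite $o(n^{2D})$ norm bound for the remaining signed shapes under $m\ell^{2(K-1)D} \leq (n/\polylog n)^{2D}$.
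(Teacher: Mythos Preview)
Your overall strategy---reduce to $\lambda_{\min}(LL^\top)\geq\Omega(n^{2D})$ and control the non-diagonal part via graph-matrix norm bounds---is the paper's strategy as well, so the architecture is right. Two substantive corrections.

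First, your Step~3 misidentifies the main obstacle. For the matrix $L$ there is \emph{no} positivity-charging step: under $m\ell^{2(K-1)D}\leq(n/\polylog n)^{2D}$, every off-diagonal shape that arises from $LL^\top$ already has spectral norm $o(n^{2D})$. You are importing the difficulty from the analysis of $V^\top V + N^\top N$ (Lemma~\ref{lem:VV-NN}/\ref{lem:VV-NN-generalized}), where large-norm shapes do appear and must be paired into PSD terms or canceled by $N^\top N$. The $L$ analysis is strictly easier: the paper splits $LL^\top$ into its \emph{literal} diagonal entries (shown to be $\Omega(n^{2D})$ by Gaussian concentration, after first restricting to columns indexed by $H\in[n]^{2D}$ with distinct coordinates via Fact~\ref{fact:deleting-rows-sval}) and the off-diagonal part, and bounds the latter shape by shape---no PSD absorption needed.

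Second, your $\Lambda=\E[LL^\top]$ versus $\Delta$ decomposition is not the paper's and, as stated, leaves a gap: within a fixed $t$-block, $\E[LL^\top]$ has nonzero off-diagonal entries (Wick pairings among the $A_t$-slices), and you have not shown these are dominated by the diagonal. The paper avoids this by not taking expectations at all. What you are actually missing is the concrete vertex-separator case analysis for the shapes in Definition~\ref{def:LL-graph-matrices}: (i) when $\square{s}\neq\square{t}$, split on whether zero, one, or both square vertices lie in the separator, yielding bounds $\wt O(m\ell^{2(K-1)D})$, $\wt O(\sqrt{m}\,\ell^{(K-1)D}n^D)$, and $\wt O(\ell^{(K-1)D}n^D)$ respectively; (ii) when $\square{s}=\square{t}$, track phantom edges and isolated hexagon vertices to get $\wt O(n^{2D-1/2}\ell^{(K-1)/2})$; (iii) handle ``surprise'' collapses of hexagon vertices into $[\ell]$. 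Each bound is $o(n^{2D})$ under the hypothesis, and this is the full content of the lemma. (Minor: entries of $LL^\top$ have degree up to $4D$ in the Gaussians, not $2K$.)
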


The proof of Lemma~\ref{lem:U-equals-V-high-deg} follows immediately from Lemma~\ref{lem:singular-value-of-L-higher-deg}.

\subsection{Analysis for \texorpdfstring{$\calV_D$}{V\_D} and the span finding step}
\label{sec:analysis-of-V-high-deg}

Recall that we define
\begin{equation*}
    \calV_D = \{ B_t(y)^D y_T \mid t\in [m], T\in[\ell]^{2(K-1)D}\} \mper
\end{equation*}

In the span finding step, we will sample a random polynomial $G(y)$ of degree $2(K-1)D$.
We need to analyze what sets of polynomials $(p_1,\dots, p_m, q)$ satisfy
\begin{equation}
\label{eq:null-space-high-deg}
    \sum_{t=1}^m B_t(y)^D p_t(y) + G(y) q(y) = 0 \mcom
\end{equation}
where $\deg(p_t) = 2(K-1)D$ and $\deg(q) = KD$.
Thus, we define the following, which is the original $V$ matrix with an extra block.

\begin{definition}[Matrix $\ol{V}$]
\label{def:V-bar-matrix-high-deg}
    Let $G \in (\R^\ell)^{\ot 2(K-1)D}$ be a tensor with i.i.d.\ Gaussian entries.
    We define $\ol{V}$ to be the $\ell_{(3K-2)D} \times (m\ell_{2(K-1)D} + \ell_{KD})$ matrix of the form
    \begin{equation*}
        \ol{V} = \begin{bmatrix}
            V \mid V_G
        \end{bmatrix}
    \end{equation*}
    where $V_G$ is a $\ell_{(3K-2)D} \times \ell_{KD}$ matrix whose columns are coefficient vectors of the polynomials $\ell^{-\frac{(K-2)D}{2}} G(y) y_S$ for $S \in [\ell]^{KD}$.
    Note that the scaling of $\ell^{-\frac{(K-2)D}{2}} G(y) y_S$ is to ensure that the columns of $V$ and $V_G$ have roughly the same norms.
\end{definition}

Observe that in \pref{eq:null-space-high-deg}, the degree of $B_t(y)^D$ is $KD$ but the degree of $p_t(y)$ is $2(K-1)D$ which is larger than $KD$ for $K > 2$.
Therefore, clearly the dimension of the null space of $V$ is much larger than $\binom{m}{2}$ (which is the case for $K=2$).
Furthermore, we have the extra block $V_G$ which gives additional null space.
We define the following sets of $(p_1,\dots,p_m, q)$ which are the ``obvious'' solutions to \pref{eq:null-space-high-deg}.

\begin{definition}[Null space of $\ol{V}$]
    Given degree-$K$ homogeneous polynomials $B_1,\dots, B_m$ and a degree-$2(K-1)D$ polynomial $G$, we define $\calN_1$ to be the following set of $\binom{m}{2} \ell_{(K-2)D}$ tuples of polynomials $(p_1,\dots,p_m, q)$:
    \begin{equation*}
        \calN_1 = \Set{ (p_1,\dots, p_m, q) \mid p_{t_1} = B_{t_2}(y)^D y_T,\ p_{t_2}= -B_{t_1}(y)^D y_T,\
        p_{s} = 0,\
        q=0 \textnormal{ for } s \neq t_1,t_2 }_{\substack{t_1 < t_2 \leq m,\\ T \in [\ell]^{(K-2)D}} }
    \end{equation*}
    and $\calN_2$ to be the following set of $m$ tuples:
    \begin{equation*}
        \calN_2 = \Set{ (p_1,\dots, p_m, q) \mid p_t = G(y), p_{s} = 0, \text{ and } q = B_t(y)^D \text{ for } s\neq t }_{t\in [m]} \mper
    \end{equation*}
    Here each $p_t$ is degree $2(K-1)D$ and $q$ is degree $KD$.
\end{definition}

We next show that $\calN_1, \calN_2$ are in fact the \emph{only} solutions to \pref{eq:null-space-high-deg}.

\begin{lemma}[Singular value lower bound for $\ol{V}$]
\label{lem:V-singular-value-high-degree}
    Fix $K, D\in \N$ and $K\geq 3$.
    Let $m,\ell \in \N$ such that $m \leq (\frac{\ell}{\polylog(\ell)})^{KD/2}$.
    Let $B_1, \dots, B_m$ be degree-$K$ homogeneous polynomials and let $G$ be a degree-$2(K-1)D$ homogeneous polynomial, all in $\ell$ variables and have i.i.d.\ standard Gaussian coefficients.
    Then, with probability $1 - \ell^{-\Omega(D)}$, the set of polynomials $(p_1,\dots, p_m, q)$ with $\deg(p_t) = 2(K-1)D$ and $\deg(q) = KD$ satisfying
    \begin{equation*}
        \sum_{t=1}^m B_t(y)^D p_t(y) + G(y) q(y) = 0
    \end{equation*}
    forms a subspace of dimension $\binom{m}{2}\ell_{(K-2)D} + m$ spanned by $\calN_1 \cup \calN_2$.
    
    Furthermore, the matrix $\ol{V}$ defined in Definition~\ref{def:V-bar-matrix-high-deg} satisfies $\rank(\ol{V}) =  m\ell_{2(K-1)D} - |\calN_1| - |\calN_2|$ and $\sigma_{\rank(\ol{V})}(\ol{V}) \geq \Omega(\ell^{KD/2})$.
\end{lemma}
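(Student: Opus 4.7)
The plan is to follow the same blueprint as Lemma~\ref{lem:V-singular-value-hd}. Let $\ol{N}$ denote the $(|\calN_1|+|\calN_2|) \times (m\ell_{2(K-1)D}+\ell_{KD})$ matrix whose rows are the (flattened) tuples $(p_1,\ldots,p_m,q)$ ranging over $\calN_1 \cup \calN_2$, where the last $\ell_{KD}$ coordinates index the $q$-component (scaled consistently with the normalization in Definition~\ref{def:V-bar-matrix-high-deg}). By definition each row of $\ol{N}$ solves $\sum_t B_t^D p_t + Gq = 0$, so $\ol{V}\ol{N}^\top = 0$, which means the column span of $\ol{V}^\top \ol{V}$ is orthogonal to that of $\ol{N}^\top \ol{N}$. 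Consequently, once we establish $\rank(\ol{N})=|\calN_1|+|\calN_2|$ and a lower bound $\lambda_{\min}(\ol{V}^\top\ol{V}+\ol{N}^\top\ol{N}) \geq \Omega(\ell^{KD})$, the claims on $\rank(\ol{V})$, on the coincidence of the null space with $\spn(\calN_1 \cup \calN_2)$, and on $\sigma_{\rank(\ol V)}(\ol{V}) \geq \Omega(\ell^{KD/2})$ all follow immediately.

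The rank of $\ol{N}$ is handled in two cheap pieces. Joint independence of $\calN_1$ and $\calN_2$ is automatic because every element of $\calN_2$ has a nonzero $q$-coordinate while every element of $\calN_1$ has $q=0$. Within $\calN_2$, independence reduces to the linear independence of $B_1^D,\ldots,B_m^D$ as vectors in $\R^{\ell_{KD}}$, which holds w.h.p.\ for $m \leq (\ell/\polylog(\ell))^{KD/2}$ by a standard singular-value lower bound on the Khatri--Rao-type matrix of powers of random polynomials (this is what the condition $m \leq \wt{O}(\ell^{KD/2})$ is tuned to guarantee). Within $\calN_1$, for each fixed pair $(t_1,t_2)$ the elements indexed by $T \in [\ell]^{(K-2)D}$ contribute the polynomials $B_{t_2}^D y_T$ (in the $t_1$-th block) and $-B_{t_1}^D y_T$ (in the $t_2$-th block), so independence over $T$ follows from the linear independence of $\{B_{t}^D y_T\}_T$, which is again a consequence of the same singular value bound on $V$ (restricted to a single $t$). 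Independence across distinct pairs $(t_1,t_2)$ follows from the distinct block supports by the same argument as in Lemma~\ref{lem:rank-of-N}.

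The heart of the proof is establishing $\lambda_{\min}(\ol{V}^\top\ol{V}+\ol{N}^\top \ol{N}) \geq \Omega(\ell^{KD})$. Using Claim~\ref{claim:norm-of-polynomial-powers}, the diagonal entries of $\ol{V}^\top\ol{V}$ are $\|B_t^D y_T\|^2$ and $\ell^{-(K-2)D}\|Gy_S\|^2$, both of which concentrate at $\Theta(\ell^{KD})$ (the scaling in $V_G$ was chosen exactly to equalize these); likewise the diagonal entries of $\ol{N}^\top \ol{N}$ are sums of norms of $B_t^D y_T$ or $G$, contributing nonnegatively. Writing $\ol{V}^\top\ol{V} + \ol{N}^\top\ol{N} = \Theta(\ell^{KD})\cdot \Id + E$, the task is to show $\|E\| = o(\ell^{KD})$. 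The entries of $E$ are off-diagonal inner products of the form $\iprod{B_s^D y_S, B_t^D y_T}$, cross-terms $\ell^{-(K-2)D/2}\iprod{B_t^D y_T, G y_S}$, and their $\ol{N}^\top\ol{N}$ counterparts, each a low-degree polynomial in the independent Gaussian coefficients of $B_1,\ldots,B_m,G$.

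\textbf{The main obstacle} is the spectral norm bound on $E$, and we expect to handle it exactly by the graph-matrix machinery sketched in Section~\ref{sec:overview-singular-val}. Decompose $E$ as a sum of graph matrices indexed by shapes that encode which indices of $B_t$ are contracted against which indices of $y_S, y_T, B_s$, or $G$, together with the block ($V$--$V$, $V$--$V_G$, $N$--$N$) the shape belongs to. Standard trace moment bounds from \cite{graphmatrixbounds,sparseindset} then yield spectral norms that, on the regime $m \leq \wt O(\ell^{KD/2})$, are $o(\ell^{KD})$ for the ``proper'' shapes. As in Lemma~\ref{lem:VV-NN} and in the sum-of-squares lower bound literature~\cite{BHK19,GJJPR20,HK22,sparseindset}, a handful of shapes — those corresponding precisely to the obvious null-space contributions already captured by $\calN_1 \cup \calN_2$ — will have norm comparable to or exceeding $\ell^{KD}$; we expect to show these pieces are PSD (up to lower-order corrections cancelled by the $\ol{N}^\top \ol{N}$ term, which was designed for exactly this purpose) and therefore only help the $\lambda_{\min}$ bound. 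Once $\|E\| = o(\ell^{KD})$ is in hand, Weyl's inequality closes the argument.
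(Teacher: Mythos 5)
Your proof proposal follows exactly the paper's blueprint: define the padded $\ol{N}$ (the paper's Definition~\ref{def:N-bar-matrix-high-deg}), observe $\ol{V}\ol{N}^\top = 0$, reduce the lemma to (i) $\rank(\ol{N}) = |\calN_1|+|\calN_2|$ (the paper's Lemma~\ref{lem:rank-of-N-high-deg}) and (ii) $\lambda_{\min}(\ol{V}^\top\ol{V}+\ol{N}^\top\ol{N}) \geq \Omega(\ell^{KD})$ (the paper's Lemma~\ref{lem:VV-NN-high-deg}), then handle (ii) by graph-matrix decomposition with the observation that the few large-norm shapes are either PSD or cancel against $\ol{N}^\top\ol{N}$ --- precisely the paper's $\ol{W}$-cancellation in Lemmas~\ref{lem:VG-VG} and~\ref{lem:NG-NB}. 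One small imprecision: your justification "independence across distinct pairs $(t_1,t_2)$ follows from the distinct block supports" is not literally true, since pairs sharing an index overlap in a $p_t$-block; the actual argument (as in the paper's Lemma~\ref{lem:rank-of-N}) is a diagonal-versus-off-diagonal bound on $\ol{N}\ol{N}^\top$, where the off-diagonal is zero when all four indices differ and has controlled norm (via $\iprod{B_{s}^D, B_{t}^D}$-type graph matrices) when exactly one index coincides --- but since you also cite "the same argument as in Lemma~\ref{lem:rank-of-N}," the intended reasoning is correct.
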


To prove Lemma~\ref{lem:V-singular-value-high-degree}, we first define the following matrices analogous to Definition~\ref{def:N-matrix} which represent $\calN_1$ and $\calN_2$:

\begin{definition}[Matrix $N$]
\label{def:null-space-of-V-high-deg}
    We define $N$ to be the $\binom{m}{2} \ell_{(K-2)D} \times m \ell_{2(K-1)D}$ matrix whose rows are indexed by $(t_1, t_2, T)$ for $t_1 < t_2 \in [m]$ and $T \in [\ell]^{(K-2)D}$, and each row represents a collection of $m$ degree-$2(K-1)D$ polynomials $(p_1,\dots, p_m)$ such that
    \begin{equation*}
        p_s(y) = 
        \begin{cases}
            B_{t_2}(y)^D y_T & \text{if } s = t_1 \\
            -B_{t_1}(y)^D y_T & \text{if } s = t_2 \\
            0 & \text{otherwise.}
        \end{cases}
    \end{equation*}
\end{definition}

\begin{definition}[Matrix $\ol{N}$]
\label{def:N-bar-matrix-high-deg}
    Let $G \in (\R^\ell)^{\ot 2(K-1)D}$ be a tensor with i.i.d.\ Gaussian entries.
    We define $\ol{N}$ to be the $(\binom{m}{2} \ell_{(K-2)D} + m) \times (m \ell_{2(K-1)D} + \ell_{KD})$ of the form
    \begin{equation*}
        \ol{N} = \begin{bmatrix}
            N & 0 \\
            N_G & N_B
        \end{bmatrix}
    \end{equation*}
    where $N$ is $\binom{m}{2} \ell_{(K-2)D} \times m \ell_{2(K-1)D}$ defined in Definition~\ref{def:N-bar-matrix-high-deg}, $N_G$ is $m\times m\ell_{2(K-1)D}$ such that for $s\in [m]$, $t\in [m]$ and $I \in [\ell]^{2(K-1)D}$,
    \begin{equation*}
        N_G[s, (t,I)] = \begin{cases}
            \ell^{-\frac{(K-2)D}{2}} G[I] & \text{if } s=t, \\
            0    & \text{otherwise,}
        \end{cases}
    \end{equation*}
    and $N_B$ is an $m \times \ell_{KD}$ matrix such that for $s\in[m]$, $J \in [\ell]^{KD}$,
    \begin{equation*}
        N_B[s, J] = -(B_s(y)^D)[J] \mper
    \end{equation*}
\end{definition}

Observe that the rows of $[N\mid 0]$ exactly represent $\calN_1$ and the rows of $[N_G \mid N_B]$ represent $\calN_2$.
Next, we prove the following lemma analogous to Lemma~\ref{lem:rank-of-N}:

\begin{lemma}[Rank of $\ol{N}$]
\label{lem:rank-of-N-high-deg}
    Let $K,D,m,\ell\in\N$ be the same parameters as Lemma~\ref{lem:V-singular-value-high-degree}.
    Let $\ol{N}$ be the $(\binom{m}{2} \ell_{(K-2)D} + m) \times (m \ell_{2(K-1)D} + \ell_{KD})$ matrix defined in Definition~\ref{def:N-bar-matrix-high-deg}.
    Then, with probability $1 - \ell^{-\Omega(D)}$,
    \begin{equation*}
        \rank(\ol{N}) = \binom{m}{2} \ell_{(K-2)D} + m,\quad 
        \sigma_{\rank(\ol{N})}(\ol{N}) \geq \Omega(\ell^{KD/2}) \mper
    \end{equation*}
\end{lemma}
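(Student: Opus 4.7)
The plan is to establish the singular value lower bound on $\ol{N}$ by controlling $\ol{N}\,\ol{N}^\top$ block by block, mirroring the strategy used for the $K=2$ case in Lemma~\ref{lem:rank-of-N} but with an extra block coming from $N_G$ and $N_B$. Writing out the product using the block decomposition,
\[
\ol{N}\,\ol{N}^\top \;=\; \begin{bmatrix} NN^\top & NN_G^\top \\ N_G N^\top & N_GN_G^\top + N_B N_B^\top \end{bmatrix},
\]
we will show that each diagonal block satisfies $\lambda_{\min} \geq \Omega(\ell^{KD})$ and that the off-diagonal coupling has spectral norm $o(\ell^{KD})$. A standard Schur-complement-style inequality $\lambda_{\min}\begin{pmatrix} A & B \\ B^\top & C\end{pmatrix} \geq \min(\lambda_{\min}(A),\lambda_{\min}(C)) - \|B\|$ then finishes the job, and taking square roots yields the claimed bound $\sigma_{\rank(\ol{N})}(\ol{N}) \geq \Omega(\ell^{KD/2})$ on the appropriate rank.

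For the top-left block, $NN^\top$ is the Gram matrix of the tuples $(p_1,\dots,p_m)$ in Definition~\ref{def:null-space-of-V-high-deg}: the $((t_1,t_2,T),(t_1',t_2',T'))$ entry is a linear combination of at most two inner products of the form $\iprod{B_t(y)^D y_T, B_{t'}(y)^D y_{T'}}$. By Claim~\ref{claim:norm-of-polynomial-powers}, the diagonal entries concentrate to $\Theta(\ell^{KD})$, while the off-diagonal deviation matrix is precisely the kind of structured random matrix with polynomial-in-Gaussian entries that is handled by the smallest-singular-value method outlined in Section~\ref{sec:overview-singular-val}. Specifically, we reduce to upper bounding the spectral norm of a deviation matrix whose entries are low-degree polynomials in the entries of $B_1,\dots,B_m$, apply the graph-matrix decomposition machinery, and show that for $m \leq (\ell/\polylog(\ell))^{KD/2}$ the deviation is $o(\ell^{KD})$ in spectral norm. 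This is the direct analog of the analysis in Section~\ref{sec:rank-of-N} for the $K=2$ case, and this is the step I expect to be the main technical obstacle because the combinatorial/positivity bookkeeping in the graph-matrix calculation becomes heavier as $K$ grows.

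For the bottom-right block, $N_GN_G^\top$ equals $\ell^{-(K-2)D}\|G\|_F^2 \cdot I_m$ exactly, because the rows of $N_G$ are supported on disjoint column blocks and each row is $\ell^{-(K-2)D/2} G$. Gaussian concentration on $\|G\|_F^2 = \Theta(\ell^{2(K-1)D})$ gives $N_GN_G^\top \succeq \Omega(\ell^{KD}) \cdot I_m$. Since $N_BN_B^\top \succeq 0$, the sum satisfies $\lambda_{\min}(N_GN_G^\top + N_BN_B^\top) \geq \Omega(\ell^{KD})$ with high probability, independently of any finer analysis of $N_B$.

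For the cross block, row $(t_1,t_2,T)$ of $NN_G^\top$ has only two non-zero entries, sitting in columns $t_1,t_2$, each of the form $\pm\ell^{-(K-2)D/2}\iprod{G, B_{t_j}(y)^D y_T}$. Conditioning on $B_1,\dots,B_m$ and using that $G$ is independent standard Gaussian, each entry has magnitude at most $\wt{O}(\ell^{-(K-2)D/2}\cdot \ell^{KD/2}) = \wt{O}(\ell^D)$ with high probability. Summing over the $2\binom{m}{2}\ell_{(K-2)D}$ nonzero entries gives $\|NN_G^\top\|_F^2 \leq \wt{O}(m^2\ell^{KD})$, hence $\|NN_G^\top\| \leq \wt{O}(m\ell^{KD/2}) = o(\ell^{KD})$ under the assumption $m \leq (\ell/\polylog(\ell))^{KD/2}$. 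Combining the three estimates via the perturbation inequality above yields $\lambda_{\min}(\ol{N}\,\ol{N}^\top) \geq \Omega(\ell^{KD})$. The matching rank count $\rank(\ol{N}) = \binom{m}{2}\ell_{(K-2)D} + m$ follows, since $\ol{N}$ has that many rows and $\ol{N}\,\ol{N}^\top$ is non-singular.
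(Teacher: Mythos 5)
Your block-decomposition framework is the right skeleton, and the bottom-right block, cross block, and final combination are all fine: $N_GN_G^\top = \ell^{-(K-2)D}\|G\|^2 I_m = \Theta(\ell^{KD}) I_m$, $N_BN_B^\top \succeq 0$, each nonzero entry of $NN_G^\top$ is $\ell^{-(K-2)D/2}\langle B_{t_j}^D y_T, G\rangle$ which is $\wt O(\ell^D)$ conditioned on the $B_t$'s (using independence of $G$), and the Frobenius-norm bound $\|NN_G^\top\| \leq \wt O(m\ell^{KD/2}) = o(\ell^{KD})$ closes that part. The $2\times 2$-block eigenvalue inequality is also correct.

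The gap is in the top-left block $NN^\top$. For $K\geq 3$ the row index carries the extra multiset $T\in[\ell]^{(K-2)D}$, and this introduces a class of off-diagonal entries that is absent when $K=2$: for the same pair $\{t_1,t_2\}$ and distinct $S\neq T$, the entry is $\langle B_{t_2}^D y_S, B_{t_2}^D y_T\rangle + \langle B_{t_1}^D y_S, B_{t_1}^D y_T\rangle$. These are Gram-type terms (they are the off-diagonal of $C_{t_1}^\top C_{t_1} + C_{t_2}^\top C_{t_2}$ where $C_t$ has columns $B_t^D y_S$), and they do \emph{not} have spectral norm $o(\ell^{KD})$ in general. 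Concretely, with $k=|S\cap T|=0$ and $D\geq 3$, $K\geq 3$, the graph-matrix bound includes a shape where $\lfloor 2D/K\rfloor$ pairs of hyperedges collapse into phantom edges, isolating all of the middle vertices $H'$ (of size $2D$); the minimum separator is then just $\{\square{t}\}$, leaving $2(K-2)D$ non-isolated and $2D$ isolated circle vertices outside, for a bound $\wt O(\ell^{(K-2)D}\cdot \ell^{2D}) = \wt O(\ell^{KD})$. For instance at $K=3$, $D=3$ this term is roughly $\ell^{2D}\, B[S]B[T]$ up to positive factors, which is a rank-one PSD matrix $\ell^{2D} vv^\top$ of spectral norm $\Theta(\ell^{KD})$. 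So the claim ``the deviation is $o(\ell^{KD})$ in spectral norm'' fails, and it is not the ``direct analog of Section~\ref{sec:rank-of-N}'' --- in the $K=2$ case there is no $T$-index so these Gram terms do not appear.

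The fix is the one the paper uses for the analogous issue in $V^\top V$ (Lemma~\ref{lem:VV-diag-blocks}): split the off-diagonal of $NN^\top$ into the block-diagonal part indexed by a fixed pair $\{t_1,t_2\}$, which is a sum of Gram matrices and hence PSD, and the cross-pair part where $|\{s_1,s_2\}\cap\{t_1,t_2\}|=1$, whose graph-matrix norm \emph{is} $o(\ell^{KD})$ under $m\leq (\ell/\polylog\ell)^{KD/2}$. Writing $NN^\top = \Theta(\ell^{KD})I + P + E$ with $P\succeq 0$ and $\|E\|=o(\ell^{KD})$ then recovers $\lambda_{\min}(NN^\top)\geq \Omega(\ell^{KD})$, at which point your Schur-complement step finishes the proof. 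You gesture at ``positivity bookkeeping'' but you should make this PSD charging explicit rather than asserting a spectral-norm bound that does not hold.
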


Finally, we show that the row span of $\ol{N}$ \emph{equals} the null space of $\ol{V}$.
Specifically, we prove the following lemma analogous to Lemma~\ref{lem:VV-NN}:

\begin{lemma}
\label{lem:VV-NN-high-deg}
    Let $K,D,m,\ell\in\N$ be the same parameters as Lemma~\ref{lem:V-singular-value-high-degree}.
    Let $\ol{V}$, $\ol{N}$ be the matrices defined in Definition~\ref{def:V-bar-matrix-high-deg} and \ref{def:N-bar-matrix-high-deg}.
    Then with probability $1-\ell^{-\Omega(D)}$,
    \begin{equation*}
        \lambda_{\min}\Paren{\ol{V}^\top \ol{V} + \ol{N}^\top \ol{N}}
        \geq \Omega(\ell^{KD}) \mper
    \end{equation*}
\end{lemma}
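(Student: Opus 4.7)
The plan is to follow the ``deviation matrix'' template outlined in Section~\ref{sec:overview} and the quadratic version (Lemma~\ref{lem:VV-NN}), extending it to the present higher-degree setting. Let $M = \ol{V}^\top \ol{V} + \ol{N}^\top \ol{N}$. I will write $M = c\, \ell^{KD}\, I + \Delta$ for a suitable constant $c > 0$ and show that the spectral norm of $\Delta$ is at most $(c-\Omega(1))\,\ell^{KD}$, from which the desired lower bound $\lambda_{\min}(M) \geq \Omega(\ell^{KD})$ follows by Weyl's inequality (Theorem~\ref{thm:weyl}).

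First, I would verify that the diagonal of $M$ concentrates around $\Theta(\ell^{KD})$. Each column of $V$ has the form $B_t(y)^D y_T$ whose squared norm is $\Theta(\ell^{KD})$ by Claim~\ref{claim:norm-of-polynomial-powers}, while the normalization $\ell^{-(K-2)D/2}$ in $V_G$ is chosen precisely so that $\Norm{\ell^{-(K-2)D/2} G(y) y_S}^2 = \Theta(\ell^{KD})$. A symmetric accounting of the row norms of $\ol{N}$ shows that the diagonal contribution of $\ol{N}^\top \ol{N}$ is also $\Theta(\ell^{KD})$ per entry: the rows of $[N\mid 0]$ contribute $2\|B_{t'}^D\|^2$, and the rows of $[N_G\mid N_B]$ contribute $\ell^{-(K-2)D}\|G\|^2 + \|B_s^D\|^2$, both of which hit the required scale.

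Next, I would block-decompose $\Delta$ according to the structure of its row-column index pair, namely $(t,I)$ vs.\ $(t',I')$, $(t,I)$ vs.\ $S$, and $S$ vs.\ $S'$. The entries in each block are low-degree polynomials in the independent Gaussian coefficients of $B_1,\dots,B_m,G$, and each such block can be expanded as a linear combination of graph matrices in the sense of~\cite{BHK19,graphmatrixbounds}, with vertices in $[\ell]$ and edges labeled by entries of the $B_t$'s or $G$. Applying the standard spectral-norm bounds on graph matrices then yields upper bounds of the form $\wt{O}(\ell^{a} m^{b})$ for each block; the hypothesis $m\leq (\ell/\polylog \ell)^{KD/2}$ is calibrated precisely so that these bounds add up to $o(\ell^{KD})$.

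The hard part, exactly as foreshadowed in Section~\ref{sec:overview}, is that a handful of graph matrices have spectral norms that exceed $\ell^{KD}$ and cannot be absorbed directly. These arise from the same ``obvious'' dependencies that $\calN_1$ and $\calN_2$ encode: for $\calN_1$, from the inner products $\Iprod{B_{t_1}^D y_{T_1},\ B_{t_2}^D y_{T_2}}$ between distinct index pairs in $V$; and for $\calN_2$, from the cross-block inner products $\Iprod{B_t^D y_I,\ \ell^{-(K-2)D/2} G(y) y_S}$ between $V$ and $V_G$. The key observation is that the corresponding rows of $\ol{N}$ are engineered so that the contributions of $\ol{N}^\top \ol{N}$ exactly cancel the problematic pieces in $\ol{V}^\top \ol{V}$ at leading order. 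For the terms that remain uncanceled, I would argue, via a positivity/charging argument in the spirit of~\cite{BHK19,GJJPR20,HK22,sparseindset}, that the residual deviation matrix is PSD up to a smaller error, so its large eigenvalues can only help the lower bound. Combining the diagonal lower bound, the graph-matrix norm estimates on the generic blocks, and the $\ol{N}^\top \ol{N}$-driven cancellation plus positivity on the exceptional blocks then yields the stated $\lambda_{\min}(M) \geq \Omega(\ell^{KD})$.
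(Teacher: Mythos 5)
Your plan follows the same strategy as the paper: decompose $\ol{V}^\top\ol{V} + \ol{N}^\top\ol{N}$ into a dominant diagonal of order $\Theta(\ell^{KD})$ plus a deviation, expand the deviation into graph matrices, observe that most shapes have negligible norm under $m\leq (\ell/\polylog \ell)^{KD/2}$, and handle the few large shapes via exact cancellation between $\ol{V}^\top\ol{V}$ and $\ol{N}^\top\ol{N}$ (this is the $W$- and $\ol{W}$-matrices in Lemmas~\ref{lem:VV-off-diag-blocks}, \ref{lem:NN}, \ref{lem:VG-VG}, \ref{lem:NG-NB}) together with positivity of the Gram residue. The paper organizes this more modularly by block-decomposing according to the $[V\mid V_G]$/$\begin{bsmallmatrix}N & 0 \\ N_G & N_B\end{bsmallmatrix}$ structure and invoking Lemma~\ref{lem:VV-NN-generalized} for the top-left block, but the content is the same.

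Two slips worth flagging. First, your opening framing ``write $M = c\ell^{KD}\,\Id + \Delta$ and bound $\spec{\Delta}$, then apply Weyl'' is inconsistent with the positivity argument you invoke later: the uncanceled PSD residue \emph{has} large spectral norm, so $\spec{\Delta}$ is in fact large, and Weyl's inequality gives nothing. The correct logic, which the paper uses and which you implicitly reach, is the one-sided bound $M \succeq c\ell^{KD}\,\Id + P + E$ with $P\succeq 0$ and $\spec{E} = o(\ell^{KD})$, so $\lambda_{\min}(M) \geq c\ell^{KD} - \spec{E}$; $\spec{\cdot}$ must be replaced by a PSD decomposition. Second, your claim that the diagonal of $\ol{N}^\top\ol{N}$ is ``$\Theta(\ell^{KD})$ per entry'' confuses $\ol{N}^\top\ol{N}$ (diagonal $=$ squared column norms) with $\ol{N}\,\ol{N}^\top$ (diagonal $=$ squared row norms); the quantities you write down are row norms. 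Fortunately the claim is not needed: the diagonal of $\ol{V}^\top\ol{V}$ alone supplies the $\Omega(\ell^{KD})$ and the paper only uses that $\ol{N}^\top\ol{N}$'s diagonal blocks are PSD, contributing nonnegatively. Neither slip is a gap in the approach, but both would need to be fixed in a final writeup.
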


We defer the proofs to Section~\ref{sec:sval-analysis-of-V-high-deg}.
Combining the above lemmas, we can complete the proof of Lemma~\ref{lem:V-singular-value-high-degree}, which is almost identical to the proof of Lemma~\ref{lem:V-singular-value-hd}.

\begin{proof}[Proof of Lemma~\ref{lem:V-singular-value-high-degree}]
    By the definition of $\ol{V}$ and $\ol{N}$, $\ol{V}\cdot \ol{N}^\top = 0$ because each row of $\ol{N}$ represents a solution to
    \begin{equation*}
        \sum_{t=1}^m B_t(y)^D p_t(y) + G(y) q(y) = 0 \mper
    \end{equation*}
    This implies that the matrices $\ol{V}^\top \ol{V}$ and $\ol{N}^\top \ol{N}$ have orthogonal column span.
    By Lemma~\ref{lem:rank-of-N-high-deg}, $\rank(\ol{N}) = \binom{m}{2} \ell_{(K-2)D} + m$, which is exactly $|\calN_1| + |\calN_2|$, and by Lemma~\ref{lem:VV-NN-high-deg}, the row span of $\ol{N}$ is exactly the null space of $\ol{V}^\top \ol{V}$.
    This implies that $\rank(\ol{V}) = m\ell_{2(K-1)D} - |\calN_1| - |\calN_2|$ and that $\spn(\calN_1 \cup \calN_2)$ is exactly the set of solutions to $\sum_{t=1}^m B_t(y)^D p_t(y) + G(y) q(y) = 0$.
    
    $\ol{V}^\top \ol{V} + \ol{N}^\top \ol{N}$ having smallest eigenvalue at least $\Omega(\ell^{KD})$ further shows that the smallest singular value of $\ol{V}$ is lower bounded by $\Omega(\ell^{KD/2})$. This completes the proof.
\end{proof}

\paragraph{Span finding step.}
With Lemma~\ref{lem:V-singular-value-high-degree}, we can now analyze our span finding step.
For the random polynomial $G$, let
\begin{equation*}
\begin{aligned}
    \calV_G &\coloneqq \spn\Paren{ G(y) y_S \mid S\in [\ell]^{KD}} \mcom \\
    \calW_G &\coloneqq \spn\Paren{ G(y) B_t(y)^D \mid t\in[m] } \mper
\end{aligned} 
\end{equation*}

It is easy to see that the subspace $\calW_G$ lies inside the intersection of $\calV_D$ and $\calV_G$.
Same as the proof of Lemma~\ref{lem:span-finding-hd}, we will show that in fact the intersection is \emph{equal} to $\calW_G$. The following lemma is analogous to Lemma~\ref{lem:intersection-hd}

\begin{lemma}\label{lem:intersection-high-deg}
    For degree-$K$ homogeneous polynomials $B_t(y), t\in [m]$ and degree-$2(K-1)D$ polynomial $G(y)$ whose coefficients are chosen independently at random from $\calN(0,1)$, we have that with probability $1$ over the draw of $A_t$s and $G$, we have:
    \[\calV_D \cap \calV_G = \calW_G \mcom\]
    with $\dim(\calW_p) = m$.
\end{lemma}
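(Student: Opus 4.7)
The plan mirrors the proof of Lemma~\ref{lem:intersection-hd}, with all the real work having been packaged into Lemma~\ref{lem:V-singular-value-high-degree}; the inclusion lemma itself is then essentially a book-keeping exercise. First, the inclusion $\calW_G \subseteq \calV_D \cap \calV_G$ is immediate: each $G(y) B_t(y)^D$ can be expressed as $B_t(y)^D \cdot G(y) \in \calV_D$ by expanding $G$ as a linear combination of degree-$2(K-1)D$ monomials $y_T$, and symmetrically as $G(y) \cdot B_t(y)^D \in \calV_G$ by expanding $B_t(y)^D$ in monomials $y_S$.

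For the reverse inclusion, consider any $f(y) \in \calV_D \cap \calV_G$. By definition there exist degree-$2(K-1)D$ polynomials $p_1,\dots,p_m$ and a degree-$KD$ polynomial $q$ with $f(y) = \sum_{t=1}^m B_t(y)^D p_t(y) = G(y) q(y)$. Rearranging yields
\[
\sum_{t=1}^m B_t(y)^D p_t(y) + G(y) \cdot (-q(y)) = 0,
\]
so the tuple $(p_1,\dots,p_m,-q)$ lies in the null space characterized by Lemma~\ref{lem:V-singular-value-high-degree}, and hence in the linear span of $\calN_1 \cup \calN_2$. (The high-probability statement of Lemma~\ref{lem:V-singular-value-high-degree} combined with the Schwartz--Zippel lemma upgrades this combinatorial characterization to probability $1$ over the draw of $\{B_t\}$ and $G$.) The basis elements drawn from $\calN_1$ have last coordinate $0$, so they contribute $f = G(y)\cdot 0 = 0$. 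The basis elements drawn from $\calN_2$ have last coordinate a scalar multiple of $B_t(y)^D$, so under linear combinations the last coordinate $-q$ lies in $\calW_D = \spn(B_t(y)^D \mid t \in [m])$. Thus $f(y) = G(y)\, q(y) \in G\cdot \calW_D = \calW_G$, which establishes $\calV_D \cap \calV_G \subseteq \calW_G$.

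Finally, for the dimension count $\dim(\calW_G) = m$, note that the $m$ vectors $\{B_t(y)^D\}_{t\in[m]}$ are linearly independent with probability $1$, since they are $m$ independently drawn vectors in $\R^{\ell_{KD}}$ with coordinates given by a nontrivial polynomial in the Gaussian coefficients of $B_t$ (linear dependence is a measure-zero event by Schwartz--Zippel, as $m \ll \ell_{KD}$). Since $G \not\equiv 0$ almost surely and $\R[y]$ is an integral domain, the multiplication-by-$G$ map $q \mapsto G\cdot q$ is injective on polynomials, so the family $\{G(y)B_t(y)^D\}_{t \in [m]}$ remains linearly independent, giving $\dim(\calW_G) = m$. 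The only conceptual hurdle is having the right null-space description for $\ol{V}$, which is why all the substantive probabilistic content is absorbed into Lemma~\ref{lem:V-singular-value-high-degree}; once that is in hand, the present lemma is a short deduction.
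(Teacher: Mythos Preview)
Your proposal is correct and follows essentially the same approach as the paper, which simply states that the proof ``via Lemma~\ref{lem:V-singular-value-high-degree} is identical to the proof of Lemma~\ref{lem:intersection-hd}.'' You have unpacked exactly that argument: the easy inclusion $\calW_G \subseteq \calV_D \cap \calV_G$, then for the reverse inclusion you write any element of the intersection as a null-space relation and invoke the characterization of $\ker(\ol{V})$ from Lemma~\ref{lem:V-singular-value-high-degree} (upgraded to probability~$1$ via Schwartz--Zippel), concluding that $q \in \calW_D$. The one minor difference is the dimension count: the paper (in the quadratic analogue) deduces $\dim(\calW_p)=m$ from the fact that the null space is spanned by exactly $\binom{m+1}{2}$ vectors and has exactly that dimension, forcing all of them to be independent; you instead argue directly that $\{B_t(y)^D\}$ are independent and that multiplication by $G$ is injective. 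Both routes are valid and the difference is cosmetic.
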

The proof of Lemma~\ref{lem:intersection-high-deg} via Lemma~\ref{lem:V-singular-value-high-degree} is identical to the proof of Lemma~\ref{lem:intersection-hd}. Given $\calW_G$, we can then divide out the polynomial $G(y)$ and get the subspace $\spn(B_t(y)^D \mid t\in[m])$. Analogous to Section~\ref{sec:span-finding-hd} we can make the above algorithm error resilient by taking a robust intersection of the subspaces and then performing robust division.

\section{Singular Value Lower Bounds for Structured Random Matrices}

\subsection{Detailed overview}
\label{sec:overview-singular-val}
\paragraph{A two-line strategy}
Before we delve into the ``real'' challenge of what might come up in our analysis, for the sake of exposition, let's recall our G.O.E.\ toy example $A\in \R^{m\times n}$ for $m\ll n$. To obtain a singular value lower bound for $A$ when $m\ll n$, we notice it can be reduced to showing\[ 
A^\top A \approx (1+o(1)) n\cdot \Id \mcom
\]
which amounts to checking 
\begin{itemize}
     \item Large diagonal: $\diag(A^\top A) = (1+o(1)) n$;
    \item Small off-diagonal: $\spec{\text{off-diagonal}(A^\top A)}\leq o(n)$.
\end{itemize}

For the above example, one may find both components rather immediate as the large diagonal follows via standard Gaussian concentration. While establishing a small spectral norm bound for the off-diagonal block might require slightly more work, there are several techniques to establish the nearly optimal bound of $O(\sqrt{m}\sqrt{n} )$ on its spectral norm. That said, we may hope to carry out the above strategy on potentially more complicated matrices (with correlated entries), and let's dig into one example that arises in our analysis for the simplest setting in our main result for decomposing cubics of quadratic.

\paragraph{Example of structured random matrices} 
To describe the matrix, let $A_1, \dots, A_m$ be random matrices of size $n \times n$ with each entry i.i.d.\ $\cN(0,1)$. We will describe a new matrix whose entries are functions of $A_i$s. Let $A_t[i]\in \R^n$ denote the $i$-th column of matrix $A_t$, and let $S\subseteq [n]$ be a subset of size $\ell \approx \sqrt{n}$. Finally, let $G$ be the matrix defined by: \begin{align*}
    G \coloneqq \begin{bmatrix}\dots \\
    A_t[i]\otimes A_t[j]  \\
    \dots
    \end{bmatrix} \in  \R^{m\binom{\ell}{2} \times n^2} \mper
\end{align*}
In other words, each row of $G$ is indexed by $(t,i,j)$ for $t\in [m]$ and $i\neq j \in [\ell]$ with the corresponding row vector be $A_t[i]\otimes A_t[j]$. And a question we would like to answer is the following, 
\begin{question}
   What is the largest $m$ for which $G$ is full (row) rank with inverse-polynomially bounded smallest singular value?
\end{question}
For starters,  $m \sim n$ is a natural upper bound as $G$ becomes a roughly square matrix when we have $m\ell^2 \approx n^2$. Let's now implement the above strategy. The diagonal entry $(t,i,j)$ of $GG^\top$ is given by $\|A_t[i]\otimes A_t[j]\|_2 \approx n^2 $ by standard Gaussian concentration, and the two-step strategy now prompts us to bound the spectral norm of the off-diagonal part of $GG^\top$, i.e.\ the spectral norm of $GG^\top - n^2\cdot \Id_{m\binom{\ell}{2}}$, by $o(n^2)$.

Observe that easy estimates such as the maximum $\ell_1$ norm of any row do not give a useful bound: we expect the off-diagonal entries to have typical magnitude $n$ and that can be ``charged'' to the diagonal entries only if $m \binom{\ell}{2}\cdot n \lesssim n^2$ or $m \leq O(1)$. However, in this very case, one may notice that we are ignoring the potential cancellation from the signs of the entries (as each entry is a mean $0$ random variable) and we can hope to obtain a tighter bound by exploiting this kind of cancellation. Towards this end,  we appeal to the \emph{trace moment method} to obtain tight bounds for these structured random matrices, and on a high level, for ease of our spectral analysis, we adopt \emph{graph matrix decomposition} to systematically represent such random matrices of correlated entries.

\paragraph{Graphical matrix decomposition and norm bounds}
The theory of graphical matrices generally applies to a matrix whose entry is a polynomial of some underlying input. In our case, the off-digonal part of $GG^\top $ is described as:
\begin{align*}
    GG^\top [(s,a,b), (t,c,d)] &= \Iprod{A_s[a], A_t[c]} \cdot \Iprod{A_s[b], A_t[d]}
    = \sum_{i,j\in [n]} A_s[i,a]A_t[i,c]A_s[j,b]A_t[j,d]
\end{align*}
for $(s,a,b) \neq (t,c,d)$. For illustration, we restrict to the case for $s\neq t\in [m]$ and $a,b,c,d$ distinct elements in $[\ell]$, and we remark that in general these cases warrant a careful analysis. Notice we can further decompose the above polynomial entry as \begin{align*}
 GG^\top [(s,a,b), (t,c,d)] & = \sum_{i\neq j\in [n] } A_s[i,a]A_t[i,c]A_s[j,b]A_t[j,d]\\
 &+ \sum_{i=j\in [n] } A_s[i,a]A_t[i,c]A_s[j,b]A_t[j,d]    
\end{align*}
depending on whether $i,j$ ``collide'' with each other. Pictorially, each term in the above matrix corresponds to one of the following diagrams using graph matrix language:
\begin{center}
    \emph{each vertex in the left/right orange oval  corresponds to a variable in the row/column index for the matrix;\\ and each vertex in the middle (outside the orange ovals) corresponds to an indeterminate in the summation with each hyperedge corresponding to a random variable in the polynomial.}
\end{center}
\begin{figure}[ht!]
     \centering
    \includegraphics[width=160pt]{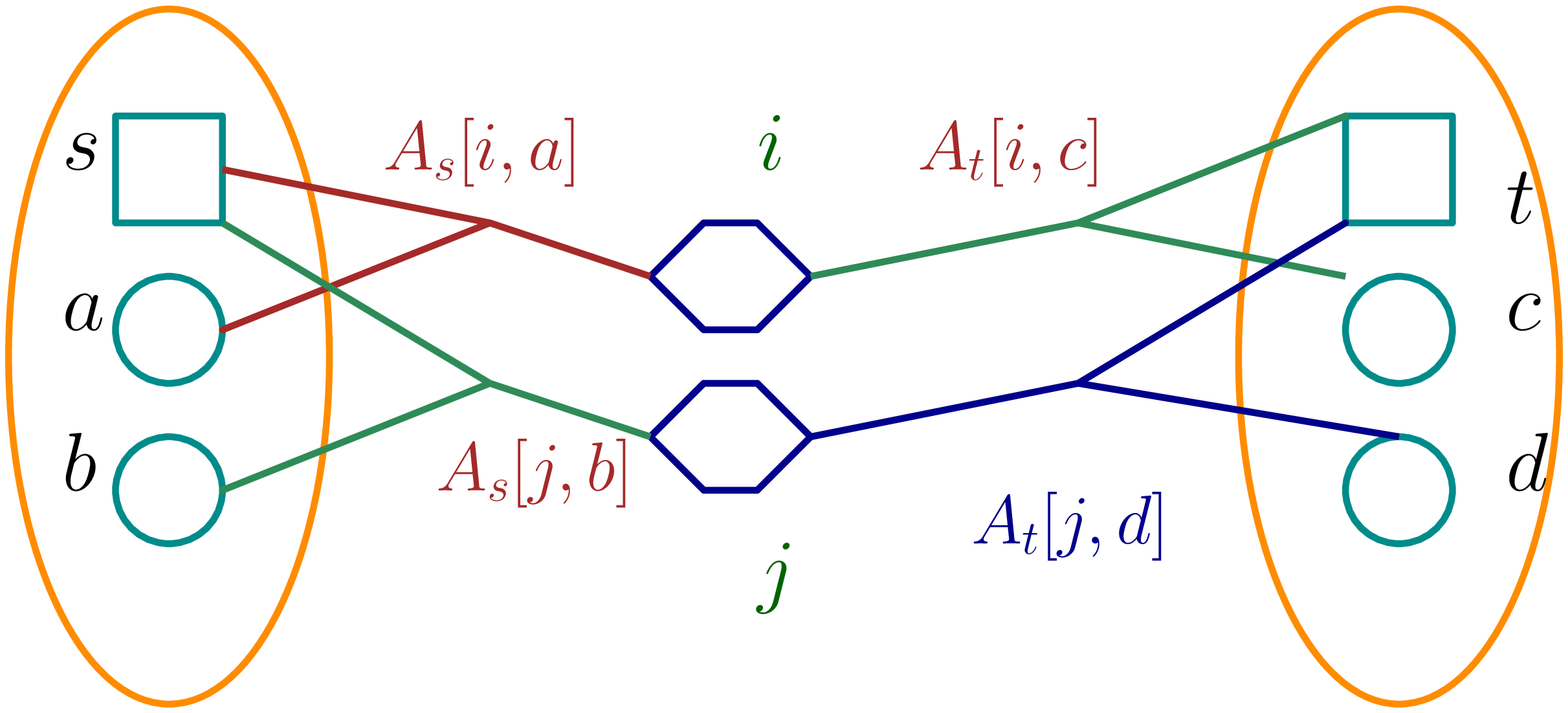}\quad\quad\quad
     \includegraphics[width=160pt]{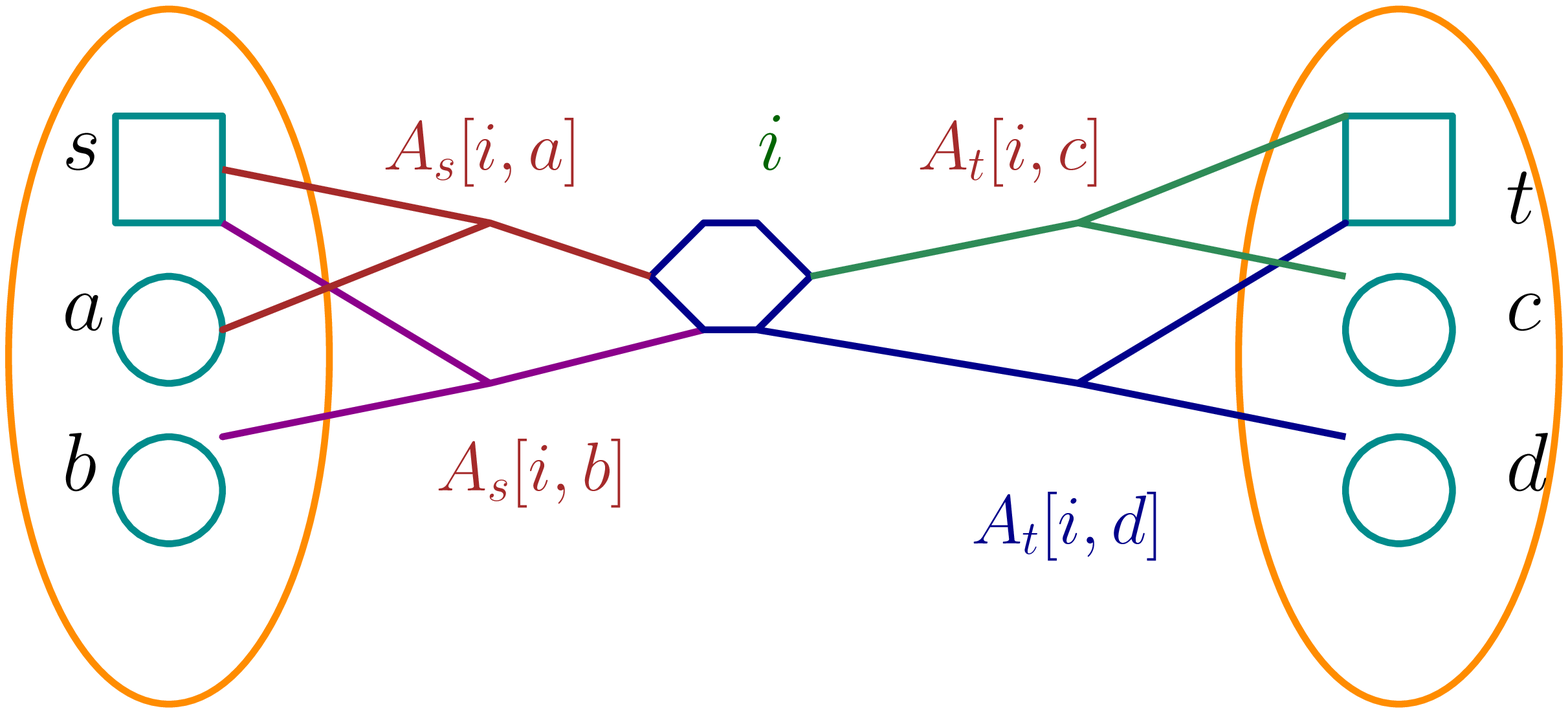}
    \caption{Diagram for $GG^\top [(s,a,b), (t,c,d)] = \sum_{i\neq j\in [n]} A_s[i,a]A_t[i,c]A_s[j,b]A_t[j,d]$.} \label{fig:GG-i-neq-j}
    \caption{Diagram for $GG^\top [(s,a,b), (t,c,d)] = \sum_{i= j\in [n]} A_s[i,a]A_t[i,c]A_s[i,b]A_t[i,d]$.} \label{fig:GG-i-equals-j}
\end{figure}

Once these diagrams are drawn out, we can apply off-the-shelf matrix norm bounds for graph matrices. The key technical idea is that the spectral norm of these random matrices is characterized (up to $\wt{O}(1)$ factors) by a simple, combinatorial quantity of the diagram (see Proposition~\ref{prop:graph-matrix-norm} and Remark~\ref{rem:min-vertex-separator}). Therefore, our job for spectral analysis is essentially estimating the combinatorial quantity for each of the diagrams that may arise in the decomposition. In the setting above, a direct application of this idea in a black-box manner gives us a spectral norm bound $\spec{\offdiag(GG^\top)} \leq \wt{O}(\sqrt{m\ell^2n^2}) = o(n^2)$ assuming $m\ell^2 \ll n^2$ and $\ell \approx \sqrt{n}$ (see Example~\ref{ex:norm-bound}), giving us the $m\approx n$ bound as we anticipate in the beginning.

\subsection{Preliminaries for graph matrices}
\label{sec:graphical-matrices}
We give a lightweight introduction to the theory of graph matrices specialized to our setting in this section, while we defer the interested reader who seeks a thorough introduction or a more formal treatment to its origin in a sequence of works in Sum-of-Squares lower bounds \cite{HKP15, BHK19, graphmatrixbounds}.
Throughout this section, we assume that there is an underlying input matrix $G$.

\begin{definition}[Shape] A shape $\al$ is a tuple $(V(\al), U_\al, V_\al, E(\al))$ associated with a (multi-hyper) graph $(V(\al), E(\al))$ where each vertex in $V(\al)$ is associated with a vertex type that indicates the range of the labels for the particular vertex, and each edge in $E(\al)$ is associated with an edge type specifies a variable of the underlying input matrix $G$ from the information of incidental endpoints of the edge. Moreover, we have $U_\al, V_\al \subseteq V(\al)$ and they may intersect. 
\end{definition}
\begin{remark}
For intuition, $U_\al, V_\al$ are simply the row/column indices of the corresponding random matrix, and we remind the reader that $V_\al$ should be distinguished from $V(\al)$ where $V_\al$ is simply a column/row boundary index, while $V(\al)$ is the set of vertices in the graph.
\end{remark}
\begin{definition}[Edges]
Each edge $e\in E(\al)$ is associated with a tuple $(s(e), E(e))$ such that $s(e)$ specifies its edge type and $E(e)$ specifies its incidental endpoints (including their corresponding vertex type), and we write the corresponding edge variable as $\chi_{s(e), E(e)}(G)$.
\end{definition}
\begin{definition}[Mapping of a shape] Given a shape $\al$, we call a function $\sigma$ a mapping of the shape if \begin{enumerate}
    \item $\sigma$ assigns a label for each vertex according to its specified vertex type;
    \item $\sigma$ is an injective mapping.
\end{enumerate}
    
\end{definition}

\begin{definition}[Boundary consistency for shape]
    Given a shape $\al$, and given boundaries $S,T$ each being a collection of sets of labels,
    we call $\sigma$ a valid labeling for $\al$ with boundary $S, T$ if \begin{enumerate}
        \item For each vertex type, $I(S)$ and $I(T)$ are subsets of labels of type $I$;
        \item $\sigma(U_\al\cap V_\al)$ is consistent with the labels in $S\cap T$ within each collection of vertex type as a set, i.e., for each vertex type $I$, we have $\sigma(I(U_\al\cap V_\al)) =  I(S\cap T)  $;
        \item And for each vertex type $I$, the vertices in $U_\al\setminus V_\al$ take label in $I(S\setminus T)$ as a set, i.e., $\sigma(I(U_\al\setminus V_\al))= I(S\setminus T)$;
        \item and similarly, the vertices in $V_\al\setminus U_\al$  take label in $I(T\setminus S)$ as a set, i.e., $\sigma(I(V_\al\setminus U_\al)) = I(T\setminus S)$.
    \end{enumerate}
\end{definition}
\begin{remark}[Set-indexed graph matrix] \label{rem:set-indexed-graph-matrix}
    Note that we adopt the convention where labeled vertices of a shape appear as a set, i.e., $U_\al\setminus V_\al$, $V_\al\setminus U_\al$ and $U_\al\cap V_\al$ are set-indices, unless otherwise stated. 
\end{remark}
\begin{definition}[Graphical matrix for shape] \label{def:graph-matrix-for-shape}
    Given a shape $\al$, we define its graphical matrix to be the matrix indexed by all possible boundary labelings $S, T$, and for each of its entry, we define \[ 
    M_{\al}[S, T] = \sum_{\substack{\sigma:\text{a mapping of $V(\al)$ }\\ \text{consistent with the boundaries} }} \prod_{e=\in E(\al)}\chi_{s(e), \sigma(E(e))}(G) \mper
    \]
\end{definition}
\begin{remark}
To unpack the above definition of boundary consistency, notice for each vertex type, the vertices in $U_\al $ and $V_\al $ receive labels from $\sigma$ as a set, and we additionally require vertices in $U_\al\cap V_\al$ of each type to receive the same labels under a consistent $\sigma$.
\end{remark}

We will encounter shapes that have multi-edges, i.e.\ the same edge appearing more than once.
Thus, we introduce the following definition.

\begin{definition}[Phantom edge and isolated vertex] \label{def:phantom-edge}
When the underlying input comes from $N(0,1)$ random variables, each edge that appears at least twice (as a multi-edge) is considered a ``phantom'' edge, and a vertex $v\in V(\al)\setminus (U_\al\cup V_\al)$ is considered ``isolated'' if it is not incident to any non-phantom edge in a shape.
\end{definition}

With this set-up, we are now ready to introduce the probabilistic norm bounds for graphical matrices, and prior works have shown that the norm bounds in this regime are governed by a  combinatorial object of the underlying shape called \emph{vertex separator}.

\begin{definition}[Vertex separator] \label{def:vertex-separator}
    For a shape $\al$, a set of vertices $S\subseteq V(\al)$ is a vertex separator if all \emph{non-phantom} paths from $U_\al$ to $V_\al$ pass through $S$.
    Each vertex in $U_\al\cap V_\al$ is by definition in any vertex separator for $\al$.
\end{definition}

\begin{proposition}[Graph matrix norm bound]
\label{prop:graph-matrix-norm}
    Let $\mathsf{wt}$ be a weight function that assigns weight based on the type of vertices s.t.\begin{enumerate}
        \item for a square vertex $\square{i}$, $\mathsf{wt}(\square{i}) = m$;
        \item for a circle vertex $\circle{i}$, $ \mathsf{wt}(\circle{i}) = \ell$;
        \item for a hexagon vertex $\hexagon{i}$, $\mathsf{wt}(\hexagon{i}) = n$.
    \end{enumerate}
    With probability at least $1-O(n^{-100\log n})$, for any shape $\al$ we have \[ 
    \spec{M_\al} \leq \wt{O}\left(\max_{\substack{S\subseteq V(\al) \\ S \text{ a separator }}} \prod_{i\notin S} \sqrt{\mathsf{wt}(i)} \cdot  \prod_{\substack{i\notin S\\\text{isolated} }} \sqrt{\mathsf{wt}(i)} \right) \mper
    \]
\end{proposition}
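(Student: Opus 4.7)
The plan is to establish the bound by the trace moment method. For an integer $q = \Theta(\log n)$, Markov's inequality and symmetrization reduce the task to upper-bounding $\E\,\mathrm{Tr}\bigl((M_\alpha M_\alpha^\top)^q\bigr)$ by $\wt{O}(1)^{q}\cdot B^{2q}$, where $B$ denotes the right-hand side of the claimed bound. Expanding the trace, one obtains a sum indexed by closed ``walks'' of length $2q$, each walk corresponding to $2q$ alternating copies of the shape $\alpha$ and $\alpha^\top$, glued at their $U_\alpha$/$V_\alpha$ boundaries. Each walk contributes a product of $2q$ edge-variables $\chi_{s(e),\sigma(E(e))}(G)$ over the merged edge set, and by independence and Isserlis' theorem for mean-zero Gaussian entries, only walks in which every edge appears with \emph{even} multiplicity survive in expectation.

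The next step is to enumerate these ``non-vanishing'' walks. Following the standard argument of \cite{BHK19, graphmatrixbounds}, one organizes each walk by specifying (i) a pairing / even matching of the edges across the $2q$ copies and (ii) an assignment of labels to all vertices consistent with the pairing and the boundary identifications at each junction. For a fixed pairing, the number of consistent labelings factorizes across the connected components of the merged graph. The key combinatorial fact is that, after this factorization, each copy of $\alpha$ contributes at most $\prod_{i\in V(\alpha)\setminus S}\mathsf{wt}(i)$ new labelings for some vertex separator $S\subseteq V(\alpha)$, while gluing between consecutive copies pays only $\prod_{i\in S}\mathsf{wt}(i)$; consequently the total count is maximized when every junction is separated by a \emph{minimum} vertex separator. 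This is exactly what yields the first product $\prod_{i\notin S}\sqrt{\mathsf{wt}(i)}$ after taking $2q$-th roots. Phantom (multi-)edges do not impose the pairing-based coincidence constraints on their endpoints, which is why the definition of vertex separator only requires blocking non-phantom paths.

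The remaining subtlety is the treatment of isolated vertices (Definition~\ref{def:phantom-edge}). Such vertices are incident to no non-phantom edge and are therefore unconstrained by the pairing structure; in each of the $2q$ copies they contribute an extra free label of weight $\mathsf{wt}(i)$, but because they belong to no non-phantom connected component they fall outside every vertex separator and their contribution does not get amortized against the ``separator-pays-less'' bookkeeping above. Accounting for these vertices gives the additional $\prod_{i\notin S,\text{isolated}}\sqrt{\mathsf{wt}(i)}$ factor. The main obstacle in carrying this out is the careful pairing-by-pairing bookkeeping: one must argue that summing over all valid even pairings of the $2q$ edge-copies costs only a combinatorial factor of $q^{O(|E(\alpha)|q)}$, which is absorbed into the $\wt{O}(\cdot)$ after setting $q = \Theta(\log n)$.

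Finally, combining the three ingredients yields $\E\,\mathrm{Tr}((M_\alpha M_\alpha^\top)^q) \leq (C\log n)^{O(|V(\alpha)|q)} \cdot B^{2q}$, and Markov's inequality with $q = \Theta(\log n)$ gives $\spec{M_\alpha}\leq \wt{O}(B)$ with probability at least $1 - n^{-100\log n}$, as desired. Since the argument is standard and the statement of Proposition~\ref{prop:graph-matrix-norm} is essentially a restatement of results proved in \cite{BHK19, graphmatrixbounds}, we expect to import their theorem directly rather than reproduce the full moment calculation.
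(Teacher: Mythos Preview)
Your proposal is correct and takes essentially the same approach as the paper: both treat this proposition as an import from the graph-matrix literature (\cite{BHK19,graphmatrixbounds}) rather than reproving the full trace-moment calculation, and your sketch of the underlying trace-moment argument accurately reflects how those references establish the bound. The paper additionally cites Corollary~6.64 of \cite{sparseindset} to handle the set-indexing convention on $U_\alpha,V_\alpha$, which incurs an extra $\sqrt{|U_\alpha|!\,|V_\alpha|!}$ factor absorbed into the $\wt{O}(\cdot)$ since $|U_\alpha|,|V_\alpha|=O(1)$; you may want to note this when importing the result.
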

\begin{proof}
    This is an application of \cite{graphmatrixbounds} and Corollary 6.64 from \cite{sparseindset} in our setting. The main result from \cite{graphmatrixbounds} alone is sufficient, while we point out it is easier to use Corollary 6.64 to handle set-symmetry of our indices within the trace power method directly, which incurs a cost of $\sqrt{|U_\al|!|V_\al|!}$ which is subsumed by the polylog factor as we have $|U_\al|, |V_\al| =O(1)$ throughout our work.
\end{proof}

\begin{remark}[Analyzing minimum vertex separator] \label{rem:min-vertex-separator}
    With Proposition~\ref{prop:graph-matrix-norm}, we can now bound the spectral norm of graph matrices by analyzing their \emph{minimum weight vertex separator} $S_{\min}$.
    Furthermore, notice that the vertex separator only needs to block paths using \emph{non-phantom} edges.
    Thus, when using Proposition~\ref{prop:graph-matrix-norm}, we can essentially remove the phantom edges; this may reduce the minimum vertex separator, hence increasing the norm bound.
\end{remark}

\begin{example}[Norm bound of the shape in Figure~\ref{fig:GG-i-neq-j}] \label{ex:norm-bound}
    Two potential vertex separators are $S_1 = \{\hexagon{i}, \hexagon{j}\}$ and $S_2 = \{\square{s},\circle{a},\circle{b}\} = U_\al$.
    We can see that $S_2$ has smaller weight when $m\ell^2 \ll n^2$ since $\mathsf{wt}(\hexagon{i}) \cdot \mathsf{wt}(\hexagon{j}) = n^2$ but $\mathsf{wt}(\square{s}) \cdot \mathsf{wt}(\circle{a}) \cdot \mathsf{wt}(\circle{b}) = m\ell^2$.
    With some case analysis, one can verify that $S_2$ is indeed the minimum vertex separator of the shape. By Proposition~\ref{prop:graph-matrix-norm}, each vertex outside $S_2$ contributes the square root of its weight, thus we have a bound of $\wt{O}(\sqrt{m\ell^2 n^2})$.
\end{example}

\subsection{Singular value lower bounds for analysis of \texorpdfstring{$V$}{V}}
\label{sec:sval-analysis-of-V}

Recall that given $B_1, \dots, B_m$ which are degree-2 homogeneous polynomials in $\ell$ variables with i.i.d.\ standard Gaussian coefficients, the matrix $V\in \R^{\ell_{4D} \times m\ell_{2D}}$ is the matrix whose columns represent the polynomials $B_t(y)^D y_{j_1} \cdots y_{j_{2D}}$ for $t\in[m]$ and $j_1,\dots, j_{2D} \in [\ell]$.
Furthermore, recall the matrix $N\in \R^{\binom{m}{2} \times m\ell_{2D}}$ defined in Definition~\ref{def:N-matrix} that satisfies $VN^\top = 0$, i.e.\ each row of $N$ represents a collection of $m$ degree-$2D$ polynomials $(p_1,\dots,p_m)$ such that
\begin{equation*}
    \sum_{t=1}^m B_t(y)^D p_t(y) = 0 \mcom
\end{equation*}
and for row index $(s_1, s_2)$ of $N$, $p_t(y) = B_{s_2}(y)^D$ if $t=s_1$, $-B_{s_1}(y)^D$ if $t=s_2$, and 0 otherwise.

\subsubsection{Proof of Lemma~\ref{lem:rank-of-N}: rank of \texorpdfstring{$N$}{N}}
\label{sec:rank-of-N}

\begin{lemma}[Restatement of Lemma~\ref{lem:rank-of-N}: Rank of $N$]
    Let $m, \ell, D \in \N$ such that $m \leq (\frac{\ell}{\polylog(\ell)})^{2D}$.
    Let $N \in \R^{\binom{m}{2} \times m\ell_{2D}}$ be the matrix defined in Definition~\ref{def:N-matrix}.
    Then, with probability $1- \ell^{-\Omega(D)}$,
    \begin{equation*}
        \sigma_{\binom{m}{2}}(N) \geq \Omega(\ell^D) \mper
    \end{equation*}
\end{lemma}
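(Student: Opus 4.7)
My plan is to lower-bound $\sigma_{\binom{m}{2}}(N) = \sqrt{\lambda_{\min}(NN^\top)}$ by analyzing the Gram matrix $NN^\top \in \R^{\binom{m}{2}\times\binom{m}{2}}$. Expanding Definition~\ref{def:N-matrix} directly, I will first derive the clean formula
\begin{equation*}
    NN^\top[(s_1,s_2),(t_1,t_2)] = \1(s_1=t_1)\, G[s_2,t_2] + \1(s_2=t_2)\, G[s_1,t_1] - \1(s_1=t_2)\, G[s_2,t_1] - \1(s_2=t_1)\, G[s_1,t_2],
\end{equation*}
where $G = C^\top C \in \R^{m\times m}$ is the Gram matrix of the coefficient vectors of $B_t(y)^D$ (collected as columns of $C$). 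The ordering $s_1<s_2$, $t_1<t_2$ forces the two ``crossing'' indicators to vanish on the diagonal, so the diagonal entries collapse to $G[s_1,s_1]+G[s_2,s_2]$. I will then split $NN^\top = \E NN^\top + (NN^\top-\E NN^\top)$ and establish (i) $\lambda_{\min}(\E NN^\top) \geq \Omega(\ell^{2D})$ and (ii) $\|NN^\top - \E NN^\top\| = o(\ell^{2D})$; by Weyl's inequality these together imply the lemma.

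For (i), using independence of $B_t$'s, $\E G = \mu_d I_m + \bar\mu(J_m - I_m)$ where $\mu_d \coloneqq \E\|B^D\|^2$ and $\bar\mu \coloneqq \|\E B^D\|^2$. Substituting into the formula yields $\E NN^\top = 2\mu_d I_{\binom{m}{2}} + \bar\mu X$ where $X = W^\top W - 2I_{\binom{m}{2}}$ with $W \in \R^{m\times\binom{m}{2}}$ having columns $e_{s_1}-e_{s_2}$. The easy identity $WW^\top = mI_m - J_m$ gives that $X$ has spectrum $\{-2,\, m-2\}$, so $\lambda_{\min}(\E NN^\top) = 2(\mu_d-\bar\mu) = 2\,\E\|B^D - \E B^D\|^2$. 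A short moment computation (combined with Claim~\ref{claim:norm-of-polynomial-powers}) shows that $\mu_d = \Theta(\ell^{2D})$, while $\E B^D$ is either zero (odd $D$, by the symmetry $B\mapsto -B$) or a scalar multiple of $\|y\|^{2D}$ (even $D$), whose coefficient vector has squared norm only $\Theta(\ell^D) = o(\ell^{2D})$; hence $\mu_d - \bar\mu = \Theta(\ell^{2D})$ regardless of parity.

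For (ii), the linearity of $NN^\top$ in $G$ yields $NN^\top - \E NN^\top$ as a sum of four terms of the form $\1(\text{some row index}=\text{some col index})\,\tilde G[\cdot,\cdot]$ for $\tilde G \coloneqq G - \E G$. In each term, rows whose ``identified'' index takes different values have disjoint supports (and likewise for columns), so the term is block-diagonal with each block being a (possibly non-square) submatrix of $\tilde G$; since the spectral norm of a submatrix is at most that of the full matrix, each term has spectral norm at most $\|\tilde G\|$, giving $\|NN^\top - \E NN^\top\| \leq 4\|\tilde G\|$. This reduces the task to the bound $\|\tilde G\| \leq \wt{O}(\sqrt{m}\,\ell^D)$, which is indeed $o(\ell^{2D})$ exactly in the regime $m \leq (\ell/\polylog(\ell))^{2D}$.

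The main obstacle is this final spectral bound on $\tilde G$, whose entries are centered degree-$2D$ polynomials in the independent Gaussians $\{B_t[a,b]\}$. I will establish it via the graph matrix decomposition framework of Section~\ref{sec:graphical-matrices}: write $\tilde G$ as a sum of graph matrices indexed by shapes with square row/column vertices (type $[m]$), circle internal vertices (type $[\ell]$), and Gaussian edges encoding each $B_t[a,b]$ variable, then apply Proposition~\ref{prop:graph-matrix-norm}. The technical crux lies in enumerating the finitely many shapes arising from the Hermite expansion of $\tilde G[s,t]$ and verifying that the minimum-weight vertex separator of each yields a norm bound of $\wt{O}(\sqrt{m\ell^{2D}})$. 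The potentially troublesome rank-one mean bias of order $m\ell^D$ coming from $\|\E B^D\|^2$ in the even-$D$ case is harmless by design, as it has already been absorbed into $\E NN^\top$ by the centering in step (i).
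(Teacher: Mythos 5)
Your approach is genuinely different from the paper's. The paper first deletes the columns of $N$ indexed by non-multilinear monomials (via Fact~\ref{fact:deleting-rows-sval}), then writes $NN^\top = \diag(NN^\top) + \offdiag(NN^\top)$, uses Claim~\ref{claim:norm-of-polynomial-powers} to get $\diag(NN^\top) \succeq \Omega(\ell^{2D})\cdot\Id$, and applies graph-matrix norm bounds directly to the off-diagonal of $NN^\top$. You instead center around $\E NN^\top$, compute the expectation exactly via the clean combinatorial identity $\E NN^\top = 2\mu_d I + \bar\mu(W^\top W - 2I)$ with $WW^\top = mI - J$, and then reduce the deviation to the spectral norm of $\tilde G = C^\top C - \E C^\top C$ via a nice block-diagonal observation. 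What your route buys is that the reduction to $\|\tilde G\|$ is structurally transparent and the explicit eigenvalue computation of $\E NN^\top$ cleanly isolates the mean-bias term $\bar\mu$; what the paper's route buys is that it never needs to control the \emph{fluctuation} of $\|B_s^D\|^2$ about its mean (it only uses a one-sided lower bound), which is a genuine simplification. The formula for $NN^\top$ in terms of $G$, the spectrum of $W^\top W - 2I$, the computation $\lambda_{\min}(\E NN^\top) = 2(\mu_d - \bar\mu)$, and the submatrix argument giving $\|NN^\top - \E NN^\top\| \leq 4\|\tilde G\|$ are all correct.

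There is, however, a concrete gap in your claimed bound $\|\tilde G\| \leq \wt{O}(\sqrt{m}\,\ell^D)$, which is false: the diagonal entries $\tilde G[s,s] = \|B_s^D\|^2 - \mu_d$ are non-zero centered degree-$2D$ Gaussian chaoses whose typical magnitude is of order $\ell^{2D - \Omega(1)}$ (e.g.\ a direct variance computation gives $\wt{\Theta}(\ell^{2D-1/2})$), and since $\|\tilde G\| \geq \max_s |\tilde G[s,s]|$, your stated bound fails whenever $m$ is small (say $m \leq \ell^{2D-1}$). In the graph-matrix picture this corresponds to diagonal shapes with $U_\alpha = V_\alpha = \{\square{s}\}$ and many isolated circle vertices, which Proposition~\ref{prop:graph-matrix-norm} does \emph{not} bound by $\sqrt{m}\,\ell^D$. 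The fix is easy — one actually needs only $\|\tilde G\| = o(\ell^{2D})$, which follows from $\|\tilde G\| \leq \wt{O}\bigl(\ell^{2D-\Omega(1)} + \sqrt{m}\,\ell^D\bigr)$ after treating the diagonal and off-diagonal shapes of $\tilde G$ separately — but as written the intermediate claim is wrong and should be corrected before invoking it. A second, minor inaccuracy: $\E B^D$ for even $D$ is \emph{not} a scalar multiple of $\|y\|^{2D}$, because the distribution of $B$'s coefficient vector (i.i.d.\ Gaussian on monomial coefficients) is invariant under coordinate permutations but not under the full orthogonal group; for instance $\E B^2(y) = \sum_{i\le j} y_i^2 y_j^2$, which is not proportional to $\|y\|^4$. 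The conclusion $\|\E B^D\|^2 = O(\ell^D)$ that you actually use is nonetheless correct and can be verified directly from the Wick expansion.
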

\begin{proof}
    We will show that $NN^\top \in \R^{\binom{m}{2} \times \binom{m}{2}}$ is full rank and that $\lambda_{\min}(NN^\top) \geq \Omega(\ell^{2D})$.
    Notice that the columns of $N$ are indexed by $(t, I)$ where $t\in [m]$ and $I \in [\ell]^{2D}$ is a multiset.
    By Fact~\ref{fact:deleting-rows-sval}, we can delete some columns of $N$ and prove that the resulting matrix $N'$ satisfies $\lambda_{\min}(N' N'^\top) \geq \Omega(\ell^{2D})$, which implies $\lambda_{\min}(NN^\top) \geq \Omega(\ell^{2D})$.
    Thus, we will assume that the matrix $N$ only consists of columns $(t,I)$ where $I$ is a set (no repeated elements), i.e., each row consists of multilinear polynomials.
    
    We decompose $NN^\top$ into diagonal and off-diagonal components.

    \subparagraph{Diagonal entries.} The diagonal entry at $s_1 < s_2$ can be written as
    \begin{equation*}
    \begin{aligned}
        NN^\top[(s_1,s_2), (s_1,s_2)] &= \Norm{\multilinear(B_{s_1}^{D})}^2 + \Norm{\multilinear(B_{s_2}^{D})}^2 \mcom
    \end{aligned}
    \end{equation*}
    where we take the multilinear parts of $B_{s_1}^D$ and $B_{s_2}^D$ because we deleted the columns corresponding to non-multilinear monomials.
    This however does not change the norms up to constant factors, hence by Claim~\ref{claim:norm-of-polynomial-powers}, the diagonal entries are $\Omega(\ell^{2D})$.

    \subparagraph{Off-diagonal part.}
    $NN^\top[(s_1,s_2),(t_1,t_2)]$ is zero when $s_1,s_2,t_1,t_2$ are all different, and if $s_1 \neq t_1$ but $s_2 = t_2$ (for e.g.), then
    \begin{equation*}
    \begin{aligned}
        NN^\top[(s_1,s_2),(t_1,t_2)]
        &= \Iprod{\multilinear(B_{s_1}^D), \multilinear(B_{t_1}^D)} \\
        &= \Theta(1) \sum_{I \in [\ell]^{2D}} \Sym(B_{s_1}^{\ot D})[I] \cdot \Sym(B_{t_1}^{\ot D})[I] \\
        &= \Theta(1) \sum_{\pi \in \bbS_{2D}} \sum_{\substack{I \in [\ell]^{2D} \\ \text{all distinct}}} B_{s_1}^{\ot D}[I] \cdot B_{t_1}^{\ot D}[\pi(I)] \mper
    \end{aligned}
    \end{equation*}
    We view the above as a summation of graphical matrices, each corresponding to a permutation $\pi$.
    We then apply graphical matrix norm bounds (Proposition~\ref{prop:graph-matrix-norm}) introduced in Section~\ref{sec:graphical-matrices} to bound the spectral norms of them.
    Given $\pi$, the shape that arises can be described as the following,
    \begin{enumerate}
        \item A tripartite graph with exactly $2D$ circle vertices $\circle{i_1},\dots, \circle{i_{2D}}$ in the middle (that take labels in $[\ell]$)  , i.e., $V(\al)\setminus (U_\al\cup V_\al)$;
        \item $U_\al, V_\al$ both have two square vertices (that take labels in $[m]$): $U_\al = \{\square{s}, \square{u}\}$ and $V_\al = \{\square{t}, \square{u}\}$, where $\square{u}$ is in the intersection $U_\al \cap V_\al$;
        \item On the left, we add hyperedges in order: $(\square{s}, \circle{i_1}, \circle{i_2}), (\square{s}, \circle{i_3}, \circle{i_4})$, and so on;
        on the right, we add $(\square{t}, \circle{i_{\pi(1)}}, \circle{i_{\pi(2)}}), (\square{t}, \circle{i_{\pi(3)}}, \circle{i_{\pi(4)}})$, and so on.
    \end{enumerate}

    Now we analyze the minimum vertex separator.
    Observe that for every $\circle{i}$ in the middle, there is a path $\square{s} \rightarrow \circle{i} \rightarrow \square{t}$.
    Thus, any vertex separator of such shape must contain either $\square{s}$, $\square{t}$, or all of the circle vertices.
    Clearly, when $m \ll \ell^{2D}$, the minimum weight vertex separator is $\{\square{s}\}$ or $\{\square{t}\}$.
    Thus, by Proposition~\ref{prop:graph-matrix-norm}, we get a norm bound of $\wt{O}\left(\sqrt{m\ell^{2D}}\right)$.

    Therefore, we can write
    \begin{equation*}
        NN^\top \succeq \Omega(\ell^{2D}) \cdot \Id + \offdiag(NN^\top) \mcom
    \end{equation*}
    where $\spec{\offdiag(NN^\top)} \leq \wt{O}(\sqrt{m} \ell^{D}) = o(\ell^{2D})$ given that $m \leq (\frac{\ell}{\polylog(\ell)})^{2D}$.
    Thus, the diagonal dominates, and we have $\lambda_{\min}(NN^\top) \geq \Omega(\ell^{2D})$, hence $\sigma_{\binom{m}{2}}(N) \geq \Omega(\ell^D)$.
\end{proof}

\subsubsection{Proof of Lemma~\ref{lem:VV-NN}: singular value of \texorpdfstring{$V^\top V + N^\top N$}{VV+NN}}
\label{sec:singular-value-of-VV-NN}

Lemma~\ref{lem:VV-NN} is a special case ($K=2$) of the following lemma, thus the reader can focus on this case for simplicity.

\begin{lemma}[Generalization of Lemma~\ref{lem:VV-NN}]
\label{lem:VV-NN-generalized}
    Fix $D, K\in \N$ with $K\geq 2$.
    Let $m, \ell \in \N$ such that $m \leq (\frac{\ell}{\polylog(\ell)})^{KD}$ if $D\leq 2$, and $m\leq (\frac{\ell}{\polylog(\ell)})^{KD/2}$ if $D>2$, and let $N$ be the $\binom{m}{2}\ell_{(K-2)D} \times m\ell_{2(K-1)D}$ matrix defined in Definition~\ref{def:null-space-of-V-high-deg}.
    Then with probability $1-\ell^{-\Omega(D)}$,
    \begin{equation*}
        \lambda_{\min}\Paren{V^\top V + N^\top N}
        \geq \Omega(\ell^{KD}) \mper
    \end{equation*}
\end{lemma}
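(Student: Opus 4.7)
The plan is to follow the two-step ``diagonal-plus-deviation'' strategy that the overview in Section~\ref{sec:overview-singular-val} advocates. Write $A \coloneqq V^\top V + N^\top N$ and decompose $A = D + E$ into its diagonal part $D$ and off-diagonal part $E$. It suffices to show $D \succeq \Omega(\ell^{KD}) \cdot I$ and $\spec{E} \leq o(\ell^{KD})$, since then Weyl's inequality gives $\lambda_{\min}(A) \geq \Omega(\ell^{KD})$ as claimed. By Fact~\ref{fact:deleting-rows-sval}, we may reduce to columns of $V$ and $N$ indexed by \emph{sets} (not true multisets) $I \in [\ell]^{2(K-1)D}$, so throughout we work with multilinear slices.

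For the diagonal, the contribution of $N^\top N$ is nonnegative, so it suffices to compute the diagonal of $V^\top V$. The $(t,I)$ diagonal entry is $\|B_t(y)^D y_I\|^2$; since multiplication by the monomial $y_I$ relabels but does not collapse monomials (after the multilinear reduction), this equals $\Theta(\|B_t^D\|^2) = \Theta(\ell^{KD})$ by Claim~\ref{claim:norm-of-polynomial-powers}. For the off-diagonal $E$, split it into blocks based on whether the row index $(s,I)$ and column index $(t,J)$ have $s=t$ or $s\neq t$. The ``$s=t$'' blocks only involve Gaussian coefficients of a single $B_s$, so in the graphical-matrix decomposition every shape has $\square{s} \in U_\al \cap V_\al$ and thus lies in every vertex separator; Proposition~\ref{prop:graph-matrix-norm} then gives a spectral norm of $\wt O(\ell^{(K-1)D}) = o(\ell^{KD})$.

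The ``$s\neq t$'' blocks are the main technical obstacle. Here $V^\top V[(s,I),(t,J)] = \langle B_s^D y_I, B_t^D y_J\rangle$ expands into products of Gaussians across two distinct tensors, giving rise to graph-matrix shapes whose minimum vertex separator need not contain $\square{s}$ or $\square{t}$; the naive norm from Proposition~\ref{prop:graph-matrix-norm} on these shapes can be as large as $\ell^{KD}$ or more. The key observation, analogous to the positivity arguments in SoS lower bound constructions (\cite{BHK19,GJJPR20,HK22,sparseindset}), is that the off-diagonal block of $N^\top N$ at $(s,t)$ equals $-\sum_T (B_t^D y_T)(B_s^D y_T)^\top$, whose entries expand into \emph{exactly} those troublesome shapes with matching signs: the syzygy $B_{t_1}^D(B_{t_2}^D y_T) - B_{t_2}^D(B_{t_1}^D y_T) = 0$ that defines $N$ forces precisely the cancellation needed. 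After this cancellation, the residual shapes in $V^\top V + N^\top N$ all admit a vertex separator that contains $\square{s}$ or $\square{t}$, and Proposition~\ref{prop:graph-matrix-norm} yields a norm bound of $\wt O(\sqrt{m} \cdot \ell^{(K-1)D}) = o(\ell^{KD})$ under the parameter regime $m \leq (\ell/\polylog \ell)^{KD/2}$.

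The hard part will be formalizing this cancellation: either via a term-by-term pairing between shapes in $V^\top V$ and shapes in $N^\top N$, or by grouping the ``bad'' shapes into PSD submatrices of the form $\sum_t (B_t^D\text{-based tensor})(B_t^D\text{-based tensor})^\top$ that can be absorbed without decreasing $\lambda_{\min}$. Both require a careful classification of shapes by the combinatorics of the overlap pattern between the monomial indices $I, J, T$ and the ``internal'' summation indices. For $K=2$ the $T$ variable is trivial and the cancellation is cleanest; for general $K$ the index $T \in [\ell]^{(K-2)D}$ provides the extra freedom needed to kill the shapes that do not ``see'' either of $\square{s}, \square{t}$, matching exactly the $\binom{m}{2}\ell_{(K-2)D}$ rows of $N$.
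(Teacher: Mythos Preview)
Your high-level plan is close to the paper's, but there is a genuine gap in the $s=t$ analysis that breaks the proof.

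You claim that for the off-diagonal entries with $s=t$, since $\square{s}\in U_\al\cap V_\al$ lies in every vertex separator, Proposition~\ref{prop:graph-matrix-norm} yields a norm of $\wt O(\ell^{(K-1)D})$. This is incorrect: having $\square{s}$ forced into the separator makes the \emph{separator} small, which makes the norm bound \emph{large}, not small. Concretely, take the shape with $r=|I\cap J|=(K-2)D$ (the minimum intersection from Remark~\ref{rem:I-J-intersection}): then $|I'|=|J'|=KD$, $|H'|=0$, and the minimum weight separator is exactly $\{\square{s}\}$, leaving $2KD$ circle vertices outside. Proposition~\ref{prop:graph-matrix-norm} gives $\wt O((\sqrt{\ell})^{2KD})=\wt O(\ell^{KD})$, which is \emph{not} $o(\ell^{KD})$ and therefore cannot be charged to the diagonal. (For $K=2,D=1$ this is precisely the shape in Figure~\ref{fig:my_label}, with norm $\wt O(\ell^2)$ against a diagonal of $\Theta(\ell^2)$.) So the $s=t$ off-diagonal is genuinely as large as the diagonal in spectral norm, and a pure diagonal-vs-off-diagonal argument fails here.

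The paper handles this by showing these troublesome $s=t$ shapes are \emph{approximately PSD}: each such $M_{\al_{\pi,\pi'}}$ factors through half-gram shapes (Definition~\ref{def:half-gram-shape}) so that $\sum_{\pi,\pi'} M_{\al_{\pi,\pi'}} = M_\beta M_\beta^\top + (\text{intersection errors})$, and the intersection errors are $\wt O(\ell^{KD-1})$ (Claim~\ref{claim:norm-bound-gram-decompos}). This is what produces the $P_1\succeq 0$ term in Lemma~\ref{lem:VV-diag-blocks}. Your proposal contains no analogue of this step; note also that $N^\top N$ cannot help here since its $s=t$ diagonal blocks are already PSD and do not cancel anything. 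A secondary issue: your use of Fact~\ref{fact:deleting-rows-sval} to restrict to set-indexed \emph{columns} of $V$ and $N$ goes in the wrong direction---passing to a principal submatrix of $V^\top V+N^\top N$ can only \emph{increase} $\lambda_{\min}$ by Cauchy interlacing, so it does not imply the bound for the full matrix; the valid move (which the paper uses elsewhere) is to delete \emph{rows} of $V$ and $N$.
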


Recall that the matrix $V$ has dimension $\ell_{(3K-2)D} \times m\ell_{2(K-1)D}$, each column representing a degree-$((3K-2)D)$ polynomial $B_s(y)^D y_I$ where $I\in [\ell]^{2(K-1)D}$ is a multiset.
Specifically, the rows of $V$ are indexed by multisets $H \in [\ell]^{(3K-2)D}$, and for $s\in[m]$, $I\in [\ell]^{2(K-1)D}$, the entry $V[H, (s,I)]$ is the coefficient of the monomial $y_H$ in the polynomial $B_s(y)^D y_I$.
Thus, $V[H,(s,I)] = 0$ if $I \not\subset H$.

The entries of $V^\top V$ are indexed by $(s,I), (t,J)$ where $s,t\in [m]$ and $I, J \in [\ell]^{2(K-1)D}$.
Specifically,
\begin{equation*}
    V^\top V[(s,I), (t,J)] = \Iprod{B_s(y)^D y_I, B_t(y)^D y_J}
\end{equation*}
where we recall (from Section~\ref{sec:prelims}) this is the inner product between the coefficient vectors of the two polynomials.
\begin{remark}[Intersection of $I,J$]
\label{rem:I-J-intersection}
    Remark that $B_s(y)^D y_I, B_t(y)^D y_J$ are polynomials of degree $(3K-2)D$ while $|I| = |J| = 2(K-1)D$, thus if $|I\cap J| < (K-2)D$, then the polynomials have no common monomial and $V^\top V[(s,I),(t,J)]$ must be zero (note that $I$ and $J$ can be disjoint when $K=2$).
\end{remark}

\paragraph{Proof outline}
First observe that both $V^\top V$ and $N^\top N$ cannot be full rank: $V^\top V$ has a null space (namely row span of $N$ by definition) and $N^\top N$ is not full rank simply because $\binom{m}{2} \ell_{(K-2)D} < m\ell_{2(K-1)D}$ (since $m \ll \ell^{KD}$).
Thus, even though $V^\top V$ and $N^\top N$ have large diagonal entries, we cannot hope to bound the spectral norm of their off-diagonal parts separately and charge them to the diagonals (this would contradict that they are not full rank).
In fact, we will see that there are off-diagonal components with large spectral norm.

Nevertheless, we can still proceed to analyze $V^\top V$ and see what graph matrices arise.
Observe that $V^\top V$ has a natural block structure indexed by $(s,t)$, and each block has dimension $\ell_{2(K-1)D} \times \ell_{2(K-1)D}$.
We will analyze the diagonal blocks ($s=t$) and off-diagonal blocks ($s\neq t$) separately.
First, we show the following,
\begin{lemma}[Diagonal blocks of $V^\top V$] \label{lem:VV-diag-blocks}
    Let $m,\ell,D\in\N$ be the same parameters as Lemma~\ref{lem:VV-NN-generalized}, then
    \begin{equation*}
        \diagblocks(V^\top V) = \Omega(\ell^{KD}) \cdot \Id + P_1 + E_1
    \end{equation*}
    where $P_1 \succeq 0$ and $\spec{E_1} \leq o(\ell^{KD})$.
\end{lemma}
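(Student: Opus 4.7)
The plan is to fix $s \in [m]$ and analyze the diagonal block $V_s^\top V_s \in \R^{\ell_{2(K-1)D} \times \ell_{2(K-1)D}}$ whose $(I,J)$ entry is $\Iprod{B_s(y)^D y_I, B_s(y)^D y_J}$. Working in the coefficient-vector convention, a one-line substitution $K = H - I$ shows that every diagonal entry equals $\|B_s^D\|^2$, so $\diag(V_s^\top V_s) = \|B_s^D\|^2 \cdot \Id = \Theta(\ell^{KD}) \cdot \Id$ with high probability by Claim~\ref{claim:norm-of-polynomial-powers}, which supplies the $\Omega(\ell^{KD}) \cdot \Id$ term directly. Off the diagonal, Remark~\ref{rem:I-J-intersection} implies that the entry vanishes unless $|I \cap J| \geq (K-2)D$, restricting the support sharply when $K > 2$.

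For the surviving off-diagonal entries I would expand each $\Iprod{B_s^D y_I, B_s^D y_J}$ as a sum over contraction patterns of the $2D$ copies of $B_s$, and organize the terms into graph matrices in the sense of Section~\ref{sec:graphical-matrices}. Since $s$ is fixed there are no $\square{s}$ vertices: each shape has circle vertices in $U_\al$ indexing $I$ and in $V_\al$ indexing $J$ (with shared circles for $I \cap J$), hexagon vertices summed over $[\ell]$, and $2D$ hyperedges of arity $K$ carrying entries of $B_s$. I would then split the shapes into two groups depending on whether every $B_s$-edge is phantom (appears at least twice, in the sense of Definition~\ref{def:phantom-edge}) or at least one edge is unmatched. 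The all-phantom shapes carry nonzero expectation, and I would show they sum to a positive semidefinite matrix $P_1^{(s)}$ by exhibiting each as $M_\beta^\top M_\beta$ for the half-shape $\beta$ obtained by splitting every matched pair across the two sides; these are essentially the Wick-pairing contributions of $\E[V_s^\top V_s]$. The remaining shapes have mean zero and assemble into $E_1^{(s)}$; for each, Proposition~\ref{prop:graph-matrix-norm} yields a spectral bound governed by the minimum weight vertex separator, and because every non-phantom edge forces the separator to cut through at least one additional hexagon vertex I expect every such shape to have norm $\wt{O}(\ell^{KD - \Omega(1)})$. Summing over the $D^{O(D)}$ shapes and union-bounding over $s \in [m]$ (absorbed by the hypothesis $m \leq (\ell / \polylog \ell)^{KD/2}$) yields $\|E_1\| = o(\ell^{KD})$. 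Since $\diagblocks(V^\top V) = \bigoplus_s V_s^\top V_s$ is block-diagonal, setting $P_1 = \bigoplus_s P_1^{(s)} \succeq 0$ and $E_1 = \bigoplus_s E_1^{(s)}$ completes the decomposition.

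The main obstacle will be the positivity step: naively applying Proposition~\ref{prop:graph-matrix-norm} to some of the all-phantom shapes yields spectral norms as large as $\Omega(\ell^{KD})$, for example shapes whose minimum vertex separator cuts only through the boundary circles, which would destroy the desired error bound if those shapes were placed in $E_1$. The rescue, in the spirit of the positivity arguments in the SoS lower-bound constructions~\cite{BHK19,GJJPR20,HK22,sparseindset} highlighted in the technical overview, is that these ``bad'' shapes are precisely the all-phantom ones and hence legitimately belong to $P_1 \succeq 0$ rather than to $E_1$. Carrying out this charging argument cleanly, and producing the correct half-shape factorization $M_\beta^\top M_\beta$ for every all-phantom shape, is the technical heart of the proof.
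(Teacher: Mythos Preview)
Your dichotomy into ``all-phantom'' versus ``at least one unmatched edge'' is not the right split, and the argument breaks precisely on the shapes the paper singles out as troublesome. Take $r = |I\cap J| = (K-2)D$, so $|H'| = 0$: the $D$ left hyperedges live on $J'$ and the $D$ right hyperedges live on $I'$, and since $I' \cap J' = \emptyset$ by construction none of these edges coincide---the shape has \emph{no} phantom edges at all. Yet its minimum vertex separator is just $\{\square{s}\} \cup L$, leaving all $|I'| + |J'| = 2KD$ boundary circles outside and giving norm $\wt{O}(\ell^{KD})$ (this is exactly the shape in Figure~\ref{fig:my_label} with $s=t$). Under your scheme this lands in $E_1$, destroying the bound $\spec{E_1} = o(\ell^{KD})$. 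Your heuristic that a non-phantom edge forces the separator through an additional middle vertex fails here because there simply are no middle vertices to cut through when $r = (K-2)D$. (For $D=1$ the only way to make every edge phantom is to force $I' = J' = \emptyset$, i.e.\ $I=J$, so your off-diagonal $P_1$ would be empty.)

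The paper's remedy is different from what you outline: rather than isolating phantom shapes, it takes for each $r$ the \emph{entire} sum $\sum_{\pi,\pi'} M_{\al_{\pi,\pi'}}$---phantom edges or not---and recognizes it as $M_\beta M_\beta^\top$ for a half-gram shape $\beta$ (Definition~\ref{def:half-gram-shape}). The PSD piece $P_1$ is this exact square; the error $E_1$ consists not of ``non-phantom shapes'' but of the \emph{intersection terms} arising when vertices of $I'$ collide with vertices of $J'$ in the composition $\beta_{\pi'} \circ \beta_\pi^\top$. Each such collision moves two boundary circles into $U_\al \cap V_\al$ and drops the norm by a factor of $\ell$ (Claim~\ref{claim:norm-bound-gram-decompos}), yielding $\spec{E_1} \leq \wt{O}(\ell^{KD-1})$. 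Your half-shape factorization instinct is exactly right, but it must be applied to the whole off-diagonal of the block, not only to the Wick-pairing sector.
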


For the off-diagonal blocks of $V^\top V$, it turns out that most graph matrices that arise have small spectral norm, except for one specific matrix defined as follows,

\begin{definition}[$W$ matrix] \label{def:W-matrix}
    Let $W$ be the matrix with the same dimensions and indices as $V^\top V$ such that $W[(s,I),(t,J)] = V^\top V[(s,I),(t,J)]$ if $s\neq t$ and $|I\cap J| = (K-2)D$ (the minimum intersection; see Remark~\ref{rem:I-J-intersection}), and 0 otherwise.
\end{definition}


\begin{lemma}[Off-diagonal blocks of $V^\top V$] \label{lem:VV-off-diag-blocks}
    Let $m,\ell,D\in\N$ be the same parameters as Lemma~\ref{lem:VV-NN-generalized}, then
    \begin{equation*}
        \offdiagblocks(V^\top V) = W + E_2
    \end{equation*}
    where $\spec{E_2} \leq o(\ell^{KD})$.
\end{lemma}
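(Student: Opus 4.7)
My plan is to expand each entry $V^\top V[(s,I),(t,J)] = \Iprod{B_s(y)^D y_I, B_t(y)^D y_J}$ as a polynomial in the independent Gaussian coefficients of $B_s$ and $B_t$ (valid here since every off-diagonal block has $s\neq t$), and then decompose the resulting off-diagonal block matrix as a sum of graph matrices in the sense of Section~\ref{sec:graphical-matrices}. Each monomial in the entry corresponds to a symmetrization producing a multiset $H$ with $I, J \subseteq H$, hence by Remark~\ref{rem:I-J-intersection} the entry vanishes unless $r \coloneqq |I \cap J| \geq (K-2)D$. Since $W$ is by definition the slice at $r = (K-2)D$ (the minimum), I can write $E_2 = \sum_{r > (K-2)D} E_2^{(r)}$, where $E_2^{(r)}$ keeps only the entries with $|I \cap J| = r$, and show $\spec{E_2^{(r)}} \leq o(\ell^{KD})$ for every such $r$.

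For each fixed $r$ I would decompose $E_2^{(r)}$ further into a bounded number of graph matrices $M_\alpha$ indexed by shapes $\alpha$ whose boundaries $U_\alpha, V_\alpha$ each contain one square vertex ($\square{s}$ on the left, $\square{t}$ on the right) together with $2(K-1)D$ circle vertices labeling the entries of $I$ and $J$, with exactly $r$ circle vertices lying in $U_\alpha \cap V_\alpha$. The internal vertices of $\alpha$ encode the $D$ copies of $B_s$ and $D$ copies of $B_t$ that are multiplied to form the relevant monomial, with the arity-$K$ hyperedges matching coefficient indices to boundary circles or to internal circles that are summed over. I then plan to invoke Proposition~\ref{prop:graph-matrix-norm} and argue that the minimum weight vertex separator of every such shape is strictly smaller than $\ell^{KD}$. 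The key intuition is that each unit of excess intersection $r-(K-2)D$ forces one extra pairwise collision between an index slot inside $B_s^D$ and one inside $B_t^D$, creating a phantom multi-edge (Definition~\ref{def:phantom-edge}) that can be removed from the separator analysis. Taking as a candidate separator $S = \{\square{s}\} \cup (U_\alpha \cap V_\alpha) \cup (\text{circles of } U_\alpha \setminus V_\alpha)$ and doing a case analysis analogous to Example~\ref{ex:norm-bound}, I expect a per-shape norm bound of the form
\begin{equation*}
    \spec{M_\alpha} \leq \wt{O}\Paren{\ell^{KD} \cdot \sqrt{m\, \ell^{-(r-(K-2)D)}}} \mcom
\end{equation*}
which is $o(\ell^{KD})$ under the standing hypothesis $m \leq (\ell/\polylog(\ell))^{KD/2}$ (resp.\ $(\ell/\polylog(\ell))^{KD}$ when $D\leq 2$). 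Summing over the $O_{K,D}(1)$ many shapes and the $O(KD)$ many values of $r$ delivers the desired bound on $\spec{E_2}$.

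The main obstacle I expect is verifying that the candidate separator $S$ above is genuinely of minimum weight for each of the shapes that arise, especially when several internal indices collide simultaneously or when the shape contains mixed $B_s, B_t$ multi-edges; this requires a careful enumeration of the collision patterns compatible with $|I\cap J| = r$. A secondary but routine point is handling the set-symmetrization over $\bbS_D \times \bbS_D$ acting on the internal slots via the set-indexed graph matrix convention (Remark~\ref{rem:set-indexed-graph-matrix}), and the treatment of non-multilinear monomials in $B_s^D, B_t^D$, which (as in Lemma~\ref{lem:rank-of-N}) only contribute $D$- and $K$-dependent constants absorbed into $\wt{O}(\cdot)$.
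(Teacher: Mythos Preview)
Your high-level plan---stratify by $r=|I\cap J|$, identify the $r=(K-2)D$ slice with $W$, and bound the rest via graph-matrix norm bounds---matches the paper. But the specific separator analysis you propose has a genuine gap. The claim that ``each unit of excess intersection $r-(K-2)D$ forces a collision between an index slot in $B_s^D$ and one in $B_t^D$, creating a phantom multi-edge'' is wrong: the free indices $H'=H\setminus(I\cup J)$ are each incident to one hyperedge through $\square{s}$ and one through $\square{t}$, but since $s\neq t$ these are distinct random variables and hence never phantom in the sense of Definition~\ref{def:phantom-edge}. Your expected bound $\wt{O}(\ell^{KD}\sqrt{m\,\ell^{-(r-(K-2)D)}})$ is therefore not what Proposition~\ref{prop:graph-matrix-norm} delivers, and in any case it is not $o(\ell^{KD})$ under the hypotheses: at $r=(K-2)D+1$ it would require $m=o(\ell)$, whereas Lemma~\ref{lem:VV-NN-generalized} allows $m$ up to $(\ell/\polylog\ell)^{KD}$ when $D\le 2$.

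The paper's proof instead does a three-way case split on whether $\square{s}$, $\square{t}$, or both lie in the minimum separator, obtaining the (non-$r$-coupled) bounds $\wt{O}(\ell^{((3K-2)D-r)/2})$, $\wt{O}(\sqrt{m}\,\ell^{KD/2})$, and $\wt{O}(m)$. There is also a key phenomenon for $D>2$ that your outline misses entirely: circle vertices \emph{within} $I'$ (or $J'$) can collapse so that two hyperedges from the \emph{same} square become identical---\emph{those} are the genuine phantom edges here. This weakens the separator constraint and \emph{raises} the norm bound to $\wt{O}(\sqrt{m}\,\ell^{3KD/4})$, which is exactly why the lemma needs the stricter assumption $m\le(\ell/\polylog\ell)^{KD/2}$ when $D>2$. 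Without separating the $D\le 2$ and $D>2$ regimes and locating the phantom edges on the correct side, your argument as stated does not close.
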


Finally, we turn to $N^\top N$.
Interestingly, we show that $N^\top N$ has a $-W$ component that cancels out the one from $\offdiagblocks(V^\top V)$.

\begin{lemma}[Analysis of $N^\top N$] \label{lem:NN}
    Let $m,\ell,D\in\N$ be the same parameters as Lemma~\ref{lem:VV-NN-generalized}, then
    \begin{equation*}
        N^\top N = P_2 - W + E_3
    \end{equation*}
    where $P_2 \succeq 0$ and $\spec{E_3} \leq o(\ell^{KD})$.
\end{lemma}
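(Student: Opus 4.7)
}
The plan is to compute $N^\top N$ entrywise, peel off an explicitly PSD block-diagonal piece as $P_2$, identify the remaining off-diagonal-block part as $-W$ up to a small correction, and control that correction by graph-matrix spectral bounds. From Definition~\ref{def:N-bar-matrix-high-deg}, each row of $N$ indexed by $(s_1,s_2,T)$ with $s_1<s_2$ is supported only on column-blocks $(s_1,\cdot)$ and $(s_2,\cdot)$, with entries equal to $[y_{I\setminus T}]B_{s_2}^D$ and $-[y_{I\setminus T}]B_{s_1}^D$ respectively (here $\setminus$ denotes multiset difference). A direct computation then gives, for all $s,t\in[m]$ and multisets $I,J\in[\ell]^{2(K-1)D}$,
\begin{equation*}
N^\top N[(s,I),(t,J)]=\begin{cases}\displaystyle\sum_{u\neq s}\sum_{T}[y_{I\setminus T}]B_u^D\cdot[y_{J\setminus T}]B_u^D,&s=t,\\[0.3em]\displaystyle-\sum_{T}[y_{I\setminus T}]B_t^D\cdot[y_{J\setminus T}]B_s^D,&s\neq t,\end{cases}
\end{equation*}
where $T$ ranges over size-$(K-2)D$ sub-multisets of $I\cap J$.

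The natural candidate is then $P_2:=\diagblocks(N^\top N)$, the block matrix (blocks indexed by $(s,t)\in[m]^2$) that retains only the $s=t$ diagonal blocks. Since $N^\top N\succeq 0$, each diagonal block is a principal submatrix and is therefore PSD, so $P_2\succeq 0$. The matching with $W$ happens exactly in the minimum-intersection regime: when $|I\cap J|=(K-2)D$ the sum over $T$ collapses to the single term $T=I\cap J$, giving $N^\top N[(s,I),(t,J)]=-[y_{I\setminus J}]B_t^D\cdot[y_{J\setminus I}]B_s^D=-W[(s,I),(t,J)]$, using the explicit formula for $V^\top V$ in this case that was derived in the analysis for Lemma~\ref{lem:VV-off-diag-blocks}. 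Setting $E_3:=(N^\top N-P_2)+W$, we conclude that $E_3$ vanishes on all diagonal blocks and on all off-diagonal entries with $|I\cap J|=(K-2)D$, and the remaining entries are
\begin{equation*}
E_3[(s,I),(t,J)]=-\sum_{T\leq I\cap J,\,|T|=(K-2)D}[y_{I\setminus T}]B_t^D\cdot[y_{J\setminus T}]B_s^D,\qquad s\neq t,\ |I\cap J|\geq(K-2)D+1.
\end{equation*}

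To bound $\|E_3\|$ I would reparameterize by $T':=(I\cap J)\setminus T$, so that $|T'|\geq 1$ and $I\setminus T=(I\setminus J)\cup T'$, $J\setminus T=(J\setminus I)\cup T'$. Expanding $B_t^D$ and $B_s^D$ in terms of the underlying random tensor entries of $B_t,B_s$ and grouping by (i) the size $r=|T'|\in\{1,\dots,KD\}$, (ii) the placement of $T'$ inside $I\cap J$, and (iii) the finer collision pattern of indices inside the $2D$ random-tensor factors, writes $E_3$ as a sum of graph matrices in the sense of Section~\ref{sec:graphical-matrices}. Each such shape has row boundary encoding $(s,I)$ and column boundary encoding $(t,J)$. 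The crucial combinatorial point is that the constraint $|T'|\geq 1$ forces at least one row-boundary circle-vertex coming from $I$ to be identified with a column-boundary circle-vertex coming from $J$; relative to the $W$-shape this strictly shrinks the minimum vertex separator (Definition~\ref{def:vertex-separator}) by at least a factor of $\ell$. Applying Proposition~\ref{prop:graph-matrix-norm} to each shape, and summing the $D^{O(K)}$ many contributions, yields $\|E_3\|\leq o(\ell^{KD})$ under the standing hypothesis on $m$.

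The main obstacle is the graph-matrix bookkeeping in the last step. One has to uniformly enumerate and bound the shapes arising from (a) the multiset intersection pattern between $I$ and $J$ beyond the baseline overlap $(K-2)D$, and (b) the internal collision pattern among the $2D$ tensor factors drawn from $B_t$ and $B_s$; in particular, repeated entries within a single $B_u^D$ produce phantom multi-edges (Definition~\ref{def:phantom-edge}) that must be excised before computing the minimum vertex separator. The accounting is entirely parallel to the spectral analyses in Lemmas~\ref{lem:VV-diag-blocks}~and~\ref{lem:VV-off-diag-blocks} and should port over with the additional $\ell^{-1}$-saving per extra unit of forced overlap, which is exactly what converts the dominant $\Theta(\ell^{KD})$ norm scale of $W$ into the $o(\ell^{KD})$ bound required for $E_3$.
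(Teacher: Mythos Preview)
Your proposal is correct and follows essentially the same approach as the paper: you take $P_2$ to be the block-diagonal part of $N^\top N$ (PSD as principal submatrices of a PSD matrix), verify that the off-diagonal blocks at minimal intersection $|I\cap J|=(K-2)D$ give exactly $-W$, and reduce the rest to a graph-matrix norm bound parallel to the $V^\top V$ off-diagonal analysis. One small phrasing slip: when you say the extra forced overlap ``shrinks the minimum vertex separator,'' you mean the opposite --- it enlarges the mandatory separator $U_\alpha\cap V_\alpha$ (two boundary circle-vertices merge into one that now sits in the separator), which is precisely what drops the norm by a factor of $\ell$; but your conclusion is correct.
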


The combination of Lemmas~\ref{lem:VV-diag-blocks}, \ref{lem:VV-off-diag-blocks} and \ref{lem:NN} immediately imply Lemma~\ref{lem:VV-NN-generalized}.
\begin{proof}[Proof of Lemma~\ref{lem:VV-NN-generalized}]
    Combining Lemmas~\ref{lem:VV-diag-blocks}, \ref{lem:VV-off-diag-blocks} and \ref{lem:NN}, we have
    \begin{equation*}
        V^\top V + N^\top N \succeq \Omega(\ell^{KD}) \cdot \Id + P_1 + P_2 + E_1 + E_2
    \end{equation*}
    where $P_1, P_2\succeq 0$ and $\spec{E_1}, \spec{E_2} \leq o(\ell^{KD})$.
    This completes the proof.
\end{proof}

\paragraph{Graph matrices that arise from $V^\top V$}
We write out the entries of $V^\top V$ explicitly.
For entry $V^\top V[(s,I), (t,J)]$ where $s,t\in [m]$ and $I,J \in [\ell]^{2(K-1)D}$,
\begin{equation} \label{eq:VV-entry}
\begin{aligned}
    V^\top V[(s,I), (t,J)]
    &= \Iprod{B_s(y)^D y_I, B_t(y)^D y_J} \\
    &= \sum_{\substack{H \in [\ell]^{(3K-2)D} \\ I \cup J\subset H}} \Theta(1) \cdot (B_s(y)^{D} y_{I})[H] \cdot (B_t(y)^{D} y_{J})[H] \\
    &= \sum_{\substack{H \in [\ell]^{(3K-2)D} \\ I \cup J\subset H}} \sum_{\pi,\pi'\in \bbS_{KD}} \Theta(1) \cdot 
    B_s^{\ot D}[\pi(H \setminus I)] \cdot B_t^{\ot D}[\pi'(H \setminus J)]
    \mper
\end{aligned}
\end{equation}
Here, the summation is over ordered tuples $H$ of size $(3K-2)D$ containing both $I$ and $J$.

Let us parse Equation~\pref{eq:VV-entry}.
Let $r = |I\cap J|$, $L = I\cap J$, and let $I' = I \setminus L$, $J' = J \setminus L$ and $H' = H \setminus (I\cup J)$ (recall that these are multiset operations).
First observe that $|I'| = |J'| = 2(K-1)D-r$ and $|I \cup J| = 4(K-1)D - r$, hence $|H'| = r - (K-2)D$.
Thus, $r$ can range from $(K-2)D$ to $2(K-1)D$ (note that $r$ cannot be 0 if $K > 2$).
We can think of $H'$ as the ``free indices'' in the summation over $H$ since $H$ must contain $I \cup J$.
In particular, if $r = (K-2)D$ (the minimum intersection), then there is only a single $H$, namely $I \cup J$.
We can now define the graph matrices that arise from \pref{eq:VV-entry}.

\begin{definition}[Graph matrices from $V^\top V$]
\label{def:VV-graph-matrices}
    First assume that $I'$, $J'$, $H'$ and $L$ all have distinct elements.
    For a fixed $r\in \{(K-2)D,\dots,2(K-1)D\}$ and permutations $\pi, \pi' \in \bbS_{KD}$, the graph matrix is described as follows,

    \begin{enumerate}
        \item A graph with exactly $(3K-2)D$ circle vertices (that take labels in $[\ell]$), denoted as follows (slightly abusing notation by writing $I', J', H'$ as ordered tuples of circle vertices):
        \begin{itemize}
            \item $I' = (\circle{a_1},\dots, \circle{a_{|I'|}})$ where $|I'| = 2(K-1)D - r$;
            \item $J' = (\circle{b_1},\dots, \circle{b_{|J'|}})$ where $|J'| = 2(K-1)D - r$;
            \item $H' = (\circle{a_{|I'|+1}} = \circle{b_{|J'|+1}}, \circle{a_{|I'|+2}} = \circle{b_{|J'|+2}}, \dots, \circle{a_{KD}} = \circle{b_{KD}})$;
            \item $L = (\circle{c_1},\dots, \circle{c_r})$, which corresponds to $I\cap J$.
        \end{itemize}

        \item Two square vertices $\square{s}$ and $\square{t}$ (that take labels in $[m]$).

        \item $U_\al = \{\square{s}\} \cup I' \cup L$ and $V_\al = \{\square{t}\} \cup J' \cup L$;
        they are treated as sets.

        \item On the left, we add $D$ hyperedges:
        $(\square{s}, \circle{b_{\pi((i-1)K+1)}}, \dots, \circle{b_{\pi(iK)}})$ for $i=1,\dots, D$;
        on the right, we add
        $(\square{s}, \circle{b_{\pi'((i-1)K+1)}}, \dots, \circle{b_{\pi'(iK)}})$ for $i=1,\dots, D$.
        Each hyperedge touches a square vertex ($\square{s}$ or $\square{t}$) and $K$ circle vertices, representing an element in $B_s$ or $B_t$.
        
    \end{enumerate}
    When there are repeated elements, all such components can be viewed as \emph{collapses} of the graph matrices described above.
    For example, we allow the circle vertices to collapse, with the exception that $I'$ and $J'$ remain disjoint (by definition).
    We also allow $\square{s}$ and $\square{t}$ to collapse, representing the diagonal blocks where $s=t$.
\end{definition}

Let's make some crucial observations:
(1) $U_\al \cap V_\al = L = \{\circle{c_1},\dots, \circle{c_r}\}$ and they are not connected to any edge,
(2) $I' \subset U_\al$ but have edges connected to $\square{t}$, and similarly $J' \subset V_\al$ but have edges connected to $\square{s}$,
(3) $H'$ are outside of $U_\al \cup V_\al$ and have edges connected to both $\square{s}$ and $\square{t}$,
and finally (4) there are $D$ hyperedges on the left (resp.\ right) connected to $\square{s}$ (resp.\ $\square{t}$).

Furthermore, we note that $U_\al$ and $V_\al$ are both treated as sets since $I$ and $J$ in \pref{eq:VV-entry} are both (multi)sets (recall Remark~\ref{rem:set-indexed-graph-matrix}).

\paragraph{Diagonal blocks of $V^\top V$: Proof of Lemma~\ref{lem:VV-diag-blocks}}

\begin{proof}[Proof of Lemma~\ref{lem:VV-diag-blocks}]
    The diagonal blocks of $V^\top V$ consist of all graph matrices in Definition~\ref{def:VV-graph-matrices} with $s=t$.
    For the diagonal entries, for index $(s,I)$ we have
    \begin{equation} \label{eq:VV-diag}
        V^\top V[(s,I), (s,I)] = \Norm{B_s(y)^D y_I}^2 = \Norm{B_s(y)^D}^2 = \Omega(\ell^{KD})
    \end{equation}
    by Claim~\ref{claim:norm-of-polynomial-powers}.

    We next analyze the off-diagonal entries (of the diagonal blocks).
    Fix $r = |I\cap J|$.
    When $r = 2(K-1)D$, i.e.\ $I=J$, this is simply the diagonal entries, so we focus on the components with $r \in \{(K-2)D,\dots,2(K-1)D-1\}$ (recall that $r \geq (K-2)D$ from Remark~\ref{rem:I-J-intersection}).

    First, we note that there are ``troublesome'' components with large spectral norm which cannot be charged to the diagonal.
    For example when $r=(K-2)D$, $|I'| + |J'| = 2KD$ and $|H'| = 0$, hence the minimum vertex separator $S$ consists of only $\square{s}$ and there can be $2KD$ circle vertices outside of $S$, giving a bound of $\wt{O}(\ell^{KD})$, while the diagonal is only $\Omega(\ell^{KD})$.
    Figure~\ref{fig:my_label} is an example of such a shape.

    \begin{figure}[h!]
        \centering
        \includegraphics[width=160pt]{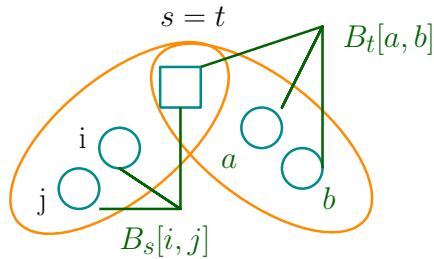}
        \caption{$M_\al[(s,i,j), (t,a,b)]= B_s[i.j]\cdot B_t[a,b]$.}
        \label{fig:my_label}
    \end{figure}


    The crucial observation is that these components are ``roughly'' PSD.
    To see this, we first need to define the following,

    \begin{definition}[Half-gram shape] \label{def:half-gram-shape}
        For a fixed $r \in \{(K-2)D,\dots,2(K-1)D-1\}$ and $\pi\in \bbS_{KD}$, we define a half-gram shape $\beta_{\pi}$ as follows,
        \begin{enumerate}
            \item A bipartite graph with one square vertex $\square{s}$ and exactly $2D+k$ circle vertices:
            \begin{itemize}
                \item $I' = \{\circle{a_1},\dots, \circle{a_{|I'|}} \}$ where $|I'| = 2(K-1)D-r$;
                \item $H' = \{\circle{a_{|I'|+1}}, \dots, \circle{a_{KD}}\}$;
                \item $L = \{\circle{c_1},\dots, \circle{c_r}\}$.
            \end{itemize}

            \item $U_{\beta_{\pi}} = \{\square{s}\} \cup I' \cup L$ and $V_{\beta_{\pi}} = (\square{s}, H', L)$.
            Here, $U_{\beta_{\pi}}$ is treated as sets but $V_{\beta_{\pi}}$ is treated as ordered tuples.

            \item We add $D$ hyperedges:
            $(\square{s}, \circle{a_{\pi((i-1)K+1)}}, \dots, \circle{a_{\pi(iK)}})$ for $i=1,\dots, D$.
        \end{enumerate}
        Furthermore, we define
        \begin{equation*}
            M_{\beta} \coloneqq \sum_{\pi\in\bbS_{KD}} M_{\beta_{\pi}} \mper
        \end{equation*}
    \end{definition}


    With the definitions in hand, we can now observe that for each $r$ and permutations $\pi,\pi'\in\bbS_{KD}$, any ``troublesome'' shape $\al_{\pi,\pi'}$ from Definition~\ref{def:VV-graph-matrices} has two unique half-gram shapes $\beta_{\pi}$ and $\beta_{\pi'}$ such that $M_{\al_{\pi,\pi'}} \approx M_{\beta_{\pi'}} M_{\beta_{\pi}}^\top$ (note the order of $\pi'$ and $\pi$).
    Summing over all permutations $\pi,\pi'\in \bbS_{KD}$ (as in \pref{eq:VV-entry}), we get that 
    \begin{equation*}
        \sum_{\pi,\pi'\in \bbS_{KD}} M_{\al_{\pi,\pi'}} \approx M_{\beta} M_{\beta}^\top \mper
    \end{equation*}

    However, this is only \emph{approximately} true!
    This subtlety arises because $\beta_{\pi'}\circ\beta_{\pi}^\top$ may form extra intersection terms. That being said, intersections, as anticipated in many situations, can be bounded in spectral norm and charged to our diagonal \pref{eq:VV-diag}.
    For starters, we first observe that for shape $\al \coloneqq \alpha_{\pi,\pi'}$, the shapes that arise in the intersection terms of $\beta_{\pi'}\circ\beta_\pi^\top$ can be described by merging vertices in $I'$ with vertices in $J'$ (recall from Definition~\ref{def:VV-graph-matrices} that $I' = U_\al \setminus V_\al$ and $J' = V_\al \setminus U_\al$).
    We now prove that all such intersection terms have small spectral norm compared to the diagonal $\Omega(\ell^{KD})$.
    
    \begin{claim}[Bounding intersection shapes]\label{claim:norm-bound-gram-decompos}
        Let $\beta_{\pi}$ and $\beta_{\pi'}$ be the half-gram shapes of shape $\alpha$ with $r = |I\cap J| \in \{(K-2)D,\dots,2(K-1)D-1\}$.
        Let $\tau$ be an intersection shape that arises from the intersection terms of $\beta_{\pi'} \circ\beta_{\pi}^\top$. 
        Then, we have
        \[ 
        \spec{M_{\tau}} \leq \wt{O}\left(\ell^{KD-1} \right) \mper
        \]
    \end{claim}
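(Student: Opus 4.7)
The plan is to apply the graphical matrix norm bound of Proposition~\ref{prop:graph-matrix-norm} to the intersection shape $\tau$ and track how the additional vertex identifications reduce the norm compared to the unreduced shape $\alpha_{\pi,\pi'}$. By construction, every intersection shape arising from the gram decomposition $M_{\beta_{\pi'}} \cdot M_{\beta_\pi}^\top$ identifies at least one circle vertex in $I' = U_\alpha \setminus V_\alpha$ (the ``interior circles'' of $\beta_\pi$) with a circle vertex in $J' = V_\alpha \setminus U_\alpha$ (the ``interior circles'' of $\beta_{\pi'}$). This structural feature is the lever we will exploit.

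First, I would fix a good vertex separator for $\alpha$. In the diagonal-block case $s = t$, every vertex of $U_\alpha \cap V_\alpha = \{\square{s}\} \cup L$ lies in every vertex separator by Definition~\ref{def:vertex-separator}. Moreover, every non-phantom hyperedge in $\alpha$ is incident to $\square{s}$, so removing $\{\square{s}\} \cup L$ severs all non-phantom paths between $I'$ and $J'$; hence $S_\alpha = \{\square{s}\} \cup L$ is a valid vertex separator. The non-separator vertices are the $(3K-2)D - r$ circles in $I' \cup J' \cup H'$, each contributing $\sqrt{\ell}$, yielding the baseline $\spec{M_\alpha} \leq \wt{O}(\ell^{((3K-2)D - r)/2})$, with worst case $\wt{O}(\ell^{KD})$ at $r = (K-2)D$.

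The key quantitative step is then to reuse this separator for $\tau$ after accounting for the merge. Let $(a,b) \in I' \times J'$ be a merged pair in $\tau$; the merged vertex $v = a = b$ lies in $U_\tau \cap V_\tau$ and therefore must be included in every vertex separator of $\tau$ by Definition~\ref{def:vertex-separator}. Starting from $S_\alpha$ and including $v$ gives a valid separator $S_\tau$ for $\tau$. In $\alpha$, the circles $a,b$ were both non-separator and jointly contributed $\sqrt{\ell}\cdot\sqrt{\ell} = \ell$ to $\prod_{i \notin S} \sqrt{\mathsf{wt}(i)}$, whereas in $\tau$ the single merged vertex $v$ sits in $S_\tau$ and contributes nothing. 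All other non-separator contributions are unchanged (or can only decrease), so Proposition~\ref{prop:graph-matrix-norm} yields $\spec{M_\tau} \leq \wt{O}(\ell^{KD-1})$.

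The main obstacle is bookkeeping the variety of intersection patterns that can arise in the gram decomposition: simultaneous merges that also involve $H'$ or $L$ vertices, collapses within $I'$ or within $J'$, and multi-edges that become phantom edges (Definition~\ref{def:phantom-edge}) and thus relax the set of paths that need to be blocked. In each such sub-case one checks that the resulting shape either (i) contains at least the single $(I',J')$ merge analyzed above, giving the full $\ell$ reduction, or (ii) has strictly more identifications, which can only further decrease the non-separator circle count. Since phantom edges only shrink the set of paths requiring blockage, they never force a separator to be larger; hence the uniform bound $\spec{M_\tau} \leq \wt{O}(\ell^{KD-1})$ holds across all intersection shapes $\tau$.
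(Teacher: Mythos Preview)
Your approach is essentially the same as the paper's: establish a baseline bound $\wt{O}(\ell^{KD})$ for the uncollapsed shape $\alpha$, then argue that each $(I',J')$ merge forces a vertex into the mandatory separator $U_\tau\cap V_\tau$, dropping the bound by a factor of $\ell$.

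There is, however, a gap in your baseline computation. Your bound $\spec{M_\alpha}\le \wt{O}(\ell^{((3K-2)D-r)/2})$ only accounts for the first product $\prod_{i\notin S}\sqrt{\mathsf{wt}(i)}$ in Proposition~\ref{prop:graph-matrix-norm} and omits the second factor $\prod_{i\notin S,\ \text{isolated}}\sqrt{\mathsf{wt}(i)}$. When $r>(K-2)D$, the $r-(K-2)D$ vertices in $H'$ each carry two incident hyperedges; depending on $\pi,\pi'$ these can coincide and become phantom, rendering those vertices isolated. In the worst case this contributes an extra $\ell^{(r-(K-2)D)/2}$, so the true baseline is $\wt{O}(\ell^{KD})$ for \emph{every} $r$ in the stated range, not just at $r=(K-2)D$. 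The paper makes exactly this case distinction. Your closing remark that ``phantom edges only shrink the set of paths requiring blockage, they never force a separator to be larger'' is correct but beside the point: the cost of phantom edges in Proposition~\ref{prop:graph-matrix-norm} is not a larger separator but the extra isolated-vertex factor, which your accounting drops. Once you restore this factor, the merge argument (two $\sqrt{\ell}$ contributions replaced by a mandatory-separator vertex) goes through unchanged and yields the desired $\wt{O}(\ell^{KD-1})$.
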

    \begin{proof}
        The intersection terms of $\beta_{\pi'}\circ\beta_\pi^\top$ can be described by merging vertices in $I'$ with vertices in $J'$ of the graph matrices from Definition~\ref{def:VV-graph-matrices}.
        We will first show that without merging any vertex, we have $\spec{M_\al} \leq \wt{O}(\ell^{KD})$, and then show that intersection strictly decreases the norm from there, giving us the desired bound.
        \begin{itemize}
            \item When $r=(K-2)D$, we have $H' = \emptyset$. The minimum vertex separator $S$ consists of only $\square{s}$ and there can be $|I' \cup J'| = 2KD$ circle vertices outside of $S$, which gives a bound of $\wt{O}(\ell^{2D})$.

            \item When $r > (K-2)D$, we have $|H'| = r-(K-2)D$.
            Observe that each vertex in $H'$ has two incident edges, hence depending on $\pi,\pi'$, there can be phantom edges which make some vertices isolated (recall Definition~\ref{def:phantom-edge}).
            In the worst case, all edges incident to $H'$ are phantom edges and all vertices in $H'$ are isolated.
            Since $|I' \cup J'| = 4(K-1)D-2r$, Proposition~\ref{prop:graph-matrix-norm} gives a bound of $\wt{O}(\ell^{2(K-1)D-r}\cdot \ell^{r-(K-2)D}) = \wt{O}(\ell^{KD})$.
        \end{itemize}
        The intersection terms can be obtained by merging vertices in $U_\al\setminus V_\al$ and $V_\al\setminus U_\al$.
        Notice that each merge transforms $2$ vertices that originally contribute $\sqrt{\ell}$ factor from the previous bound to be inside $U_\tau \cap V_\tau$, the mandatory separator. This causes a drop by a factor of $\ell$, yielding our desired bound.
    \end{proof}

    Thus, Claim~\ref{claim:norm-bound-gram-decompos} states that
    \begin{equation*}
        \sum_{\pi,\pi'\in\bbS_{KD}} M_{\alpha_{\pi,\pi'}} = M_\beta M_\beta^\top + E
    \end{equation*}
    where the error $E$ consists of all (constant number of) intersection terms and $\spec{E} \leq \wt{O}(\ell^{KD-1}) = o(\ell^{KD})$.
    From \pref{eq:VV-diag}, the diagonal of $V^\top V$ contributes $\Omega(\ell^{KD}) \cdot \Id$.
    This completes the proof.
\end{proof}

\paragraph{Off-diagonal blocks of $V^\top V$: Proof of Lemma~\ref{lem:VV-off-diag-blocks}}
\begin{proof}[Proof of Lemma~\ref{lem:VV-off-diag-blocks}]
    The off-diagonal blocks of $V^\top V$ consist of shapes and collapsed shapes from Definition~\ref{def:VV-graph-matrices} where $s\neq t$.
    We first look at the case when $D \leq 2$ and $r = |I\cap J| > (K-2)D$ (recall Remark~\ref{rem:I-J-intersection} that $|I\cap J| \geq (K-2)D$).

    \subparagraph{When $D \leq 2$ and $r > (K-2)D$.}
    We analyze the minimum vertex separator $S$ depending on $r = |I\cap J|$.
    Recall the definitions of $I', J', H'$ and $L$ from Definition~\ref{def:VV-graph-matrices} where $|I'|, |J'| \leq 2(K-1)D - r$ and $|H'| \leq r-(K-2)D$.
    First, $L \subseteq U_\al \cap V_\al$ must always be in $S$.
    Next, if $\square{s}$ is not in the vertex separator, then since $J' \subset V_\al$ and is connected to $\square{s}$, all of $J'$ must be in the vertex separator.
    Similarly, if $\square{t}$ is not in the vertex separator, then $I'\subset S$.
    We consider the following types of vertex separators,
    \begin{enumerate}
        \item $\square{s}, \square{t} \in S$: there can be $|I' \cup J' \cup H'| \leq (3K-2)D-r < 2KD$ circle vertices outside the vertex separator, hence this gives a bound of $\wt{O}((\sqrt{\ell})^{(3K-2)D-r}) = o(\ell^{KD})$.

        \item One of $\square{s}$ or $\square{t}$ is in $S$ (\WLOG assume $\square{s}\in S$ and $\square{t} \notin S$): in this case $I' \subset S$, thus $\{\square{t}\} \cup J' \cup H'$ can be outside $S$ and $|J' \cup H'| \leq KD$, giving a bound of $\wt{O}(\sqrt{m} (\sqrt{\ell})^{KD}) = \wt{O}(\sqrt{m} \ell^{KD/2})$.

        \item $\square{s}, \square{t}\notin S$: $I', J', H'$ must all be in $S$, giving a bound of $\wt{O}(m)$.
    \end{enumerate}
    When $m \leq (\frac{\ell}{\polylog(\ell)})^{KD}$ and $r > (K-2)D$, the bounds above are all $o(\ell^{KD})$.

    \subparagraph{When $D>2$ and $r > (K-2)D$.}
    The subtlety here is that since $|I'|, |J'|$ can each have $2(K-1)D-r \geq 2K$ circle vertices (this won't happen when $D\leq 2$ because $|I'| < KD$).
    Observe that $2K$ vertices in $I'$ (2 hyperedges) can collapse into $K$ vertices and create a phantom edge, essentially removing the 2 edges (recall Definition~\ref{def:phantom-edge} and Remark~\ref{rem:min-vertex-separator}).
    In this case, even if we don't pick $\square{t}$ in the vertex separator, these circle vertices aren't connected to $\square{t}$ hence are not required to be in the vertex separator.

    Thus, we need to look at Case 2 and 3 of the analysis of $D\leq 2$, i.e.\ when $\square{s}$ and $\square{t}$ are not both in $S$.
    Let $I'_{\ph}$ be the set of circle vertices in $I'$ incident to phantom edges (hence not connected to $\square{t}$), and define $J'_{\ph}, H'_{\ph}$ similarly.
    There can be at most $2(K-1)D-r$ circle vertices in $I'$ or $J'$ and $r-(K-2)D$ in $H'$ before the collapse, so
    \begin{equation*}
    \begin{aligned}
        |I'_{\ph}|, |J'_{\ph}| &\leq \Floor{\frac{1}{2}(2(K-1)D-r)} \leq \Floor{\frac{kD-1}{2}} < \frac{KD}{2} \mcom \\
        |H'_{\ph}| & \leq \Floor{\frac{1}{2}(r-(K-2)D)} \mper
    \end{aligned}
    \end{equation*}
    We again study the vertex separator $S$:
    \begin{enumerate}
        \item One of $\square{s}$ or $\square{t}$ is in $S$ (\WLOG assume $\square{s}\in S$ and $\square{t} \notin S$):
        in this case $I' \setminus I'_{\ph} \subset U_\al$ is connected to $\square{t} \in V_\al$, thus $I'\setminus I'_{\ph}$ must be in the vertex separator.
        Thus, $\{\square{t}\} \cup J' \cup H'\cup I'_{\ph}$ can be outside of $S$ and $|J' \cup H' \cup I'_{\ph}| < KD + KD/2$.
        Here it doesn't matter whether $H'$ has phantom edges, as they contribute the same to the norm bound.
        Thus, we have a bound of $\wt{O}(\sqrt{m} (\sqrt{\ell})^{3KD/2}) = \wt{O}(\sqrt{m} \ell^{3KD/4})$.

        \item $\square{s}, \square{t}\notin S$: $I'\setminus I'_{\ph}, J'\setminus J'_{\ph}$ and $H'\setminus H'_{\ph}$ must be in $S$, so $\{\square{s},\square{t}\} \cup I'_{\ph} \cup J'_{\ph} \cup H'_{\ph}$ can be outside of $S$, and $H'_{\ph}$ are considered isolated in Proposition~\ref{prop:graph-matrix-norm} (recall from Definition~\ref{def:phantom-edge} that $I'_{\ph}, J'_{\ph} \subset U_\al \cup V_\al$ hence they are not isolated).
        This gives a bound of $\wt{O}(m (\sqrt{\ell})^{\floor{2(K-1)D-r}} \cdot \ell^{\floor{\frac{1}{2}(r-(K-2)D)}}) \leq \wt{O}(m \ell^{KD/2})$.
    \end{enumerate}
    When $m \leq (\frac{\ell}{\polylog(\ell)})^{KD/2}$, the bounds above are all $o(\ell^{KD})$.

    \subparagraph{When $r=(K-2)D$.}
    In this case this component corresponds to the entries $V^\top V[(s,I), (t,J)]$ with $s\neq t$ and $|I\cap J| = (K-2)D$.
    But this is exactly the $W$ matrix defined in Definition~\ref{def:W-matrix}.

    \subparagraph{Completing the proof.}
    Combining the above, we see that for $r>(K-2)D$, when $D\leq 2$, $m \leq (\frac{\ell}{\polylog(\ell)})^{KD}$ and when $D > 2$, $m \leq (\frac{\ell}{\polylog(\ell)})^{KD/2}$, all such graph matrices have spectral norm bounded by $o(\ell^{KD})$.
    For $r=(K-2)D$ this component is exactly $W$.
    Thus, $\offdiagblocks(V^\top V) = W + E$ with $\spec{E} = o(\ell^{KD})$, completing the proof.
\end{proof}

\paragraph{Graph matrices that arise from $N^\top N$}
Recall that the $\binom{m}{2}\ell_{(K-2)D} \times m\ell_{2(K-1)D}$ matrix $N$ defined in Definition~\ref{def:null-space-of-V-high-deg} has rows indexed by $(t_1, t_2, L)$ for $t_1 < t_2 \in [m]$, multiset $L\in [\ell]^{(K-2)D}$, and columns indexed by $(s, I)$ for $s\in[m]$ and multiset $I \in [\ell]^{2(K-1)D}$, and the entry is nonzero only if $s\in \{t_1,t_2\}$ and $L \subset I$.

Consider the entry $N^\top N[(s,I), (t,J)]$ for $s < t \in [m]$ and multisets $I, J\in [\ell]^{2(K-1)D}$,
\begin{equation} \label{eq:NN-entry}
\begin{aligned}
    N^\top N[(s,I), (t,J)]
    &= \sum_{t_1 < t_2, L\in [\ell]^{(K-2)D}} N[(t_1,t_2, L), (s,I)] \cdot N[(t_1,t_2, L), (t,J)] \\
    &= \sum_{L\in [\ell]^{(K-2)D}} N[(s,t, L),(s,I)] \cdot N[(s,t, L), (t,J)] \\
    &= - \sum_{L\in [\ell]^{(K-2)D}, L\subset I, J} (B_t(y)^{D} y_L) [I] \cdot (B_s(y)^{D}y_L)[J] \mper
\end{aligned}
\end{equation}
Note that there is only one pair $(t_1,t_2)$ in the summation that's nonzero, which is $(t_1,t_2) = (s,t)$.
Moreover, since $L \subset I, J$, if $|I\cap J| < (K-2)D$ then the entry is zero.
We emphasize that the negative sign in \pref{eq:NN-entry} is crucial; as we will see later, it allows us to cancel out the $W$ matrix from Lemma~\ref{lem:VV-off-diag-blocks}.
We can now define the (unsigned) graph matrices that arise from \pref{eq:NN-entry}.

\begin{definition}[Graph matrices from $N^\top N$]
\label{def:NN-graph-matrices}
    Fix $r \in \{(K-2)D,\dots,2(K-1)D\}$ and permutations $\pi,\pi'\in \bbS_{KD}$, the graph matrix is described as follows,
    \begin{enumerate}
        \item A graph with exactly $4(K-1)D-r$ circle vertices (that take labels in $[\ell]$), denoted as follows,
        \begin{itemize}
            \item $I' = (\circle{a_1},\dots, \circle{a_{|I'|}})$ where $|I'| = 2(K-1)D - r$;
            \item $J' = (\circle{b_1},\dots, \circle{b_{|J'|}})$ where $|J'| = 2(K-1)D - r$;
            \item $H' = (\circle{a_{|I'|+1}} = \circle{b_{|J'|+1}}, \dots, \circle{a_{KD}} = \circle{b_{KD}})$; $|H'| = r-(K-2)D$;
            \item $L = (\circle{c_1},\dots, \circle{c_{|L|}})$ where $|L| = (K-2)D$.
        \end{itemize}

        \item Two square vertices $\square{s}$ and $\square{t}$ (that take labels in $[m]$).

        \item $U_\al = \{\square{s}\} \cup I' \cup H' \cup L$ and $V_\al = \{\square{t}\} \cup J' \cup H' \cup L$;
        they are treated as sets.

        \item On the left, we add $D$ hyperedges:
        $(\square{s}, \circle{b_{\pi((i-1)K+1)}}, \dots, \circle{b_{\pi(iK)}})$ for $i=1,\dots, D$;
        on the right, we add
        $(\square{s}, \circle{b_{\pi'((i-1)K+1)}}, \dots, \circle{b_{\pi'(iK)}})$ for $i=1,\dots, D$.
        Each hyperedge touches a square vertex ($\square{s}$ or $\square{t}$) and $K$ circle vertices, representing an element in $B_s$ or $B_t$.

    \end{enumerate}
    When $I$ and $J$ have repeated elements, all such components can be viewed as \emph{collapses} of the graph matrices described above.
\end{definition}

\begin{remark}
    The graph matrices are defined without signs, whereas there is a crucial negative sign in \pref{eq:NN-entry}.
    However, for spectral norm bounds the sign does not matter.
    Note also the similarities between Definitions~\ref{def:VV-graph-matrices} and \ref{def:NN-graph-matrices}.
\end{remark}

\paragraph{Analysis of $N^\top N$: Proof of Lemma~\ref{lem:NN}}

\begin{proof}[Proof of Lemma~\ref{lem:NN}]
    Like $V^\top V$, $N^\top N$ also has the same $m\times m$ block structure indexed by $s,t\in [m]$.
    First, the diagonal blocks ($s=t$) are clearly positive semidefinite (here we do not need to show that they are full rank).

    Now we analyze the off-diagonal blocks ($s\neq t$).
    Observe from \pref{eq:NN-entry} that when $|I\cap J| = (K-2)D$, $L$ must be $I\cap J$, i.e.\ there is only one nonzero term in the summation, and we have
    \begin{equation*}
        N^\top N[(s,I), (t,J)]
        = - B_t^D[I\setminus J] \cdot B_s^D[J \setminus I]
        = -\Iprod{B_s(y)^D y_I, B_t(y)^D y_J}
        = -V^\top V[(s,I),(t,J)] \mcom
    \end{equation*}
    but this is exactly the $-W$ matrix from Definition~\ref{def:W-matrix}!
    Notice the crucial negative sign here.

    When $r = |I\cap J| > (K-2)D$, we bound the spectral norm of the graph matrices from Definition~\ref{def:NN-graph-matrices}.
    The analysis is almost identical to the analysis of the off-diagonal blocks of $V^\top V$ (proof of Lemma~\ref{lem:VV-off-diag-blocks}).

    Fix $k \in \{(K-2)D+1,\dots,2(K-1)D\}$ and $\pi,\pi' \in \bbS_{KD}$.
    When $D \leq 2$ and $m \leq (\frac{\ell}{\polylog(\ell)})^{KD}$, it is easy to see that $\spec{M_\al} \leq \wt{O}(m) \leq o(\ell^{KD})$ by including all circle vertices in the vertex separator.

    When $D > 2$ and $m \leq (\frac{\ell}{\polylog(\ell)})^{KD/2}$, we again define $I'_{\ph}, J'_{\ph}$ and $H'_{\ph}$ to be the circle vertices in $I', J', H'$ disconnected from $\square{t}$ and $\square{s}$ due to phantom edges after collapsing, same as the proof of Lemma~\ref{lem:VV-off-diag-blocks}.
    We know that $|I'_{\ph}|, |J'_{\ph}| \leq \Floor{\frac{1}{2}(2(K-1)D-r)} < \frac{KD}{2}$ and $|H'_{\ph}| \leq \Floor{\frac{1}{2}(r-(K-2)D)}$.
    We now study the minimum vertex separator $S$:
    \begin{enumerate}
        \item One of $\square{s}$ or $\square{t}$ is in $S$ (\WLOG assume $\square{s}\in S$ and $\square{t} \notin S$):
        in this case $\{\square{t}\} \cup J' \cup H' \cup I'_{\ph}$ can be outside of $S$ and $|J' \cup H' \cup I'_{\ph}| < KD + KD/2$.
        Here it doesn't matter whether $H'$ has phantom edges, as they contribute the same to the norm bound.
        Thus, we have a bound of $\wt{O}(\sqrt{m} (\sqrt{\ell})^{3KD/2}) = \wt{O}(\sqrt{m} \ell^{3KD/4})$.

        \item $\square{s}, \square{t}\notin S$: $I'\setminus I'_{\ph}, J'\setminus J'_{\ph}$ and $H'\setminus H'_{\ph}$ must be in $S$, so $\{\square{s},\square{t}\} \cup I'_{\ph} \cup J'_{\ph} \cup H'_{\ph}$ can be outside of $S$, and $H'_{\ph}$ is considered isolated in Proposition~\ref{prop:graph-matrix-norm}.
        This gives a bound of $\wt{O}(m(\sqrt{\ell})^{\floor{2(K-1)D-r}} \cdot \ell^{\floor{\frac{1}{2}(r-(K-2)D)}}) \leq \wt{O}(m \ell^{KD/2})$.
    \end{enumerate}
    When $m \leq (\frac{\ell}{\polylog(\ell)})^{KD/2}$, the bounds above are all $o(\ell^{KD})$.

    Therefore, we can write
    $N^\top N = P - W + E$
    where $P \succeq 0$, $W$ is the matrix from Definitiom~\ref{def:W-matrix} and $\spec{E} = o(\ell^{KD})$. This completes the proof.
\end{proof}

\subsubsection{Proof of Lemma~\ref{lem:singular-value-of-L} and \ref{lem:singular-value-of-L-higher-deg}: singular value of \texorpdfstring{$L$}{L}}
\label{sec:sval-U-equals-V}

Lemma~\ref{lem:singular-value-of-L} is a special case ($K=2$) of Lemma~\ref{lem:singular-value-of-L-higher-deg}, which we restate below.
The reader is encouraged to focus on the $K=2$ case for simplicity.

\begin{lemma}[Restatement of Lemma~\ref{lem:singular-value-of-L-higher-deg}]
    Fix $K, D\in \N$, and let $m,\ell,n\in \N$ such that $m \ell^{2(K-1)D} \leq (\frac{n}{\polylog(n)})^{2D}$.
    Let $L$ be the $r_D \times n^{2D}$ matrix defined in Definition~\ref{def:L-matrix-high-degree}.
    Then, with probability $1 - n^{-\Omega(D)}$, $\sigma_{r_D}(L) \geq \Omega(n^D)$.
\end{lemma}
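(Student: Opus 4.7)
\textbf{Proof plan for Lemma~\ref{lem:singular-value-of-L-higher-deg}.}
The plan is to establish $\lambda_{\min}(LL^\top) = \Omega(n^{2D})$ and then extract $\sigma_{r_D}(L) \geq \Omega(n^D)$ by taking square roots. To do so, I will follow the ``large diagonal + small off-diagonal (up to a PSD part)'' strategy that is used throughout Section~\ref{sec:singular-val-lb} for all the matrices in this paper. Concretely, I want to write
\[
   LL^\top \;=\; D_L \;+\; P_L \;+\; E_L,
\]
where $D_L$ is a (positive) diagonal matrix with every entry of order $n^{2D}$, $P_L \succeq 0$ absorbs certain ``gram-like'' terms that individually have large spectral norm but cannot hurt the smallest eigenvalue, and $\spec{E_L} = o(n^{2D})$.

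\textbf{Diagonal entries.} For a row $(t,\gamma,J)$ with $\gamma \in \Gamma_{2D}$ and $J \in [\ell]^{\deg(\gamma)}$, the entry of $LL^\top$ on the diagonal is the squared Frobenius norm of the symmetric tensor $\Sym\bigl(\sum_{\pi} \bigotimes_{i\in[K]}\bigotimes_{T\in\wt{\kappa}_i(\gamma)} A_t[J'_{\pi(T)}]\bigr)$. Writing out $\sum_I (\cdot)[I]^2$ as a polynomial in the Gaussians $A_t$ and computing its expectation term by term, one sees (via Wick's theorem) that the expectation is $\Theta_{D,K}(1)\cdot n^{2D}$ (the leading contribution comes from the tuples $I$ with all coordinates distinct and $J'$-free, where every slice $A_t[J'_{\pi(T)}]$ contributes an independent $\|A_t[J'_{\pi(T)}]\|^2 = \Theta(n^{K-i})$). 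Standard Gaussian polynomial concentration (as in Claim~\ref{claim:norm-of-polynomial-powers}, applied to a constant-degree polynomial in $n^K$ Gaussians) gives concentration within a $(1\pm n^{-\Omega(1)})$ factor with probability $1 - n^{-\Omega(D)}$; a union bound over the $r_D \leq n^{O(D)}$ rows preserves this.

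\textbf{Off-diagonal entries via graph matrices.} The entry $LL^\top\bigl[(s,\gamma,J),(t,\gamma',J')\bigr]$ is a sum over permutations $\pi,\pi'$ and an index $I \in [n]^{2D}$ of a product of entries of $A_s$ and $A_t$ indexed by $I,J,J'$. I will decompose this as a sum of graph matrices (Section~\ref{sec:graphical-matrices}), where the row boundary $U_\alpha$ contains one square vertex of weight $m$ (for $t$), at most $\deg(\gamma)$ circle vertices of weight $\ell$ (for the labels in $J$), and the column boundary $V_\alpha$ analogously contains weights $m$ and $\ell$; the ``middle'' vertices are $2D$ hexagonal vertices of weight $n$ coming from $I$. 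Applying Proposition~\ref{prop:graph-matrix-norm} and analyzing the minimum weight vertex separator exactly as in the proofs of Lemmas~\ref{lem:VV-diag-blocks} and~\ref{lem:VV-off-diag-blocks}: whenever $s\neq t$ or $(\gamma,J)\neq(\gamma',J')$ in a way that forces the hexagon vertices to cross the separator, picking either $\{\square s\}\cup U_\alpha$'s circles or $\{\square t\}\cup V_\alpha$'s circles as the separator yields norm $\wt{O}\bigl(\sqrt{m\,\ell^{2(K-1)D}\cdot n^{2D}}\bigr) \leq \wt{O}(n^{2D}/\polylog(n))$, using the hypothesis $m\ell^{2(K-1)D}\leq (n/\polylog n)^{2D}$. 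Summed over the $O(1)$ many bucket profiles $\gamma$ and $(KD)^{O(K)}$ permutations, this still yields $\spec{E_L} = o(n^{2D})$.

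\textbf{Main obstacle: gram-type terms and same-$(s,\gamma)$ cross-$J$ blocks.} The subtle case (exactly analogous to the ``troublesome'' components in the proof of Lemma~\ref{lem:VV-diag-blocks}) is the block with $s=t$ and small $\gamma_K$, where the hexagons need not appear in the separator. Here the na\"\i ve vertex-separator bound matches the diagonal and cannot be charged to $D_L$. My plan mirrors Definition~\ref{def:half-gram-shape} and Claim~\ref{claim:norm-bound-gram-decompos}: for each such shape $\alpha_{\pi,\pi'}$ I will exhibit a half-gram shape $\beta_\pi$ so that $\sum_{\pi,\pi'}M_{\alpha_{\pi,\pi'}} = M_\beta M_\beta^\top + E'$ where $M_\beta$ has appropriate row boundary $(t,\gamma,J)$ and column boundary given by the $2D$ hexagon indices; the first summand is manifestly PSD and absorbs into $P_L$, while every intersection shape $\tau$ in $E'$ strictly merges a circle vertex into the mandatory boundary, losing a factor of $\ell$ (or $n$) and giving $\spec{\tau} \leq \wt{O}(n^{2D}/\ell)$, which is $o(n^{2D})$. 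Combining the diagonal, the PSD ``gram'' part, and the small-spectral-norm remainder yields $LL^\top \succeq \Omega(n^{2D})\cdot \Id - o(n^{2D})\cdot \Id$, so $\sigma_{r_D}(L)\geq \Omega(n^D)$, as desired.
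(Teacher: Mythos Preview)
Your overall architecture is right, but the ``main obstacle'' you identify is not actually an obstacle, and as a result you are working harder than necessary. The paper's proof needs no PSD part $P_L$ at all: it shows directly that every off-diagonal graph matrix in $LL^\top$ has spectral norm $o(n^{2D})$, so $LL^\top = \Omega(n^{2D})\cdot \Id + E_L$ with $\spec{E_L}=o(n^{2D})$.

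Two ingredients make this possible that your plan misses. First, the paper invokes Fact~\ref{fact:deleting-rows-sval} to delete all columns of $L$ indexed by tuples $H\in[n]^{2D}$ with a repeated coordinate. This costs nothing for the singular-value lower bound and guarantees that the $2D$ hexagon vertices in every shape from Definition~\ref{def:LL-graph-matrices} are distinct. Second, and crucially, for the $s=t$ off-diagonal block this column restriction means that a phantom edge can arise only when the left and right hyperedges incident to some $\hexagon{c}$ coincide, which forces $N_L(\hexagon{c})=N_R(\hexagon{c})$; those circle vertices then lie in $U_\alpha\cap V_\alpha$, the mandatory separator. With this observation, the separator bound for the $s=t$ shapes is
\[
   \wt{O}\Bigl(n^{|H_{\iso}|}\cdot n^{(2D-|H_{\iso}|)/2}\cdot \ell^{(K-1)(2D-|H_{\iso}|)/2}\Bigr)
   \;=\;\wt{O}\Bigl(n^{2D-1/2}\,\ell^{(K-1)/2}\Bigr)\;=\;o(n^{2D}),
\]
using $|H_{\iso}|\leq 2D-1$ (strict, since otherwise $U_\alpha=V_\alpha$) and $\ell^{K-1}=o(n)$, which follows from the hypothesis $m\ell^{2(K-1)D}\leq (n/\polylog n)^{2D}$. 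So your statement that ``the na\"ive vertex-separator bound matches the diagonal and cannot be charged to $D_L$'' is incorrect once the column restriction is in place.

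Your half-gram decomposition \`a la Definition~\ref{def:half-gram-shape} would likely go through if executed carefully, but it is strictly more work: you would have to track intersection shapes across bucket profiles $\gamma,\gamma'$ and also handle hexagon collapses (which you have not deleted). The paper's route is simpler and cleaner here precisely because the hexagon vertices carry weight $n\gg \ell^{K-1}$, so there is enough room to bound every off-diagonal shape outright.
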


We first recall some definitions.
In Definition~\ref{def:generalized-bucket-profile}, we defined a bucket profile to be $\gamma = (\gamma_1, \gamma_2,\dots, \gamma_K)$ such that $\sum_{i=1}^K i\gamma_i = 2D$ and $\deg(\gamma) = \sum_{i=1}^K (K-i)\gamma_i = K(\sum_{i=1}^K \gamma_i)-2D$, and defined $\Gamma_{2D}$ to be the set of bucket profiles.
Note that $\deg(\gamma) \leq 2(K-1)D$ and is achieved when $\gamma = \gamma^* = (2D,0,\dots,0)$.
For each $\gamma\in \Gamma_{2D}$, $L_{\gamma}$ is a $m\ell_{\deg(\gamma)} \times n^{2D}$ matrix where the rows are indexed by $t\in[m]$ and multiset $J\in [\ell]^{\deg(\gamma)}$, and each row is the flattened vector of a $2D$-th order tensor of dimension $n$ (Definition~\ref{def:L-matrix-high-degree}).
Let $r_D = m \sum_{\gamma\in \Gamma_{2D}} \ell_{\deg(\gamma)}$;
the $L$ matrix is the $r_D \times n^{2D}$ matrix formed by concatenating the rows of $L_{\gamma}$ for all $\gamma \in \Gamma_{2D}$.

Furthermore, recall that $M\in \R^{n\times \ell}$ is a restriction matrix (Definition~\ref{def:restriction-matrix}) whose columns consist of standard unit vectors.
In this section, we can assume without loss of generality that $M$'s columns are $e_1,\dots, e_\ell$.
Then, the $((s,I), H)$ entry of $L_{\gamma}$, where $t\in[m]$, $I\in [\ell]^{\deg(\gamma)}$ is a multiset and $H \in [n]^{2D}$ is an \emph{ordered tuple}, is
\begin{equation*}
    L_{\gamma}[(s,I), H] =
    \sum_{\pi_1\in \bbS_{2D}} \sum_{\pi_2\in\bbS_{\deg(\gam)}}
    \prod_{i\in[K]} \prod_{S,T\in \kappa_i\| \wt{\kappa}_i(\gamma)} A_t[H_{\pi_1(S)}, I_{\pi_2(T)}] \mcom
\end{equation*}
where we recall that $\kappa_i\| \wt{\kappa}_i(\gamma) = ((S_1,T_1),\dots,(S_{\gamma_i}, T_{\gamma_i}))$ such that $(S,T)\in \kappa_i\| \wt{\kappa}_i(\gamma)$ satisfy $|S| = i$ and $|T| = K-i$.
Note also that $\{S\in \kappa_i(\gamma)\}_{i\in[K]}$ forms a partition of $\{1,2,\dots,2D\}$ since $\sum_{i=1}^K \sum_{S\in \kappa_i(\gamma)} |S| = \sum_{i=1}^K i\gamma_i = 2D$.
Similarly, $\{T\in \wt{\kappa}_i\}_{i\in[K]}$ forms a partition of $\{1,2,\dots,\deg(\gamma)\}$ since $\sum_{i=1}^K \sum_{T\in\wt{\kappa}_i(\gamma)} |T| = \sum_{i=1}^K (K-i)\gamma_i = \deg(\gamma)$.
These parameters are consistent with $|H| = 2D$ and $|I| = \deg(\gamma)$.

\paragraph{Graph matrices that arise from $LL^\top$}
$LL^\top$ has a block structure indexed by bucket profiles $\gamma,\gamma' \in \Gamma_{2D}$ where the $(\gamma,\gamma')$ block is $L_{\gamma} L_{\gamma'}^\top$.
For $s,t\in[m]$ and $I\in[\ell]^{\deg(\gamma)}$, $J\in [\ell]^{\deg(\gamma')}$, the entry $L_{\gamma} L_{\gamma'}[(s,I), (t,J)] = \sum_{H\in[n]^{2D}} L_{\gamma}[(s,I), H] L_{\gamma'}[(t,J), H]$.
We now describe the graph matrices that arise here.

\begin{definition}[Graph matrices from $LL^\top$]
\label{def:LL-graph-matrices}
    Fix bucket profiles $\gamma, \gamma'\in \Gamma_{2D}$, permutations $\pi_1,\pi_1'\in \bbS_{2D}$, $\pi_2 \in \bbS_{\deg(\gamma)}$ and $\pi_2'\in \bbS_{\deg(\gamma')}$, the graph matrix is a tripartite graph described as follows,
    \begin{enumerate}
        \item On the left, $U_\al$ contains a square vertex $\square{s}$ (that takes a label in $[m]$) and $\deg(\gamma)$ circle vertices $I = \{\circle{a_1},\dots, \circle{a_{\deg(\gamma)}}\}$ (that take labels in $[\ell]$).

        \item On the right, $V_\al$ contains a square vertex $\square{t}$ and $\deg(\gamma')$ circle vertices $J = \{\circle{b_1},\dots, \circle{b_{\deg(\gamma')}}\}$.

        \item In the middle, there are $2D$ hexagon vertices: $H = (\hexagon{c_1}, \dots, \hexagon{c_{2D}})$ (that take labels in $[n]$).

        \item There are $K$ types of hyperedges: for $i\in[K]$, a type-$i$ edge connects a square vertex, $K-i$ circle vertices and $i$ hexagon vertices:
        \begin{itemize}
            \item on the left, for each $i\in[K]$ there are $\gamma_i$ type-$i$ hyperedges:
            for each $(S,T)\in \kappa_i \| \wt{\kappa}_i(\gamma)$ we add an edge $\{\square{s}\} \cup \{\circle{a_j}\}_{j\in\pi_2(T)} \cup \{\hexagon{c_k}\}_{k\in \pi_1(S)}$ (recall that $|S| = i$ and $|T| = K-i$);

            \item on the right, for each $i\in[K]$ there are $\gamma_i'$ type-$i$ hyperedges:
            for each $(S,T)\in \kappa_i \| \wt{\kappa}_i(\gamma')$ we add an edge $\{\square{t}\} \cup \{\circle{a_j}\}_{j\in\pi_2(T)} \cup \{\hexagon{c_k}\}_{k\in \pi_1(S)}$.
        \end{itemize}
        In total, there are $\sum_{i=1}^K (\gamma_i + \gamma_i')$ hyperedges.

        \item There may be ``surprise'' collapses which we define as follows: hexagon vertices in $H$ may collide with circle vertices in $U_{\al}$ or $V_{\al}$ as both the circle and hexagon vertices index the same matrices $A_1,\dots,A_m$; we call each such shape a surprise collapsed shape.
    \end{enumerate}
\end{definition}

For intuition, the following diagram illustrates an example shape that shows up in the case of $D=1$ and $K=2$.
This particular shape comes from bucket profiles
$\gamma = (2,0)$ and $\gamma' = (0,1)$ (recall that $\sum_{i=1}^K i\gamma_i = 2D$), and it has two type-1 edges corresponding to the random variables $A_s[i,c_1] A_s[j,c_2]$ and a type-2 edge corresponding to the term $A_t[c_1,c_2]$.
\begin{figure}[h]
    \centering
    \includegraphics[width=180pt]{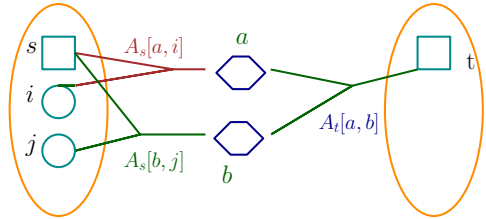}
    \caption{$LL^\top[(s,i,j), (t)] = \sum_{a\neq b\in [n]} A_s[a,i]A_s[b,j]A_t[a,b] $ for $i\neq j\in [\ell]$, and $s\neq t \in [m]$.}
    \label{fig:LL-shape}
\end{figure}

\paragraph{Proof of Lemma~\ref{lem:singular-value-of-L-higher-deg}}

\begin{proof}[Proof of Lemma~\ref{lem:singular-value-of-L-higher-deg}]
    We would like to prove that $\lambda_{\min}(LL^\top) \geq \Omega(n^{2D})$, hence by Fact~\ref{fact:deleting-rows-sval} it suffices for us to prove a singular value lower bound of $L$ with some columns removed. In particular, since the columns of $L$ are indexed by ordered tuples $H \in [n]^{2D}$, we will restrict to our attention to the columns where $H$ has distinct elements.
    This means that we do not have any collapses among the $2D$ hexagon vertices in the graph matrices from Definition~\ref{def:LL-graph-matrices}.

    It suffices for us to prove that $LL^\top$ has large diagonal entries, and then complete the proof by upper bounding the spectral norm of the off-diagonal components.
    Note that there are two types of shapes in the off-diagonal: $\square{s}= \square{t}$ and $\square{s}\neq \square{t}$.
    We will analyze them separately.

    \subparagraph{Diagonal entries.}
    For any bucket profile $\gam$, the diagonal entries of $L_{\gam} L_{\gam}^\top$ correspond to the shapes from Definition~\ref{def:LL-graph-matrices} where $U_\al = V_\al$.
    The dominating term is when all edges are phantom edges (double-edges; recall Definition~\ref{def:phantom-edge}), leaving all $2D$ hexagon vertices isolated.
    All such phantom edges are in fact square terms $A_s[i,c]^2$, hence by Gaussian concentration we have
    \[ 
    L_{\gam}L_{\gam}^\top[(s,I), (s,I)] \geq \Omega( n^{2D})
    \]
    for any of its indices $(s,I)$ where $s\in [m]$ and $I\in [\ell]^{\deg(\gamma)}$.

    \subparagraph{Off-diagonal entries, $s\neq t$.}
    Fix bucket profiles $\gamma,\gamma'\in \Gamma_{2D}$, we analyze the graph matrices from Definition~\ref{def:LL-graph-matrices} with $s\neq t$.
    In this case, there is no isolated vertex, and since the hexagon vertices don't collapse, there is no phantom edge even if any circle vertices collapse.
    Recall that for each $i\in[K]$ there are $\gamma_i$, $\gamma_i'$ type-$i$ edges on the left and right side, respectively.
    Furthermore, note that all $\hexagon{c}\in H$ are connected to both $\square{s}$ and $\square{t}$.

    We now analyze the minimum vertex separator $S$.
    First, for $\hexagon{c}\in H$, we define $N_L(\hexagon{c})\subseteq I$ and $N_R(\hexagon{c})\subseteq J$ to be the set of circle vertices incident to $\hexagon{c}$ in $I$ and $J$, respectively (they can be empty).
    Observe that there are paths $\circle{a} \to \hexagon{c} \to \circle{b}$ for all $\circle{a}\in N_L(\hexagon{c})$ and $\circle{b}\in N_R(\hexagon{c})$,
    First, unless $\square{s}, \square{t}\notin S$ (in which case $H$ is forced to be in $S$; we will see this case later), since $\ell^{K-1} \ll n$, it is always lower weight to pick the circle vertices instead of $\hexagon{c}$.  Thus, we will never include hexagon vertices in $S$.
    Moreover, observe that we must either include all of $N_L(\hexagon{c})$ or all of $N_R(\hexagon{c})$ in the minimum vertex separator, as excluding part of $N_L(\hexagon{c})$ will force all of $N_R(\hexagon{c})$ to be in $S$ and vice versa.

    Recall that by removing certain columns of $L$, we can assume that no hexagon vertices collapse.
    Let us first assume that no circle vertices between $I$ and $J$ collapse and there is no surprise collapse.
    We consider the following cases,
    \begin{enumerate}
        \item $\square{s}, \square{t}\notin S$: for every $\hexagon{c}\in H$, there is a path $\square{s} \rightarrow \hexagon{c} \rightarrow \square{t}$.
        Thus, since $\square{s},\square{t}\notin S$, all hexagon vertices in $H$ must be in the vertex separator, and $\{\square{s},\square{t}\} \cup I \cup J$ can be outside of $S$.
        $|I\cup J| \leq \deg(\gamma) + \deg(\gamma') \leq 4(K-1)D$, giving a bound of $\wt{O}(m(\sqrt{\ell})^{4(K-1)D}) = \wt{O}(m\ell^{2(K-1)D})$.

        \item One of $\square{s}$ or $\square{t}$ is in $S$ (\WLOG assume $\square{s}\in S$ and $\square{t}\notin S$): for each $\hexagon{c}\in H$ incident to a non-type-$K$ edge on the left, there is a path $\circle{a}\rightarrow \hexagon{c} \rightarrow \square{t}$ where $\circle{a}\in U_\al$.
        For such a path, we must include $\circle{a}$ in $S$.
        Thus, we have that $I$ must be in $S$, and $\{\square{t}\} \cup J \cup H$ can be outside of $S$, giving a bound of $\wt{O}(\sqrt{m} (\sqrt{\ell})^{\deg(\gamma')} (\sqrt{n})^{2D}) \leq \wt{O}(\sqrt{m}\ell^{(K-1)D} n^D)$ since $\deg(\gamma') \leq 2(K-1)D$.

        \item $\square{s},\square{t}\in S$: 
        in this case, as discussed earlier we will never include any hexagon vertex in $S$.
        For each $\hexagon{c} \in H$, we have $|N_L(\hexagon{c})|$ and $|N_R(\hexagon{c})| \leq K-1$.
        From the earlier discussion, we must include either all of $N_L(\hexagon{c})$ or all of $N_R(\hexagon{c})$ in $S$.
        Thus, each $\hexagon{c}\in H$ may introduce at most $K-1$ circle vertices outside of $S$, in total at most $(K-1)\cdot |H| = 2(K-1)D$.
        This gives a norm bound of $\wt{O}((\sqrt{\ell})^{2(K-1)D} (\sqrt{n})^{2D}) = \wt{O}(\ell^{(K-1)D} n^D)$.
    \end{enumerate}
    When $m\ell^{2(K-1)D} \leq (\frac{n}{\polylog(n)})^{2D}$, the bounds above are all $o(n^{2D})$, negligible compared to the diagonal.

    Now, we consider shapes with collapsed circle vertices between $I$ and $J$.
    When such collapses occur, then each collapsed circle vertex becomes a vertex in $U_\al \cap V_\al$, the mandatory vertex separator.
    Since the hexagon vertices don't collapse, such collapses won't introduce any phantom edges, hence they only decrease the spectral norm.

    \subparagraph{Off-diagonal entries, $s = t$.}
    In this case, we may have phantom edges which result in isolated hexagon vertices in $H$ (recall Definition~\ref{def:phantom-edge}).
    Let $H_{\iso}$ be the set of isolated vertices in $H$.
    Let's assume that there is no surprise collapse.

    Since $\square{s}=\square{t} \in U_\al\cap V_\al$, it must be in the vertex separator $S$, thus our analysis is similar to Case 3 of the previous analysis.
    We define $N_L(\hexagon{c})$ and $N_R(\hexagon{c})$ as before.
    For each $\hexagon{c}\in H_{\iso}$, by definition of phantom edges it must be that $N_L(\hexagon{c}) = N_R(\hexagon{c})$, i.e.\ they collapse and are thus in $U_\al \cap V_\al$, the mandatory separator.
    For each $\hexagon{c} \in H \setminus H_{\iso}$, since $|N_R(\hexagon{c})|$, $|N_R(\hexagon{c})| \leq K-1$, by the same analysis $\hexagon{c}$ can introduce at most $K-1$ circle vertices outside of $S$.
    This gives a norm bound of $\wt{O}(n^{|H_{\iso}|} \cdot (\sqrt{n})^{2D-|H_{\iso}|} \cdot (\sqrt{\ell})^{(K-1)(2D-|H_{\iso}|)} ) = \wt{O}( n^D (n\ell^{-K+1})^{\frac{1}{2}|H_{\iso}|} \ell^{(K-1)D})$.

    Fortunately, $H_{\iso} \subsetneq H$ (otherwise $U_\al = V_\al$ which is the diagonal) hence $|H_{\iso}| \leq 2D-1$.
    Since $\ell^{K-1} \leq o(n)$, the bound is maximized when $|H_{\iso}| = 2D-1$, giving a bound of $\wt{O}(n^{2D-\frac{1}{2}} \ell^{\frac{K-1}{2}})$, which is $o(n^{2D})$ since $\ell^{K-1} \leq o(n)$.
    
    \subparagraph{Handling surprise collapses.}
    We first remind the reader where surprise collapses may arise (i.e., within the left/right shape). This happens when a hexagon vertex (which takes labels in $[n]$) takes a label in $[\ell]$ hence merge with a circle vertex.

    We now give a charging argument for surprise collapsed shape to bound their norm by the shape with $1$ fewer surprise collapse, and then combining with our bounds for uncollapsed shapes completes the proof. Observe that in our case, each collapse may only decrease the norm bound, as it essentially reduces a factor from $n$ to $\ell$ in the norm bound (in some cases it removes the $n$ factor entirely due to the collapsed vertex being in the separator).
    Moreover, since there are always $2D$ distinct vertices in $H$, the collapse does not introduce (new) phantom edges that may produce extra isolated vertices. Hence, each surprise collapse may only decrease the norm. 

    \subparagraph{Completing the proof.}
    We have shown that the diagonal entries of $LL^\top$ is $\Omega(n^{2D})$, while all other shapes in the off-diagonal has spectral norm $o(n^{2D})$ when $m\ell^{2(K-1)D} \leq (\frac{n}{\polylog(n)})^{2D}$.
    Thus, we can write
    \begin{equation*}
        LL^\top = \Omega(n^{2D})\cdot \Id + E
    \end{equation*}
    where $\spec{E} = o(n^{2D})$.
    This completes the proof.
\end{proof}

\subsection{Singular value lower bounds for desymmetrization}
\label{sec:sval-desymm}

\begin{lemma}[Restatement of Lemma~\ref{lem:desymm-singular-value-hd}]
    Let $m,\ell, D \in \N$ such that $m \leq (\frac{\ell}{\polylog(\ell)})^{2D}$.
    Then, with probability $1- \ell^{-\Omega(D)}$,
    \begin{equation*}
        \sigma_{m_3} \Paren{\Sym_{6D} \cdot C^{\otimes 3}_{uniq}} \geq \Omega(\ell^{3D}) \mper
    \end{equation*}
\end{lemma}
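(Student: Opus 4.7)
The matrix $\Sym_{6D} \cdot C^{\otimes 3}_{uniq}$ has dimension $\ell_{6D} \times m_3$, and each of its columns is, up to a positive constant, the coefficient vector of the degree-$6D$ polynomial $B_{t_1}(y)^D B_{t_2}(y)^D B_{t_3}(y)^D$ for a multiset $[t_1,t_2,t_3] \subseteq [m]$ of size $3$. The plan is to lower bound the smallest singular value via a diagonal-dominance argument on the Gram matrix
$G \coloneqq (\Sym_{6D} \cdot C^{\otimes 3}_{uniq})^\top (\Sym_{6D} \cdot C^{\otimes 3}_{uniq}) \in \R^{m_3 \times m_3}$,
which follows the general two-line strategy described in Section~\ref{sec:overview-singular-val}. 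Using the fact that $\Sym_{6D}$ is a projection (up to scaling) onto symmetric tensors, the $([t_1,t_2,t_3], [s_1,s_2,s_3])$ entry of $G$ equals, up to a universal constant, the polynomial coefficient inner product $\Iprod{B_{t_1}^D B_{t_2}^D B_{t_3}^D,\ B_{s_1}^D B_{s_2}^D B_{s_3}^D}$. We aim to show $G \succeq \Omega(\ell^{6D}) \cdot \Id - E$ with $\spec{E} = o(\ell^{6D})$.

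First, I would bound the diagonal entries of $G$ from below by $\Omega(\ell^{6D})$. For any fixed multiset $[t_1,t_2,t_3]$, the diagonal entry is the squared norm of the coefficient vector of $B_{t_1}(y)^D B_{t_2}(y)^D B_{t_3}(y)^D$. An argument essentially identical to the proof of Claim~\ref{claim:norm-of-polynomial-powers} applies: the squared norm is a polynomial of constant degree in independent Gaussian coefficients, its expectation is a straightforward counting computation of order $\Theta(\ell^{6D})$, and the result follows by standard Gaussian polynomial concentration.

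Next, I would bound $\spec{\offdiag(G)}$ via the graph matrix machinery of Section~\ref{sec:graphical-matrices}. Each off-diagonal entry can be expanded into sums of products of Gaussian coefficients of the $B_t$'s, and then partitioned according to the pattern of collisions among the underlying label variables. Each resulting sum is naturally a graph matrix in which square vertices (labels in $[m]$) correspond to the indices $t_i$ and $s_j$, and circle vertices (labels in $[\ell]$) correspond to the indeterminates $y$ shared between the two factors. The block structure is controlled by $r \coloneqq |\{t_1,t_2,t_3\} \cap \{s_1,s_2,s_3\}|$ (as multisets, up to permutation), giving the far-off-diagonal case $r=0$, and the near-diagonal cases $r=1,2$; the case $r=3$ collapses to the diagonal. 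For each case and each graph matrix, a minimum-vertex-separator analysis via Proposition~\ref{prop:graph-matrix-norm} yields bounds that, under the regime $m \leq (\ell/\polylog(\ell))^{2D}$, are all $o(\ell^{6D})$.

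The main obstacle will be the near-diagonal blocks where the two multisets share two elements, since these produce shapes whose naive spectral bound can be as large as $\ell^{6D}$; these are analogous to the troublesome half-gram terms in the proof of Lemma~\ref{lem:VV-NN} (Lemma~\ref{lem:VV-diag-blocks}). The plan for handling them is the same as there: write the offending block as $M_\beta M_\beta^\top$ for a suitable half-gram shape $\beta$ (so it is PSD and can be absorbed into the $\Omega(\ell^{6D})\cdot \Id$ diagonal) plus lower-order intersection terms that pay an extra factor of $1/\ell$ per merged pair, which can be controlled as in Claim~\ref{claim:norm-bound-gram-decompos}. Combining the diagonal lower bound with the PSD absorption and the small-spectrum error completes the desired bound $\sigma_{m_3}(\Sym_{6D}\cdot C^{\otimes 3}_{uniq}) \geq \Omega(\ell^{3D})$.
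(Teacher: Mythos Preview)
Your approach matches the paper's: lower-bound the diagonal of $H^\top H$ (where $H=\Sym_{6D}\cdot C^{\otimes 3}_{uniq}$) by $\Omega(\ell^{6D})$ via Gaussian concentration, then control the off-diagonal via graph-matrix norm bounds, splitting according to how many of the three square indices collapse between $U_\al$ and $V_\al$. One simplification you did not mention: the paper first deletes the rows of $H$ indexed by multisets in $[\ell]^{6D}$ with repeated entries (invoking Fact~\ref{fact:deleting-rows-sval}), which rules out circle-vertex collapses in the middle and keeps the shape enumeration short.

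The one substantive divergence is your claim that the $r=2$ block has naive bound as large as $\ell^{6D}$ and therefore needs a half-gram PSD argument. In fact a direct separator analysis already handles it. With two square-vertex collapses the merged squares $\square{s_1}=\square{t_1}$, $\square{s_2}=\square{t_2}$ lie in $U_\al\cap V_\al$, and in the worst case all $4D$ circle vertices attached to them become isolated via phantom edges; taking the separator $\{\square{s_1},\square{s_2},\square{s_3}\}$ leaves outside only $\square{t_3}$, the $2D$ non-isolated circles, and the $4D$ isolated circles. Proposition~\ref{prop:graph-matrix-norm} then gives $\wt{O}(\sqrt{m}\cdot \ell^{D}\cdot \ell^{4D})=\wt{O}(\sqrt{m}\,\ell^{5D})$, which is already $o(\ell^{6D})$ when $m\le(\ell/\polylog(\ell))^{2D}$. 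So every off-diagonal shape is dominated directly, and the half-gram decomposition, while harmless, is unnecessary here.
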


We first recall from Section~\ref{sec:desymm-hd} that $C$ is the $\ell_{2D} \times m$ matrix whose columns are the coefficient vectors of $B_1^D, \dots, B_m^D$, and $C^{\ot 3}_{uniq}$ is the $(\ell_{2D})^3 \times m_3$ matrix with columns $C_i \ot C_j \ot C_k$ for $i\leq j\leq k\in [m]$, the ``unique'' columns of $C^{\ot 3}$.
Then, $\Sym_{6D} \cdot C^{\ot 3}_{uniq}$ is an $\ell_{6D} \times m_3$ matrix where we symmetrize each column of $C^{\ot 3}_{uniq}$.

We let $H \coloneqq \Sym_{6D} \cdot C^{\ot 3}_{uniq}$ for convenience.
Fix multiset $I \in [\ell]^{6D}$ and let $I_1, I_2, I_3$ be the ordered partition of $I$, each of size $2D$.
Fix $t_1 \leq t_2 \leq t_3 \in [m]$, we have
\begin{equation*}
\begin{aligned}
    H[I, (t_1,t_2,t_3)]
    &= \Theta(1)\sum_{\pi\in S_{6D}} \Sym(B_{t_1}^{\otimes D})[\pi(I_1)] \cdot \Sym(B_{t_2}^{\otimes D})[\pi(I_2)] \cdot \Sym(B_{t_3}^{\otimes D})[\pi(I_3)] \\
    &= \Theta(1)\sum_{\pi\in S_{6D}} (B_{t_1}^{\otimes D})[\pi(I_1)] \cdot (B_{t_2}^{\otimes D})[\pi(I_2)] \cdot (B_{t_3}^{\otimes D})[\pi(I_3)]
    \mper
\end{aligned}
\end{equation*}
We can thus define the graph matrices that arise from $H^\top H$.

\begin{definition}[Graph matrices from $H^\top H$]
\label{def:HH-graph-matrices}
    Fix permutations $\pi, \pi' \in \bbS_{6D}$, the graph matrix can be described as follows,
    \begin{enumerate}
        \item A tripartite graph with the following 6 square vertices (that take labels in $[m]$), $6D$ circle vertices (that take labels in $[\ell]$) and $6D$ edges:
        \begin{itemize}
            \item $U_\al = \{\square{s_1},\square{s_2}, \square{s_3}\}$;

            \item $V_\al = \{\square{t_1},\square{t_2}, \square{t_3}\}$;

            \item $W = (\circle{i_1}, \circle{i_2}, \dots, \circle{i_{6D}})$.
        \end{itemize}

        \item On the left, for each $j\in \{1,2,3\}$, we add $D$ edges: $(\square{s_j}, \circle{i_{\pi(2k-1)}}, \circle{i_{\pi(2k)}})$ for $k = (j-1)D + 1,\dots, jD$.
        Similarly on the right, for each $j\in \{1,2,3\}$, we add $D$ edges: $(\square{t_j}, \circle{i_{\pi'(2k-1)}}, \circle{i_{\pi'(2k)}})$ for $k = (j-1)D + 1,\dots, jD$.
        In total $6D$ edges.
    \end{enumerate}
    All other components of $H^\top H$ can be viewed as collapses of the graph matrices defined above.
\end{definition}

\subparagraph{Diagonal entries of $H^\top H$.}
We show that the diagonal entries are $\Omega(\ell^{6D})$ in the following lemma, whose proof is almost identical to Claim~\ref{claim:norm-of-polynomial-powers} and follows from standard Gaussian concentration results.

\begin{lemma}
\label{lem:desymm-diagonal-entries}
    Fix $D, K\in \N$.
    Let $B_1(y), B_2(y), B_3(y)$ be degree-$K$ homogeneous polynomials in $\ell$ variables such that the coefficients of the $B_i$s are sampled i.i.d.\ from $\calN(0,1)$.
    Then, with probability at least $1- \ell^{-\Omega(DK)}$,
    \begin{equation*}
    \Norm{B_{1}(y)^D B_{2}(y)^D B_{3}(y)^D}^2 = \Theta(\ell^{3DK}) \mper
    \end{equation*}
\end{lemma}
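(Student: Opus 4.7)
The plan is to mimic the proof of Claim~\ref{claim:norm-of-polynomial-powers} essentially verbatim, noting that $\Norm{B_1^D B_2^D B_3^D}^2$ is a non-negative polynomial of total degree $6D$ in the i.i.d.\ standard Gaussian coefficients of $B_1, B_2, B_3$ (in fact, of degree exactly $2D$ in the coefficients of each $B_j$ individually). The argument then splits into two parts: compute the mean up to constants, and conclude via standard Gaussian hypercontractivity for low-degree polynomials of Gaussians.

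For the expectation, let $\widetilde{B}_j \coloneqq \Sym(B_j^{\ot D})$ denote the coefficient tensor of $B_j(y)^D$, which has order $DK$. Up to a multiplicative constant depending only on $D$ and $K$, the coefficient tensor of $B_1^D B_2^D B_3^D$ is $\Sym(\widetilde{B}_1 \ot \widetilde{B}_2 \ot \widetilde{B}_3)$, so
\[
    \Norm{B_1^D B_2^D B_3^D}^2 = \Theta(1) \cdot \Norm{\Sym(\widetilde{B}_1 \ot \widetilde{B}_2 \ot \widetilde{B}_3)}_F^2.
\]
Expanding each coefficient at a multi-index $I\in[\ell]^{3DK}$ as a sum over ordered partitions $I = I_1 \sqcup I_2 \sqcup I_3$ with $|I_j| = DK$, and then squaring, independence of $B_1, B_2, B_3$ forces all cross terms between partitions with distinct $I_j$'s to vanish in expectation. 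What remains is a manifestly non-negative sum, and a direct entry-by-entry calculation (structurally identical to the one in Claim~\ref{claim:norm-of-polynomial-powers}) yields
\[
    \E \Norm{B_1^D B_2^D B_3^D}^2 = \Theta\Paren{\E \Norm{\widetilde{B}_1}_F^2 \cdot \E \Norm{\widetilde{B}_2}_F^2 \cdot \E \Norm{\widetilde{B}_3}_F^2} = \Theta(\ell^{3DK}),
\]
where the last equality applies Claim~\ref{claim:norm-of-polynomial-powers} to each factor.

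Since $\Norm{B_1^D B_2^D B_3^D}^2$ is a polynomial of degree at most $6D$ in i.i.d.\ standard Gaussians, standard Gaussian concentration for low-degree polynomials (as in the proof of Claim~\ref{claim:norm-of-polynomial-powers}, using e.g.\ the hypercontractive tail bound of \cite{SS12}) gives that the random variable lies within a constant factor of its mean with probability at least $1-\ell^{-\Omega(DK)}$, completing the proof. The only mild subtlety worth flagging is to confirm that the symmetrization does not destroy the $\Theta(\ell^{3DK})$ lower bound on the mean; this is handled by restricting attention to multi-indices $I\in[\ell]^{3DK}$ with all distinct entries, where the $(3DK)!/((DK)!)^3$ contributing ordered partitions of $I$ produce, by independence across $B_1, B_2, B_3$, an explicit nonnegative diagonal sum of independent Gaussian-squared terms of constant expectation. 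The count of such $I$ is $\Theta(\ell^{3DK})$, giving the lower bound without any cancellations, while the upper bound is automatic from $\Norm{\Sym(T)}_F \leq \Norm{T}_F$.
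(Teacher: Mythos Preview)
The proposal is correct and follows essentially the same approach as the paper: compute $\E\Norm{B_1^D B_2^D B_3^D}^2 = \Theta(\ell^{3DK})$ by the same counting argument underlying Claim~\ref{claim:norm-of-polynomial-powers}, then invoke Gaussian concentration for low-degree polynomials (\cite{SS12}). Your write-up in fact supplies more detail than the paper's two-line proof, including an explicit lower-bound argument via multi-indices with distinct entries; the only minor imprecision is that the ``cross terms vanish by independence'' claim holds cleanly for such distinct-entry $I$ (where each $\widetilde{B}_j[I_j]$ is multilinear) rather than for arbitrary $I$, but you already isolate that case for the lower bound and the upper bound follows from $\Norm{\Sym(T)}_F \leq \Norm{T}_F$ together with independence.
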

\begin{proof}
    Similar to the proof of Claim~\ref{claim:norm-of-polynomial-powers}, in expectation over the coefficients of $B_1,B_2,B_3$,
    \begin{equation*}
        \E_B \Norm{B_1^D B_2^D B_3^D}^2
        = \Theta(1) \sum_{I\in [\ell]^{3DK}} \E_B \Paren{B_1^D B_2^D B_3^D [I]}^2 = \Theta(\ell^{3DK}) \mper
    \end{equation*}
    The statement of the lemma follows by standard Gaussian concentration on low-degree polynomials of Gaussians~\cite{SS12}.
\end{proof}

\subparagraph{Completing the proof of Lemma~\ref{lem:desymm-singular-value-hd}.}
Lemma~\ref{lem:desymm-diagonal-entries} shows that the diagonal entries of $H^\top H$ are $\Omega(\ell^{6D})$.
We now complete the proof by bounding the norm of the off-diagonal components.

\begin{proof}[Proof of Lemma~\ref{lem:desymm-singular-value-hd}]
    It suffices to prove that $\lambda_{\min}(H^\top H) \geq \Omega(\ell^{6D})$.
    By Fact~\ref{fact:deleting-rows-sval}, it sufﬁces for us to prove a singular value lower bound of $H$ with some rows removed.
    Recall that the rows of $H$ are indexed by multisets $I \in [\ell]^{6D}$.
    We will restrict to the rows of $H$ whose indices have distinct elements.
    This means that the $6D$ circle vertices in the graph matrices from Definition~\ref{def:HH-graph-matrices} do not collapse.

    We split the matrix into diagonal and off-diagonal components.
    For the diagonal entries, by Lemma~\ref{lem:desymm-diagonal-entries}, the diagonal contributes $\Omega(\ell^{6D}) \cdot \Id$.
    For the off-diagonal components, we will bound the spectral norm of the graph matrices from Definition~\ref{def:HH-graph-matrices}.

    Let's start by considering the case $U_\al\cap V_\al = \emptyset$, in which case either $U_\al$ or $V_\al$ must be the minimum vertex separator given $m \ll \ell^{2D}$, thus giving a bound of $\wt{O}(\sqrt{\ell^{6D} m^3})$.

    We now consider the shapes where $U_\al$ and $V_\al$ collapse.
    Since each square vertex is connected to $2D$ circle vertices, merging two square vertices might produce phantom edges, making $2D$ circle verties isolated in the worst case.
    On the other hand, for such a collapse, two square vertices merge into $U_\al \cap V_\al$, the mandatory vertex separator.
    Thus, such a collapse increases the spectral norm by a factor of $\sqrt{\ell^{2D}/m}$.

    There can only be two collapses between $U_\al$ and $V_\al$, otherwise $U_\al = V_\al$ which becomes the diagonal.
    Thus, we can bound the spectral norm by $\wt{O}(\sqrt{\ell^{6D} m^3} \cdot \ell^{2D}/m) = \wt{O}(\ell^{5D} \sqrt{m}) = o(\ell^{6D})$ when $m \leq (\frac{\ell}{\polylog(\ell)})^{2D}$.

    Wrapping up, the above shows that
    \begin{equation*}
        H^\top H = \Omega(\ell^{6D}) \cdot \Id + E
    \end{equation*}
    with $\|E\|_2 \leq o(\ell^{6D})$.
    This completes the proof.
\end{proof}

\subsection{Singular value lower bounds for \texorpdfstring{$V$}{V} of  higher degree polynomials}
\label{sec:sval-analysis-of-V-high-deg}

\begin{lemma}[Restatement of Lemma~\ref{lem:VV-NN-high-deg}]
\label{lem:VV-NN-sec6}
    Fix $K,D\in \N$ and $K \geq 3$.
    Let $m, \ell \in \N$ such that $m \leq (\frac{\ell}{\polylog(\ell)})^{KD/2}$.
    Let $\ol{V}$, $\ol{N}$ be the matrices defined in Definition~\ref{def:V-bar-matrix-high-deg} and \ref{def:N-bar-matrix-high-deg}.
    Then with probability $1-\ell^{-\Omega(D)}$,
    \begin{equation*}
        \lambda_{\min}\Paren{\ol{V}^\top \ol{V} + \ol{N}^\top \ol{N}}
        \geq \Omega(\ell^{KD}) \mper
    \end{equation*}
\end{lemma}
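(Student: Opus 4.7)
The plan is to follow the same blueprint as the proof of Lemma~\ref{lem:VV-NN-generalized}, extended to accommodate the new $V_G$ column block of $\ol V$ and the new $N_G$, $N_B$ blocks of $\ol N$. I will decompose
\[
\ol V^\top \ol V + \ol N^\top \ol N = \begin{bmatrix} V^\top V + N^\top N + N_G^\top N_G & V^\top V_G + N_G^\top N_B \\ V_G^\top V + N_B^\top N_G & V_G^\top V_G + N_B^\top N_B \end{bmatrix}
\]
and show that each diagonal block is $\Omega(\ell^{KD})\cdot \Id$ up to a spectral perturbation of magnitude $o(\ell^{KD})$, while the cross block is $o(\ell^{KD})$ in spectral norm after a key exact cancellation.

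For the top-left diagonal block, Lemma~\ref{lem:VV-NN-generalized} already delivers $V^\top V + N^\top N \succeq \Omega(\ell^{KD})\cdot \Id - E$ with $\spec{E} \leq o(\ell^{KD})$ in the regime $m \leq (\ell/\polylog\ell)^{KD/2}$, and adding the PSD matrix $N_G^\top N_G$ only helps. For the bottom-right diagonal block, the scaling $\ell^{-(K-2)D/2}$ in Definition~\ref{def:V-bar-matrix-high-deg} is chosen precisely so that $V_G^\top V_G[J,J] = \ell^{-(K-2)D}\|G(y)y_J\|^2 = \Omega(\ell^{KD})$ by Claim~\ref{claim:norm-of-polynomial-powers}; the off-diagonal entries of $V_G^\top V_G$ form a constant number of graph matrices that are bounded via Proposition~\ref{prop:graph-matrix-norm} analogously to Lemma~\ref{lem:VV-off-diag-blocks} with $m=1$, and $N_B^\top N_B \succeq 0$.

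The crucial step is the cross block. A direct entrywise computation yields
\[
V^\top V_G[(s,I), J] = \ell^{-(K-2)D/2} \sum_{H \supseteq I\cup J} B_s^D[H-I]\,G[H-J], \qquad N_G^\top N_B[(s,I),J] = -\ell^{-(K-2)D/2}\, G[I]\,B_s^D[J] \mper
\]
Since $|I|+|J| = (3K-2)D = |H|$, the sum over $H$ contains the term $H = I\cup J$ precisely when $I \cap J = \emptyset$ as multisets, and that term equals $\ell^{-(K-2)D/2}\,B_s^D[J]\,G[I]$, cancelling $N_G^\top N_B[(s,I),J]$ exactly. This cancellation is essential because each summand individually has spectral norm $\Theta(\sqrt m\,\ell^{KD})$, which is much larger than $\ell^{KD}$ once $m \gg \polylog(\ell)$. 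After the cancellation, the residual is supported on $I \cap J \neq \emptyset$ and splits into (i) the sum over $H \supsetneq I\cup J$ from $V^\top V_G$, which introduces $|I\cap J|$ extra free circle vertices into the shape, and (ii) the uncancelled $-G[I]B_s^D[J]$ entries on the same support; each of these can be written as a constant number of graph matrices whose minimum vertex separator is forced to include the overlap vertices $I \cap J$, yielding a spectral-norm bound of $o(\ell^{KD})$ via Proposition~\ref{prop:graph-matrix-norm} under $m \leq (\ell/\polylog\ell)^{KD/2}$.

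The main obstacle is the combinatorial bookkeeping for the cross block: identifying that the $H = I\cup J$ summand is the unique ``dangerous'' component of $V^\top V_G$, verifying that its coefficient matches that of $N_G^\top N_B$ exactly under the chosen normalizations (which is the reason the $\ell^{-(K-2)D/2}$ scaling was built into Definitions~\ref{def:V-bar-matrix-high-deg} and \ref{def:N-bar-matrix-high-deg}), and then enumerating the remaining shapes — which mix $D$ degree-$K$ ``$B_s$-edges'' on the left with a single degree-$2(K-1)D$ ``$G$-edge'' on the right — to confirm the vertex-separator bound in every case. This enumeration is a direct but tedious extension of the case analysis in Lemmas~\ref{lem:VV-off-diag-blocks} and \ref{lem:NN}.
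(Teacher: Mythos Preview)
Your proposal follows essentially the same approach as the paper: the same block decomposition, the same reliance on Lemma~\ref{lem:VV-NN-generalized} for the top-left block, the same identification of the dangerous cross term $\ol W[(s,I),J]=\ell^{-(K-2)D/2}G[I]B_s^D[J]$ on $I\cap J=\emptyset$, and the same exact cancellation with $N_G^\top N_B$ (this is precisely the paper's Lemmas~\ref{lem:VG-VG} and \ref{lem:NG-NB}).

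One quantitative slip: the spectral norm of $\ol W$ is $\wt{O}(\ell^{KD})$, not $\Theta(\sqrt m\,\ell^{KD})$; you can see this from $\ol W^\top\ol W=\ell^{-(K-2)D}\|G\|^2\sum_s B_s^D(B_s^D)^\top$, whose top eigenvalue is $\approx\ell^{-(K-2)D}\cdot\ell^{2(K-1)D}\cdot\ell^{KD}=\ell^{2KD}$ (the $\sqrt m\,\ell^{KD}$ you quote is the Frobenius norm). The cancellation is still essential, but for the reason that $\wt{O}(\ell^{KD})$ carries polylog factors that the $\Omega(\ell^{KD})$ diagonal cannot absorb, not because of an extra $\sqrt m$. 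This does not affect the validity of your outline.
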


Recall that $\ol{V}$ and $\ol{N}$ are generalizations of the $V$ and $N$ matrices for the case of cubics of quadratics, with additional matrices padded to $V$ and $N$: $\ol{V} = [V\mid V_G]$ and $\ol{N} = \begin{bmatrix} N & 0 \\ N_G & N_B \end{bmatrix}$.
Thus,
\begin{equation}
\label{eq:VV-NN-bar-structure}
    \ol{V}^\top \ol{V} =
    \begin{bmatrix}
        V^\top V & V^\top V_G \\
        V_G^\top V & V_G^\top V_G
    \end{bmatrix}
    ,\quad
    \ol{N}^\top \ol{N} =
    \begin{bmatrix}
        N^\top N + N_G^\top N_G & N_G^\top N_B \\
        N_B^\top N_G & N_B^\top N_B
    \end{bmatrix}
    \mper
\end{equation}

\paragraph{Proof outline}
We will follow our usual strategy of pulling out the diagonal matrix and bound the norm of the off-diagonal blocks.

In Section~\ref{sec:singular-value-of-VV-NN}, we have already shown in Lemma~\ref{lem:VV-NN-generalized} that $V^\top V + N^\top N \succeq \Omega(\ell^{KD})\cdot \Id$.
Thus, we proceed to analyze the rest of the matrix blocks.

For $V^\top V_G$ and $V_G^\top V_G$, it turns out that all graph matrices have negligible spectral norms compared to the diagonal, except for one specific matrix deﬁned as follows.

\begin{definition}[$\ol{W}$ matrix]
\label{def:W-bar-matrix}
    Let $W$ be the matrix with the same dimension as $V^\top V_G$ such that $W[(s,I), J] = V^\top V_G[(s,I), J] = \ell^{-\frac{(K-2)D}{2}} G[I] B_s^D[J]$ if $I \cap J = \emptyset$, and $0$ otherwise.
\end{definition}

\begin{lemma} \label{lem:VG-VG}
    Let $K,D,m,\ell\in\N$ be the same parameters as Lemma~\ref{lem:VV-NN-sec6}, then
    \begin{equation*}
        V_G^\top V_G = \Omega(\ell^{KD})\cdot \Id + E_1 \mcom \quad
        V^\top V_G = \ol{W} + E_2
    \end{equation*}
    where $\spec{E_1}, \spec{E_2}\leq o(\ell^{KD})$.
\end{lemma}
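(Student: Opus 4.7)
\textbf{Proof proposal for Lemma~\ref{lem:VG-VG}.}
The plan is to mirror the graph-matrix decomposition strategy from Lemmas~\ref{lem:VV-diag-blocks} and \ref{lem:VV-off-diag-blocks} in Section~\ref{sec:singular-value-of-VV-NN}, specialized to matrices built from the single random tensor $G$ in place of the family $\{B_t\}_{t\in[m]}$.

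For $V_G^\top V_G$, the diagonal entry at $J\in[\ell]^{KD}$ equals $\ell^{-(K-2)D}\|G(y)y_J\|^2$, which concentrates around $\Theta(\ell^{KD})$ with probability $1-\ell^{-\Omega(D)}$ by the same Gaussian concentration argument as Claim~\ref{claim:norm-of-polynomial-powers}; this gives the $\Omega(\ell^{KD})\cdot\mathrm{Id}$ contribution. For $J\neq J'$,
\begin{equation*}
V_G^\top V_G[J,J'] \;=\; \ell^{-(K-2)D}\sum_{H\supseteq J\cup J'}\Theta(1)\cdot G[H\setminus J]\,G[H\setminus J'],
\end{equation*}
which I would decompose into graph matrices indexed by $r:=|J\cap J'|$: each such shape has two hyperedges of arity $2(K-1)D$ representing the two copies of $G$, circle boundaries $J$ and $J'$, and $|H\setminus (J\cup J')|=(K-2)D+r$ free internal circle vertices. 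A case analysis on the minimum vertex separator—simpler than Lemma~\ref{lem:VV-off-diag-blocks} because there are no square vertices to case on, but governed by the same phantom-edge bookkeeping on the two $G$ hyperedges—shows every such shape has spectral norm $o(\ell^{KD})$ under our parameter regime.

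For $V^\top V_G$, expanding the inner product gives
\begin{equation*}
V^\top V_G[(s,I),J] \;=\; \ell^{-(K-2)D/2}\sum_{H\supseteq I\cup J}\Theta(1)\,(B_s^D y_I)[H]\,(G y_J)[H],
\end{equation*}
where $|H|-|I\cup J|=|I\cap J|=:r$ counts the free circle vertices in $H$. When $r=0$, $H$ is forced to equal $I\sqcup J$, the sum collapses to a single term, and the entry matches $\ell^{-(K-2)D/2}B_s^D[J]\cdot G[I]$, which is exactly the $\ol W$ matrix of Definition~\ref{def:W-bar-matrix} (absorbing the universal constant). For $r\geq 1$, I would bound each resulting graph matrix via Proposition~\ref{prop:graph-matrix-norm}: each shape has square vertex $\square s$, circle vertices for $I$, $J$ and the $r$ free indeterminates in $H$, $D$ hyperedges of arity $K$ incident to $\square s$ (from $B_s^D$), and one hyperedge of arity $2(K-1)D$ from $G$. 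The vertex-separator analysis parallels the proof of Lemma~\ref{lem:VV-off-diag-blocks}, casing on whether $\square s$ lies in the separator and on how many pairs of the $D$ left-side $B_s^D$ hyperedges have collapsed into phantom edges; for our regime $m\leq(\ell/\polylog\ell)^{KD/2}$ and $K\geq 3$ this yields $o(\ell^{KD})$ for every $r\geq 1$ shape, summing to the bound on $E_2$.

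The main obstacle I anticipate is the bookkeeping of phantom edges for the $V^\top V_G$ shapes with $r\geq 1$: a key structural distinction from Section~\ref{sec:singular-value-of-VV-NN} is that the lone $G$ hyperedge has no partner on its own ``side'' and thus cannot by itself become phantom, so it never yields isolated vertices and must have its circle endpoints pinned in the separator (or be separated by $\square s$). Ensuring that the gain of $\sqrt{\ell}$ per isolated circle from collapses among the $B_s^D$ hyperedges is always dominated by the mandatory separator cost—across all admissible surprise collapses and all $r\in\{1,\dots,KD\}$—is where the careful case analysis concentrates; once complete, the bound follows mechanically from Proposition~\ref{prop:graph-matrix-norm}, and the $V_G^\top V_G$ piece then falls out as a simpler specialization.
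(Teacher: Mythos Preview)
Your proposal is correct and follows essentially the same route as the paper: the $r=0$ slice of $V^\top V_G$ is identified with $\ol{W}$, and the $r\geq 1$ shapes are bounded by casing on whether $\square{s}$ lies in the minimum vertex separator (with phantom-edge accounting among the $B_s^D$ hyperedges when $\square{s}\notin S$), exactly as in the paper. One simplification you can take from the paper: for the off-diagonal of $V_G^\top V_G$ there are in fact \emph{no} phantom edges, since the two $G$ hyperedges touch distinct circle-vertex sets whenever $J\neq J'$; hence the minimum separator is just $U_\alpha\cap V_\alpha=L$, yielding $\wt{O}(\ell^{KD/2+D})=o(\ell^{KD})$ for $K\geq 3$ without any phantom-edge bookkeeping.
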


Recall that in the analysis of $V^\top V + N^\top N$ in Section~\ref{sec:singular-value-of-VV-NN}, we saw that there is a large norm matrix $W$ in $V^\top V$ that is canceled by $N^\top N$.
Here, the same phenomenon occurs: the block $N_G^\top N_B$ has a $-\ol{W}$ component that cancels out the $\ol{W}$ in $V^\top V_G$.

\begin{lemma} \label{lem:NG-NB}
    Let $K,D,m,\ell\in\N$ be the same parameters as Lemma~\ref{lem:VV-NN-sec6}, then
    \begin{equation*}
        N_G^\top N_B = -\ol{W} + E_3
    \end{equation*}
    where $\spec{E_3} \leq o(\ell^{KD})$.
\end{lemma}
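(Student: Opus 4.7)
The plan is to compute $N_G^\top N_B$ entrywise, read off the $-\ol{W}$ component algebraically, and then control the residual $E_3$ via the graph matrix norm bounds from Section~\ref{sec:graphical-matrices}.

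First I would observe that by Definition~\ref{def:N-bar-matrix-high-deg}, $N_G[u,(s,I)]$ vanishes unless $u = s$, in which case it equals $\ell^{-(K-2)D/2} G[I]$. Consequently
\[ N_G^\top N_B[(s,I), J] = -\ell^{-(K-2)D/2}\, G[I]\, B_s^D[J] \]
for every triple $(s, I, J)$, regardless of whether $I$ and $J$ overlap. Since $\ol{W}$ (Definition~\ref{def:W-bar-matrix}) is precisely $+\ell^{-(K-2)D/2} G[I] B_s^D[J]$ restricted to the disjoint case $I \cap J = \emptyset$ and zero otherwise, the residual $E_3 := N_G^\top N_B + \ol{W}$ is supported exactly on intersecting pairs and equals $-\ell^{-(K-2)D/2} G[I] B_s^D[J]$ there. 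This exact cancellation on the disjoint block, forced by the minus sign baked into the definition of $N_B$, mirrors the $-W$ cancellation seen in Lemma~\ref{lem:NN} and is the structural heart of the argument.

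Next I would decompose $E_3 = \sum_{r=1}^{KD} E_{3,r}$ according to $r = |I \cap J|$ and bound each $E_{3,r}$ using the graph matrix machinery. Expanding $B_s^D[J]$ into a sum over permutations of products of $D$ entries of $B_s$ yields an $O_{D,K}(1)$-sized family of shapes, each with $U_\al = \{\square{s}\} \cup I$, $V_\al = J$, mandatory boundary $U_\al \cap V_\al = I \cap J$ of size $r$, a single $G$-hyperedge covering all of $I$, and $D$ $B_s$-hyperedges joining $\square{s}$ to $K$-subsets of $J$. The candidate minimum vertex separators are either (a) $\{\square{s}\} \cup (I \cap J)$, leaving $I \setminus J$ and $J \setminus I$ outside with no isolated vertex, or (b) $J$, leaving $\square{s}$ and $I \setminus J$ outside. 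After applying Proposition~\ref{prop:graph-matrix-norm} and folding in the $\ell^{-(K-2)D/2}$ scaling, case (a) contributes $\wt{O}(\ell^{KD - r})$ and case (b) contributes $\wt{O}(\sqrt{m\, \ell^{KD - r}})$, both $o(\ell^{KD})$ whenever $r \geq 1$ and $m \leq \wt{O}(\ell^{KD/2})$. Summing over $r = 1, \dots, KD$ gives $\spec{E_3} \leq \wt{O}(\ell^{KD - 1}) = o(\ell^{KD})$.

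The only remaining subtlety is handling collapsed shapes, namely additional coincidences among the circle vertices in $I$ or $J$ and phantom $B_s$-hyperedges produced by collisions among the $D$ permutation blocks. This follows the same template as the off-diagonal analyses in Sections~\ref{sec:singular-value-of-VV-NN} and \ref{sec:sval-analysis-of-V-high-deg}: each extra collapse either moves a vertex into the mandatory $U_\al \cap V_\al$ separator or makes an incident edge phantom, and in either case the above separator bounds only improve. I expect this bookkeeping, rather than any new idea, to be the main technical burden, and it can be transcribed from the earlier sections with only cosmetic changes.
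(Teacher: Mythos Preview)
Your proposal is correct and follows essentially the same approach as the paper: compute $N_G^\top N_B[(s,I),J] = -\ell^{-(K-2)D/2} G[I] B_s^D[J]$, read off $-\ol{W}$ on the disjoint block $I\cap J=\emptyset$, and bound the remaining $r\geq 1$ slices via graph matrix norm bounds. The paper is slightly more economical: rather than redoing the separator analysis, it notes that the shapes arising here are exactly those of Definition~\ref{def:VVG-graph-matrices} for $V^\top V_G$ except that the free $H$ vertices are moved into $U_\al\cap V_\al$ (and the $L$ vertices vanish), which can only shrink the set of vertices outside any separator; the $o(\ell^{KD})$ bound from the $V^\top V_G$ analysis in Lemma~\ref{lem:VG-VG} therefore transfers directly. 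One small caution: your claim that collapses and phantom edges ``only improve'' the separator bounds is not literally true in case~(b) (phantom $B_s$-edges can enlarge the set outside the separator, as in the $J'_{\ph}$ analysis of Lemma~\ref{lem:VG-VG}), but the resulting bound is still $o(\ell^{KD})$ under $m\leq(\ell/\polylog\ell)^{KD/2}$, and your deferral to the earlier template is valid.
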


We can now complete the proof.
\begin{proof}[Proof of Lemma~\ref{lem:VV-NN-sec6}]
    By Lemma~\ref{lem:VV-NN-generalized}, we have $V^\top V + N^\top N \succeq \Omega(\ell^{KD})\cdot \Id$, and by Lemma~\ref{lem:VG-VG}, we have $V_G^\top V_G \succeq \Omega(\ell^{KD})\cdot \Id$.
    Moreover, combining Lemmas~\ref{lem:VG-VG}, \ref{lem:NG-NB} and \pref{eq:VV-NN-bar-structure}, we have
    \begin{equation*}
        \ol{V}^\top \ol{V} + \ol{N}^\top \ol{N} \succeq
        \begin{bmatrix}
            \Omega(\ell^{KD}) \cdot \Id & E_2 + E_3 \\
            E_2^\top + E_3^\top & \Omega(\ell^{KD}) \cdot \Id \\
        \end{bmatrix}
    \end{equation*}
    where $\spec{E_2},\spec{E_3} = o(\ell^{KD})$.
    This completes the proof.
\end{proof}

\paragraph{Graph matrices that arise from $V^\top V_G$}
The following definition describes the graph matrices from $V^\top V_G$.
We will see later that the graph matrices arising from $V_G^\top V_G$ and $N_G^\top N_B$ are all slight modifications of these shapes.

\begin{definition}[Graph matrices from $V^\top V_G$]
\label{def:VVG-graph-matrices}
    Fix $r = |I\cap J| \in \{0,1,\dots, KD\}$ and a permutation $\pi\in \bbS_{KD}$, the graph matrices are described as follows,
    \begin{enumerate}
        \item A graph with a square vertex $\square{s}$ (that take labels in $[m]$) and $(3K-2)D$ circle vertices (that take labels in $[\ell]$):
        \begin{itemize}
            \item $I' = (\circle{a_1},\dots, \circle{a_{|I'|}})$ where $|I'| = 2(K-1)D-r$;
            \item $J' = (\circle{b_1},\dots, \circle{b_{|J'|}})$ where $|J'| = KD-r$;
            \item $H = (\circle{b_{|J'|+1}}, \dots, \circle{b_{KD}})$, and $|H| = r$;
            \item $L = (\circle{c_{1}}, \dots, \circle{c_{r}})$.
        \end{itemize}

        \item $U_\al = \{\square{s}\} \cup I'\cup L$ and $V_\al = J' \cup L$.

        \item There are $D$ hyperedges:
        $(\square{s}, \circle{b_{\pi((i-1)K+1)}}, \dots, \circle{b_{\pi(iK)}})$ for $i=1,\dots, D$, each touching $\square{s}$ and $K$ circle vertices;
        we add another hyperedge $I'\cup H$, touching $2(K-1)D$ circle vertices.
    \end{enumerate}
\end{definition}

\paragraph{Analysis of $V_G^\top V_G$ and $V^\top V_G$: Proof of Lemma~\ref{lem:VG-VG}}

\begin{proof}[Proof of Lemma~\ref{lem:VG-VG}]
    We first start with $V^\top V_G$, the upper right submatrix of $\ol{V}^\top \ol{V}$.
    Recall from Definition~\ref{def:V-bar-matrix-high-deg} that
    the columns of $V$, indexed by $s\in[m]$ and multisets $I\in [\ell]^{2(K-1)D}$, are coefficient vectors of the degree-$(3K-2)D$ polynomials $B_s(y)^D y_I$.
    Moreover, the columns of $V_G$, indexed by $J\in [\ell]^{KD}$, are coefficient vectors of $\ell^{-\frac{(K-2)D}{2}} G(y) y_J$, where $G(y)$ is a degree-$2(K-1)D$ polynomial with i.i.d.\ Gaussian coefficients (note the scaling factor here).
    The entry of $V^\top V_G$ is
    \begin{equation*}
        V^\top V_G[(s,I),J] = \ell^{-\frac{(K-2)D}{2}} \Iprod{B_s(y)^D y_I, G(y) y_J} \mper
    \end{equation*}
    When $I\cap J = \emptyset$, the above is simply $V^\top V_G[(s,I), J] = \ell^{-\frac{(K-2)D}{2}} \cdot G[I] B_s^D[J]$, which is exactly the $\ol{W}$ matrix from Definition~\ref{def:W-bar-matrix}.

    We now analyze the minimum vertex separator $S$.
    If $\square{s}\in S$, then $I'\cup J' \cup H$ can be outside $S$, giving a bound of $\wt{O}((\sqrt{\ell})^{(3K-2)D-r})$ (here possible phantom edges from collapses within $J'$ don't matter as they contribute the same factors).
    Multiplied by the $\ell^{-\frac{(K-2)D}{2}}$ factor, we have a bound of $\wt{O}(\ell^{KD - r/2}) = o(\ell^{KD})$ since $r>0$.

    If $\square{s}\notin S$, same as the proof of Lemma~\ref{lem:VV-off-diag-blocks}, we define $J_{\ph}'$ to be the vertices in $J'$ that are disconnected from $\square{s}$ due to phantom edges.
    We know that $|J'_{\ph}| \leq \Floor{\frac{1}{2}(KD-r)} < KD/2$.
    Then, $J\setminus J'_{\ph}$ must be in $S$, and $\{\square{s}\} \cup I' \cup J'_{\ph} \cup H$ can be outside of $S$, giving a bound of $\wt{O}(\sqrt{m}(\sqrt{\ell})^{(\frac{5}{2}K-2)D})$.
    Multiplied by the $\ell^{-\frac{(K-2)D}{2}}$ factor, we have a bound of $\wt{O}(\sqrt{m}\ell^{3KD/4})$, which is $o(\ell^{KD})$ when $m \leq (\frac{\ell}{\polylog(\ell)})^{KD/2}$.
    This proves that $V^\top V_G = \ol{W} + E$ where $\spec{E} \leq o(\ell^{KD})$.

    \subparagraph{Analysis of $V_G^\top V_G$.}
    We write out the entries of $V_G^\top V_G$:
    for multisets $I,J\in [\ell]^{KD}$,
    \begin{equation*}
        V_G^\top V_G[I,J] = \ell^{-(K-2)D} \Iprod{G(y)y_I, G(y) y_J} \mper
    \end{equation*}

    The diagonal entries $V_G^\top V_G[I,I] = \ell^{-(K-2)D}\Norm{G}^2 = \Omega(\ell^{KD})$ by Claim~\ref{claim:norm-of-polynomial-powers}.
    For the off-diagonal entries of $V_G^\top V_G$, fix $r = |I\cap J|\in \{0,1,\dots, KD-1\}$, the graph matrices that arise are very similar to the ones from $V^\top V_G$ in Definition~\ref{def:VVG-graph-matrices}; we note the small differences below:
    \begin{enumerate}
        \item There is no square vertex, and there are also $(3K-2)D$ circle vertices but with $|I'| = |J'| = KD-r$ and $|H| = (K-2)D-r$.
        \item $U_\al = I'\cup L$ and $V_\al = J' \cup L$.
        \item There are only 2 hyperedges: $I' \cup H$ and $J' \cup H$, each touching $2(K-1)D$ circle vertices.
    \end{enumerate}
    It is clear that when $r < KD$, the minimum vertex separator is just $U_\al \cap V_\al = L$ and there is no phantom edge,
    giving a bound of $\wt{O}((\sqrt{\ell})^{(3K-2)D-r})$.
    Multiplied by the factor $\ell^{-(K-2)D}$, we have a bound of $\wt{O}(\ell^{KD/2+D}) = o(\ell^{KD})$ when $K \geq 3$.
    This completes the proof.
\end{proof}

\paragraph{Analysis of $N_G^\top N_B$: Proof of Lemma~\ref{lem:NG-NB}}

\begin{proof}[Proof of Lemma~\ref{lem:NG-NB}]
    Recall from Definition~\ref{def:N-bar-matrix-high-deg} that
    for $s,t\in[m]$ and $I\in [\ell]^{2(K-1)D}$,
    $N_G[s,(t,I)] = \ell^{-\frac{(K-2)D}{2}} G[I]$ if $s=t$ and 0 otherwise.
    Moreover, for $s\in[m]$ and $J\in [\ell]^{KD}$, $N_B[s,J] = -(B_s(y)^D)[J]$.
    Note the crucial negative sign here.
    Thus,
    \begin{equation*}
        N_G^\top N_B[(t,I), J]
        = \sum_{s\in[m]} N_G[s,(t,I)] \cdot N_B[s,J]
        = - \ell^{-\frac{(K-2)D}{2}} G[I] \cdot B_t^D[J]\mper
    \end{equation*}

    When $I\cap J = \emptyset$, this is exactly $-\ol{W}$ from Definition~\ref{def:W-bar-matrix}.
    When $r = |I\cap J| > 0$, the graph matrices that arise are almost identical to those from $V^\top V_G$ in Definition~\ref{def:VVG-graph-matrices}; we note the differences below:
    \begin{enumerate}
        \item There are only $(3K-2)D - r$ circle vertices: $I', J', H$ are the same, but $L = \emptyset$.
        \item $U_\al = \{\square{s}\} \cup I' \cup H$ and $V_\al = J' \cup H$.
        \item The hyperedges are exactly the same.
    \end{enumerate}
    The analysis of the minimum vertex separator is almost identical to $V^\top V_G$ as well.
    Observe that the graph matrices here are the same as in Definition~\ref{def:VVG-graph-matrices} except that there are no circle vertices outside of $U_\al \cup V_\al$ (here $H \subseteq U_\al \cap V_\al$ and $L=\emptyset$).
    This only decreases the spectral norm as there are strictly fewer vertices outside of the vertex separator when $r > 0$.
    Thus, all such shapes have norm $o(\ell^{KD})$, and this completes the proof.
\end{proof}

\section*{Acknowledgements} P.K.\ thanks Ankit Garg for preliminary discussions on power-sum decomposition. M.B.\ thanks Akash Sengupta for several helpful discussions and for suggesting that a version of Lemma~\ref{lem:V-singular-value-hd} should be true.

\newpage
\bibliographystyle{alpha}
\bibliography{main}

\appendix
\section{Non-Identifiability for Power-Sum Decomposition}

\subsection{Non-identifiability of sum of cubics of linearly independent quadratics}
\label{sec:identifiability-cubics-of-quadratics}
\begin{lemma}
Let $a = \sqrt{6}$ and consider the following distinct sets of bivariate quadratic polynomials in variables $x,y$:
\begin{equation*}
\begin{aligned}
    \calS_1 &= \{x^2 + a xy,\ x^2 + y^2,\ y^2 + axy\} \mcom \\
    \calS_2 &= \{x^2,\ x^2 + axy + y^2,\ y^2\} \mper
\end{aligned}
\end{equation*}
Then, the polynomials in each set have linearly independent coefficient matrices but the sum of cubics of polynomials in either sets is equal.
\end{lemma}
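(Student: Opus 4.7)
The plan is to verify the two claims separately: linear independence of each set by a direct determinant computation, and the identity of sum of cubics by exploiting the $x \leftrightarrow y$ symmetry shared by both sets so that only a handful of coefficient checks are required.

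For linear independence, I would write each quadratic as a coefficient vector in the basis $(x^2, xy, y^2)$. For $\calS_1$ the three vectors are $(1,a,0)$, $(1,0,1)$, $(0,a,1)$, whose $3\times 3$ determinant expands to $-2a$, nonzero since $a=\sqrt{6}\ne 0$. For $\calS_2$ the vectors $(1,0,0)$, $(1,a,1)$, $(0,0,1)$ are plainly linearly independent (the determinant is $a$). This disposes of the first assertion in a few lines.

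For the polynomial identity, I would set $D(x,y) = \sum_{p\in\calS_1} p(x,y)^3 - \sum_{q\in\calS_2} q(x,y)^3$ and observe that the involution $x \leftrightarrow y$ preserves each of $\calS_1$ and $\calS_2$ as a set (in $\calS_1$ it swaps $x^2+axy$ with $y^2+axy$ and fixes $x^2+y^2$; in $\calS_2$ it swaps $x^2$ with $y^2$ and fixes $x^2+axy+y^2$). Consequently $D(x,y)=D(y,x)$, so among the seven monomials of degree $6$ in two variables only four coefficients are independent, namely those of $x^6$, $x^5y$, $x^4y^2$, and $x^3y^3$; the remaining three are determined by symmetry.

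The final step is a direct multinomial expansion of these four coefficients. I would verify that the $x^6$ coefficient is $2$ on both sides, the $x^5y$ coefficient is $3a$ on both sides (using that $(x^2+axy+y^2)^3$ contributes only the $(i,j,k)=(2,1,0)$ multinomial term to $x^5y$), and the $x^4y^2$ coefficient is $3+3a^2$ on both sides. The only nontrivial constraint arises from the $x^3y^3$ coefficient: on the $\calS_1$ side one obtains $2a^3$ (from the purely off-diagonal $(axy)^3$ contributions of $p_1$ and $p_3$), while on the $\calS_2$ side the multinomial expansion of $(x^2+axy+y^2)^3$ gives $a^3+6a$. Equating $2a^3=a^3+6a$ yields $a^2=6$, which is exactly the choice $a=\sqrt{6}$. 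I do not anticipate any genuine obstacle: the proof is entirely mechanical once the symmetry reduction is made, and the only ``hard'' part was finding the example in the first place, which is already given to us.
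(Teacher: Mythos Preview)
Your proposal is correct and follows essentially the same approach as the paper: the paper also handles linear independence by inspection and then expands both sums of cubics coefficient by coefficient, finding that the only discrepancy is in the $x^3y^3$ coefficient ($2a^3$ versus $a^3+6a$), which vanishes precisely when $a=\sqrt{6}$. Your use of the $x\leftrightarrow y$ symmetry to cut the number of coefficient checks from seven to four is a tidy economization that the paper does not make explicit, but the underlying verification is the same direct computation.
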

\begin{proof} 
It is easy to verify that in both sets, the coefficient matrices of the polynomials are linearly independent.
The sum of cubics of $\calS_1$ is
\begin{equation*}
    \sum_{p\in \calS_1} p(x,y)^3 = 2x^6 + 3ax^5 y + 3(a^2+1) x^4 y^2 + 2a^3 x^3 y^3 + 3(a^2+1) x^2 y^4 + 3a x y^5+ 2y^6
\end{equation*}
whereas
\begin{equation*}
    \sum_{p\in \calS_2} p(x,y)^3 = 2x^6 + 3ax^5 y + 3(a^2+1) x^4 y^2 + (6a+a^3) x^3 y^3 + 3(a^2+1) x^2 y^4 + 3a x y^5 + 2y^6
\end{equation*}
Thus by setting $a = \sqrt{6}$, we have $2a^3 = 6a +a^3$, meaning $\sum_{p\in \calS_1} p(x,y)^3 = \sum_{p\in \calS_2} p(x,y)^3$.
\end{proof}

\subsection{Non-identifiability of sum of squares of quadratics}
\label{app:quadratic-failure}
We observe that sum-of-squares of even two random homogeneous quadratics cannot be uniquely decomposed. 

\begin{lemma}[Non-Indentifiability of Generic Sum of Quadratics of Quadratics]
Let $A_1, A_2$ be $n \times n$ matrices of independent $\cN(0,1)$ entries up to symmetry. Then, with probability $1$ over the draw of $A_1, A_2$, there exist symmetric $A_1',A_2'$ such that $\Norm{A_i'-A_j}_2\geq 1/n^{O(1)}$ for every $i,j$ such that $(x^{\top} A_1 x)^2 + (x^{\top} A_2 x)^2 = (x^{\top} A_1' x)^2 + (x^{\top} A_2' x)^2$ for every $x$.
\end{lemma}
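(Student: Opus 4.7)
The plan is to exploit the fact that for any two polynomials $q_1, q_2$, the quantity $q_1^2 + q_2^2$ is invariant under orthogonal change of basis on the pair $(q_1, q_2)$. Concretely, I would fix any angle $\theta \in (0, \pi/2)$ (e.g.\ $\theta = \pi/4$) and define
\[
A_1' = \cos\theta \cdot A_1 + \sin\theta \cdot A_2, \qquad A_2' = -\sin\theta \cdot A_1 + \cos\theta \cdot A_2,
\]
which are symmetric since $A_1, A_2$ are. Setting $q_i(x) = x^\top A_i x$ and $q_i'(x) = x^\top A_i' x$, a direct expansion gives
\[
(q_1'(x))^2 + (q_2'(x))^2 = (\cos^2\theta + \sin^2\theta)(q_1(x)^2 + q_2(x)^2) = q_1(x)^2 + q_2(x)^2,
\]
so the sum of squares is preserved identically in $x$.

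Next, I would argue that the alternate decomposition $\{A_1', A_2'\}$ is genuinely different from $\{A_1, A_2\}$ (even allowing signs and permutations). For each choice of permutation $\pi \in \bbS_2$ and signs $\sigma_1, \sigma_2 \in \{\pm 1\}$, the difference $A_{\pi(i)}' - \sigma_i A_i$ is a fixed nonzero linear combination of $A_1$ and $A_2$ (the coefficients depend on $\theta, \pi, \sigma$ and are not simultaneously zero as long as $\theta \in (0, \pi/2)$). Viewing the entries of $A_1, A_2$ as independent Gaussians, such a nonzero linear combination is itself a nonzero Gaussian in each coordinate; by a union bound over the finitely many choices of $(\pi, \sigma)$ together with standard Gaussian anti-concentration and concentration, every such linear combination has Frobenius norm $\Omega(n)$ (in particular $\geq 1/n^{O(1)}$) with probability $1$.

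\paragraph{Main obstacle.}
There is essentially no obstacle beyond bookkeeping: the algebraic identity is a one-line rotation invariance, and the genericity step only requires checking that the finitely many ``collision'' events $A_{\pi(i)}' = \sigma_i A_i$ each impose a nontrivial linear constraint on $(A_1, A_2)$, which happens with probability zero. If one wanted a robust inverse-polynomial lower bound rather than the qualitative ``nonzero'' statement, the slightly more careful step is to invoke anti-concentration of each entry of $A_{\pi(i)}' - \sigma_i A_i$ (each a nonzero Gaussian), union-bounding over the eight relevant $(\pi,\sigma)$ combinations.
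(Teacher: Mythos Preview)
Your proposal is correct and takes essentially the same approach as the paper: both exploit the orthogonal invariance of $q_1^2+q_2^2$ under rotations of the pair $(q_1,q_2)$. The paper phrases this via the vectorizations $V_1,V_2$ and picks the eigendecomposition of $V_1V_1^\top+V_2V_2^\top$ as the alternate pair, whereas you pick an explicit rotation by a fixed angle $\theta$; these are two descriptions of the same freedom, and your version is arguably more direct.
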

\begin{proof}
Let $V_1, V_2$ be the vectorization of upper-triangular entries of $A_1, A_2$ respectively. Since the coefficient tensor of $(x^{\top} A_1 x)^2 + (x^{\top} A_2 x)^2$ is a linear transformation (scaling of $\Sym$ operation) applied to $A_1^{\otimes 2} + A_2^{\otimes 2}$, it is enough to find $V_1', V_2'$ distinct form $V_1, V_2$ such that $V_1^{\otimes 2} + V_2^{\otimes 2} = {V_1'}^{\otimes 2} + {V_2'}^{\otimes 2}$. The (since $V_1, V_2$ are random Gaussian, the rank decomposition is unique w.p. $1$) orthogonal decomposition of the matrix $V_1 V_1^{\top} + V_2V_2^{\top}$ uses orthogonal vectors $V_1',V_2'$ that are different from $V_1, V_2$ (in fact must have a distance of at least $1/n \Norm{V_1}_2$). Taking $A_i'$ to be the matrix whose upper triangular entries are given by $V_i'$ for $i=1,2$ completes the proof.
\end{proof}


\section{Analysis of our Algorithm for Generic \texorpdfstring{$A_t$}{At}s}
\label{sec:generic-A-appendix}
In this section, we derive Theorem~\ref{thm:main-theorem-hd-intro-smoothed} as a corollary of our proof of Theorem~\ref{thm:main-theorem-hd-intro} combined with some elementary algebraic considerations.

We will rely on the following lemma that shows that whenever a matrix with low-degree polynomial entries in some variable $A$ has full column rank for some real assignment to variables $A$, it must in fact have non-trivially lower bounded singular value for any $A'$ after a small random perturbation in each entry. Specifically, we prove:

\begin{lemma} \label{lem:sing-value-generic}
Let $\mathcal{G}$ be a product distribution on $N$ dimensional vectors such that the marginal of any coordinate of $G$ is distributed so that no single point has probability $\geq 2^{-N^{O(1)}}$ (for e.g., uniform distribution on $N^{O(1)}$ bit rational numbers in any constant length interval suffices). Let $M(A)$ be a $R \times S$ matrix such that each entry of $M(A)$ is a degree-$d$ polynomial in the $N$-dimensional vector $A$ with each entry upper bounded by $2^{N^{O(1)}}$. Suppose there is a point $A' \in \bbR^N$ such that $M(A')$ has full rank $R$. Then, for any vector $B\in \R^N$ with rational entries of bit complexity at most $N^{O(1)}$, 
\[
\Pr_{G\sim \calG}[M(B+G) \text{ has $R $-th singular value } \geq 2^{-(SN)^{O(1)}}] \geq 1- 2^{-N^{O(1)}} \mper
 \]

\end{lemma}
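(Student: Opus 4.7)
The plan is to reduce the singular value bound to the nonvanishing of a single determinant polynomial in the entries of $A$, and then combine a Schwartz--Zippel-style argument (which uses the no-point-mass hypothesis on $\calG$) with a bit-complexity lower bound on nonzero rationals to turn ``nonzero with high probability'' into a quantitative lower bound.

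First, I would define $\phi(A) \coloneqq \det(M(A) M(A)^\top)$, a polynomial in $A_1,\ldots,A_N$ of total degree at most $2dR$ whose coefficients can be written down from those of $M$ and hence have bit complexity at most $(SN)^{O(1)}$. By hypothesis $M(A')$ has full row rank $R$, so $\phi(A') > 0$ and $\phi$ is not the zero polynomial; since $A \mapsto B+A$ is an invertible change of variables, the shifted polynomial $\phi(B+G)$, regarded in the variables $G_1,\ldots,G_N$, is also a nonzero polynomial of total degree at most $2dR$.

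Next, I would prove by induction on $N$ the following standard generalization of Schwartz--Zippel: if $p$ is a nonzero polynomial of total degree $D$ in $N$ variables and each coordinate is drawn independently from a distribution whose maximum atom mass is at most $\mu$, then $\Pr[p=0] \leq D\mu$. The induction uses the decomposition $p = \sum_k p_k(G_2,\ldots,G_N)\,G_1^k$ and splits on whether the leading coefficient $p_{k^*}$ vanishes at $(G_2,\ldots,G_N)$ (handled by induction, degree $D-k^*$) or not (then $p$ is a univariate polynomial of degree $k^*$ in $G_1$ with at most $k^*$ roots), and the two contributions sum to $D\mu$. Applying this to $\phi(B+G)$ with $D \leq 2dR$ and $\mu \leq 2^{-N^{O(1)}}$ yields $\Pr_{G}[\phi(B+G) = 0] \leq 2dR \cdot 2^{-N^{O(1)}} \leq 2^{-N^{O(1)}}$.

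Finally, conditioned on the event $\phi(B+G) \neq 0$, I would invoke bit complexity: since $B$ and each coordinate of $G$ are rationals of bit complexity $N^{O(1)}$ (as in the discrete example, which is the relevant case here) and the coefficients of the entries of $M$ are bounded by $2^{N^{O(1)}}$, the rational number $\phi(B+G)$ has numerator and denominator of bit complexity at most $(SN)^{O(1)}$, and hence $|\phi(B+G)| \geq 2^{-(SN)^{O(1)}}$ whenever it is nonzero. Combining this with the trivial upper bound $\|M(B+G)\|_F \leq 2^{(SN)^{O(1)}}$ coming from polynomial-entry magnitudes and the identity $\phi(B+G) = \prod_{i=1}^R \sigma_i(M(B+G))^2$, I obtain
\[
  \sigma_R(M(B+G))^2 \;\geq\; \frac{\phi(B+G)}{\|M(B+G)\|^{2(R-1)}} \;\geq\; 2^{-(SN)^{O(1)}},
\]
which is the desired bound. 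The main subtlety I expect is that the hypothesis on $\calG$ controls pointwise atom masses rather than small-interval masses, so standard anti-concentration tools (such as Carbery--Wright) do not directly apply; the argument above deliberately converts the weaker no-atom condition into a quantitative lower bound via the Schwartz--Zippel step plus bit-complexity, which works precisely because $\calG$ (and the input $B$) are effectively supported on rationals of polynomial bit complexity.
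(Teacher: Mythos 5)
Your proposal is correct and follows essentially the same three-step structure as the paper's proof: define the Gram determinant $\phi(A) = \det(M(A)M(A)^\top)$ as a low-degree nonzero polynomial, apply a Schwartz--Zippel argument to show $\phi(B+G)\neq 0$ with high probability, then use bit-complexity of the rational entries to convert nonvanishing into a quantitative singular-value lower bound. One small improvement in your write-up: the version of Schwartz--Zippel you prove from scratch (nonvanishing probability $\leq D\mu$ when each coordinate has maximum atom mass $\mu$) matches the lemma's hypothesis on $\calG$ directly, whereas the paper cites a generalized DeMillo--Lipton--Zippel fact stated for uniform distributions on finite sets and bridges the gap somewhat informally; both proofs, like yours, implicitly rely on $\calG$ being supported on polynomial-bit-complexity rationals for the final bit-complexity step even though the formal hypothesis only constrains atom masses.
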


Our proof relies on the following variant of the classical Schwartz-Zippel lemma and a simple observation about eigenvalues of matrices with polynomial bit complexity entries. 

\begin{fact}[Corollary of Generalized DeMillo–Lipton–Zippel Lemma, Theorem 4.6 in~\cite{MR3788163}] \label{fact:Schwartz-Zippel}
Let $p(x_1, x_2, \ldots, x_n)$ be an $n$-variate degree-$d$ polynomial over any field $\bbF$. Suppose $p$ is not identically equal to $0$. Let $S_1, S_2, \ldots, S_n$ be finite subsets of $\bbF$ of size $s \geq dn^2$. Then, if $x_i \sim S_i$ is chosen uniformly at random and independently for every $i$, then, 
\[
\Pr[ p(x) = 0]\leq dn/s \mper
\]
\end{fact}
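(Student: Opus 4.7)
The plan is to prove the bound by a straightforward induction on the number of variables $n$, using as the only field-theoretic input the fact that a nonzero univariate polynomial of degree $d$ over any field has at most $d$ roots. The weakening of the classical Schwartz--Zippel bound (which is $d/s$) to $dn/s$ is what makes the proof especially clean via a union bound, and avoids any delicate accounting of leading-variable degrees that appears in the standard proof.

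The base case $n=1$ is immediate: a nonzero univariate polynomial of degree $d$ has at most $d$ roots, so $\Pr[p(x_1)=0] \leq d/|S_1| \leq d/s$, which is within the claimed $dn/s$ bound. For the inductive step, decompose $p$ as a polynomial in $x_n$ with coefficients in $\bbF[x_1,\ldots,x_{n-1}]$:
\begin{equation*}
p(x_1,\ldots,x_n) = \sum_{i=0}^{k} x_n^i \cdot q_i(x_1,\ldots,x_{n-1}),
\end{equation*}
where $k$ is the largest index for which $q_k \not\equiv 0$. Then $q_k$ is nonzero of degree at most $d$ in $n-1$ variables, so the inductive hypothesis applied to $q_k$ with independent draws $x_1 \sim S_1,\ldots,x_{n-1}\sim S_{n-1}$ yields $\Pr[q_k(x_1,\ldots,x_{n-1})=0] \leq d(n-1)/s$. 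Conditioned on the complementary event $q_k(x_1,\ldots,x_{n-1}) \neq 0$, the restricted polynomial in $x_n$ is univariate of degree exactly $k \leq d$, hence has at most $d$ roots in $S_n$, yielding $\Pr[p(x)=0 \mid q_k \neq 0] \leq d/s$. A union bound gives $\Pr[p(x)=0] \leq d(n-1)/s + d/s = dn/s$, closing the induction.

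There is no genuine obstacle, since the statement is strictly weaker than the classical Schwartz--Zippel lemma. The only role of the hypothesis $s \geq dn^2$ is to ensure $dn/s \leq 1/n$, i.e.\ that the resulting bound is nontrivial (in particular, smaller than one); it is not used in the induction itself. The argument works uniformly over any field because each step only invokes the bound on the number of roots of a univariate polynomial of a given degree, which holds over arbitrary fields.
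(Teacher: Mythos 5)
Your proof is correct. Note that the paper does not prove this statement at all---it is quoted as a black-box corollary of Theorem 4.6 in the cited reference---so there is no in-paper argument to compare against; your self-contained induction is the standard one and it goes through. The decomposition $p=\sum_{i\le k} x_n^i q_i$ with $q_k\not\equiv 0$, the inductive bound $d(n-1)/s$ on $\Pr[q_k=0]$, the at-most-$d$-roots bound for the univariate restriction conditioned on $q_k\neq 0$ (which also covers the degenerate case $k=0$, since a nonzero constant has no roots), and the union bound giving $dn/s$ are all sound over an arbitrary field. Your observation that the hypothesis $s\ge dn^2$ plays no role in the argument and only guarantees the bound is nontrivial is also accurate: the inequality $\Pr[p(x)=0]\le dn/s$ is vacuous when $s<dn$ and the induction never invokes the size hypothesis. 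The one thing worth flagging is that you prove a bound weaker than the classical Schwartz--Zippel bound $d/s$ for total degree $d$ (which one gets by tracking that $\deg q_k\le d-k$ and using $k/s$ for the univariate step); since the paper only needs $dn/s$, this looseness is harmless, and your union-bound version has the advantage of also covering the ``individual degree at most $d$ per variable'' reading of the hypothesis.
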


\begin{lemma}[Gapped Eigenvalues from Polynomial Bit Complexity] \label{lem:gapped-eigenvalue}
Let $A$ be a $n \times r$ matrix of $N$-bit rational entries. Suppose $A$ has rank $r$. Then, the $r$-th smallest singular value of $A$ is at least $2^{-O(Nn^3)}$.  
\end{lemma}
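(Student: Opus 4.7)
The plan is to reduce the problem to a lower bound on the determinant of the Gram matrix $M := A^\top A \in \R^{r \times r}$ and then combine it with an upper bound on $\|M\|$ via the identity $\det(M) = \prod_i \lambda_i(M)$. Since $A$ has rank $r$, $M$ is positive definite and the smallest singular value of $A$ equals $\sqrt{\lambda_{\min}(M)}$, so it suffices to show $\lambda_{\min}(M) \geq 2^{-O(Nn^3)}$.

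First, I would track bit complexity. If every entry of $A$ is a rational $p/q$ with $|p|,|q| \leq 2^N$, then each entry of $M$ is a sum of $n$ products of two such rationals. Clearing denominators across all entries simultaneously, the matrix $M$ can be written as $(1/Q) \cdot \widetilde{M}$, where $Q$ is a positive integer bounded by $2^{O(Nn^2)}$ and $\widetilde{M}$ is an integer matrix each of whose entries has absolute value at most $2^{O(N + \log n)}$. In particular, $\|M\| \leq \|M\|_F \leq r \cdot \max_{i,j}|M[i,j]| \leq 2^{O(Nn)}$.

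Next, I would lower bound $\det(M)$. Since $A$ has rank $r$, $\det(M) > 0$. Writing $\det(M) = \det(\widetilde{M})/Q^r$, the integer $\det(\widetilde{M})$ is nonzero and hence at least $1$ in absolute value, so $\det(M) \geq Q^{-r} \geq 2^{-O(Nn^2 \cdot r)} \geq 2^{-O(Nn^3)}$. Finally, since $\det(M) = \lambda_{\min}(M) \cdot \prod_{i \geq 2} \lambda_i(M) \leq \lambda_{\min}(M) \cdot \|M\|^{r-1}$, combining the two bounds yields
\[
\lambda_{\min}(M) \geq \frac{\det(M)}{\|M\|^{r-1}} \geq \frac{2^{-O(Nn^3)}}{2^{O(Nn \cdot r)}} \geq 2^{-O(Nn^3)},
\]
and taking square roots gives the claimed bound on $\sigma_r(A)$.

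The only delicate step is the bookkeeping of common denominators to ensure $\det(M)$ is a rational of controlled bit complexity; everything else is a routine application of $\det/\text{trace}$ inequalities. No step should pose a real obstacle, and the proof is essentially tight up to the exponent of $n$ in the $O(\cdot)$.
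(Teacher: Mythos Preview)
Your approach is essentially identical to the paper's: form the Gram matrix $M=A^\top A$, clear denominators to get an integer matrix whose determinant is a nonzero integer (hence $\geq 1$), and combine this with an upper bound on the top eigenvalue to lower-bound $\lambda_{\min}(M)$. One harmless bookkeeping slip: after clearing the common denominator $Q\le 2^{O(Nn^2)}$, the entries of $\widetilde{M}=QM$ are of size $2^{O(Nn^2)}$, not $2^{O(N+\log n)}$, but you never actually use that bound, so the argument goes through unchanged.
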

\begin{proof}
Let $B = A^{\top}A$ and let $B'$ be the matrix of integers obtained by clearing the denominators of the rational numbers appearing in the entries of $B$. The bit complexity by $B'$ is then larger than that of $B$ by at most an additive $Nn^{2}$ and is thus at most $4Nn^{2}$. Further, by the Gershgorin circle theorem, the largest eigenvalue of $B'$ is at most $n2^{4Nn^2}$. 

Since $B'$ is a symmetric matrix with integer entries and has full rank, the determinant of $B'$, $\det(B')$ is a non-zero integer and thus at least $1$ in magnitude. Since $\det(B')$ is the product of all $r$ eigenvalues of $B'$ each of which is at most $n 2^{4 Nn^2}$, the smallest eigenvalue must be at least $n^{-n} 2^{-4Nn^{3}}\leq 2^{-5Nn^{3}}$ and large enough $n$. This completes the proof. 
\end{proof}

\begin{proof}[Proof of Lemma \ref{lem:sing-value-generic}]
For any fixed $B\in\R^N$, consider the determinant $\det(Q)$ of the $R \times R$ matrix $Q= M(A+B)M(A+B)^{\top}$. This is a polynomial of degree $2Rd$ in $A$. For $A^* = A'-B$, from the hypothesis, $M(A^*+B)M(A^*+B)^{\top}$ has full rank $R$. Thus, $\det(Q)$ is not identically equal to $0$ as a polynomial of $A$.

Let $G\in\R^N$ be sampled from $\calG$.
For each entry $i$ of $G$, let $S_i$ be the support of the distribution that $G_i$ is drawn from. Then, we know that this support is of size at least $2^{N^{O(1)}}$. Thus, by the generalized De-Millo-Lipton-Zippel lemma (Fact~\ref{fact:Schwartz-Zippel}), the probability that $M(B+G)M(B+G)^{\top}$ is singular is at most $2^{-N^{O(1)}}$. 

Further, the entries of $M(B+G)M(B+G)^{\top}$ have bit complexity at most $N^{O(1)}$. Thus, by Lemma~\ref{lem:gapped-eigenvalue}, whenever the matrix $M(B+G)M(B+G)^{\top}$ is non-singular, it's smallest eigenvalue is lower bounded by $2^{-(SN)^{O(1)}}$.
\end{proof}

We can now finish the proof of Theorem~\ref{thm:main-theorem-hd-intro-smoothed}. 

\begin{proof}[Proof of Theorem~\ref{thm:main-theorem-hd-intro-smoothed}]
We first note that the randomness of $A_t$ in the analysis of our algorithm (Algorithm~\ref{algo:overall-algo-hd}) is only used to infer inverse polynomial singular value lower bounds on the matrices arising in various steps of our algorithm. For sum of $d=3D$th power of degree-$K$ polynomial $A_t$, every matrix that arises in our analysis has entries that are at most degree-$O(KD)$ polynomials in $A_t$s and have dimension $n^{O(KD)} \times n^{O(KD)}$. Thus, each such matrix is a degree $O(KD)$ function of $A_t$s that is full-rank for random $A_t$. Thus, applying Lemma~\ref{lem:sing-value-generic} yields that the resulting matrix in fact has condition number upper bounded by at most $2^{n^{O(KD)}}$ with probability $1-2^{-n^{O(KD)}}$ over the choice of $n^{O(KD)}$-bit smoothing of $A_t$s. In particular $n^{O(KD)}$ bit truncation of all real numbers occurring in our analysis is enough. As a result, our analysis of Algorithm~\ref{algo:overall-algo-hd} succeeds with a worse estimation error of $2^{n^{O(KD)}} \Norm{E}_F^{1/d}+ \poly(n)\tau^{1/d}$ (where $\tau$ is the numerical accuracy parameter). In particular, when $E = 0$, which is the assumption in \pref{thm:main-theorem-hd-intro-smoothed}, we obtain polynomial time (in input size and $\log 1/\tau$) algorithm for computing $\tau$-approximate estimates of $A_t$s up to permutations (and signs for even $D$).
\end{proof}

\end{document}